\definecolor{Myblue}{rgb}{0,0,0.6}  
\theoremstyle{definition}
\newtheorem{defn}{Definition}
\newtheorem{thm}[defn]{Theorem}
\newtheorem{prp}[defn]{Proposition}
\newtheorem{lem}[defn]{Lemma}
\newtheorem{rem}[defn]{Remark}
\numberwithin{equation}{section}
\numberwithin{defn}{section}
\numberwithin{figure}{section}
\tikzset{-dot-/.style={decoration={
			markings,
			mark=at position 0.5 with {\fill circle (1.875pt);}},postaction={decorate}}}
\tikzset{
	Fdot/.style={circle, draw, fill, inner sep=0pt}, 
	Odot/.style={circle, draw, inner sep=0.1pt, minimum size=0.1cm}
}
\newcommand\tikzzbox[1]
\begin{document}
\def\mcA{\mathcal{A}}
\def\mcB{\mathcal{B}}
\def\mcC{\mathcal{C}}
\def\mcD{\mathcal{D}}
\def\mcE{\mathcal{E}}
\def\mcF{\mathcal{F}}
\def\mcG{\mathcal{G}}
\def\mcI{\mathcal{I}}
\def\mcM{\mathcal{M}}
\def\mcN{\mathcal{N}}
\def\mcS{\mathcal{S}}
\def\mcX{\mathcal{X}}
\def\mcZ{\mathcal{Z}}
\def\mcACA{{_A \mcC_A}}
\def\mcACB{{_A \mcC_B}}
\def\mcBCA{{_B \mcC_A}}
\def\mcBCB{{_B \mcC_B}}
\def\mcD{\mathcal{D}}
\def\scrD{\mathscr{D}}
\def\scrH{\mathscr{H}}
\def\scrV{\mathscr{V}}
\def\opk{\Bbbk}
\def\opA{\mathbb{A}}
\def\opB{\mathbb{B}}
\def\opC{\mathbb{C}}
\def\opD{\mathbb{D}}
\def\opF{\mathbb{F}}
\def\opR{\mathbb{R}}
\def\opZ{\mathbb{Z}}
\def\opid{\mathbbm{1}}
\def\a{\alpha}
\def\b{\beta}
\def\g{\gamma}
\def\d{\delta}
\def\D{\Delta}
\def\vareps{\varepsilon}
\def\l{\lambda}
\def\abar{\overline{\a}}
\def\bbar{\overline{\b}}
\def\gbar{\overline{\g}}
\def\ddbar{\overline{\d}}
\def\mubar{\overline{\mu}}
\def\pibar{{\bar{\pi}}}
\def\taub{\bar{\tau}}
\def\oo{\otimes}
\def\C{\mathbb{C}}

\def\pd{\partial}
\def\vspan{\operatorname{span}}
\def\opp{{\operatorname{op}}}
\def\id{\operatorname{id}}
\def\im{\operatorname{im}}
\def\Hom{\operatorname{Hom}}
\def\End{\operatorname{End}}
\def\tr{\operatorname{tr}}
\def\ev{\operatorname{ev}}
\def\coev{\operatorname{coev}}
\def\evt{\widetilde{\operatorname{ev}}}
\def\coevt{\widetilde{\operatorname{coev}}}
\def\Id{\operatorname{Id}}
\def\Vect{\operatorname{Vect}}
\def\loc{{\operatorname{loc}}}
\def\orb{{\operatorname{orb}}}
\def\coker{{\operatorname{coker}}}
\def\Irr{{\operatorname{Irr}}}
\def\Dim{\operatorname{Dim}}

\newcommand{\stateSpace}{V}
\newcommand{\groundStateSpace}{V_{0}}
\newcommand{\extendedSpace}{\mathcal{T}}
\newcommand{\protectedSpace}{\mathcal{M}}
\newcommand{\hamiltonian}{H}

\newcommand{\taubar}[1]{\overline{\tau_{#1}}}

\def\Dhalf{D^{\scalebox{.5}{1/2}}}
\def\dhalf{d^{\scalebox{.5}{1/2}}}
\newcommand{\dhalfob}[1]{d^{\scalebox{.5}{1/2}}_{#1}}
\newcommand{\dquartob}[1]{d^{\scalebox{.5}{1/4}}_{#1}}

\def\Lra{\Leftrightarrow}
\def\Ra{\Rightarrow}
\def\ra{\rightarrow}
\def\lra{\leftrightarrow}
\def\xra{\xrightarrow}

\def\Bord{\operatorname{Bord}}
\def\Bordhat{\widehat{\operatorname{Bord}}{}}
\def\Bordrib{\operatorname{Bord}^{\operatorname{rib}}}
\def\Bordribhat{\widehat{\operatorname{Bord}}{}^{\operatorname{rib}}}
\def\Borddef{\operatorname{Bord}^{\operatorname{def}}}
\def\Borddefhat{\widehat{\operatorname{Bord}}{}^{\operatorname{def}}}

\def\taurib{\tau^{\operatorname{rib}}}
\def\taudef{\tau^{\operatorname{def}}}

\def\zrt{Z^{\operatorname{RT}}}
\def\zdef{Z^{\operatorname{def}}}
\def\zorbA{Z^{\operatorname{orb}\mathbb{A}}}

\newcommand{\eqrefO}[1]{\hyperref[eq:O#1]{\text{(O#1)}}}
\newcommand{\eqrefT}[1]{\hyperref[eq:T#1]{\text{(T#1)}}}

\title{Internal Levin--Wen models}

\author{%
	Vincentas Mulevi\v{c}ius$^{\dagger\ddagger}$ \quad 
	Ingo Runkel$^\#$ \quad 
	Thomas Vo\ss$^\#$
	\\[0.5cm]
	\normalsize{\texttt{\href{mailto:vincentas.mulevicius@ff.vu.lt}{vincentas.mulevicius@ff.vu.lt}}} \\  %
	\normalsize{\texttt{\href{mailto:ingo.runkel@uni-hamburg.de}{ingo.runkel@uni-hamburg.de}}}\\[0.1cm]
	\normalsize{\texttt{\href{mailto:thomas.voss@uni-hamburg.de}{thomas.voss@uni-hamburg.de}}} \\[0.5cm]  %
    \hspace{-1.2cm} {\normalsize\slshape $^\dagger$Erwin Schr\"odinger Institute for Mathematics and Physics, Vienna, Austria}\\[-0.1cm]
    \hspace{-1.2cm} {\normalsize\slshape $^\ddagger$Institute of Theoretical Physics and Astronomy, Vilnius University, Lithuania}\\[-0.1cm]
	\hspace{-1.2cm} {\normalsize\slshape $^\#$Fachbereich Mathematik, Universit\"{a}t Hamburg, Germany}
}

\date{}
\maketitle

\begin{abstract}
Levin--Wen models are a class of two-dimensional lattice spin models with a Hamiltonian that is a sum of commuting projectors, which describe topological phases of matter related to Drinfeld centres.
We generalise this construction to lattice systems internal to a topological phase described by an arbitrary modular fusion category $\mcC$. 
The lattice system is defined in terms of an orbifold datum $\opA$ in $\mcC$, from which we construct a state space and a commuting-projector Hamiltonian $H_{\opA}$ acting on it. 
The topological phase of the degenerate ground states of $H_{\opA}$ is characterised by a modular fusion category $\mcC_\opA$ defined directly in terms of $\opA$. 
By choosing different $\opA$'s for a fixed $\mcC$, one obtains precisely all phases which are Witt-equivalent to $\mcC$.

As special cases we recover the Kitaev and the Levin--Wen lattice models from instances of orbifold data in the trivial modular fusion category of vector spaces, as well as phases obtained by anyon condensation in a given phase $\mcC$.
\end{abstract}

\thispagestyle{empty}

\newpage

\setcounter{tocdepth}{2}

\tableofcontents

\section{Introduction}
\label{sec:intro}
A topological phase of matter is identified by a variety of exotic phenomena, most notably topological ground state degeneracy, that is, the dimension of the ground state space depends only on the topology of the system, not its size.
Topological phases of matter can be modelled by physical systems whose effective theory at low energies is a topological quantum field theory (TFT).
We refer to e.g.\ \cite[Sec.\,4]{Rowell:2017} and \cite[Ch.\,6]{Zeng:2015} for the physical background.
In (2+1)-dimensions, there are two widely studied types of TFTs: 
state sum TFTs, also called Turaev--Viro--Barrett--Westbury models \cite{TurVir,BarWes}, and surgery TFTs, aka.\ Reshetikhin--Turaev models \cite{RT91,Tu}.
The surgery TFTs are more general, in that they reproduce the state sum TFTs as a special case. 

\medskip

The algebraic input datum defining a surgery TFT is a modular fusion category (MFC) $\mcC$. The same datum describes a distinguishing characteristic of a 2-dimensional topological phase of matter, namely the braiding statistics and the fusion rules of its point-like excitations, called anyons. We will often refer to such a topological phase of matter as ``phase $\mcC$''.

If the MFC $\mcC$ is given by the Drinfeld centre $\mcZ(\mcS)$ of a spherical fusion category $\mcS$, two related special features occur. Firstly, the surgery TFT for $\mcZ(\mcS)$ has an equivalent state sum description defined directly in terms of $\mcS$ (see e.g.~\cite{TVireBook}), and secondly, the corresponding topological phase of matter can be realised via a local lattice model.
Such models include Kitaev's toric code and its variations -- the quantum double models obtained from finite groups or, more generally, finite-dimensional semisimple Hopf algebras, as well as Levin--Wen models, see Figure~\ref{fig:all-the-models}. Levin--Wen models are defined in terms of spherical fusion categories $\mcS$ and are universal in the sense that they provide a local description of all phases given by Drinfeld centres $\mcZ(\mcS)$. 
For a generic phase $\mcC$ no similar local lattice realisation is known.

\begin{figure}[t]
\captionsetup{format=plain, indention=0.5cm}
    \centering
    
	\begin{tikzpicture}[baseline={([yshift=-.5ex]current bounding box.center)}]
	\node at (0,0) {\def\svgscale{1.0} \import{figures/}{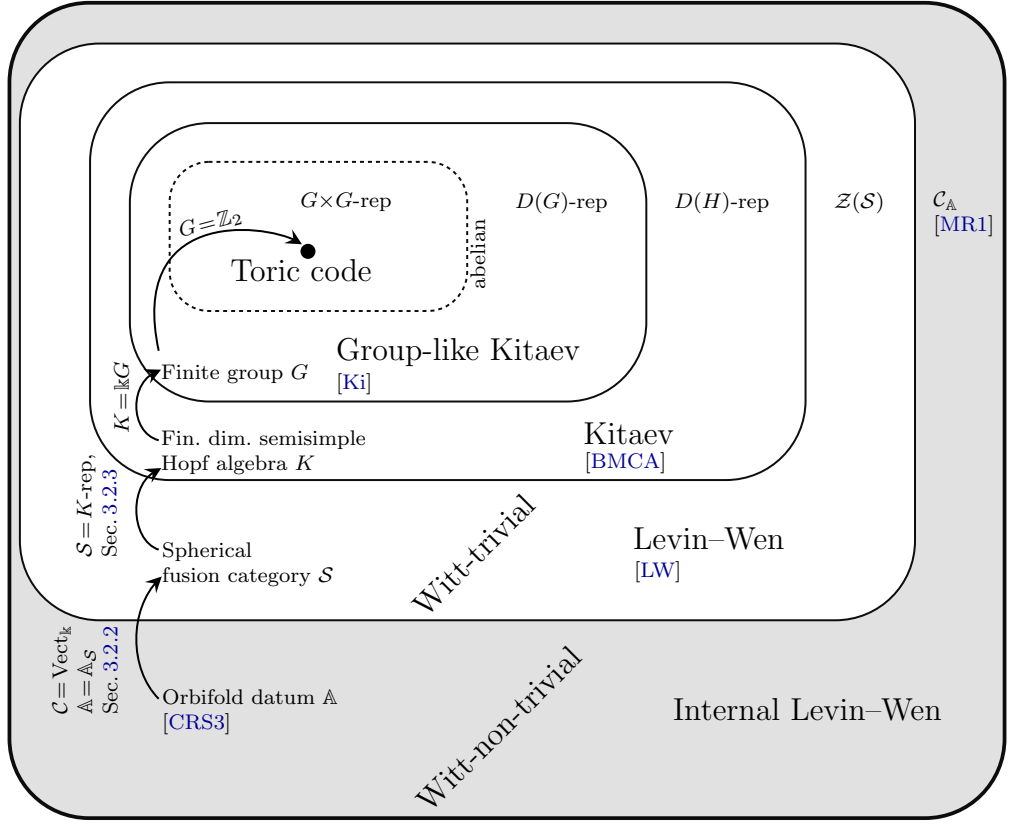} };
	\end{tikzpicture}

    \caption{Hierarchy of lattice descriptions of topological phases of matter. The inner white boxes are local lattice models for phases corresponding to Drinfeld centres. The outer gray box amounts to the non-local lattice models introduced in this paper.}
    \label{fig:all-the-models}
\end{figure}

\medskip

The goal of this paper is to provide a weaker ``non-local'' version of a lattice realisation for phases that are not Drinfeld centres.
Non-locality here means that the overall state space of the lattice model will \textsl{not} be of the form $(\opC^n)^{\otimes N}$, i.e.\ it will not consist of $N$ independent degrees of freedom $\opC^n$.
Instead, the state space is a lattice of anyonic particles $X$ and their antiparticles $X^*$ in an ``initial'' -- or ``ambient'' -- phase $\mcC$:
\begin{equation}
\label{eq:surface_w_X-points}

	\begin{tikzpicture}[baseline={([yshift=-.5ex]current bounding box.center)}]
	\node at (0,0) {\def\svgscale{1.0} \import{figures/}{1_surface_w_X-points.pdf_tex} };
	\end{tikzpicture}

\end{equation}
On the spaces of states for a surface in phase $\mcC$ with excitations $X$ and $X^*$ as shown above we introduce a commuting-projector Hamiltonian $H_\opA$. The Hamiltonian is defined in terms of a so-called \textit{orbifold datum} $\opA$ in $\mcC$ which we discuss in more detail below. The ground state space of $H_\opA$ will again be a topological phase of matter, but now for a different MFC $\mcC_\opA$, i.e.\ our model ``condenses phase $\mcC$ into phase $\mcC_\opA$''.

We refer to this construction as ``internal Levin--Wen model'' for two reasons.
Firstly, as illustrated in \eqref{eq:surface_w_X-points}, our model is formulated internally to a fixed ambient phase $\mcC$. Secondly, if we choose the phase $\mcC$ to be trivial, i.e.\ if we take $\mcC$ to be the MFC $\Vect_\opC$ of finite dimensional vector spaces, the possible condensed phases $\mcC_\opA$ turn out to be exactly the Drinfeld centres.
In fact, in this particular case the lattice models defined here turn out to be local, and, for an appropriate choice of $\opA$, are equivalent to the original Levin--Wen lattice systems.

\medskip

The internal Levin--Wen models are also universal in a certain sense, but to explain this, we need the notion of Witt-equivalence \cite{DMNO}.  Two MFCs $\mcC$ and $\mcD$ are called \textsl{Witt-equivalent} if there exists a spherical fusion category $\mcS$ and an equivalence $\mcC\boxtimes\widetilde{\mcD}\simeq\mcZ(\mcS)$ (of ribbon categories), where $\boxtimes$ denotes the Deligne product and $\widetilde{\mcD}$ denotes the MFC $\mcD$ with inverse braiding and twist. For example, choosing $\mcD = \Vect_\opC$, we see that the Witt-class of the trivial phase $\Vect_\opC$ consists precisely of the Drinfeld centres.
A more physical description of Witt-equivalence is the requirement that there exists a gapped domain wall separating the phases $\mcC$ and $\mcD$~\cite{Kitaev:2011, FSV, KMRS, Huston:2022utd}.
It is shown in \cite[Thm.\,7.3]{Mu} that two MFCs $\mcC$ and $\mcD$ are Witt equivalent if and only if there exists an orbifold datum $\opA$ in $\mcC$, such that $\mcC_\opA \cong \mcD$ as ribbon categories.

In this language, the aforementioned universality of the original Levin--Wen models can be reformulated as follows: Collectively they provide lattice realisations of all phases in the trivial Witt-class.
After these preparations, we can state:
\begin{center}
\framebox[1.05\width]{\parbox{.9\textwidth}{
\ \\[0em]
Internal Levin--Wen models are universal in the sense that starting from the initial phase $\mcC$, collectively they provide internal-to-$\mcC$ lattice realisations of all phases in the Witt-class of $\mcC$.
\\[-.6em]
}}
\end{center}
The relation between the various models discussed so far is summarised in Figure~\ref{fig:all-the-models}.

\subsubsection*{Construction of internal Levin--Wen models}

The main tool used in the construction of our models is a three-dimensional \textsl{defect TFT}. Specifically, the defect TFT we need is of surgery type and is an extension of the Reshetikhin--Turaev TFT \cite{RT91,Tu}, which includes surface defects in addition to the line defects already present in the original definition. Mathematically, the defect TFT is a symmetric monoidal functor
\begin{equation}
\zdef_\mcC\colon\Borddefhat_3(\opD^\mcC)\ra\Vect_\opC \, ,
\end{equation}
defined in terms of a MFC $\mcC$ \cite{CRS2}. Here $\opD^\mcC$ is the set of labels for surface and for line defects, and $\Borddefhat_3(\opD^\mcC)$ is a category of three-dimensional bordisms as morphisms between two-dimensional surfaces as objects. The bordisms contain ``foams'' of line and surface defects labelled by data in $\opD^\mcC$. These foams intersect the boundary surfaces transversally, so that these surfaces contain point and line defects (see Figures~\ref{fig:ribbonisation}, \ref{fig:foamification_skeleton} below for examples of defect bordisms).
We will review the category of defect bordisms, as well as the definition of $\zdef_\mcC$ and its properties in Sections~\ref{subsec:RT_theory_w_line_defects} and \ref{subsec:RT_theory_w_line_and_surface_defects}.

\medskip

Let $X$ be a object of $\mcC$ (which need not be simple) and consider a surface $\Sigma_X$ 
with $m$ point defects labelled by $X$ and $n$ point defects labelled by $X^*$ as shown in \eqref{eq:surface_w_X-points}.
To $\Sigma_X$, the TFT $\zdef_\mcC$ assigns a complex vector space which will be the state space $V_X$ of the internal Levin--Wen model on $\Sigma_X$:
\begin{equation}
    V_X = \zdef_\mcC(\Sigma_X) ~.
\end{equation}
One can express $V_X$ as a Hom-space of the category $\mcC$. For example, if $\Sigma$ is a sphere we have 
\begin{equation}\label{eq:intro-H-via-Hom}
    V_X \,\cong\, \mcC(\opid, X^{\otimes m} \otimes (X^*)^{\otimes n}) ~.
\end{equation}
Here $\mcC(U,W)$ denotes the vector space of morphisms from $U$ to $W$ in $\mcC$, and $\opid$ denotes the tensor unit of $\mcC$.
Note that if $\mcC$ is $\Vect_\opC$, then $X$ is a vector space, $X^* \cong X$ is its dual, and \eqref{eq:intro-H-via-Hom} can be written as $V_X \cong X^{\otimes (m+n)}$.
That is, for $\mcC = \Vect_\opC$ -- the trivial ambient topological phase -- the state space is indeed local.

\medskip

To construct the commuting projector Hamiltonian, we need the additional input of an \textsl{orbifold datum} $\opA$ in $\mcC$.
This consists of a label $A$ in $\opD^\mcC$ for a surface defect, a label $T$ for a line defect forming the junction of three surface defects, and two labels $\alpha$, $\overline\alpha$  for point defects where two such lines cross.
More details including the conditions $\opA$ has to satisfy can be found in Section~\ref{subsec:RT_internal_state-sum}.
    
We furthermore need to choose a lattice (or, in mathematical terms, a graph) $\Gamma$ on the surface $\Sigma_X$ whose vertices are trivalent and are given by the point defects on $\Sigma_X$. For example, a local patch of the lattice $\Gamma$ on $\Sigma_X$ could be:
\begin{equation}

	\begin{tikzpicture}[baseline={([yshift=-.5ex]current bounding box.center)}]
	\node at (0,0) {\def\svgscale{1.0} 
\begingroup%
  \makeatletter%
  \providecommand\color[2][]{%
    \errmessage{(Inkscape) Color is used for the text in Inkscape, but the package 'color.sty' is not loaded}%
    \renewcommand\color[2][]{}%
  }%
  \providecommand\transparent[1]{%
    \errmessage{(Inkscape) Transparency is used (non-zero) for the text in Inkscape, but the package 'transparent.sty' is not loaded}%
    \renewcommand\transparent[1]{}%
  }%
  \providecommand\rotatebox[2]{#2}%
  \newcommand*\fsize{\dimexpr\f@size pt\relax}%
  \newcommand*\lineheight[1]{\fontsize{\fsize}{#1\fsize}\selectfont}%
  \ifx\svgwidth\undefined%
    \setlength{\unitlength}{226.12674327bp}%
    \ifx\svgscale\undefined%
      \relax%
    \else%
      \setlength{\unitlength}{\unitlength * \real{\svgscale}}%
    \fi%
  \else%
    \setlength{\unitlength}{\svgwidth}%
  \fi%
  \global\let\svgwidth\undefined%
  \global\let\svgscale\undefined%
  \makeatother%
  \begin{picture}(1,0.30228015)%
    \lineheight{1}%
    \setlength\tabcolsep{0pt}%
    \put(0,0){\includegraphics[width=\unitlength,page=1]{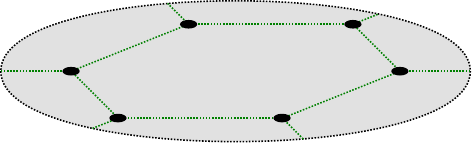}}%
  \end{picture}%
\endgroup%
 };
	\end{tikzpicture}

\end{equation}
We will assume that the lattice $\Gamma$ is fine enough so that the complement of $\Gamma$ in $\Sigma_X$ consists of a disjoint union of discs, which we refer to as \textsl{faces}.

\medskip

For a face $f$ on $\Sigma_X$ (with respect to $\Gamma$), consider the defect bordism $C_f\colon \Sigma_X \to \Sigma_X$
which has the underlying topology of a cylinder $\Sigma_X \times [0,1]$ and contains vertical $X$ or $X^*$-labelled line defects everywhere except for around the face $f$.
Near $f$, the bordism is given by the following arrangement of point, line and surface defects:
\begin{equation}
\label{eq:intro_face_bord}
C_f = 
	\begin{tikzpicture}[baseline={([yshift=-.5ex]current bounding box.center)}]
	\node at (0,0) {\def\svgscale{1.0} 
\begingroup%
  \makeatletter%
  \providecommand\color[2][]{%
    \errmessage{(Inkscape) Color is used for the text in Inkscape, but the package 'color.sty' is not loaded}%
    \renewcommand\color[2][]{}%
  }%
  \providecommand\transparent[1]{%
    \errmessage{(Inkscape) Transparency is used (non-zero) for the text in Inkscape, but the package 'transparent.sty' is not loaded}%
    \renewcommand\transparent[1]{}%
  }%
  \providecommand\rotatebox[2]{#2}%
  \newcommand*\fsize{\dimexpr\f@size pt\relax}%
  \newcommand*\lineheight[1]{\fontsize{\fsize}{#1\fsize}\selectfont}%
  \ifx\svgwidth\undefined%
    \setlength{\unitlength}{150.37559911bp}%
    \ifx\svgscale\undefined%
      \relax%
    \else%
      \setlength{\unitlength}{\unitlength * \real{\svgscale}}%
    \fi%
  \else%
    \setlength{\unitlength}{\svgwidth}%
  \fi%
  \global\let\svgwidth\undefined%
  \global\let\svgscale\undefined%
  \makeatother%
  \begin{picture}(1,1.0995017)%
    \lineheight{1}%
    \setlength\tabcolsep{0pt}%
    \put(0,0){\includegraphics[width=\unitlength,page=1]{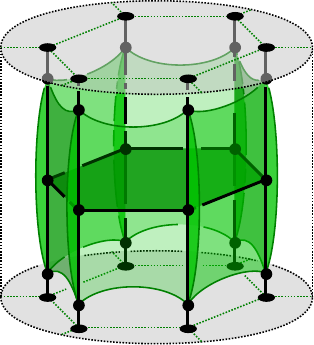}}%
  \end{picture}%
\endgroup%
 };
	\end{tikzpicture}

\end{equation}
This bordism~\eqref{eq:intro_face_bord} is slightly simplified (for example all orientations are missing), and we explain the full version and all its ingredients in Section~\ref{sec:intLW-statespace-etc}.
Here we just note that the labels of the various defects are
prescribed by the orbifold datum $\opA$: the surface defects in $C_f$ are labelled by $A$, the line defects joining three surface defects by $T$, and the crossings of two $T$-lines by $\alpha$ or $\overline\alpha$ (depending on the orientations, see Section~\ref{subsec:RT_internal_state-sum}).
$X$ is chosen such that $T$ is a subobject of $X$.

\medskip

Applying the defect TFT $\zdef_\mcC$ to the cylinder $C_f$ gives a linear map
\begin{equation}
P_f := \zdef_\mcC(C_f) \colon V_X \ra V_X  \, .
\end{equation}
We show in Section~\ref{sec:intLW-statespace-etc} that $P_f$ is a projector, $P_f^{\,2} = P_f$, and that these projectors commute amongst themselves, $P_f P_{f'} = P_{f'} P_f$ for all faces $f,f'$ of $\Sigma_X$ with respect to $\Gamma$.
The Hamiltonian $H_\opA$ on $V_X$ is simply minus the sum of all face projectors,
\begin{equation}
H_\opA = -\sum_f P_f ~.
\end{equation}
Again, this expression is slightly simplified, and we refer to Section~\ref{sec:intLW-statespace-etc} for the full version.
One important aspect of this paper is that we describe the face projectors $P_f$ not only abstractly as the value of a TFT on a bordism, but also as explicit linear maps on the state space expressed as a Hom-space as in \eqref{eq:intro-H-via-Hom}.
This is done in Section~\ref{sec:ExplicitComputation}.

\medskip

The ground state space $V_0 = V_0(\Sigma_X,\opA,\Gamma) \subset V_X$
of $H_\opA$ is simply the intersection of the images of all the face-projectors.
The key observation is now that $H_\opA$ and $V_0$ describe a new topological phase, namely that for the MFC $\mcC_\opA$.
To support this claim, let us denote by $\Sigma$ the surface without any defects underlying $\Sigma_X$.
We show (Theorem~\ref{thm:V0_equals_Zorb} together with the equivalence in~\eqref{eq:Zorb_equiv_ZRT}):
\begin{thm}
For any choice of a trivalent graph $\Gamma$ on $\Sigma$ such that $\Sigma \setminus \Gamma$ is a disjoint union of discs, we have an isomorphism of vector spaces
\begin{equation}
V_0(\Sigma_X,\opA,\Gamma) \cong \zrt_{\mcC_\opA}(\Sigma) \, .
\end{equation}
\end{thm}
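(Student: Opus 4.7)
The strategy is to realise $V_0$ as the image of a single composite projector on $V_X$, to identify that projector with the value of the defect TFT on a cylinder $\Sigma_X\times[0,1]$ carrying the $\opA$-foamification of $\Gamma$, and finally to invoke the orbifold equivalence $\zorbA\simeq\zrt_{\mcC_\opA}$ already referenced in~\eqref{eq:Zorb_equiv_ZRT}.

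First, set $P:=\prod_{f}P_f$, which is well-defined because the $P_f$ pairwise commute, and note that $V_0=\im(P)$ since each factor is an idempotent on $V_X$. Since $P_f=\zdef_\mcC(C_f)$ and composition of bordisms is sent to composition of linear maps, $P$ equals $\zdef_\mcC(B_\Gamma)$, where $B_\Gamma\colon\Sigma_X\to\Sigma_X$ is the bordism obtained by stacking all face-cylinders $C_f$. The key geometric observation is that, after isotopy and cancellation of matching $X$-lines along shared boundaries, $B_\Gamma$ is isomorphic to the cylinder $\Sigma_X\times[0,1]$ equipped with an $\opA$-decorated 2-complex Poincar\'e-dual to $\Gamma$: an $A$-disc dual to each face of $\Gamma$, $T$-lines dual to the edges, and $\alpha$ or $\overline{\a}$ point defects dual to the trivalent vertices of $\Gamma$, with the internal $T$-lines attached to the boundary $X$-lines via the embedding $T\hookrightarrow X$.

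Second, invoke the defining property of the orbifold construction: applying $\zdef_\mcC$ to a bordism carrying an $\opA$-foam in exactly the Poincar\'e-dual position of a skeleton of the underlying surface is by definition $\zorbA$ evaluated on that bordism. Thus $\zdef_\mcC(B_\Gamma)$ is the value of $\zorbA$ on the identity cylinder of $\Sigma$, composed with the boundary-skeleton morphism that embeds $\zorbA(\Sigma)$ into $V_X$. The equivalence $\zorbA\simeq\zrt_{\mcC_\opA}$ then identifies the image of $P$, namely $V_0$, with $\zrt_{\mcC_\opA}(\Sigma)$, yielding the desired isomorphism. Independence of the isomorphism from the choice of $\Gamma$, which is implicit in the statement, follows from the fact that any two trivalent graphs whose complement is a disjoint union of discs are connected by a finite sequence of elementary local moves (bubble moves, edge slides, contractions of bigons), and these are precisely the moves under which $\zorbA$ is invariant by the orbifold axioms on $\opA$.

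The main obstacle will be the first step: pinning down the cancellation pattern that turns the stacked cylinders $\prod_f C_f$ into the standard $\opA$-foamification dual to $\Gamma$. Concretely, each $C_f$ carries on its lateral boundary an array of $X$- and $X^*$-lines which must cancel in pairs against the corresponding lines of neighbouring face-cylinders, leaving only $T$-lines along each edge of $\Gamma$ in the interior; this cancellation exploits that $T$ is a retract of $X$, together with the unit/counit relations supplied by the defect TFT. Once this bookkeeping is in place, the remaining identifications are a direct application of the defining properties of $\zorbA$ and of the orbifold axioms, as reviewed in Section~\ref{subsec:RT_internal_state-sum}.
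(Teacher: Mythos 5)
Your strategy is the same as the paper's: identify $V_0$ with $\im\prod_f P_f$, recognise the stacked face cylinders as a single cylinder carrying an $\opA$-decorated admissible 2-skeleton, relate the resulting endomorphism to $\Psi_\Gamma^\Gamma$ from~\eqref{eq:Psi_maps}, and conclude via $\zorbA_\mcC(\Sigma)\cong\im\Psi_\Gamma^\Gamma$ together with~\eqref{eq:Zorb_equiv_ZRT}. Two points need attention. First, the interior foam is \emph{not} Poincar\'e dual to $\Gamma$: the horizontal $A$-discs sit over the faces of $\Gamma$, the $T$-lines run along its edges and over its vertices, and the $\a$/$\abar$-insertions sit over the vertices, so the skeleton restricts to the $\opA$-decoration of $\Gamma$ itself (not of its dual graph) on a horizontal slice; this matters because $\Psi_\Gamma^\Gamma$ is defined relative to the 1-skeleton $\Gamma$. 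Second, and more substantively, the step you defer as ``bookkeeping'' is exactly where the content of the proof lies: one must exhibit comparison maps $m:\zdef_\mcC(\Sigma,\Gamma_0(X))\rightleftarrows\zdef_\mcC(\Sigma,\Gamma(\opA)):c$ satisfying $m\circ c=\id$ (in the paper this uses $\pi\circ\imath=\id$ together with the trace condition~\eqref{eq:tr-gamma_identity} on $\gamma$) and the factorisation $\prod_f P_f=c\circ\Psi_\Gamma^\Gamma\circ m$. Without the splitting $m\circ c=\id$, an ``embedding of $\zorbA_\mcC(\Sigma)$ into $V_X$'' composed with an unspecified map in the other direction does not identify the image of the composite with $\zorbA_\mcC(\Sigma)$; one needs $c$ injective \emph{and} the middle factor to be hit surjectively, which is precisely what the retraction provides. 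Finally, in the full model of Section~\ref{sec:intLW-statespace-etc} the Hamiltonian also contains vertex and edge projectors, so reducing $V_0$ to $\im\prod_f P_f$ additionally uses that these are absorbed by the face projectors (Proposition~\ref{prp:Hc_idempotents}).
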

The theorem implies in particular that $V_0(\Sigma_X,\opA,\Gamma)$ is independent of the choice of $\Gamma$, provided  $\Sigma \setminus \Gamma$ is a disjoint union of discs.
The ground states thus depend only on the topology of $\Sigma$ and not on the microscopic details of the lattice.

\begin{rem}
As the name suggests, the original use of an orbifold datum $\opA$ was to define an orbifolding procedure for TFTs, or rather a generalisation thereof beyond group symmetries \cite{CRS1,CRS3}. Internal Levin--Wen models can in fact be defined for any orbifold datum in any 3d defect TFT, not just the Reshetikhin--Turaev type theories we consider here. The generalised orbifold procedure was originally introduced for two-dimensional conformal field theories in \cite{Frohlich:2009gb}. 
Considering all such generalised orbifolds simultaneously can be understood as a completion procedure.
This has been made precise for 2d\,TFTs in \cite{Carqueville:2012orb}, for 3d\,TFTs in \cite{Carqueville:2023aak}, and in terms of condensation monads and Karoubi-envelopes in \cite{GJF}.
\end{rem}

We treat three examples of internal Levin--Wen models in detail, thereby illustrating how the general model reduces to previous constructions of topological phases of matter:
\begin{enumerate}
    \item \textbf{\textit{Condensable algebras (Section~\ref{subsec:condensations}):}} 
    For the orbifold datum $\opA_A$ constructed from a condensable algebra $A \in \mcC$, i.e.\ a simple symmetric commutative $\Delta$-separable Frobenius algebra,
    one finds that the condensed phase $\mcC_{\opA_A}$ agrees with the MFC of so-called local modules in $\mcC$ \cite{MR1}, which is precisely the description of the phase obtained by condensing $A$, see e.g.\ \cite{Bais:2008ni,Kong:2013}.
    \item \textbf{\textit{Levin--Wen models (Section~\ref{section:OriginalLevinWen}):}} Given a spherical fusion category $\mcS$, one can define an orbifold datum $\opA_\mcS$ in $\Vect_\opC$ such that the Hamiltonian $H_{\opA_\mcS}$ is essentially the original Levin--Wen Hamiltonian from \cite{LW} (we include edge projectors and allow for slightly more general vertex projectors). 
    \item \textbf{\textit{Kitaev models (Section~\ref{section:KitaevModel}):}} As a generalisation of the original toric code model \cite{Ki}, one can formulate a state space and Hamiltonian in terms of a finite-dimensional semisimple Hopf algebra $K$ \cite{BMCA}. We give an orbifold datum $\opA_K$ in $\Vect_\opC$, such that the Hamiltonian $H_{\opA_K}$ agrees with the Hamiltonian of the Kitaev model.
\end{enumerate}

There are many aspects of internal Levin--Wen models not covered in this paper, and to which we hope to return in the future. The three
most notable omissions are, firstly, the description of anyon excitations in the condensed phase obtained from $\opA$ in order to show that their fusion and braiding indeed agree with those of $\mcC_\opA$.
Secondly, we expect there to be an analogue of the string net description of Levin--Wen state spaces as given in \cite{LW,Balsam:2010} also for the internal Levin--Wen models. 
Thirdly, the models listed in 1.--3.\ above are all already well-studied, and one should investigate some truly new examples. An interesting candidate would be the Fibonacci-type orbifold data found in an Ising modular category \cite{MR2}, which results in the $su(2)_{10}$ WZW phase.

\subsubsection*{Relation to other works}

Let us discuss some relations of our construction of internal Levin--Wen models to other works.

\medskip

\noindent
\textbf{\textit{Condensation monads.}}
There is a strong connection to the work~\cite{GJF}.
In that work, a general algebraic language to describe condensations of $n$-dimensional phases is formulated:
The algebraic datum needed to describe such a process is a \textsl{condensation monad} in a (weak) $n$-category $\mcX$, in a sense encoding the information of the defects of an $n$-dimensional phase.
A condensation monad $\mcA$ can be thought of as a higher analogue of an idempotent on an object $S \in \mcX$ and is then used to construct a commuting-projector Hamiltonian internal to the phase $S$ realising the phase comparable to the ``image'' of $\mcA$.
Although it is the local systems (i.e.\ obtained when $S$
is the trivial phase and related to fully-extended TFTs) that are arguably of the greatest interest in this setting, the idea of internal models is mentioned as well (see~\cite[Sec.\,1.3]{GJF}).

Our model can be seen as a special case of this construction with $n=3$, $\mcX=\operatorname{B}(\operatorname{Alg}_\mcC)$ (delooping of the monoidal bicategory of algebras in a MFC $\mcC$, bimodules and bimodule morphisms) and $\mcA$ given by the orbifold datum $\opA$.
In our setting, the 3-dimensional TFT corresponding to the condensed phase is oriented so that the orbifold datum $\opA$ has a built-in datum of a ``homotopy fixed point of $SO(3)$'', which for general condensation monads is not necessary, cf.\,\cite[Rem.\,1.4.7]{GJF}.
One advantage of the description of internal Levin--Wen models given here is that it is very explicit: we give a concrete vector space as state space and a concrete linear map as Hamiltonian (Section~\ref{sec:ExplicitComputation}).

\medskip

\noindent
\textbf{\textit{Boundary theories of Walker-Wang models.}}
A closely related commuting-projector model is described in 
\cite[\S\,II\,B]{Huston:2022utd}, which lives on the boundary of a Walker-Wang model \cite{Walker:2011}. The latter can be thought of as a three-dimensional generalisation of Levin--Wen models (and are more general than the three-di\-men\-sion\-al models given already in \cite{LW}). The approach in \cite{Huston:2022utd} is via $\mcA$-enriched categories (for a unitary MFC $\mcA$). From the $\mcA$-enriched category one can obtain a unitary fusion category $\mathcal{X}$ such that $\mcZ(\mcX) \cong \mcZ^{\mcA}(\mcX) \boxtimes \mcA$ for a unitary MFC $\mcZ^{\mcA}(\mcX)$ (the enriched centre). The phase $\mcZ^{\mcA}(\mcX)$ can be realised via a commuting-projector model that lives on the boundary of the Walker-Wang model obtained from $\mcA$.

This is indeed closely related to how the condensed phase $\mcC_\opA$ obtained from an orbifold datum can be described. Namely, by \cite[Prop.\,7.4]{Mu} we have $\mcC_\opA \boxtimes \widetilde\mcC \cong \mcZ(\mathcal{L})$, where as above, $\widetilde\mcC$ is $\mcC$ with inverse braiding and twist, and $\mathcal{L}$ can be thought of as a category of line defects living on the interface of the phases $\mcC$ and $\mcC_\opA$.
It would be interesting to give a more detailed relation between the commuting-projector model of \cite{Huston:2022utd} and the internal Levin--Wen model defined here.

\medskip

\noindent
\textbf{\textit{Anyon condensation and gauging finite groups.}}
The idea of bringing together a number of anyons in an existing phase to make up a new phase was originally considered in terms of a condensable algebra $A$ in a given phase $\mcC$ \cite{Bais:2008ni,Kong:2013}. The condensed phase is given by the MFC $\mcC_A^\loc$ of local $A$-modules. 

If there is a finite group $G$ acting on $\mcC$, it may be possible to obtain a new phase by gauging $G$. This may be obstructed, and it may be possible in several ways. Technically, if $\mcC$ is the neutral part of a $G$-crossed ribbon fusion category $\mcB^\times_G$, one can pass to its equivariantisation $\mcB^{\times,G}_G$ which is again a MFC. For more details and the interpretation in the context of topological phases see \cite{ENO,Barkeshli:2014cna,CGPW}.

Both of these constructions stay within a given Witt-class and are special cases of passing from $\mcC$ to $\mcC_\opA$. But iterating condensations and gaugings does not span the entire Witt-class, so the internal Levin--Wen models are more general in that regard.

\medskip

\noindent
\textbf{\textit{Anyonic chains.}} There is a different class of models where one builds lattice models which are non-local in the sense that they start in an ambient theory of anyons, the so-called anyonic chains \cite{PhysRevLett.98.160409,Pfeifer2012,Buican:2017rxc}. 
These are 1-dimensional spin chains, and the Hamiltonians typically studied are not of the commuting projector type, but instead are gapless and produce interesting conformal field theories as critical points.

The internal Levin--Wen model shares the idea of starting in a given anyonic system, but the model here is a 2-dimensional spin system and we use a commuting-projector Hamiltonian. This Hamiltonian is gapped by construction and its purpose is to produce a new topological phase of matter via its ground states.

\medskip

\noindent
\textbf{\textit{Further directions.}} 
The recent works \cite{CGHP,LB} give a parameter-dependent Hamiltonian which can implement the transition from one Witt-trivial phase $\mcZ(\mcS)$ to another such phase obtained by anyon condensation.
It would be interesting to study these transitions also in the present setting of internal Levin--Wen models.
In the context of tensor network models one can use boundary conditions in Turaev--Viro--Barrett--Westbury to create interesting states in the ground state space of the Levin--Wen Hamiltonian \cite{Lootens:2020mso}.
The latter has a direct generalisation to our internal Levin--Wen models that would be interesting to investigate.
A version of Levin--Wen models which are defined also for non-semisimple input categories was developed in \cite{Geer:2021wkd,Geer:2022qoa}.
Here too it would be very interesting to see if an analogous generalisation is possible for internal Levin--Wen models. 

\bigskip

\noindent
\subsubsection*{Acknowledgements}
The authors would like to thank 
Kevin Walker
and
Sukhwinder Singh
for helpful discussions.
IR is partially supported by the Deutsche Forschungsgemeinschaft via
the Cluster of Excellence EXC 2121 ``Quantum Universe'' - 390833306.
VM greatly acknowledges the support of Max Planck Institute for Mathematics (MPIM) in Bonn, the Junior Research Fellows' programme at Erwin Schr\"odinger Institute for Mathematics and Physics (ESI) in Vienna as well as the Research Council of Lithuania under the grant S-PD-22-79 ``Lattice systems in topological quantum field theories''.

\newpage

\section{Prerequisites}

In this section we review the construction of 3-dimensional topological quantum field theories (TFTs) of Reshetikhin--Turaev type~\cite[Ch.\,IV]{Tu}, including its version incorporating line and surface defects~\cite{CRS2}, their so-called generalised orbifolds~\cite{CRS1,CRS3}, as well as the required algebraic notions of modular fusion categories (MFCs)~\cite[Sec.\,II.1]{Tu}, \cite[Ch.\,3]{BakK}, Frobenius algebras~\cite{FRS1,FFRS} and orbifold data.
The exposition skips most of the technicalities, a more detailed summary is available e.g.\ in~\cite[Sec.\,3\&{}4]{CMRSS2}, which is mostly followed here.

\subsection{TFT with line defects}
\label{subsec:RT_theory_w_line_defects}

Let $\opk$ be an algebraically closed field of characteristic zero. 
For the application to topological phases of matter one can of course set $\opk = \opC$.
A \textsl{modular fusion category (MFC)} is a ribbon fusion category $\mcC$ with non-degenerate braiding.
We will often omit the tensor product symbol between objects of $\mcC$ to have less cluttered formulas, e.g.\ we often write $XY$ instead of $X \otimes Y$.
Recall that $\mcC$ is called:
\begin{itemize}
\item
\textsl{fusion} if it is $\opk$-linear, monoidal, rigid, finitely semisimple and the monoidal unit $\opid\in\mcC$ is a simple object;
we denote by
\begin{equation}\label{eq:Irr_C-def}
    \Irr_\mcC
\end{equation}
a fixed set of representatives of isomorphism classes of simple objects of $\mcC$ such that $\opid\in\Irr_\mcC$;
\item
\textsl{spherical} if each object $X\in\mcC$ has an associated dual object $X^*\in\mcC$ as well as compatible left/right (co-)evaluation morphisms $\ev_X : X^*X \rightleftarrows \opid : \coevt_X$, $\evt_X : XX^* \rightleftarrows \opid : \coev_X$, such that the left and right categorical trace coincide. We write $\tr f \in \End_\mcC\opid \cong \opk$ for the trace of $f\in\End_\mcC X$ and $d_X = \tr \id_X$ for the categorical dimension of $X \in \mcC$.
We denote by $\Dim\mcC := \sum_{i\in\Irr_\mcC}d_i^2$ the global dimension of $\mcC$.
\item
\textsl{ribbon} if $\mcC$ is spherical and equipped with braiding $\{c_{X,Y}\colon XY \xra{\sim} YX\}_{X,Y\in\mcC}$, such that for each object $X\in\mcC$ the left/right twist morphisms $\theta^l_X, \theta^r_X\in\End_\mcC X$, defined as left/right partial traces of $c_{X,X}$, are equal.
\item
\textsl{non-degenerate} if it is a ribbon fusion and the $s$-matrix $s_{ij} = \tr ( c_{j,i}\circ c_{i,j} )$, $i,j\in\Irr_\mcC$ is non-degenerate.
\end{itemize}
For $X,Y \in \mcC$ we denote the space of morphisms $X \to Y$ by $\Hom_{\mcC}(X,Y)$ or just $\mcC(X,Y)$ for short.

\medskip

Given a MFC $\mcC$ and a square root $\scrD=(\Dim\mcC)^{1/2}$ of the global dimension, one can define the Reshetikhin--Turaev TFT (RT TFT), 
which is a symmetric monoidal functor
\begin{equation}
\label{eq:zrt}
\zrt_\mcC\colon\Bordribhat_3(\mcC) \ra \Vect_\opk \, ,
\end{equation}
where the source category is a certain central extension of the category $\Bordrib_3(\mcC)$. The latter has compact oriented 3-dimensional bordisms with embedded $\mcC$-coloured ribbon graphs as morphisms and oriented closed surfaces with $\mcC$-coloured framed points (or punctures) as objects.
In this paper we will only use it to evaluate bordisms whose underlying topology is that of the cylinder $\Sigma\times [0,1]$ for some oriented surface $\Sigma$.
The central extension of $\Bordribhat_3(\mcC)$ is in general introduced to eliminate a gluing anomaly, which in the case of composing cylinders can be ignored.

\medskip

As we will not require most of the technicalities needed to define the functor~\eqref{eq:zrt}, we only present here some of its properties:
\begin{enumerate}
\item
Define the object 
\begin{align}
L = \bigoplus_{i\in\Irr_\mcC} i\otimes i^* ~\in \mcC \ .
\label{eq:Coend}
\end{align}
Consider an object of $\Bordrib_3(\mcC)$ of the form $\Sigma=(\Sigma,P)$, where $\Sigma$ is an oriented closed connected surface 
of genus $g$, and $P=\{p_1,\dots,p_n\}$ is a finite set of framed points, each of which carries a label by an object $X_i\in\mcC$ (we will call such points punctures).
Set $|p_i| = +$ if the framing of $p_i\in P$ coincides with the orientation of $\Sigma$ and $|p_i| = -$ otherwise and, for an object $X\in\mcC$, denote $X^+ = X$, $X^-=X^*$.
One has:
\begin{equation}
\label{eq:zrt_on_objects}
\zrt_\mcC(\Sigma,P) \cong \mcC(\opid, X_1^{|p_1|}\cdots X_n^{|p_n|} L^{\otimes g}) \, .
\end{equation}
\item
Given objects $(\Sigma_a,P)$, $(\Sigma_b,Q)$ as in part 1, a morphism $(\Sigma_a,P)\ra(\Sigma_b,Q)$ in $\Bordrib_3(\mcC)$ is a pair $M=(M,R)$, where $M\colon\Sigma_a\ra\Sigma_b$ is a bordism and $R$ is an embedded ribbon graph, intersecting the boundary at points $P\cup Q$.
The linear map $\zrt_\mcC(M)$ is given by postcomposition
\begin{align} \nonumber
 \mcC(\opid,X_{1}^{|p_1|}\cdots X_{n}^{|p_n|} L^{\otimes g_a}) & \ra  \mcC(\opid,Y_{1}^{|q_1|}\cdots Y_{m}^{|q_m|} L^{\otimes g_b})\\ \label{eq:zrt_on_morphisms}
 f & \mapsto  \Omega_M \circ f \, ,
\end{align}
with a morphism $\Omega_M\in\mcC(X^{|p_1|}_1 \cdots X^{|p_n|}_n L^{\otimes g_a}, ~Y_{1}^{|q_1|}\cdots Y_{m}^{|q_m|} L^{\otimes g_b})$. 
    
\item
In case one takes $\mcC = \Vect_\opk$, the trivial MFC of finite dimensional vector spaces, the expression~\eqref{eq:zrt_on_objects} simplifies to $X_1^{|p_1|}\cdots X_n^{|p_n|}$, the tensor product of the vector spaces labelling the punctures and their duals.
The evaluation~\eqref{eq:zrt_on_morphisms} of a bordism $(M,R)$ is then obtained by forgetting the topology of $M$ and reading off $R$ as a string diagram in $\Vect_\opk$.
\end{enumerate}

\begin{rem}
A modular fusion category $\mcC$ describes a $(2+1)$-dimensional topological order; the objects of $\mcC$ are interpreted as possible particle-like excitations (anyons) on a surface.
The space of states of a surface having several such particles is exactly the space $\zrt_\mcC(\Sigma)$ assigned to a surface $\Sigma\in\Bordrib_3(\mcC)$ with marked points.
A ribbon graph embedded in the cylinder $\Sigma \times [0,1]$
describes a process in which the particles move around the surface and possibly interact at points, labelled by morphisms in $\mcC$.
\end{rem}

\begin{figure}
	\captionsetup{format=plain, indention=0.5cm}
	\centerline{
		$\zrt_\mcC\left(
	\begin{tikzpicture}[baseline={([yshift=-.5ex]current bounding box.center)}]
	\node at (0,0) {\def\svgscale{1.0} 
\begingroup%
  \makeatletter%
  \providecommand\color[2][]{%
    \errmessage{(Inkscape) Color is used for the text in Inkscape, but the package 'color.sty' is not loaded}%
    \renewcommand\color[2][]{}%
  }%
  \providecommand\transparent[1]{%
    \errmessage{(Inkscape) Transparency is used (non-zero) for the text in Inkscape, but the package 'transparent.sty' is not loaded}%
    \renewcommand\transparent[1]{}%
  }%
  \providecommand\rotatebox[2]{#2}%
  \newcommand*\fsize{\dimexpr\f@size pt\relax}%
  \newcommand*\lineheight[1]{\fontsize{\fsize}{#1\fsize}\selectfont}%
  \ifx\svgwidth\undefined%
    \setlength{\unitlength}{132.7523785bp}%
    \ifx\svgscale\undefined%
      \relax%
    \else%
      \setlength{\unitlength}{\unitlength * \real{\svgscale}}%
    \fi%
  \else%
    \setlength{\unitlength}{\svgwidth}%
  \fi%
  \global\let\svgwidth\undefined%
  \global\let\svgscale\undefined%
  \makeatother%
  \begin{picture}(1,1.14687209)%
    \lineheight{1}%
    \setlength\tabcolsep{0pt}%
    \put(0,0){\includegraphics[width=\unitlength,page=1]{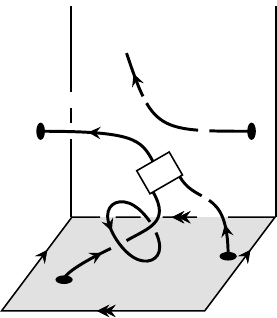}}%
    \put(0.5815122,0.48772355){\color[rgb]{0,0,0}\rotatebox{30}{\makebox(0,0)[lt]{\smash{\begin{tabular}[t]{l}$f$\end{tabular}}}}}%
    \put(0.20065856,0.04623402){\color[rgb]{0,0,0}\makebox(0,0)[lt]{\smash{\begin{tabular}[t]{l}$X$\end{tabular}}}}%
    \put(0.76950425,0.12636597){\color[rgb]{0,0,0}\makebox(0,0)[lt]{\smash{\begin{tabular}[t]{l}$Y$\end{tabular}}}}%
    \put(0.28257795,0.68925317){\color[rgb]{0,0,0}\makebox(0,0)[lt]{\smash{\begin{tabular}[t]{l}$Z$\end{tabular}}}}%
    \put(0.37645034,0.43436081){\color[rgb]{0,0,0}\makebox(0,0)[lt]{\smash{\begin{tabular}[t]{l}$W$\end{tabular}}}}%
    \put(0,0){\includegraphics[width=\unitlength,page=2]{21_ZRT_eval_lhs.pdf}}%
    \put(0.48031455,0.98303317){\color[rgb]{0,0,0}\makebox(0,0)[lt]{\smash{\begin{tabular}[t]{l}$Z$\end{tabular}}}}%
  \end{picture}%
\endgroup%
 };
	\end{tikzpicture}
\right) =
		\zrt_\mcC\left(
	\begin{tikzpicture}[baseline={([yshift=-.5ex]current bounding box.center)}]
	\node at (0,0) {\def\svgscale{1.0} 
\begingroup%
  \makeatletter%
  \providecommand\color[2][]{%
    \errmessage{(Inkscape) Color is used for the text in Inkscape, but the package 'color.sty' is not loaded}%
    \renewcommand\color[2][]{}%
  }%
  \providecommand\transparent[1]{%
    \errmessage{(Inkscape) Transparency is used (non-zero) for the text in Inkscape, but the package 'transparent.sty' is not loaded}%
    \renewcommand\transparent[1]{}%
  }%
  \providecommand\rotatebox[2]{#2}%
  \newcommand*\fsize{\dimexpr\f@size pt\relax}%
  \newcommand*\lineheight[1]{\fontsize{\fsize}{#1\fsize}\selectfont}%
  \ifx\svgwidth\undefined%
    \setlength{\unitlength}{132.75238111bp}%
    \ifx\svgscale\undefined%
      \relax%
    \else%
      \setlength{\unitlength}{\unitlength * \real{\svgscale}}%
    \fi%
  \else%
    \setlength{\unitlength}{\svgwidth}%
  \fi%
  \global\let\svgwidth\undefined%
  \global\let\svgscale\undefined%
  \makeatother%
  \begin{picture}(1,1.14687207)%
    \lineheight{1}%
    \setlength\tabcolsep{0pt}%
    \put(0,0){\includegraphics[width=\unitlength,page=1]{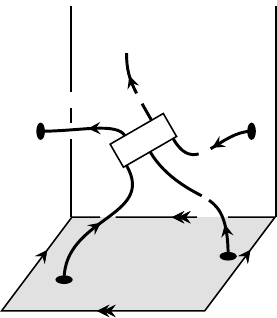}}%
    \put(0.50485808,0.61131975){\color[rgb]{0,0,0}\rotatebox{30}{\makebox(0,0)[lt]{\smash{\begin{tabular}[t]{l}$g$\end{tabular}}}}}%
    \put(0.20065863,0.04623402){\color[rgb]{0,0,0}\makebox(0,0)[lt]{\smash{\begin{tabular}[t]{l}$X$\end{tabular}}}}%
    \put(0.76950423,0.12636597){\color[rgb]{0,0,0}\makebox(0,0)[lt]{\smash{\begin{tabular}[t]{l}$Y$\end{tabular}}}}%
    \put(0.1778394,0.58398491){\color[rgb]{0,0,0}\makebox(0,0)[lt]{\smash{\begin{tabular}[t]{l}$Z$\end{tabular}}}}%
    \put(0.80755078,0.67795391){\color[rgb]{0,0,0}\makebox(0,0)[lt]{\smash{\begin{tabular}[t]{l}$Z$\end{tabular}}}}%
    \put(0,0){\includegraphics[width=\unitlength,page=2]{21_ZRT_eval_rhs.pdf}}%
    \put(0.48031454,0.98303315){\color[rgb]{0,0,0}\makebox(0,0)[lt]{\smash{\begin{tabular}[t]{l}$Z$\end{tabular}}}}%
  \end{picture}%
\endgroup%
 };
	\end{tikzpicture}
\right)$
	}
	\caption{
    $\zrt_\mcC$ is invariant under skein relations: replacing the ribbon graph in the dotted ball on the left-hand side by the corresponding coupon on the right-hand side does not change the value of $\zrt_\mcC$.
	}
	\label{fig:ZRT_eval}
\end{figure}

\subsection{TFT with line and surface defects}
\label{subsec:RT_theory_w_line_and_surface_defects}

A defect TFT~\cite{DKR,Ca,CRS1,CMS} is a symmetric monoidal functor having the source category $\Borddef_n(\opD)$ of stratified\footnote{A \textsl{stratification} of a manifold $M$ is a filtration into subspaces $M=F_n\supseteq F_{n-1}\supseteq\dots\supseteq F_0 \supseteq F_{-1}=\varnothing$, subject to certain non-degeneracy conditions. In particular, $F_j\setminus F_{j-1}$ is a (possibly empty) $j$-dimensional manifold whose connected components are called $j$-\textsl{strata} (or points, lines, surfaces depending on dimension). See e.g.\ \cite[Sec.\,2.1]{CMS} for more details.} bordisms, with strata carrying labels from a so-called defect datum $\opD$.
The RT TFT~\eqref{eq:zrt} can be generalised to an instance of such defect TFTs~\cite{CRS2}, see also \cite{KS,FSV,KMRS,Carqueville:2023aak},
\begin{equation}
\label{eq:zdef}
\zdef_\mcC\colon\Borddefhat_3(\opD^\mcC)\ra\Vect_\opk \, .
\end{equation}
The defect datum $\opD^\mcC$ consists of: symmetric $\Delta$-separable Frobenius algebras in $\mcC$ (labels for 2-strata, i.e.\ surfaces), their (multi-)modules (labels for 1-strata, i.e.\ lines) and (multi-)module morphisms (labels for 0-strata, i.e.\ points).
The overhat denotes a central extension analogous to that in~\eqref{eq:zrt}, which as before will be ignored here.

\medskip

Below we will rely on the graphical calculus of string diagrams to depict morphisms.
Our convention is to read diagrams from bottom to top.
String diagrams for spherical categories  have oriented strands and can be deformed up to a plane isotopy, with downwards orientation denoting the dual object and the cups/caps denoting the (co)evaluation morphisms.
In the case of a ribbon category, string diagrams can be read as isotopy classes of ribbon tangles with fixed base points.

\medskip

The precise definition of the defect TFT~\eqref{eq:zdef} is laid out in~\cite[Sec.\,5]{CRS2} and summarised in~\cite[Sec.\,3.2]{CMRSS2}.
Here we will only need some properties of it.
In particular, let us review the notions used in defining the set of labels $\opD^\mcC$:
\begin{itemize}
\item
A symmetric $\D$-separable Frobenius algebra $A\in\mcC$ is a simultaneous algebra/coalgebra object in $\mcC$, whose product 
$\mu =
\!\!\!
\tikzzbox{\begin{tikzpicture}[very thick,scale=0.25,color=green!50!black, baseline=0.05cm]
	\draw[-dot-] (3,0) .. controls +(0,1) and +(0,1) .. (2,0);
	\draw (2.5,0.75) -- (2.5,1.5); 
	\fill (2,0) circle (0pt) node[left] (D) {};
	\fill (3,0) circle (0pt) node[right] (D) {};
	\fill (2.5,1.5) circle (0pt) node[right] (D) {};
\end{tikzpicture}}
\!\!\!
$, \,
coproduct
$
\Delta =
\tikzzbox{\begin{tikzpicture}[very thick,scale=0.25,color=green!50!black, baseline=-0.3cm, rotate=180]
	\draw[-dot-] (3,0) .. controls +(0,1) and +(0,1) .. (2,0);
	\draw (2.5,0.75) -- (2.5,1.5); 
	\fill (2,0) circle (0pt) node[left] (D) {};
	\fill (3,0) circle (0pt) node[right] (D) {};
	\fill (2.5,1.5) circle (0pt) node[right] (D) {};
\end{tikzpicture}}
$, \,
unit
$
\eta =
\tikzzbox{\begin{tikzpicture}[very thick,scale=0.25,color=green!50!black, baseline=-0.03cm]
	\draw (0,0) node[Odot] (unit) {};
	\draw (0,1) -- (unit);
\end{tikzpicture}}
$, \,
and counit
$
\varepsilon =
\tikzzbox{\begin{tikzpicture}[very thick,scale=0.25,color=green!50!black, baseline=0.05cm]
	\draw (0,1) node[Odot] (unit) {};
	\draw (0,0) -- (unit);
\end{tikzpicture}}
$
in addition satisfy the identities
\begin{equation}
\label{eq:symm_Delta-sep_FA}
\underbrace{
	
	\begin{tikzpicture}[baseline={([yshift=-.5ex]current bounding box.center)}]
	\node at (0,0) {\def\svgscale{1.0} 
\begingroup%
  \makeatletter%
  \providecommand\color[2][]{%
    \errmessage{(Inkscape) Color is used for the text in Inkscape, but the package 'color.sty' is not loaded}%
    \renewcommand\color[2][]{}%
  }%
  \providecommand\transparent[1]{%
    \errmessage{(Inkscape) Transparency is used (non-zero) for the text in Inkscape, but the package 'transparent.sty' is not loaded}%
    \renewcommand\transparent[1]{}%
  }%
  \providecommand\rotatebox[2]{#2}%
  \newcommand*\fsize{\dimexpr\f@size pt\relax}%
  \newcommand*\lineheight[1]{\fontsize{\fsize}{#1\fsize}\selectfont}%
  \ifx\svgwidth\undefined%
    \setlength{\unitlength}{38.09769044bp}%
    \ifx\svgscale\undefined%
      \relax%
    \else%
      \setlength{\unitlength}{\unitlength * \real{\svgscale}}%
    \fi%
  \else%
    \setlength{\unitlength}{\svgwidth}%
  \fi%
  \global\let\svgwidth\undefined%
  \global\let\svgscale\undefined%
  \makeatother%
  \begin{picture}(1,1.67855883)%
    \lineheight{1}%
    \setlength\tabcolsep{0pt}%
    \put(0,0){\includegraphics[width=\unitlength,page=1]{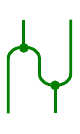}}%
    \put(-0.00399877,0){\color[rgb]{0,0.50196078,0}\makebox(0,0)[lt]{\smash{\begin{tabular}[t]{l}$A$\end{tabular}}}}%
    \put(0.58658822,0){\color[rgb]{0,0.50196078,0}\makebox(0,0)[lt]{\smash{\begin{tabular}[t]{l}$A$\end{tabular}}}}%
    \put(0.19809363,1.44955259){\color[rgb]{0,0.50196078,0}\makebox(0,0)[lt]{\smash{\begin{tabular}[t]{l}$A$\end{tabular}}}}%
    \put(0.78868061,1.44955259){\color[rgb]{0,0.50196078,0}\makebox(0,0)[lt]{\smash{\begin{tabular}[t]{l}$A$\end{tabular}}}}%
  \end{picture}%
\endgroup%
 };
	\end{tikzpicture}
 \hspace{-4pt}=\hspace{-4pt}
	
	\begin{tikzpicture}[baseline={([yshift=-.5ex]current bounding box.center)}]
	\node at (0,0) {\def\svgscale{1.0} 
\begingroup%
  \makeatletter%
  \providecommand\color[2][]{%
    \errmessage{(Inkscape) Color is used for the text in Inkscape, but the package 'color.sty' is not loaded}%
    \renewcommand\color[2][]{}%
  }%
  \providecommand\transparent[1]{%
    \errmessage{(Inkscape) Transparency is used (non-zero) for the text in Inkscape, but the package 'transparent.sty' is not loaded}%
    \renewcommand\transparent[1]{}%
  }%
  \providecommand\rotatebox[2]{#2}%
  \newcommand*\fsize{\dimexpr\f@size pt\relax}%
  \newcommand*\lineheight[1]{\fontsize{\fsize}{#1\fsize}\selectfont}%
  \ifx\svgwidth\undefined%
    \setlength{\unitlength}{23.09763277bp}%
    \ifx\svgscale\undefined%
      \relax%
    \else%
      \setlength{\unitlength}{\unitlength * \real{\svgscale}}%
    \fi%
  \else%
    \setlength{\unitlength}{\svgwidth}%
  \fi%
  \global\let\svgwidth\undefined%
  \global\let\svgscale\undefined%
  \makeatother%
  \begin{picture}(1,2.768648)%
    \lineheight{1}%
    \setlength\tabcolsep{0pt}%
    \put(0,0){\includegraphics[width=\unitlength,page=1]{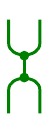}}%
    \put(0.00202968,0){\color[rgb]{0,0.50196078,0}\makebox(0,0)[lt]{\smash{\begin{tabular}[t]{l}$A$\end{tabular}}}}%
    \put(0.65144563,0){\color[rgb]{0,0.50196078,0}\makebox(0,0)[lt]{\smash{\begin{tabular}[t]{l}$A$\end{tabular}}}}%
    \put(-0.00659564,2.39092059){\color[rgb]{0,0.50196078,0}\makebox(0,0)[lt]{\smash{\begin{tabular}[t]{l}$A$\end{tabular}}}}%
    \put(0.64282031,2.39092059){\color[rgb]{0,0.50196078,0}\makebox(0,0)[lt]{\smash{\begin{tabular}[t]{l}$A$\end{tabular}}}}%
  \end{picture}%
\endgroup%
 };
	\end{tikzpicture}
 \hspace{-4pt}=\hspace{-4pt}
	
	\begin{tikzpicture}[baseline={([yshift=-.5ex]current bounding box.center)}]
	\node at (0,0) {\def\svgscale{1.0} 
\begingroup%
  \makeatletter%
  \providecommand\color[2][]{%
    \errmessage{(Inkscape) Color is used for the text in Inkscape, but the package 'color.sty' is not loaded}%
    \renewcommand\color[2][]{}%
  }%
  \providecommand\transparent[1]{%
    \errmessage{(Inkscape) Transparency is used (non-zero) for the text in Inkscape, but the package 'transparent.sty' is not loaded}%
    \renewcommand\transparent[1]{}%
  }%
  \providecommand\rotatebox[2]{#2}%
  \newcommand*\fsize{\dimexpr\f@size pt\relax}%
  \newcommand*\lineheight[1]{\fontsize{\fsize}{#1\fsize}\selectfont}%
  \ifx\svgwidth\undefined%
    \setlength{\unitlength}{37.6992414bp}%
    \ifx\svgscale\undefined%
      \relax%
    \else%
      \setlength{\unitlength}{\unitlength * \real{\svgscale}}%
    \fi%
  \else%
    \setlength{\unitlength}{\svgwidth}%
  \fi%
  \global\let\svgwidth\undefined%
  \global\let\svgscale\undefined%
  \makeatother%
  \begin{picture}(1,1.69629977)%
    \lineheight{1}%
    \setlength\tabcolsep{0pt}%
    \put(0,0){\includegraphics[width=\unitlength,page=1]{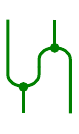}}%
    \put(-0.00404103,1.46487313){\color[rgb]{0,0.50196078,0}\makebox(0,0)[lt]{\smash{\begin{tabular}[t]{l}$A$\end{tabular}}}}%
    \put(0.59278796,1.46487313){\color[rgb]{0,0.50196078,0}\makebox(0,0)[lt]{\smash{\begin{tabular}[t]{l}$A$\end{tabular}}}}%
    \put(0.18961815,-0){\color[rgb]{0,0.50196078,0}\makebox(0,0)[lt]{\smash{\begin{tabular}[t]{l}$A$\end{tabular}}}}%
    \put(0.78644714,-0){\color[rgb]{0,0.50196078,0}\makebox(0,0)[lt]{\smash{\begin{tabular}[t]{l}$A$\end{tabular}}}}%
  \end{picture}%
\endgroup%
 };
	\end{tikzpicture}

}_{\text{Frobenius}}\, , \,
\underbrace{
	
	\begin{tikzpicture}[baseline={([yshift=-.5ex]current bounding box.center)}]
	\node at (0,0) {\def\svgscale{1.0} 
\begingroup%
  \makeatletter%
  \providecommand\color[2][]{%
    \errmessage{(Inkscape) Color is used for the text in Inkscape, but the package 'color.sty' is not loaded}%
    \renewcommand\color[2][]{}%
  }%
  \providecommand\transparent[1]{%
    \errmessage{(Inkscape) Transparency is used (non-zero) for the text in Inkscape, but the package 'transparent.sty' is not loaded}%
    \renewcommand\transparent[1]{}%
  }%
  \providecommand\rotatebox[2]{#2}%
  \newcommand*\fsize{\dimexpr\f@size pt\relax}%
  \newcommand*\lineheight[1]{\fontsize{\fsize}{#1\fsize}\selectfont}%
  \ifx\svgwidth\undefined%
    \setlength{\unitlength}{43.41203384bp}%
    \ifx\svgscale\undefined%
      \relax%
    \else%
      \setlength{\unitlength}{\unitlength * \real{\svgscale}}%
    \fi%
  \else%
    \setlength{\unitlength}{\svgwidth}%
  \fi%
  \global\let\svgwidth\undefined%
  \global\let\svgscale\undefined%
  \makeatother%
  \begin{picture}(1,1.48211858)%
    \lineheight{1}%
    \setlength\tabcolsep{0pt}%
    \put(0,0){\includegraphics[width=\unitlength,page=1]{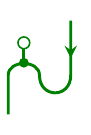}}%
    \put(-0.00350925,-0){\color[rgb]{0,0.50196078,0}\makebox(0,0)[lt]{\smash{\begin{tabular}[t]{l}$A$\end{tabular}}}}%
    \put(0.65757804,1.27210338){\color[rgb]{0,0.50196078,0}\makebox(0,0)[lt]{\smash{\begin{tabular}[t]{l}$A^*$\end{tabular}}}}%
  \end{picture}%
\endgroup%
 };
	\end{tikzpicture}
 \hspace{-4pt}=\hspace{-4pt}
	
	\begin{tikzpicture}[baseline={([yshift=-.5ex]current bounding box.center)}]
	\node at (0,0) {\def\svgscale{1.0} 
\begingroup%
  \makeatletter%
  \providecommand\color[2][]{%
    \errmessage{(Inkscape) Color is used for the text in Inkscape, but the package 'color.sty' is not loaded}%
    \renewcommand\color[2][]{}%
  }%
  \providecommand\transparent[1]{%
    \errmessage{(Inkscape) Transparency is used (non-zero) for the text in Inkscape, but the package 'transparent.sty' is not loaded}%
    \renewcommand\transparent[1]{}%
  }%
  \providecommand\rotatebox[2]{#2}%
  \newcommand*\fsize{\dimexpr\f@size pt\relax}%
  \newcommand*\lineheight[1]{\fontsize{\fsize}{#1\fsize}\selectfont}%
  \ifx\svgwidth\undefined%
    \setlength{\unitlength}{39.19926447bp}%
    \ifx\svgscale\undefined%
      \relax%
    \else%
      \setlength{\unitlength}{\unitlength * \real{\svgscale}}%
    \fi%
  \else%
    \setlength{\unitlength}{\svgwidth}%
  \fi%
  \global\let\svgwidth\undefined%
  \global\let\svgscale\undefined%
  \makeatother%
  \begin{picture}(1,1.64140279)%
    \lineheight{1}%
    \setlength\tabcolsep{0pt}%
    \put(0,0){\includegraphics[width=\unitlength,page=1]{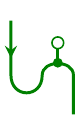}}%
    \put(-0.00388639,1.40881712){\color[rgb]{0,0.50196078,0}\makebox(0,0)[lt]{\smash{\begin{tabular}[t]{l}$A^*$\end{tabular}}}}%
    \put(0.79461908,0){\color[rgb]{0,0.50196078,0}\makebox(0,0)[lt]{\smash{\begin{tabular}[t]{l}$A$\end{tabular}}}}%
  \end{picture}%
\endgroup%
 };
	\end{tikzpicture}

}_{\text{symmetric}} \, , \,
\underbrace{
	
	\begin{tikzpicture}[baseline={([yshift=-.5ex]current bounding box.center)}]
	\node at (0,0) {\def\svgscale{1.0} 
\begingroup%
  \makeatletter%
  \providecommand\color[2][]{%
    \errmessage{(Inkscape) Color is used for the text in Inkscape, but the package 'color.sty' is not loaded}%
    \renewcommand\color[2][]{}%
  }%
  \providecommand\transparent[1]{%
    \errmessage{(Inkscape) Transparency is used (non-zero) for the text in Inkscape, but the package 'transparent.sty' is not loaded}%
    \renewcommand\transparent[1]{}%
  }%
  \providecommand\rotatebox[2]{#2}%
  \newcommand*\fsize{\dimexpr\f@size pt\relax}%
  \newcommand*\lineheight[1]{\fontsize{\fsize}{#1\fsize}\selectfont}%
  \ifx\svgwidth\undefined%
    \setlength{\unitlength}{16.41733228bp}%
    \ifx\svgscale\undefined%
      \relax%
    \else%
      \setlength{\unitlength}{\unitlength * \real{\svgscale}}%
    \fi%
  \else%
    \setlength{\unitlength}{\svgwidth}%
  \fi%
  \global\let\svgwidth\undefined%
  \global\let\svgscale\undefined%
  \makeatother%
  \begin{picture}(1,3.89522636)%
    \lineheight{1}%
    \setlength\tabcolsep{0pt}%
    \put(0,0){\includegraphics[width=\unitlength,page=1]{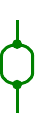}}%
    \put(0.25016671,-0){\color[rgb]{0,0.50196078,0}\makebox(0,0)[lt]{\smash{\begin{tabular}[t]{l}$A$\end{tabular}}}}%
    \put(0.25016671,3.36379965){\color[rgb]{0,0.50196078,0}\makebox(0,0)[lt]{\smash{\begin{tabular}[t]{l}$A$\end{tabular}}}}%
  \end{picture}%
\endgroup%
 };
	\end{tikzpicture}
 \hspace{-2pt}=\hspace{-6pt}
	
	\begin{tikzpicture}[baseline={([yshift=-.5ex]current bounding box.center)}]
	\node at (0,0) {\def\svgscale{1.0} 
\begingroup%
  \makeatletter%
  \providecommand\color[2][]{%
    \errmessage{(Inkscape) Color is used for the text in Inkscape, but the package 'color.sty' is not loaded}%
    \renewcommand\color[2][]{}%
  }%
  \providecommand\transparent[1]{%
    \errmessage{(Inkscape) Transparency is used (non-zero) for the text in Inkscape, but the package 'transparent.sty' is not loaded}%
    \renewcommand\transparent[1]{}%
  }%
  \providecommand\rotatebox[2]{#2}%
  \newcommand*\fsize{\dimexpr\f@size pt\relax}%
  \newcommand*\lineheight[1]{\fontsize{\fsize}{#1\fsize}\selectfont}%
  \ifx\svgwidth\undefined%
    \setlength{\unitlength}{7.89843708bp}%
    \ifx\svgscale\undefined%
      \relax%
    \else%
      \setlength{\unitlength}{\unitlength * \real{\svgscale}}%
    \fi%
  \else%
    \setlength{\unitlength}{\svgwidth}%
  \fi%
  \global\let\svgwidth\undefined%
  \global\let\svgscale\undefined%
  \makeatother%
  \begin{picture}(1,8.09644045)%
    \lineheight{1}%
    \setlength\tabcolsep{0pt}%
    \put(0,0){\includegraphics[width=\unitlength,page=1]{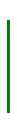}}%
    \put(-0.01928783,0){\color[rgb]{0,0.50196078,0}\makebox(0,0)[lt]{\smash{\begin{tabular}[t]{l}$A$\end{tabular}}}}%
    \put(-0.01928783,6.99184104){\color[rgb]{0,0.50196078,0}\makebox(0,0)[lt]{\smash{\begin{tabular}[t]{l}$A$\end{tabular}}}}%
  \end{picture}%
\endgroup%
 };
	\end{tikzpicture}

}_{\Delta\text{-separable}} \, .
\end{equation}
\item
A module of a Frobenius algebra is just a module of the underlying algebra. For Frobenius algebras, each module is also a comodule via $M \xrightarrow{(\Delta \circ \eta) \otimes \id_M} AAM \xrightarrow{\id_A \otimes \vartriangleright} AM$, where $\vartriangleright\colon AM \to M$ is the action.
\item
For a $\D$-separable Frobenius algebra $A\in\mcC$, the relative tensor product of a right module $K\in\mcC$ and a left module $L\in\mcC$ can be expressed as the image of an idempotent:
\begin{equation}
\label{eq:K-xA-L_idempotent}
K \otimes_A L = \im 
	\begin{tikzpicture}[baseline={([yshift=-.5ex]current bounding box.center)}]
	\node at (0,0) {\def\svgscale{1.0} 
\begingroup%
  \makeatletter%
  \providecommand\color[2][]{%
    \errmessage{(Inkscape) Color is used for the text in Inkscape, but the package 'color.sty' is not loaded}%
    \renewcommand\color[2][]{}%
  }%
  \providecommand\transparent[1]{%
    \errmessage{(Inkscape) Transparency is used (non-zero) for the text in Inkscape, but the package 'transparent.sty' is not loaded}%
    \renewcommand\transparent[1]{}%
  }%
  \providecommand\rotatebox[2]{#2}%
  \newcommand*\fsize{\dimexpr\f@size pt\relax}%
  \newcommand*\lineheight[1]{\fontsize{\fsize}{#1\fsize}\selectfont}%
  \ifx\svgwidth\undefined%
    \setlength{\unitlength}{28.66991582bp}%
    \ifx\svgscale\undefined%
      \relax%
    \else%
      \setlength{\unitlength}{\unitlength * \real{\svgscale}}%
    \fi%
  \else%
    \setlength{\unitlength}{\svgwidth}%
  \fi%
  \global\let\svgwidth\undefined%
  \global\let\svgscale\undefined%
  \makeatother%
  \begin{picture}(1,2.23053412)%
    \lineheight{1}%
    \setlength\tabcolsep{0pt}%
    \put(0,0){\includegraphics[width=\unitlength,page=1]{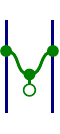}}%
    \put(-0.04189659,0){\color[rgb]{0,0,0.50196078}\makebox(0,0)[lt]{\smash{\begin{tabular}[t]{l}$K$\end{tabular}}}}%
    \put(0.76660532,0){\color[rgb]{0,0,0.50196078}\makebox(0,0)[lt]{\smash{\begin{tabular}[t]{l}$L$\end{tabular}}}}%
    \put(-0.04189659,1.92622179){\color[rgb]{0,0,0.50196078}\makebox(0,0)[lt]{\smash{\begin{tabular}[t]{l}$K$\end{tabular}}}}%
    \put(0.76660532,1.92622179){\color[rgb]{0,0,0.50196078}\makebox(0,0)[lt]{\smash{\begin{tabular}[t]{l}$L$\end{tabular}}}}%
    \put(0.30513226,1.29838613){\color[rgb]{0,0.50196078,0}\makebox(0,0)[lt]{\smash{\begin{tabular}[t]{l}$\scriptstyle{A}$\end{tabular}}}}%
  \end{picture}%
\endgroup%
 };
	\end{tikzpicture}
 \, .
\end{equation}
\item
If $A,B\in\mcC$ are Frobenius algebras, so is their product $AB\in\mcC$, where we take the multiplication to be $ABAB\xra{c_{B,A}}AABB\ra AB$ and the comultiplication $AB\ra AABB \xra{c_{B,A}^{-1}}ABAB$. Given two sets of algebras $A_1,\dots,A_n\in\mcC$, $B_1,\dots,B_m\in\mcC$, a multimodule is a $(A_1\cdots A_n)$-$(B_1\cdots B_m)$-bimodule.
Equivalently, it is an object $M\in\mcC$, which has a left action for each $A_i$ and a right action for each $B_j$, such that, for $i<j$ and $k<l$,
\begin{equation}
\label{eq:multimodule}

	\begin{tikzpicture}[baseline={([yshift=-.5ex]current bounding box.center)}]
	\node at (0,0) {\def\svgscale{1.0} 
\begingroup%
  \makeatletter%
  \providecommand\color[2][]{%
    \errmessage{(Inkscape) Color is used for the text in Inkscape, but the package 'color.sty' is not loaded}%
    \renewcommand\color[2][]{}%
  }%
  \providecommand\transparent[1]{%
    \errmessage{(Inkscape) Transparency is used (non-zero) for the text in Inkscape, but the package 'transparent.sty' is not loaded}%
    \renewcommand\transparent[1]{}%
  }%
  \providecommand\rotatebox[2]{#2}%
  \newcommand*\fsize{\dimexpr\f@size pt\relax}%
  \newcommand*\lineheight[1]{\fontsize{\fsize}{#1\fsize}\selectfont}%
  \ifx\svgwidth\undefined%
    \setlength{\unitlength}{37.69928173bp}%
    \ifx\svgscale\undefined%
      \relax%
    \else%
      \setlength{\unitlength}{\unitlength * \real{\svgscale}}%
    \fi%
  \else%
    \setlength{\unitlength}{\svgwidth}%
  \fi%
  \global\let\svgwidth\undefined%
  \global\let\svgscale\undefined%
  \makeatother%
  \begin{picture}(1,1.76204262)%
    \lineheight{1}%
    \setlength\tabcolsep{0pt}%
    \put(0,0){\includegraphics[width=\unitlength,page=1]{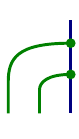}}%
    \put(0.76282291,0.06574437){\color[rgb]{0,0,0.50196078}\makebox(0,0)[lt]{\smash{\begin{tabular}[t]{l}$M$\end{tabular}}}}%
    \put(-0.00404103,0.06574437){\color[rgb]{0,0.50196078,0}\makebox(0,0)[lt]{\smash{\begin{tabular}[t]{l}$A_i$\end{tabular}}}}%
    \put(0.39384607,0.06574437){\color[rgb]{0,0.50196078,0}\makebox(0,0)[lt]{\smash{\begin{tabular}[t]{l}$A_j$\end{tabular}}}}%
    \put(0.76282291,1.53061622){\color[rgb]{0,0,0.50196078}\makebox(0,0)[lt]{\smash{\begin{tabular}[t]{l}$M$\end{tabular}}}}%
  \end{picture}%
\endgroup%
 };
	\end{tikzpicture}
 =\hspace{-4pt}

	\begin{tikzpicture}[baseline={([yshift=-.5ex]current bounding box.center)}]
	\node at (0,0) {\def\svgscale{1.0} 
\begingroup%
  \makeatletter%
  \providecommand\color[2][]{%
    \errmessage{(Inkscape) Color is used for the text in Inkscape, but the package 'color.sty' is not loaded}%
    \renewcommand\color[2][]{}%
  }%
  \providecommand\transparent[1]{%
    \errmessage{(Inkscape) Transparency is used (non-zero) for the text in Inkscape, but the package 'transparent.sty' is not loaded}%
    \renewcommand\transparent[1]{}%
  }%
  \providecommand\rotatebox[2]{#2}%
  \newcommand*\fsize{\dimexpr\f@size pt\relax}%
  \newcommand*\lineheight[1]{\fontsize{\fsize}{#1\fsize}\selectfont}%
  \ifx\svgwidth\undefined%
    \setlength{\unitlength}{37.69928173bp}%
    \ifx\svgscale\undefined%
      \relax%
    \else%
      \setlength{\unitlength}{\unitlength * \real{\svgscale}}%
    \fi%
  \else%
    \setlength{\unitlength}{\svgwidth}%
  \fi%
  \global\let\svgwidth\undefined%
  \global\let\svgscale\undefined%
  \makeatother%
  \begin{picture}(1,1.76204262)%
    \lineheight{1}%
    \setlength\tabcolsep{0pt}%
    \put(0,0){\includegraphics[width=\unitlength,page=1]{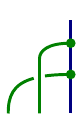}}%
    \put(0.76282291,0.06574437){\color[rgb]{0,0,0.50196078}\makebox(0,0)[lt]{\smash{\begin{tabular}[t]{l}$M$\end{tabular}}}}%
    \put(-0.00404103,0.06574437){\color[rgb]{0,0.50196078,0}\makebox(0,0)[lt]{\smash{\begin{tabular}[t]{l}$A_i$\end{tabular}}}}%
    \put(0.39384607,0.06574437){\color[rgb]{0,0.50196078,0}\makebox(0,0)[lt]{\smash{\begin{tabular}[t]{l}$A_j$\end{tabular}}}}%
    \put(0.76282291,1.53061622){\color[rgb]{0,0,0.50196078}\makebox(0,0)[lt]{\smash{\begin{tabular}[t]{l}$M$\end{tabular}}}}%
  \end{picture}%
\endgroup%
 };
	\end{tikzpicture}
  , \,

	\begin{tikzpicture}[baseline={([yshift=-.5ex]current bounding box.center)}]
	\node at (0,0) {\def\svgscale{1.0} 
\begingroup%
  \makeatletter%
  \providecommand\color[2][]{%
    \errmessage{(Inkscape) Color is used for the text in Inkscape, but the package 'color.sty' is not loaded}%
    \renewcommand\color[2][]{}%
  }%
  \providecommand\transparent[1]{%
    \errmessage{(Inkscape) Transparency is used (non-zero) for the text in Inkscape, but the package 'transparent.sty' is not loaded}%
    \renewcommand\transparent[1]{}%
  }%
  \providecommand\rotatebox[2]{#2}%
  \newcommand*\fsize{\dimexpr\f@size pt\relax}%
  \newcommand*\lineheight[1]{\fontsize{\fsize}{#1\fsize}\selectfont}%
  \ifx\svgwidth\undefined%
    \setlength{\unitlength}{40.27146071bp}%
    \ifx\svgscale\undefined%
      \relax%
    \else%
      \setlength{\unitlength}{\unitlength * \real{\svgscale}}%
    \fi%
  \else%
    \setlength{\unitlength}{\svgwidth}%
  \fi%
  \global\let\svgwidth\undefined%
  \global\let\svgscale\undefined%
  \makeatother%
  \begin{picture}(1,1.58795396)%
    \lineheight{1}%
    \setlength\tabcolsep{0pt}%
    \put(0,0){\includegraphics[width=\unitlength,page=1]{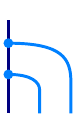}}%
    \put(-0.02240596,0){\color[rgb]{0,0,0.50196078}\makebox(0,0)[lt]{\smash{\begin{tabular}[t]{l}$M$\end{tabular}}}}%
    \put(0.74116093,0){\color[rgb]{0,0.50196078,1}\makebox(0,0)[lt]{\smash{\begin{tabular}[t]{l}$B_l$\end{tabular}}}}%
    \put(0.35006554,0){\color[rgb]{0,0.50196078,1}\makebox(0,0)[lt]{\smash{\begin{tabular}[t]{l}$B_k$\end{tabular}}}}%
    \put(-0.02909939,1.371309){\color[rgb]{0,0,0.50196078}\makebox(0,0)[lt]{\smash{\begin{tabular}[t]{l}$M$\end{tabular}}}}%
  \end{picture}%
\endgroup%
 };
	\end{tikzpicture}
 \hspace{-4pt}=

	\begin{tikzpicture}[baseline={([yshift=-.5ex]current bounding box.center)}]
	\node at (0,0) {\def\svgscale{1.0} 
\begingroup%
  \makeatletter%
  \providecommand\color[2][]{%
    \errmessage{(Inkscape) Color is used for the text in Inkscape, but the package 'color.sty' is not loaded}%
    \renewcommand\color[2][]{}%
  }%
  \providecommand\transparent[1]{%
    \errmessage{(Inkscape) Transparency is used (non-zero) for the text in Inkscape, but the package 'transparent.sty' is not loaded}%
    \renewcommand\transparent[1]{}%
  }%
  \providecommand\rotatebox[2]{#2}%
  \newcommand*\fsize{\dimexpr\f@size pt\relax}%
  \newcommand*\lineheight[1]{\fontsize{\fsize}{#1\fsize}\selectfont}%
  \ifx\svgwidth\undefined%
    \setlength{\unitlength}{40.27137421bp}%
    \ifx\svgscale\undefined%
      \relax%
    \else%
      \setlength{\unitlength}{\unitlength * \real{\svgscale}}%
    \fi%
  \else%
    \setlength{\unitlength}{\svgwidth}%
  \fi%
  \global\let\svgwidth\undefined%
  \global\let\svgscale\undefined%
  \makeatother%
  \begin{picture}(1,1.58795737)%
    \lineheight{1}%
    \setlength\tabcolsep{0pt}%
    \put(0,0){\includegraphics[width=\unitlength,page=1]{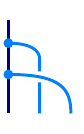}}%
    \put(-0.02240815,0){\color[rgb]{0,0,0.50196078}\makebox(0,0)[lt]{\smash{\begin{tabular}[t]{l}$M$\end{tabular}}}}%
    \put(0.74116037,0){\color[rgb]{0,0.50196078,1}\makebox(0,0)[lt]{\smash{\begin{tabular}[t]{l}$B_l$\end{tabular}}}}%
    \put(0.35006414,0){\color[rgb]{0,0.50196078,1}\makebox(0,0)[lt]{\smash{\begin{tabular}[t]{l}$B_k$\end{tabular}}}}%
    \put(-0.02909945,1.37131195){\color[rgb]{0,0,0.50196078}\makebox(0,0)[lt]{\smash{\begin{tabular}[t]{l}$M$\end{tabular}}}}%
  \end{picture}%
\endgroup%
 };
	\end{tikzpicture}
  , \,

	\begin{tikzpicture}[baseline={([yshift=-.5ex]current bounding box.center)}]
	\node at (0,0) {\def\svgscale{1.0} 
\begingroup%
  \makeatletter%
  \providecommand\color[2][]{%
    \errmessage{(Inkscape) Color is used for the text in Inkscape, but the package 'color.sty' is not loaded}%
    \renewcommand\color[2][]{}%
  }%
  \providecommand\transparent[1]{%
    \errmessage{(Inkscape) Transparency is used (non-zero) for the text in Inkscape, but the package 'transparent.sty' is not loaded}%
    \renewcommand\transparent[1]{}%
  }%
  \providecommand\rotatebox[2]{#2}%
  \newcommand*\fsize{\dimexpr\f@size pt\relax}%
  \newcommand*\lineheight[1]{\fontsize{\fsize}{#1\fsize}\selectfont}%
  \ifx\svgwidth\undefined%
    \setlength{\unitlength}{44.27929597bp}%
    \ifx\svgscale\undefined%
      \relax%
    \else%
      \setlength{\unitlength}{\unitlength * \real{\svgscale}}%
    \fi%
  \else%
    \setlength{\unitlength}{\svgwidth}%
  \fi%
  \global\let\svgwidth\undefined%
  \global\let\svgscale\undefined%
  \makeatother%
  \begin{picture}(1,1.44422408)%
    \lineheight{1}%
    \setlength\tabcolsep{0pt}%
    \put(0,0){\includegraphics[width=\unitlength,page=1]{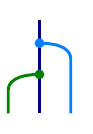}}%
    \put(0.31679304,0){\color[rgb]{0,0,0.50196078}\makebox(0,0)[lt]{\smash{\begin{tabular}[t]{l}$M$\end{tabular}}}}%
    \put(-0.00344052,0){\color[rgb]{0,0.50196078,0}\makebox(0,0)[lt]{\smash{\begin{tabular}[t]{l}$A_i$\end{tabular}}}}%
    \put(0.65555116,0){\color[rgb]{0,0.50196078,1}\makebox(0,0)[lt]{\smash{\begin{tabular}[t]{l}$B_k$\end{tabular}}}}%
    \put(0.31070544,1.24718823){\color[rgb]{0,0,0.50196078}\makebox(0,0)[lt]{\smash{\begin{tabular}[t]{l}$M$\end{tabular}}}}%
  \end{picture}%
\endgroup%
 };
	\end{tikzpicture}
 \hspace{-4pt}=\hspace{-4pt}

	\begin{tikzpicture}[baseline={([yshift=-.5ex]current bounding box.center)}]
	\node at (0,0) {\def\svgscale{1.0} 
\begingroup%
  \makeatletter%
  \providecommand\color[2][]{%
    \errmessage{(Inkscape) Color is used for the text in Inkscape, but the package 'color.sty' is not loaded}%
    \renewcommand\color[2][]{}%
  }%
  \providecommand\transparent[1]{%
    \errmessage{(Inkscape) Transparency is used (non-zero) for the text in Inkscape, but the package 'transparent.sty' is not loaded}%
    \renewcommand\transparent[1]{}%
  }%
  \providecommand\rotatebox[2]{#2}%
  \newcommand*\fsize{\dimexpr\f@size pt\relax}%
  \newcommand*\lineheight[1]{\fontsize{\fsize}{#1\fsize}\selectfont}%
  \ifx\svgwidth\undefined%
    \setlength{\unitlength}{44.27929597bp}%
    \ifx\svgscale\undefined%
      \relax%
    \else%
      \setlength{\unitlength}{\unitlength * \real{\svgscale}}%
    \fi%
  \else%
    \setlength{\unitlength}{\svgwidth}%
  \fi%
  \global\let\svgwidth\undefined%
  \global\let\svgscale\undefined%
  \makeatother%
  \begin{picture}(1,1.44422408)%
    \lineheight{1}%
    \setlength\tabcolsep{0pt}%
    \put(0,0){\includegraphics[width=\unitlength,page=1]{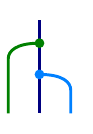}}%
    \put(0.31679304,0){\color[rgb]{0,0,0.50196078}\makebox(0,0)[lt]{\smash{\begin{tabular}[t]{l}$M$\end{tabular}}}}%
    \put(-0.00344052,0){\color[rgb]{0,0.50196078,0}\makebox(0,0)[lt]{\smash{\begin{tabular}[t]{l}$A_i$\end{tabular}}}}%
    \put(0.65555116,0){\color[rgb]{0,0.50196078,1}\makebox(0,0)[lt]{\smash{\begin{tabular}[t]{l}$B_k$\end{tabular}}}}%
    \put(0.31070544,1.24718823){\color[rgb]{0,0,0.50196078}\makebox(0,0)[lt]{\smash{\begin{tabular}[t]{l}$M$\end{tabular}}}}%
  \end{picture}%
\endgroup%
 };
	\end{tikzpicture}
  .
\end{equation}
\item
Given two multimodules $M$, $N$ as above, a multimodule morphism is a $(A_1\cdots A_n)$-$(B_1\cdots B_m)$-bimodule morphism $M\ra N$.
Equivalently, it is a morphism $f\colon M\ra N$ in $\mcC$ commuting with the $A_i$- and $B_j$-actions.
\end{itemize}
It is convenient to generalise the graphical calculus of $\mcC$ to accommodate the morphisms between relative products of multimodules of symmetric $\D$-separable Frobenius algebras.
This is done by using surface diagrams, where a surface depicts an algebra action, see Figures~\ref{fig:ribbonisation} and~\ref{fig:orb_datum_entries}.

\begin{rem}
It was explained in~\cite{CMS} that a 3-dimensional defect TFT has an associated tricategory with duals, whose $k$-morphisms are labels for codimension-$k$ strata.
For $\zdef_\mcC$ it is the monoidal bicategory (equivalently, tricategory with one object) $\operatorname{\mathcal{F}rob}^{\operatorname{s\D}}$ of symmetric $\D$-separable Frobenius algebras, their bimodules and bimodule morphisms in $\mcC$.
The surface diagrams are exactly the ones coming from the graphical calculus of tricategories~\cite{BMS} in this case.
\end{rem}

\begin{figure}[tb]
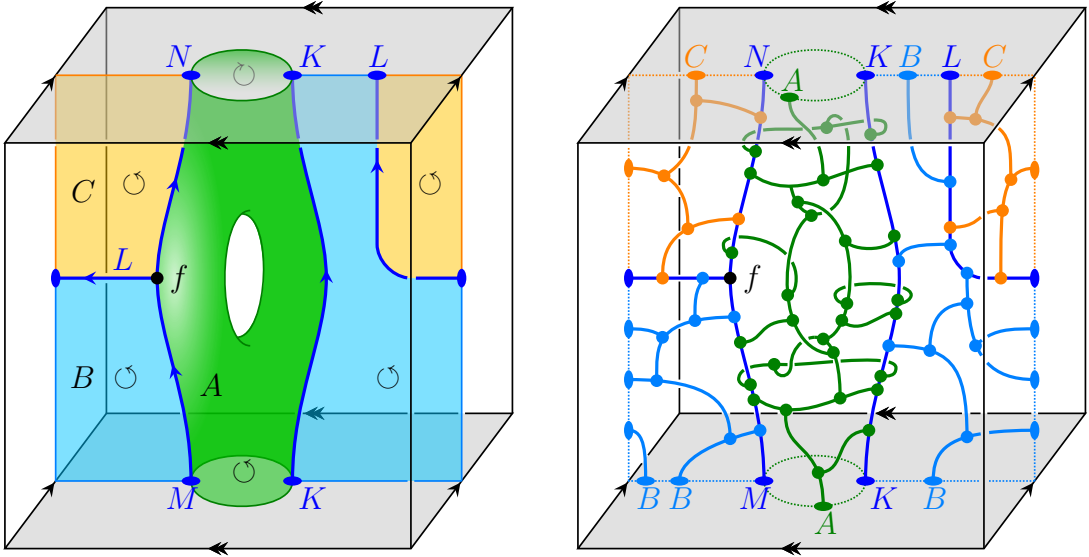

	\captionsetup{format=plain, indention=0.5cm}
	\centerline{
		
	\begin{tikzpicture}[baseline={([yshift=-.5ex]current bounding box.center)}]
	\node at (0,0) {\def\svgscale{0.85} \import{figures/}{22_ribbonisation_1.pdf_tex} };
	\end{tikzpicture}

	\begin{tikzpicture}[baseline={([yshift=-.5ex]current bounding box.center)}]
	\node at (0,0) {\def\svgscale{0.85} \import{figures/}{22_ribbonisation_2.pdf_tex} };
	\end{tikzpicture}

	}
	\caption{
	Replacing a stratification of a manifold (depicted for the cylinder of the torus $T^2\times [0,1]$) by a ribbon graph.
	}
	\label{fig:ribbonisation}
\end{figure}

Roughly, the defect TFT~\eqref{eq:zdef} works by replacing the 2-strata of a morphism in $\Borddef_3(\opD^\mcC)$ with networks of ribbon graphs and evaluating with $\zrt_\mcC$, see Figure~\ref{fig:ribbonisation}.
This implies the following properties of it:
\begin{enumerate}
\item
The category $\Bordrib_3(\mcC)$ can be seen as a subcategory of $\Borddef_3(\opD^\mcC)$, where stratifications do not have 2-strata.
The restriction of $\zdef_\mcC$ to this subcategory is precisely the TFT $\zrt_\mcC$, see~\cite[Rem.\,5.9]{CRS2}.
\item
For an object $\Sigma\in\Borddef_3(\opD^\mcC)$, let 
$C_\Sigma\colon\Sigma\ra\Sigma$ in $\Borddef_3(\opD^\mcC)$ 
be the stratified cylinder bordism $\Sigma\times[0,1]$ and denote by $R(C_\Sigma,t)$ the corresponding bordism in $\Bordrib_3(\mcC$) obtained as in Figure~\ref{fig:ribbonisation} (here $t$ denotes the collection of ribbon graphs replacing the 2-strata of $C_\Sigma$).
One has:
\begin{equation}
\zdef_\mcC(\Sigma) \cong \im \zrt_\mcC( R(C_\Sigma,t) ) \, .
\end{equation}
\item
Evaluation with $\zdef_\mcC$ is invariant upon performing graphical calculus of surface diagrams of symmetric $\Delta$-separable Frobenius algebras inside stratified balls~\cite[Sec.\,3.3]{CMS}.
\item
For the choice $\mcC=\Vect_\opk$, the evaluation with $\zdef_\mcC$ only depends on the topology of the union of 2-, 1- and 0-strata, and not on their embeddings into the 3-manifold.
For an arbitrary MFC $\mcC$ the embedding does make a difference, for example the dimension of the vector space assigned to a torus with a single defect line depends on whether the line is contractible (cf.\ \cite[Sec.\,5]{CRS2}).
\end{enumerate}

\subsection{Internal state-sum construction}
\label{subsec:RT_internal_state-sum}

The \textsl{internal state-sum} or \textsl{generalised orbifold} takes as input a defect TFT and a so-called \textsl{orbifold datum} and produces a new TFT of the same dimension \cite{Frohlich:2009gb,Carqueville:2012orb,CRS1,CMRSS1,Carqueville:2023aak}. We will use the internal state sum construction for RT TFTs and will review it in some detail in this section. In Section~\ref{sec:internal_LW_models} we will employ this construction to define the Hamiltonian of the internal Levin-Wen model, and to compute its space of ground states.

\medskip

The defect TFT we consider is $\zdef_\mcC$ for a MFC $\mcC$ as reviewed in the previous section. 
An \textsl{orbifold datum} for $\mcC$ is a tuple
\begin{equation}
\opA = (A,T,\a,\abar,\psi,\phi) \, ,
\label{eq:OrbDatum}
\end{equation}
where
\begin{itemize}
\item
$A\in\mcC$ is a symmetric $\D$-separable Frobenius algebra and so has (co)product and (co)unit such that the relations~\eqref{eq:symm_Delta-sep_FA} are satisfied.
Note that $A$ is a label for a surface defect in $\zdef_\mcC$, see Figure~\ref{fig:orb_datum_entries_3d_A}.
\item
$T\in\mcC$ is an $A$-$A\otimes A$-bimodule,
i.e.\ a label for a line defect with three adjacent $A$-labelled surfaces as in Figure~\ref{fig:orb_datum_entries_3d_T}.
Equivalently a multimodule having one left $A$-action and two right $A$-actions, which will be denoted by:
\begin{equation}
\label{eq:T_actions}
\vartriangleright_{0} = 

 .
\end{equation}
\item
\vspace{-2mm}
$\a : T \otimes_2 T \rightleftarrows T \otimes_1 T :\abar$ are $A$-$A\otimes A \otimes A$ bimodule morphisms between two relative tensor products of multimodules $T$,
turning $\a$, $\abar$ into labels for the two point defects as in Figures~\ref{fig:orb_datum_entries_3d_alpha}--\hyperref[fig:orb_datum_entries_3d_alpha-bar]{d}.
Here and below $\otimes_i$ denotes the relative tensor product over $A$ where $A$ acts to the left by $\vartriangleleft_{i}$ and to the right by $\vartriangleright_{0}$.
Equivalently, $\a$ and $\abar$ can be given by balanced morphisms $\a\colon T \otimes T \ra T \otimes T$, $\abar\colon T \otimes T \ra T \otimes T$ in $\mcC$, meaning that they commute with various $A$-actions as follows:\goodbreak
\begin{equation}
\label{eq:a_abar_balanced_maps}
\displayindent0pt
\displaywidth\textwidth


 \, , \quad
\psi_{i,j} := \psi_j \circ \psi_i \, .
\end{equation}
\item
$\phi\in\opk^\times$ is a scalar, i.e.\ a label for an invertible point defect on a canonically labelled 3-stratum, see Figure~\ref{fig:orb_datum_entries_3d_phi}.
\end{itemize}
Moreover, an orbifold datum has to satisfy conditions~\eqrefO{1}--\eqrefO{8}
listed in Appendix~\ref{appsubsec:conditions_on_orb_data}
as string diagrams in Figure~\ref{fig:orb_datum_conditions}
and as surface diagrams in Figure~\ref{fig:orb_datum_conditions_in_3d}.

\medskip

Given an orbifold datum $\opA$ in $\mcC$, the internal state sum construction applied to the defect TFT $\zdef_\mcC$ from Section~\ref{subsec:RT_theory_w_line_and_surface_defects} yields a symmetric monoidal functor
\begin{equation}
\label{eq:zorb}
\zorbA_\mcC\colon\Bordhat_3\ra\Vect_\opk \, ,
\end{equation}
whose construction we now review.

\begin{figure}[bt]
	\captionsetup{format=plain, indention=0.5cm}
	\centering
    \vspace{-24pt}
	\begin{subfigure}[b]{0.16\textwidth}
		\centering
		
	\begin{tikzpicture}[baseline={([yshift=-.5ex]current bounding box.center)}]
	\node at (0,0) {\def\svgscale{1.0} 
\begingroup%
  \makeatletter%
  \providecommand\color[2][]{%
    \errmessage{(Inkscape) Color is used for the text in Inkscape, but the package 'color.sty' is not loaded}%
    \renewcommand\color[2][]{}%
  }%
  \providecommand\transparent[1]{%
    \errmessage{(Inkscape) Transparency is used (non-zero) for the text in Inkscape, but the package 'transparent.sty' is not loaded}%
    \renewcommand\transparent[1]{}%
  }%
  \providecommand\rotatebox[2]{#2}%
  \newcommand*\fsize{\dimexpr\f@size pt\relax}%
  \newcommand*\lineheight[1]{\fontsize{\fsize}{#1\fsize}\selectfont}%
  \ifx\svgwidth\undefined%
    \setlength{\unitlength}{38.24999807bp}%
    \ifx\svgscale\undefined%
      \relax%
    \else%
      \setlength{\unitlength}{\unitlength * \real{\svgscale}}%
    \fi%
  \else%
    \setlength{\unitlength}{\svgwidth}%
  \fi%
  \global\let\svgwidth\undefined%
  \global\let\svgscale\undefined%
  \makeatother%
  \begin{picture}(1,2.03491731)%
    \lineheight{1}%
    \setlength\tabcolsep{0pt}%
    \put(0,0){\includegraphics[width=\unitlength,page=1]{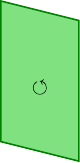}}%
    \put(0.62285566,1.41811734){\color[rgb]{0,0,0}\makebox(0,0)[lt]{\smash{\begin{tabular}[t]{l}$A$\end{tabular}}}}%
  \end{picture}%
\endgroup%
 };
	\end{tikzpicture}

		\caption{}
		\label{fig:orb_datum_entries_3d_A}
	\end{subfigure}
	\begin{subfigure}[b]{0.24\textwidth}
		\centering
		
	\begin{tikzpicture}[baseline={([yshift=-.5ex]current bounding box.center)}]
	\node at (0,0) {\def\svgscale{1.0} 
\begingroup%
  \makeatletter%
  \providecommand\color[2][]{%
    \errmessage{(Inkscape) Color is used for the text in Inkscape, but the package 'color.sty' is not loaded}%
    \renewcommand\color[2][]{}%
  }%
  \providecommand\transparent[1]{%
    \errmessage{(Inkscape) Transparency is used (non-zero) for the text in Inkscape, but the package 'transparent.sty' is not loaded}%
    \renewcommand\transparent[1]{}%
  }%
  \providecommand\rotatebox[2]{#2}%
  \newcommand*\fsize{\dimexpr\f@size pt\relax}%
  \newcommand*\lineheight[1]{\fontsize{\fsize}{#1\fsize}\selectfont}%
  \ifx\svgwidth\undefined%
    \setlength{\unitlength}{83.25059146bp}%
    \ifx\svgscale\undefined%
      \relax%
    \else%
      \setlength{\unitlength}{\unitlength * \real{\svgscale}}%
    \fi%
  \else%
    \setlength{\unitlength}{\svgwidth}%
  \fi%
  \global\let\svgwidth\undefined%
  \global\let\svgscale\undefined%
  \makeatother%
  \begin{picture}(1,1.12946344)%
    \lineheight{1}%
    \setlength\tabcolsep{0pt}%
    \put(0,0){\includegraphics[width=\unitlength,page=1]{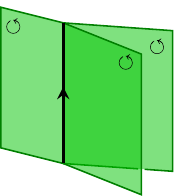}}%
    \put(0.04772636,0.44239058){\color[rgb]{0,0,0}\makebox(0,0)[lt]{\smash{\begin{tabular}[t]{l}$A$\end{tabular}}}}%
    \put(0.86754037,0.35800212){\color[rgb]{0,0,0}\makebox(0,0)[lt]{\smash{\begin{tabular}[t]{l}$A$\end{tabular}}}}%
    \put(0.29343106,0.05500597){\color[rgb]{0,0,0}\makebox(0,0)[lt]{\smash{\begin{tabular}[t]{l}$T$\end{tabular}}}}%
    \put(0.32045805,1.02466408){\color[rgb]{0,0,0}\makebox(0,0)[lt]{\smash{\begin{tabular}[t]{l}$T$\end{tabular}}}}%
    \put(0.6915845,0.22452195){\color[rgb]{0,0,0}\makebox(0,0)[lt]{\smash{\begin{tabular}[t]{l}$A$\end{tabular}}}}%
  \end{picture}%
\endgroup%
 };
	\end{tikzpicture}

		\caption{}
		\label{fig:orb_datum_entries_3d_T}
	\end{subfigure}
	\begin{subfigure}[b]{0.28\textwidth}
		\centering
		
	\begin{tikzpicture}[baseline={([yshift=-.5ex]current bounding box.center)}]
	\node at (0,0) {\def\svgscale{1.0} 
\begingroup%
  \makeatletter%
  \providecommand\color[2][]{%
    \errmessage{(Inkscape) Color is used for the text in Inkscape, but the package 'color.sty' is not loaded}%
    \renewcommand\color[2][]{}%
  }%
  \providecommand\transparent[1]{%
    \errmessage{(Inkscape) Transparency is used (non-zero) for the text in Inkscape, but the package 'transparent.sty' is not loaded}%
    \renewcommand\transparent[1]{}%
  }%
  \providecommand\rotatebox[2]{#2}%
  \newcommand*\fsize{\dimexpr\f@size pt\relax}%
  \newcommand*\lineheight[1]{\fontsize{\fsize}{#1\fsize}\selectfont}%
  \ifx\svgwidth\undefined%
    \setlength{\unitlength}{96.75058777bp}%
    \ifx\svgscale\undefined%
      \relax%
    \else%
      \setlength{\unitlength}{\unitlength * \real{\svgscale}}%
    \fi%
  \else%
    \setlength{\unitlength}{\svgwidth}%
  \fi%
  \global\let\svgwidth\undefined%
  \global\let\svgscale\undefined%
  \makeatother%
  \begin{picture}(1,1.02530864)%
    \lineheight{1}%
    \setlength\tabcolsep{0pt}%
    \put(0,0){\includegraphics[width=\unitlength,page=1]{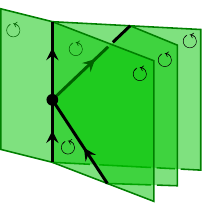}}%
    \put(0.6013227,0.91962855){\color[rgb]{0,0,0}\makebox(0,0)[lt]{\smash{\begin{tabular}[t]{l}$T$\end{tabular}}}}%
    \put(0.19822438,0.10077457){\color[rgb]{0,0,0}\makebox(0,0)[lt]{\smash{\begin{tabular}[t]{l}$T$\end{tabular}}}}%
    \put(0.46954054,-0){\color[rgb]{0,0,0}\makebox(0,0)[lt]{\smash{\begin{tabular}[t]{l}$T$\end{tabular}}}}%
    \put(0.1339079,0.49515199){\color[rgb]{0,0,0}\makebox(0,0)[lt]{\smash{\begin{tabular}[t]{l}$\alpha$\end{tabular}}}}%
    \put(0.22148005,0.93513235){\color[rgb]{0,0,0}\makebox(0,0)[lt]{\smash{\begin{tabular}[t]{l}$T$\end{tabular}}}}%
  \end{picture}%
\endgroup%
 };
	\end{tikzpicture}

		\caption{}
		\label{fig:orb_datum_entries_3d_alpha}
	\end{subfigure}
	\begin{subfigure}[b]{0.28\textwidth}
		\centering
		
	\begin{tikzpicture}[baseline={([yshift=-.5ex]current bounding box.center)}]
	\node at (0,0) {\def\svgscale{1.0} 
\begingroup%
  \makeatletter%
  \providecommand\color[2][]{%
    \errmessage{(Inkscape) Color is used for the text in Inkscape, but the package 'color.sty' is not loaded}%
    \renewcommand\color[2][]{}%
  }%
  \providecommand\transparent[1]{%
    \errmessage{(Inkscape) Transparency is used (non-zero) for the text in Inkscape, but the package 'transparent.sty' is not loaded}%
    \renewcommand\transparent[1]{}%
  }%
  \providecommand\rotatebox[2]{#2}%
  \newcommand*\fsize{\dimexpr\f@size pt\relax}%
  \newcommand*\lineheight[1]{\fontsize{\fsize}{#1\fsize}\selectfont}%
  \ifx\svgwidth\undefined%
    \setlength{\unitlength}{96.75059344bp}%
    \ifx\svgscale\undefined%
      \relax%
    \else%
      \setlength{\unitlength}{\unitlength * \real{\svgscale}}%
    \fi%
  \else%
    \setlength{\unitlength}{\svgwidth}%
  \fi%
  \global\let\svgwidth\undefined%
  \global\let\svgscale\undefined%
  \makeatother%
  \begin{picture}(1,1.00472895)%
    \lineheight{1}%
    \setlength\tabcolsep{0pt}%
    \put(0,0){\includegraphics[width=\unitlength,page=1]{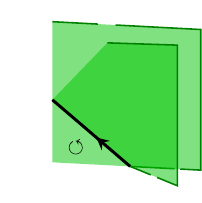}}%
    \put(0.58581887,0.0710199){\color[rgb]{0,0,0}\makebox(0,0)[lt]{\smash{\begin{tabular}[t]{l}$T$\end{tabular}}}}%
    \put(0,0){\includegraphics[width=\unitlength,page=2]{23_orb_datum_entries_3d_alpha-bar.pdf}}%
    \put(0.22148003,0.91455266){\color[rgb]{0,0,0}\makebox(0,0)[lt]{\smash{\begin{tabular}[t]{l}$T$\end{tabular}}}}%
    \put(0.49279621,0.8137781){\color[rgb]{0,0,0}\makebox(0,0)[lt]{\smash{\begin{tabular}[t]{l}$T$\end{tabular}}}}%
    \put(0.19822437,0.0801949){\color[rgb]{0,0,0}\makebox(0,0)[lt]{\smash{\begin{tabular}[t]{l}$T$\end{tabular}}}}%
    \put(0.13390789,0.47457236){\color[rgb]{0,0,0}\makebox(0,0)[lt]{\smash{\begin{tabular}[t]{l}$\overline{\alpha}$\end{tabular}}}}%
  \end{picture}%
\endgroup%
 };
	\end{tikzpicture}

		\caption{}
		\label{fig:orb_datum_entries_3d_alpha-bar}
	\end{subfigure}\\
	\begin{subfigure}[b]{0.24\textwidth}
		\centering
		
	\begin{tikzpicture}[baseline={([yshift=-.5ex]current bounding box.center)}]
	\node at (0,0) {\def\svgscale{1.0} 
\begingroup%
  \makeatletter%
  \providecommand\color[2][]{%
    \errmessage{(Inkscape) Color is used for the text in Inkscape, but the package 'color.sty' is not loaded}%
    \renewcommand\color[2][]{}%
  }%
  \providecommand\transparent[1]{%
    \errmessage{(Inkscape) Transparency is used (non-zero) for the text in Inkscape, but the package 'transparent.sty' is not loaded}%
    \renewcommand\transparent[1]{}%
  }%
  \providecommand\rotatebox[2]{#2}%
  \newcommand*\fsize{\dimexpr\f@size pt\relax}%
  \newcommand*\lineheight[1]{\fontsize{\fsize}{#1\fsize}\selectfont}%
  \ifx\svgwidth\undefined%
    \setlength{\unitlength}{38.24999807bp}%
    \ifx\svgscale\undefined%
      \relax%
    \else%
      \setlength{\unitlength}{\unitlength * \real{\svgscale}}%
    \fi%
  \else%
    \setlength{\unitlength}{\svgwidth}%
  \fi%
  \global\let\svgwidth\undefined%
  \global\let\svgscale\undefined%
  \makeatother%
  \begin{picture}(1,2.03491731)%
    \lineheight{1}%
    \setlength\tabcolsep{0pt}%
    \put(0,0){\includegraphics[width=\unitlength,page=1]{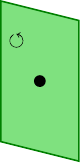}}%
    \put(0.68688745,1.30407036){\color[rgb]{0,0,0}\makebox(0,0)[lt]{\smash{\begin{tabular}[t]{l}$A$\end{tabular}}}}%
    \put(0.608456,0.92493403){\color[rgb]{0,0,0}\makebox(0,0)[lt]{\smash{\begin{tabular}[t]{l}$\psi$\end{tabular}}}}%
  \end{picture}%
\endgroup%
 };
	\end{tikzpicture}

		\caption{}
		\label{fig:orb_datum_entries_3d_psi}
	\end{subfigure}
	\begin{subfigure}[b]{0.24\textwidth}
		\centering
		
	\begin{tikzpicture}[baseline={([yshift=-.5ex]current bounding box.center)}]
	\node at (0,0) {\def\svgscale{1.0} 
\begingroup%
  \makeatletter%
  \providecommand\color[2][]{%
    \errmessage{(Inkscape) Color is used for the text in Inkscape, but the package 'color.sty' is not loaded}%
    \renewcommand\color[2][]{}%
  }%
  \providecommand\transparent[1]{%
    \errmessage{(Inkscape) Transparency is used (non-zero) for the text in Inkscape, but the package 'transparent.sty' is not loaded}%
    \renewcommand\transparent[1]{}%
  }%
  \providecommand\rotatebox[2]{#2}%
  \newcommand*\fsize{\dimexpr\f@size pt\relax}%
  \newcommand*\lineheight[1]{\fontsize{\fsize}{#1\fsize}\selectfont}%
  \ifx\svgwidth\undefined%
    \setlength{\unitlength}{37.4999811bp}%
    \ifx\svgscale\undefined%
      \relax%
    \else%
      \setlength{\unitlength}{\unitlength * \real{\svgscale}}%
    \fi%
  \else%
    \setlength{\unitlength}{\svgwidth}%
  \fi%
  \global\let\svgwidth\undefined%
  \global\let\svgscale\undefined%
  \makeatother%
  \begin{picture}(1,2.05000106)%
    \lineheight{1}%
    \setlength\tabcolsep{0pt}%
    \put(0,0){\includegraphics[width=\unitlength,page=1]{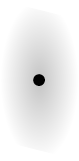}}%
    \put(0.61062504,0.93062537){\color[rgb]{0,0,0}\makebox(0,0)[lt]{\smash{\begin{tabular}[t]{l}$\phi$\end{tabular}}}}%
  \end{picture}%
\endgroup%
 };
	\end{tikzpicture}

		\caption{}
		\label{fig:orb_datum_entries_3d_phi}
	\end{subfigure}
	\caption{
		Defect configurations labelled by the entries of an orbifold datum.}
	\label{fig:orb_datum_entries}
\end{figure}

\medskip

Given a bordism $M\colon\Sigma\ra\Sigma'$ in $\Bord_3$ one assigns to it an \textsl{admissible 2-skeleton}, i.e.\ a stratification $S$ such that its 3-strata are contractible and each point has a neighbourhood which locally looks like one of the stratifications depicted in Figure~\ref{fig:orb_datum_entries_3d_A}--\hyperref[fig:orb_datum_entries_3d_alpha-bar]{d} (see Figure~\ref{fig:foamification_skeleton} for an example).
One defines an \textsl{admissible 1-skeleton} of a surface in a similar way, for example $S$ restricts to admissible 1-skeleta $\Gamma$, $\Gamma'$ on the boundary components $\Sigma$, $\Sigma'$.
For an admissible 2-skeleton $S$ of $M$, we denote by $S(\opA)$ the \textsl{$\opA$-decoration} of $S$, which is obtained by assigning the labels $A$, $T$, $\a$, $\abar$ respectively to 2-, 1- and positively/negatively oriented 0-strata of $S$ (Figure~\ref{fig:orb_datum_entries_3d_A}--\hyperref[fig:orb_datum_entries_3d_alpha-bar]{d}), as well as using $\psi$ and $\phi$ (Figure~\ref{fig:orb_datum_entries_3d_psi},\hyperref[fig:orb_datum_entries_3d_phi]{f}) to perform the \textsl{Euler completion} on 2- and 3-strata of $S$.
The latter means that each 2-stratum $F\subseteq S$ gets an additional point defect labelled by $\psi^{\chi_{\operatorname{sym}}(F)}$ and each 3-stratum $U\subseteq S$ an additional point defect labelled by $\phi^{\chi_{\operatorname{sym}}(U)}$, where $\chi_{\operatorname{sym}}$ is the symmetric Euler characteristic of the corresponding stratum, defined as
\begin{equation}
    \chi_{\operatorname{sym}}(-) := 2\chi(-) - \chi(-\cap \pd M) \, .
\end{equation}
In particular, if e.g.\ $F$ is contractible, it receives an additional $\psi^2$-insertion if $\pd M \cap F = \varnothing$ and a $\psi$-insertion otherwise (2-strata of an admissible skeleton can only intersect $\pd M$ at a single connected component), see Figure~\ref{fig:foamification_decoration} for an illustration.

\begin{figure}[bt]
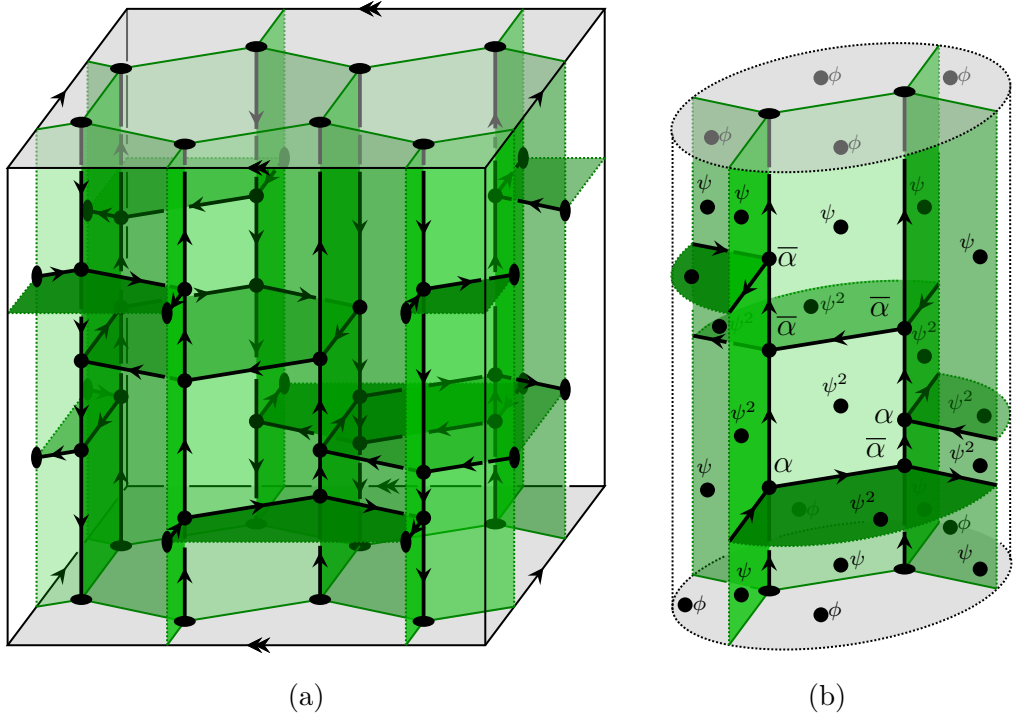

	\captionsetup{format=plain, indention=0.5cm}
	\centering
	\begin{subfigure}[b]{0.60\textwidth}
		\centering
		
	\begin{tikzpicture}[baseline={([yshift=-.5ex]current bounding box.center)}]
	\node at (0,0) {\def\svgscale{1.0} \import{figures/}{23_foamification.pdf_tex} };
	\end{tikzpicture}

		\caption{}
		\label{fig:foamification_skeleton}
	\end{subfigure}
	\begin{subfigure}[b]{0.32\textwidth}
		\centering
		
	\begin{tikzpicture}[baseline={([yshift=-.5ex]current bounding box.center)}]
	\node at (0,0) {\def\svgscale{1.0} \import{figures/}{23_foamification_decoration.pdf_tex} };
	\end{tikzpicture}

		\caption{}
		\label{fig:foamification_decoration}
	\end{subfigure}
	\caption{
	(a) An admissible skeleton of a manifold (depicted for the cylinder of a torus $T^2\times [0,1]$).
    (b) Euler completion of 2- and 3-strata of an $\opA$-decorated admissible skeleton by point insertions $\psi$ and $\phi$.
	}
	\label{fig:foamification}
\end{figure}

\medskip

The internal state-sum TFT $\zorbA_\mcC$ is constructed as follows:
Let $S(\opA)$ denote the $\opA$-decoration of an admissible 2-skeleton of a bordism $M\colon\Sigma\ra\Sigma'$ in $\Bord_3$.
On the boundaries $\Sigma$, $\Sigma'$ it restricts to stratifications $\Gamma(\opA)$, $\Gamma'(\opA)$, which we appropriately call $\opA$-decorations of admissible 1-skeleta. We denote by 
$(M,S(\opA))\colon (\Sigma,\Gamma(\opA)) \ra (\Sigma',\Gamma'(\opA))$  the corresponding stratified bordism in $\Borddef_3(\opD^\mcC)$.
Evaluating it with the defect TFT $\zdef_\mcC$ one obtains the linear map
\begin{equation}
\label{eq:Psi_maps}
\Psi_{\Gamma}^{\Gamma'}(M) := \zdef_\mcC(M,S(\opA)) ~:~ \zdef_\mcC(\Sigma,\Gamma(\opA)) \longrightarrow \zdef_\mcC(\Sigma',\Gamma'(\opA))  \, .
\end{equation}
The notation $\Psi_{\Gamma}^{\Gamma'}(M)$ intentionally does not mention the admissible 2-skeleton $S$, as one has (see \cite[Thm.\,\&\,Def.\,3.10]{CRS1}):

\begin{prp}
The linear maps $\Psi_{\Gamma}^{\Gamma'}(M)$ in~\eqref{eq:Psi_maps} depend on the choice of $S$ only at the boundary of $M$, i.e.\ only on $\Gamma$, $\Gamma'$.
\end{prp}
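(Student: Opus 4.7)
The plan is to reduce the claim to the invariance of $\zdef_\mcC$ under a finite list of local moves that relate any two admissible $\opA$-decorated 2-skeleta of $M$ which restrict to the same $\Gamma$, $\Gamma'$ on $\partial M$. Since any two such 2-skeleta differ only in the interior of $M$, it suffices to work inside small 3-balls $B\subseteq M\setminus\partial M$ and to invoke property 3 of Section~\ref{subsec:RT_theory_w_line_and_surface_defects}, namely that $\zdef_\mcC$ is invariant under the graphical calculus of surface diagrams for symmetric $\Delta$-separable Frobenius algebras inside stratified balls.

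First, I would appeal to a theorem from stratified/PL topology (the analogue of the Pachner/Matveev theorem for admissible skeleta, used in state-sum constructions) asserting that any two admissible 2-skeleta $S_1$, $S_2$ of $M$ that agree on $\partial M$ are connected by a finite sequence of elementary local moves supported in 3-balls $B\subseteq M\setminus\partial M$. These moves comprise: (a) the moves that modify the combinatorial structure around a 0-stratum (corresponding to bistellar-type changes of the 2-skeleton), (b) moves that change the cellular decomposition of a 2-stratum or 3-stratum without changing its homeomorphism type, and (c) moves rearranging the adjacency of 1- and 0-strata.

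Second, for each elementary move I would check invariance of the $\opA$-decorated evaluation. This is exactly the content of the conditions \eqrefO{1}–\eqrefO{8} on the orbifold datum (see Figure~\ref{fig:orb_datum_conditions} and Figure~\ref{fig:orb_datum_conditions_in_3d} in Appendix~\ref{appsubsec:conditions_on_orb_data}): each such condition equates the morphism obtained by evaluating $\zdef_\mcC$ on the $\opA$-decoration of one side of a local move with that of the other side. Because every move is localised in a ball $B$ with $B\cap\partial M=\varnothing$, the invariance of $\zdef_\mcC$ inside stratified balls allows us to replace one local configuration by the other without altering the global evaluation. Hence $\Psi_\Gamma^{\Gamma'}(M)$ computed from $S_1$ equals that computed from $S_2$ after each elementary move, and by composition after any finite sequence of them.

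Third, the Euler completion by insertions of $\psi$ and $\phi$ has to be tracked carefully through the moves. An elementary move typically redistributes $\chi_{\operatorname{sym}}$ among the 2- and 3-strata it touches without changing the total; the correct powers of $\psi$ and $\phi$ prescribed by the Euler completion compensate for this redistribution. This compatibility is built into the axioms \eqrefO{1}–\eqrefO{8} through the appearance of the morphisms $\psi_{i,j}$ and the scalar $\phi$, so no extra check is required.

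The main obstacle is the topological/combinatorial input that admissible 2-skeleta with fixed boundary are related by a sequence of moves corresponding precisely to \eqrefO{1}–\eqrefO{8} (augmented by Euler completion). This is the technical heart of the argument; here it is supplied by the cited result \cite[Thm.\,\&\,Def.\,3.10]{CRS1}, which carries out exactly this reduction in the setting of generalised orbifolds of three-dimensional defect TFTs. Once this combinatorial statement is granted, the proof reduces to a per-move verification using the orbifold datum axioms and the locality of $\zdef_\mcC$.
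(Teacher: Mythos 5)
Your proposal is correct and follows essentially the same route as the paper: the paper also reduces the claim to invariance under a finite list of local moves in the interior (the bubble--lune--triangle moves of Figure~\ref{fig:BLT_moves}), verifies each move via the orbifold conditions \eqrefO{1}--\eqrefO{8} (which already encode the $\psi$- and $\phi$-bookkeeping of the Euler completion), and invokes the combinatorial fact that any two admissible skeleta with the same boundary restriction are related by such moves, citing \cite[Thm.\,2.12]{CMRSS1} for that last step rather than a generic Pachner/Matveev-type statement.
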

This is a direct corollary to an orbifold datum $\opA$ being subject to the identities \eqrefO{1}--\eqrefO{8} upon evaluating with $\zdef_\mcC$.
Indeed they ensure that away from boundary the skeleton can be modified by what we call the admissible BLT (bubble--lune--triangle) moves (see Figure~\ref{fig:BLT_moves}); any two skeletons restricting to $\Gamma(\opA)$, $\Gamma'(\opA)$ at the boundary can be related by a finite sequence of such moves~\cite[Thm.\,2.12]{CMRSS1}.

\begin{figure}
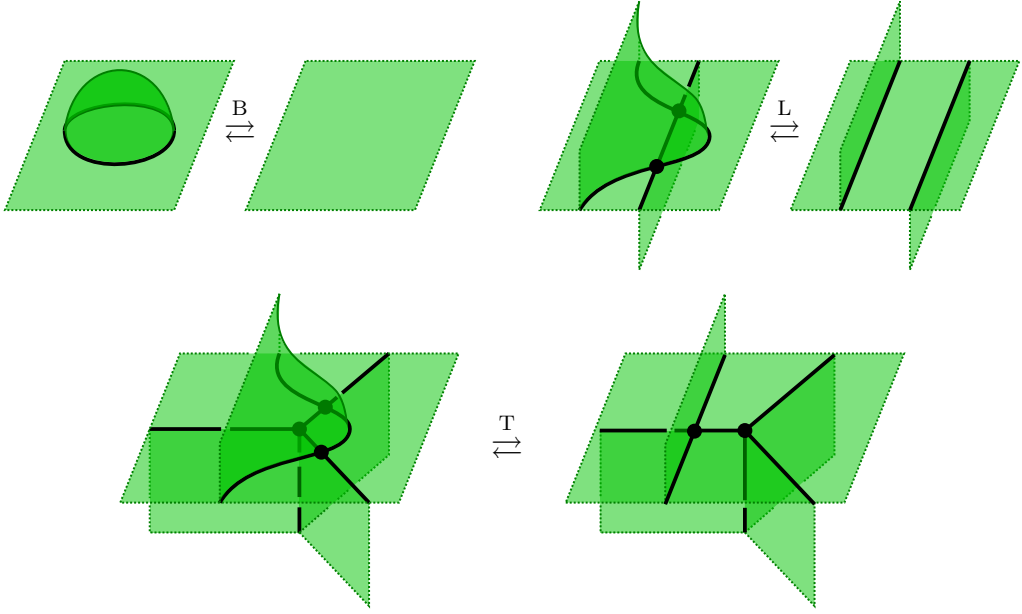

\captionsetup{format=plain, indention=0.5cm}
\centerline{
	
	\begin{tikzpicture}[baseline={([yshift=-.5ex]current bounding box.center)}]
	\node at (0,0) {\def\svgscale{1.0} \import{figures/}{23_BLT_moves_B_lhs.pdf_tex} };
	\end{tikzpicture}
 \hspace{-16pt}$\stackrel{\text{B}}{\rightleftarrows}$\hspace{-20pt}
	
	\begin{tikzpicture}[baseline={([yshift=-.5ex]current bounding box.center)}]
	\node at (0,0) {\def\svgscale{1.0} \import{figures/}{23_BLT_moves_B_rhs.pdf_tex} };
	\end{tikzpicture}

	\begin{tikzpicture}[baseline={([yshift=-.5ex]current bounding box.center)}]
	\node at (0,0) {\def\svgscale{1.0} \import{figures/}{23_BLT_moves_L_lhs.pdf_tex} };
	\end{tikzpicture}
 \hspace{-16pt}$\stackrel{\text{L}}{\rightleftarrows}$\hspace{-20pt}
	
	\begin{tikzpicture}[baseline={([yshift=-.5ex]current bounding box.center)}]
	\node at (0,0) {\def\svgscale{1.0} \import{figures/}{23_BLT_moves_L_rhs.pdf_tex} };
	\end{tikzpicture}

}
\centerline{
	
	\begin{tikzpicture}[baseline={([yshift=-.5ex]current bounding box.center)}]
	\node at (0,0) {\def\svgscale{1.0} \import{figures/}{23_BLT_moves_T_lhs.pdf_tex} };
	\end{tikzpicture}
 $\stackrel{\text{T}}{\rightleftarrows}$
	
	\begin{tikzpicture}[baseline={([yshift=-.5ex]current bounding box.center)}]
	\node at (0,0) {\def\svgscale{1.0} \import{figures/}{23_BLT_moves_T_rhs.pdf_tex} };
	\end{tikzpicture}

}
\caption{
	Bubble, lune and triangle (BLT) moves on admissible skeleta.
}
\label{fig:BLT_moves}
\end{figure}

\medskip

One can now define the functor $\zorbA_\mcC$ in~\eqref{eq:zorb} as follows:
\begin{itemize}
\item
For a fixed surface $\Sigma\in\Bord_3$, let us abbreviate $\Psi_{\Gamma}^{\Gamma'}:=\Psi_{\Gamma}^{\Gamma'}(\Sigma\times[0,1])$.
For three $\opA$-decorated admissible
1-skeleta $\Gamma$, $\Gamma'$, $\Gamma''$ one has by definition $\Psi_{\Gamma'}^{\Gamma''}\circ\Psi_{\Gamma}^{\Gamma'} = \Psi_{\Gamma}^{\Gamma''}$, i.e.\ the set $\{\Psi_{\Gamma}^{\Gamma'}\}_{\Gamma,\Gamma'}$ forms a directed system of linear maps.
One takes
\begin{equation}
\label{eq:zorb_state_space}
\zorbA_\mcC(\Sigma) := \operatorname{colim}_{\Gamma,\Gamma'}  \Psi_{\Gamma}^{\Gamma'} \, .
\end{equation}
\item
For a morphism $M\colon\Sigma\ra\Sigma'$ in $\Bord_3$, one takes
\begin{equation}
\label{eq:zorb_on_bordism}
\hspace{-2em}
\zorbA_\mcC(M) := \Big[  \zorbA_\mcC(\Sigma) \hookrightarrow \zdef_\mcC(\Sigma,\Gamma(\opA))
\xra{\Psi_\Gamma^{\Gamma'}(M)} \zdef_\mcC(\Sigma',\Gamma'(\opA)) \twoheadrightarrow \zorbA_\mcC(\Sigma')  \Big]\, ,
\end{equation}
where the inclusion/projection morphisms in~\eqref{eq:zorb_on_bordism} come from the structure maps and the universal property of the colimit in~\eqref{eq:zorb_state_space}.
\end{itemize}

Let us list some properties and results on the TFT $\zorbA_\mcC$.
\begin{enumerate}
\item 
Given a surface $\Sigma\in\Bord_3$ and an arbitrary admissible 1-skeleton $\Gamma$ of $\Sigma$, the map $\Psi_\Gamma^\Gamma$ is an idempotent.
An explicit choice for the colimit in~\eqref{eq:zorb_state_space} is given by its image so that one has
\begin{equation}
\label{eq:zorbA_state-space_ito_idemp}
\zorbA_\mcC(\Sigma) \cong \im \Psi_\Gamma^\Gamma ~ \subseteq \zdef_\mcC(\Sigma,\Gamma(\opA)) \, .
\end{equation}
\item
It was shown in~\cite{MR1} that given an orbifold datum $\opA$ in a MFC $\mcC$ one can construct a ribbon (multi)fusion category $\mcC_\opA$.
In~\cite{CMRSS1,CMRSS2} it was shown how $\zorbA_\mcC$ can be generalised to a TFT $\Bordrib_3(\mcC_\opA)\ra\Vect_\opk$, i.e.\ having the source category of bordisms with embedded $\mcC_\opA$-coloured ribbon graphs.
\item
We call $\opA$ \textsl{simple} if the aforementioned category $\mcC_\opA$ is fusion, i.e.\ its monoidal unit is a simple object.
For a simple orbifold datum $\opA$ in a MFC $\mcC$, the category $\mcC_\opA$ was shown to be MFC as well (see~\cite[Thm.\,3.17]{MR1}).
Moreover, fixing a square root of the global dimension $\sqrt{\Dim\mcC}$ yields a canonical choice for the square root $\sqrt{\Dim\mcC_\opA}$.
The orbifold TFT is then equivalent to the corresponding RT TFT
(see~\cite[Thm.\,4.1]{CMRSS2}):
\begin{equation}
\label{eq:Zorb_equiv_ZRT}
\zorbA_\mcC \cong \zrt_{\mcC_\opA} \, .
\end{equation}
\end{enumerate}

In practice, $\zorbA_\mcC(M)$ is obtained by (i) picking an $\opA$-decorated admissible 2-skeleton $S(\opA)$ of $M$, (ii) converting it into a $\mcC$-coloured ribbon graph as shown in Figure~\ref{fig:ribbonisation} and (iii) then evaluating it with the 
RT TFT $\zrt_\mcC$.

\section{Internal Levin--Wen models}
\label{sec:internal_LW_models}
In this section we lay out the details of the main construction of this paper, that of an anyonic lattice system in a topological phase described by a MFC $\mcC$.
In a nutshell, this is done by picking an orbifold datum $\opA$ in $\mcC$ and taking the Hamiltonian of the system to be the sum of commuting projectors, whose composition is precisely the idempotent~\eqref{eq:zorbA_state-space_ito_idemp}, projecting on the state-space, assigned by the internal state-sum TFT to the underlying surface containing the lattice, which serves as the degenerate space of ground states of the system.

\subsection{State space, Hamiltonian, ground state}\label{sec:intLW-statespace-etc}
The usual way to define a quantum-mechanical system is to give a pair $(V,H)$ consisting of a Hilbert space of states $V$ and a Hamiltonian $H\colon V\ra V$, i.e.\ a self-adjoint operator describing the evolution of states.
In this paper we work with a simplified picture where $V$ simply a vector space and $H$ a linear map.

\subsubsection*{Internal Levin--Wen data}

We construct the \textsl{internal Levin--Wen model} from an input $(\mcC, \opA,  \Lambda, \gamma, X, \pi, \imath, \Sigma, \Gamma)$, where
(in the diagrams below, surfaces are drawn in paper plane orientation, i.e.\ as shown in Figure~\ref{fig:orb_datum_entries})
\begin{itemize}
\item
$\mcC$ is a modular fusion category;
\item
$\opA = (A,T,\a,\abar,\psi,\phi)$ is an orbifold datum in $\mcC$;
\item $(\Lambda,\gamma)$: 
$\Lambda$ is a left $A$-module and $\gamma\colon \Lambda\ra\Lambda$ is an $A$-module morphism such that the trace $\tr\gamma^2$, seen as a morphism $A\ra A$ in the category of $A$-$A$-bimodules, is equal to $\psi^2\colon A \ra A$, or in terms of surface diagrams:
\begin{equation}
\label{eq:tr-gamma_identity}

	\begin{tikzpicture}[baseline={([yshift=-.5ex]current bounding box.center)}]
	\node at (0,0) {\def\svgscale{1.0} 
\begingroup%
  \makeatletter%
  \providecommand\color[2][]{%
    \errmessage{(Inkscape) Color is used for the text in Inkscape, but the package 'color.sty' is not loaded}%
    \renewcommand\color[2][]{}%
  }%
  \providecommand\transparent[1]{%
    \errmessage{(Inkscape) Transparency is used (non-zero) for the text in Inkscape, but the package 'transparent.sty' is not loaded}%
    \renewcommand\transparent[1]{}%
  }%
  \providecommand\rotatebox[2]{#2}%
  \newcommand*\fsize{\dimexpr\f@size pt\relax}%
  \newcommand*\lineheight[1]{\fontsize{\fsize}{#1\fsize}\selectfont}%
  \ifx\svgwidth\undefined%
    \setlength{\unitlength}{75.72809854bp}%
    \ifx\svgscale\undefined%
      \relax%
    \else%
      \setlength{\unitlength}{\unitlength * \real{\svgscale}}%
    \fi%
  \else%
    \setlength{\unitlength}{\svgwidth}%
  \fi%
  \global\let\svgwidth\undefined%
  \global\let\svgscale\undefined%
  \makeatother%
  \begin{picture}(1,0.84377955)%
    \lineheight{1}%
    \setlength\tabcolsep{0pt}%
    \put(0,0){\includegraphics[width=\unitlength,page=1]{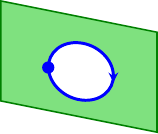}}%
    \put(0.28273578,0.55403977){\color[rgb]{0,0,1}\makebox(0,0)[lt]{\smash{\begin{tabular}[t]{l}$\Lambda$\end{tabular}}}}%
    \put(0.12884377,0.41573859){\color[rgb]{0,0,1}\makebox(0,0)[lt]{\smash{\begin{tabular}[t]{l}$\gamma^2$\end{tabular}}}}%
    \put(0.79565128,0.12485522){\color[rgb]{0,0,0}\makebox(0,0)[lt]{\smash{\begin{tabular}[t]{l}$A$\end{tabular}}}}%
  \end{picture}%
\endgroup%
 };
	\end{tikzpicture}
 =

	\begin{tikzpicture}[baseline={([yshift=-.5ex]current bounding box.center)}]
	\node at (0,0) {\def\svgscale{1.0} 
\begingroup%
  \makeatletter%
  \providecommand\color[2][]{%
    \errmessage{(Inkscape) Color is used for the text in Inkscape, but the package 'color.sty' is not loaded}%
    \renewcommand\color[2][]{}%
  }%
  \providecommand\transparent[1]{%
    \errmessage{(Inkscape) Transparency is used (non-zero) for the text in Inkscape, but the package 'transparent.sty' is not loaded}%
    \renewcommand\transparent[1]{}%
  }%
  \providecommand\rotatebox[2]{#2}%
  \newcommand*\fsize{\dimexpr\f@size pt\relax}%
  \newcommand*\lineheight[1]{\fontsize{\fsize}{#1\fsize}\selectfont}%
  \ifx\svgwidth\undefined%
    \setlength{\unitlength}{75.7280787bp}%
    \ifx\svgscale\undefined%
      \relax%
    \else%
      \setlength{\unitlength}{\unitlength * \real{\svgscale}}%
    \fi%
  \else%
    \setlength{\unitlength}{\svgwidth}%
  \fi%
  \global\let\svgwidth\undefined%
  \global\let\svgscale\undefined%
  \makeatother%
  \begin{picture}(1,0.84377977)%
    \lineheight{1}%
    \setlength\tabcolsep{0pt}%
    \put(0,0){\includegraphics[width=\unitlength,page=1]{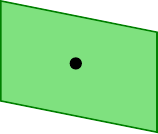}}%
    \put(0.5400375,0.41892012){\color[rgb]{0,0,0}\makebox(0,0)[lt]{\smash{\begin{tabular}[t]{l}$\psi^2$\end{tabular}}}}%
    \put(0.7956518,0.12485525){\color[rgb]{0,0,0}\makebox(0,0)[lt]{\smash{\begin{tabular}[t]{l}$A$\end{tabular}}}}%
  \end{picture}%
\endgroup%
 };
	\end{tikzpicture}
 \, ;
\end{equation}
\item $(X,\pi,\iota)$:
$X\in\mcC$ is an object and
\begin{equation}
\pi : X \rightleftarrows \Lambda^{*}\otimes_A T \otimes_{A\otimes A} (\Lambda\otimes \Lambda) : \imath 
\end{equation}
are projection/inclusion morphisms, i.e.\ such that $\pi\circ\imath = \id$ and $\imath\circ\pi$ is an idempotent (i.e.\ $\pi$ and $\iota$ split an idempotent of $X$ projecting onto the relative tensor product $\Lambda^*\otimes_A T \otimes_{A\otimes A} (\Lambda \otimes \Lambda)$);
the morphisms $\pi$, $\imath$ and their duals $\imath^*$, $\pi^*$ can be used to label point defects with five adjacent lines labelled by $X$, $T$ and $\Lambda$ as well as three $A$-labelled surfaces in the following configurations:
\begin{equation}
\label{eq:pi-iota_as_defects}
\displayindent0pt
\displaywidth\textwidth



\end{equation}
\end{itemize}
We review some examples of the above data in Section~\ref{subsec:ExampleInternalData} and discuss the model for those data in more detail in Section~\ref{section:ExampleModels}.
For now let us just mention that the entries can be interpreted as follows:
\begin{itemize}
\item
$\mcC$ is the input for the RT TFT $\zrt_\mcC$, which will serve as the ambient theory.\footnote{Technically the algebraic input for $\zrt_\mcC$ also includes a choice for the square root of the global dimension $\sqrt{\Dim\mcC}$, but we will only use this TFT to evaluate bordisms with the underlying topology of a cylinder $\Sigma\times[0,1]$, in which case $\sqrt{\Dim\mcC}$ can be ignored.}
\item $\opA$ is the main constituent in defining the lattice theory inside $\zrt_\mcC$ and determines the degenerate ground state space of the model.
\item $(\Sigma,\Gamma)$ is the topological datum for the lattice model.
Roughly the vertices of $\Gamma$ can be thought of as the places for point excitations (anyons) on a surface $\Sigma$ on which the theory $\zrt_\mcC$ lives, and the edges of $\Gamma$ indicate their entanglement, favoured by the Hamiltonian.
\end{itemize}
The other entries are in principle optional as one always has canonical choices for them.
Choosing to include them however provides one with a richer set of examples which moreover are easier to compare with existing lattice models in the literature:
\begin{itemize}
\item $(\Lambda,\gamma)$ will be used when defining the terms of the Hamiltonian constructed from the edges of $\Gamma$.
The orbifold datum $\opA$ always provides one with the canonical choice $\Lambda = A$, $\gamma = \psi$.

\item $(X,\pi,\iota)$ will be used to define the state space, as well as the terms of the Hamiltonian constructed from the vertices of $\Gamma$. The orbifold datum $\opA$ provides one with the canonical choice $X = T$, $\pi=\iota=\id_T$.
\end{itemize}

\subsubsection*{State space}
The state space $V \equiv V_X$ of the model is defined as follows:
Let $\Gamma_0$ be the set of vertices (0-strata) of $\Gamma$ and for a vertex $v\in\Gamma_0$ let $|v|=\pm$ denote its orientation.
$\Gamma_0$ also provides a stratification of $\Sigma$ by a set of oriented punctures.
We denote by $\Gamma_0(X)$ the labelled stratification, where each point $v\in\Gamma_0\subseteq \Sigma$ has the label $X^{|v|}$ with $X^+ := X$, $X^- := X^*$ as before.
The state space of the internal Levin--Wen model is then defined to be
\begin{equation}
\label{eq:state_space_ito_TQFT}
V := \zdef_\mcC(\Sigma, \Gamma_0(X)) = \zrt_\mcC(\Sigma, \Gamma_0(X)) \, ,
\end{equation}
i.e.\ it is the vector space assigned by the RT TFT to a surface with $X$- or $X^*$-labelled punctures at the vertices of $\Gamma$, cf.\ \eqref{eq:surface_w_X-points}.

\medskip

For explicit computations one can use the algebraic expression~\eqref{eq:zrt_on_objects} for the RT TFT state spaces (cf.\ Section~\ref{sec:ExplicitComputation} below).
For example, assuming $X$ to be self-dual, i.e.\ $X\cong X^*$, and $\Sigma$ to be the 2-sphere $S^2$ results in $V\cong\mcC(\opid,X^{\otimes n})$, where $n = |\Gamma_0|$ is the number of vertices.
In other words, the dimension of $V$ is the multiplicity of the trivial anyon $\opid\in\mcC$ in the $n$-fold fusion of $X$-anyons.
This illustrates the non-locality of the model as one cannot write $V$ as $U^{\otimes n}$ for some vector space of states $U$ assigned to each vertex (unless already $X = \opid^{\oplus N}$).

\subsubsection*{Hamiltonian}

We now turn to defining the Hamiltonian $H \equiv H_\opA$ of the system.
For each component $c\in\Gamma_0\sqcup\Gamma_1\sqcup\Gamma_2$ (i.e.\ vertex, edge, face) we will define a certain stratified cylinder
\begin{equation}
\label{eq:Sc_cylinder}
(\Sigma \times [0,1], S_c)\colon (\Sigma, \Gamma_0(X)) \ra (\Sigma, \Gamma_0(X)) ~.
\end{equation}
To each such cylinder we can in turn assign a linear map
\begin{equation}
\label{eq:Pc_ito_TQFT}
P_c := \zdef_\mcC(\Sigma\times[0,1], S_c) \in \End V \, ,
\end{equation}
to which we will refer as the projector of the component $c$.
The total Hamiltonian of the system is then defined to be
\begin{align} \nonumber
H &=  \sum_{c\in\Gamma_0\sqcup\Gamma_1\sqcup\Gamma_2} (1-P_c) 
= \sum_{v\in\Gamma_0} (1-P_v) + \sum_{e\in\Gamma_1} (1-P_e) + \sum_{f\in\Gamma_2} (1-P_f)\\ \label{eq:hamiltonian}
&= H_{\mathrm{V}} + H_{\mathrm{E}} + H_{\mathrm{F}} \, ,
\end{align}
where we have grouped the summands into the vertex, edge and face Hamiltonians $H_{\mathrm{V}}$, $H_{\mathrm{E}}$ and $H_{\mathrm{F}}$ respectively.
The summands of the identity in~\eqref{eq:hamiltonian} are only to ensure that the ground state energy of the system is zero, omitting them does not change the model.

\medskip

It remains to define the stratifications $S_c$ as in~\eqref{eq:Sc_cylinder} for each component in $\Gamma_0\sqcup\Gamma_1\sqcup\Gamma_2$:
\begin{itemize}
\item
for a vertex $v\in\Gamma_0$, $S_v$ consists of parallel $X$- or $X^*$-labelled lines $\Gamma_0(X) \times [0,1]$ with two additional point insertions on the line $v\times [0,1]$ labelled with $\pi$ and $\imath$ if $|v|=+$ and $\imath^*$ and $\pi^*$ if $|v|=-$, whose neighbourhoods are as in~\eqref{eq:pi-iota_as_defects}, see Figure~\ref{fig:Pv_ito_Zdef};
\item
for an edge $e\in\Gamma_1$ connecting source and target vertices $v_s, v_t \in \Gamma_0$, one constructs $S_e$ by similarly adding $\pi$ and $\imath$ insertions on $v_s\times [0,1]$ and $v_t \times [0,1]$ and then cross-joining the $\Lambda$-lines as well as the $A$-surfaces which lie on the strip $e_i\times [0,1]$, on which one also adds a $\psi^{-2}$-insertion, see Figure~\ref{fig:Pe_ito_Zdef};
\item
for a face $f\in\Gamma_2$ bounding a vertex-edge chain $v_1 \xra{e_1} v_2 \dots \xra{e_{l-1}} v_l=v_1$ the stratification $S_f$ is obtained by first cross-joining the lines $v_{i}\times [0,1]$ and $v_{i+1}\times [0,1]$ along the edges $e_{i}$ as it was done for $S_{e_i}$, and then inserting a horizontal $A$-labelled surface with a $\psi^2$-insertion, which joins the previous network of defects at $T$-labelled lines and $\a$- or $\abar$-labelled points as shown in Figure~\ref{fig:Pf_ito_Zdef}.
\end{itemize}
\goodbreak

\begin{figure}[p]
	\captionsetup{format=plain, indention=0.5cm}
	\centerline{
		\begin{subfigure}[b]{0.5\textwidth}
			\centering
			
	\begin{tikzpicture}[baseline={([yshift=-.5ex]current bounding box.center)}]
	\node at (0,0) {\def\svgscale{1.0} 
\begingroup%
  \makeatletter%
  \providecommand\color[2][]{%
    \errmessage{(Inkscape) Color is used for the text in Inkscape, but the package 'color.sty' is not loaded}%
    \renewcommand\color[2][]{}%
  }%
  \providecommand\transparent[1]{%
    \errmessage{(Inkscape) Transparency is used (non-zero) for the text in Inkscape, but the package 'transparent.sty' is not loaded}%
    \renewcommand\transparent[1]{}%
  }%
  \providecommand\rotatebox[2]{#2}%
  \newcommand*\fsize{\dimexpr\f@size pt\relax}%
  \newcommand*\lineheight[1]{\fontsize{\fsize}{#1\fsize}\selectfont}%
  \ifx\svgwidth\undefined%
    \setlength{\unitlength}{150.75119278bp}%
    \ifx\svgscale\undefined%
      \relax%
    \else%
      \setlength{\unitlength}{\unitlength * \real{\svgscale}}%
    \fi%
  \else%
    \setlength{\unitlength}{\svgwidth}%
  \fi%
  \global\let\svgwidth\undefined%
  \global\let\svgscale\undefined%
  \makeatother%
  \begin{picture}(1,1.09950164)%
    \lineheight{1}%
    \setlength\tabcolsep{0pt}%
    \put(0,0){\includegraphics[width=\unitlength,page=1]{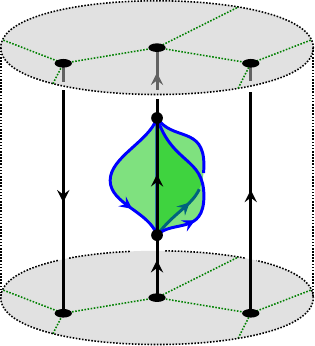}}%
    \put(0.23536822,0.03692845){\color[rgb]{0,0,0}\makebox(0,0)[lt]{\smash{\begin{tabular}[t]{l}$X^*$\end{tabular}}}}%
    \put(0.51251541,0.07305604){\color[rgb]{0,0,0}\makebox(0,0)[lt]{\smash{\begin{tabular}[t]{l}$X$\end{tabular}}}}%
    \put(0.83589596,0.05813084){\color[rgb]{0,0,0}\makebox(0,0)[lt]{\smash{\begin{tabular}[t]{l}$X$\end{tabular}}}}%
    \put(0.51251541,0.54568914){\color[rgb]{0,0,0}\makebox(0,0)[lt]{\smash{\begin{tabular}[t]{l}$T$\end{tabular}}}}%
    \put(0.30476677,0.6302656){\color[rgb]{0,0,1}\makebox(0,0)[lt]{\smash{\begin{tabular}[t]{l}$\Lambda$\end{tabular}}}}%
    \put(0.66297288,0.42131204){\color[rgb]{0,0,1}\makebox(0,0)[lt]{\smash{\begin{tabular}[t]{l}$\Lambda$\end{tabular}}}}%
    \put(0.66297288,0.59543997){\color[rgb]{0,0,1}\makebox(0,0)[lt]{\smash{\begin{tabular}[t]{l}$\Lambda$\end{tabular}}}}%
    \put(0.4229639,0.28698469){\color[rgb]{0,0,0}\makebox(0,0)[lt]{\smash{\begin{tabular}[t]{l}$\pi$\end{tabular}}}}%
    \put(0.45425251,0.72957293){\color[rgb]{0,0,0}\makebox(0,0)[lt]{\smash{\begin{tabular}[t]{l}$\imath$\end{tabular}}}}%
  \end{picture}%
\endgroup%
 };
	\end{tikzpicture}

			\caption{}
			\label{fig:Pv_ito_Zdef}
		\end{subfigure}
		\begin{subfigure}[b]{0.5\textwidth}
			\centering
			
	\begin{tikzpicture}[baseline={([yshift=-.5ex]current bounding box.center)}]
	\node at (0,0) {\def\svgscale{1.0} \import{figures/}{31_Pe_ito_Zdef.pdf_tex} };
	\end{tikzpicture}

			\caption{}
			\label{fig:Pe_ito_Zdef}
		\end{subfigure}
	}
	\centerline{
		\begin{subfigure}[b]{1.0\textwidth}
			\centering
			
	\begin{tikzpicture}[baseline={([yshift=-.5ex]current bounding box.center)}]
	\node at (0,0) {\def\svgscale{1.0} \import{figures/}{31_Pf_ito_Zdef.pdf_tex} };
	\end{tikzpicture}

			\caption{}
			\label{fig:Pf_ito_Zdef}
		\end{subfigure}
	}
	\caption{
		Examples of stratified cylinders $(\Sigma\times[0,1], S_c)$ for the component $c$ of $\Gamma$ being (a) a vertex (b) an edge (c) a face.
        Only the relevant patch of $\Sigma\times [0,1]$ is shown.
        Evaluating these cylinders with the defect TFT $\zdef_\mcC$ one obtains the projectors $P_c$ used in the Hamiltonian~\eqref{eq:hamiltonian}.
            The orientations of surfaces are as in Figure~\ref{fig:orb_datum_entries}, in particular here they coincide with the paper plane orientation.
	}
	\label{fig:Pc_ito_Zdef}
\end{figure}

\begin{prp}
\label{prp:Hc_idempotents}
The maps $P_c$, $P_{c'}$ commute for all components $c,c'\in\Gamma_0\sqcup\Gamma_1\sqcup\Gamma_2$.
Moreover, if $c'\subseteq c$ then $P_c \circ P_{c'} = P_{c'} \circ P_{c} = P_c$, in particular all $P_c$ are idempotents.
\end{prp}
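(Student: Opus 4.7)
The strategy is to exploit that each $P_c$ is the image under $\zdef_\mcC$ of a stratified cylinder with $\opA$-decoration concentrated near $c$, so that a composition $P_c \circ P_{c'}$ becomes the value of $\zdef_\mcC$ on the stacked cylinder. The desired identities then become local claims about stratifications, and I would verify them using three tools: (i) topological invariance of $\zdef_\mcC$, (ii) the splitting identity $\pi \circ \imath = \id$ from \eqref{eq:pi-iota_split}, and (iii) the orbifold-datum axioms \eqrefO{1}--\eqrefO{8} in the form of BLT moves on admissible $\opA$-decorated skeleta (Figure~\ref{fig:BLT_moves}), together with the bimodule-morphism property of $\psi$ and the $\Delta$-separability of $A$.

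The argument proceeds by case analysis on the relative position of $c$ and $c'$. If $c \cap c' = \varnothing$, the defect configurations of $S_c$ and $S_{c'}$ lie in disjoint open neighbourhoods of $\Sigma \times [0,1]$, so the two stackings are isotopic rel.\ boundary and $\zdef_\mcC$ returns the same map, giving $P_c P_{c'} = P_{c'} P_c$. If $c' \subsetneq c$, the stratification $S_c$ already contains, near $c'$, the data making up $S_{c'}$: for a vertex $v$ lying in an edge $e$ or face $f$, the stratifications $S_e$, $S_f$ already insert the idempotent $\imath \circ \pi$ on the $X$-line at $v$; for an edge $e$ bounding a face $f$, the stratification $S_f$ already contains the cross-joined $A$-strip with the $\psi^{-2}$ insertion prescribed for $e$. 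Stacking $P_c$ with $P_{c'}$ thus produces two consecutive copies of the local pattern at $c'$. For the vertex case this collapses via $(\imath\pi)(\imath\pi) = \imath(\pi\imath)\pi = \imath\pi$, and for the edge case via the interplay between $\Delta$-separability of $A$, the bimodule property of $\psi$ and the $\psi^{\pm 2}$ Euler-completion bookkeeping. One concludes $P_c \circ P_{c'} = P_c = P_{c'} \circ P_c$; taking $c' = c$ yields $P_c^2 = P_c$ and, in conjunction with the first case, this also gives commutativity whenever one component contains the other.

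In the remaining case, $c$ and $c'$ overlap but neither contains the other, which splits into edge--edge, edge--face, and face--face configurations sharing a common vertex or edge. Here the stacked cylinder contains a non-trivial intermediate region whose $\opA$-decorated stratification must be shown to admit a rearrangement that is symmetric under the swap $c \leftrightarrow c'$. The key fact is that this intermediate stratification is itself an admissible $\opA$-decorated skeleton away from the boundary, so by the proposition recalled after \eqref{eq:Psi_maps} the value of $\zdef_\mcC$ is independent of all interior choices; one then uses a sequence of BLT moves together with the $\phi^{\pm}$ and $\psi^{\pm 2}$ Euler-completion insertions to relate the two stackings. The main obstacle is the face--face case for faces sharing an edge $e$: here the stacking produces two parallel horizontal $A$-discs joined along a common $T$-line, and one must perform a bubble-type reduction absorbing the intermediate $3$-stratum via $\phi$, followed by lune and triangle moves invoking the $\alpha,\abar$ pentagon-type relations \eqrefO{1}--\eqrefO{4}, to reach a manifestly symmetric configuration; the edge--face and edge--edge subcases are simpler variants using only a subset of these moves.
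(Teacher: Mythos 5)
Your overall strategy is the one the paper uses: realise $P_c\circ P_{c'}$ as $\zdef_\mcC$ of a stacked cylinder, then argue locally with isotopy, the splitting identities~\eqref{eq:pi-iota_split}, and the BLT moves encoding \eqrefO{1}--\eqrefO{8}. The case analysis (disjoint supports, containment/absorption, overlapping faces sharing an edge) matches the paper's, and your description of the face--face case --- two parallel horizontal $A$-discs reduced by a bubble move involving $\phi$ after lune and triangle moves --- is exactly the computation in Figures~\ref{fig:Pf_commute} and~\ref{fig:Pf_idemp_calc}.

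There is, however, one concrete missing ingredient. When two edge patterns for the same edge $e$ are stacked (this is needed both for $P_e^{\,2}=P_e$ and for the absorption $P_f\circ P_e=P_f$), the cross-joined $\Lambda$-lines close up into a loop carrying two $\gamma$-insertions, and the resulting partial trace must be cancelled against the $\psi^{-2}$-insertions. This is precisely what the hypothesis $\tr\gamma^2=\psi^2$ in~\eqref{eq:tr-gamma_identity} is for; the ingredients you list ($\Delta$-separability, the bimodule property of $\psi$, Euler-completion bookkeeping) do not by themselves close this loop for a general pair $(\Lambda,\gamma)$, so you should invoke~\eqref{eq:tr-gamma_identity} explicitly (cf.\ Figure~\ref{fig:Pe_idemp_calc}). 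Relatedly, deducing $P_f^{\,2}=P_f$ from your containment argument is too quick: for $c=c'=f$ the ``two consecutive copies of the local pattern'' are two full face configurations, and their collapse is not a consequence of $\pi\circ\imath=\id$ or separability but requires the same crawl-and-bubble BLT computation you describe only for the overlapping face--face case; you should route the face idempotency through that argument as well.
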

\begin{proof}
It is clear that the vertex projectors $P_v$, $v\in\Gamma_0$ commute, as they are defined by the networks of defects as in Figure~\ref{fig:Pv_ito_Zdef}, which are localised around the strips $v\times [0,1]$ of $\Sigma\times [0,1]$ and upon composing $P_{v'}\circ P_v$ can be moved pass each other.
That they are idempotents follows from the identities~\eqref{eq:pi-iota_split}.

The commutation of the edge projectors $P_e$, $e\in\Gamma_1$ can again be shown using~\eqref{eq:pi-iota_split}.
For example, in the case of two edges sharing a vertex, one first joins the $\Lambda$-lines and the $A$-surfaces and then deforms the resulting network of defects, see Figure~\ref{fig:Pe_commute}, at which point the insertion of the identity~\eqref{eq:pi-iota_split}
allows one to separate them again.
If $v\in\Gamma_0$ is either the source or the target vertex of $e$, a similar argument can be used to show that $P_e$ ``absorbs'' $P_v$, i.e.\ one has the identity $P_e\circ P_v = P_v \circ P_e = P_e$.
The idempotent property of $P_e$ follows from~\eqref{eq:pi-iota_split} and~\eqref{eq:tr-gamma_identity} as illustrated in Figure~\ref{fig:Pe_idemp_calc}.

That the face projectors $P_f$, $f\in\Gamma_2$ commute with each other can be shown using the identities~\eqrefO{1}--\eqrefO{8} defining an orbifold datum, which, as mentioned in Section~\ref{subsec:RT_internal_state-sum}, can equivalently be seen as implying the oriented BLT moves on the skeleta, shown in Figure~\ref{fig:BLT_moves}.
For example, in the case of two faces sharing an edge, part of the computation showing the commutativity is sketched in Figure~\ref{fig:Pf_commute}, where the labels for strata and the $\psi$- and $\phi$-insertions are
omitted for brevity.
Identities~\eqref{eq:pi-iota_split} and a similar computation as in Figure~\ref{fig:Pe_idemp_calc} show that $P_f$ ``absorbs'' the projectors of the components $c'\subsetneq f$.
Finally, one can show that $P_f$ is an idempotent as follows: one makes the top horizontal defect ``crawl'' onto the bottom one using the L (or~\eqrefO{2}--\eqrefO{7}) and T (or \eqrefO{1}) moves and in the end removes the newly created bubble defect using the B (or~\eqrefO{8}) move.
An instance of this computation is shown in Figure~\ref{fig:Pf_idemp_calc}.
\end{proof}

\begin{figure}[p]
\captionsetup{format=plain, indention=0.5cm}
\vspace{-1cm}
\centerline{
\begin{subfigure}[b]{1.22\textwidth}
	\centering
	
	\begin{tikzpicture}[baseline={([yshift=-.5ex]current bounding box.center)}]
	\node at (0,0) {\def\svgscale{1.0} \import{figures/}{31_Pe_commute_1.pdf_tex} };
	\end{tikzpicture}
 \hspace{-10pt}$\rightsquigarrow$\hspace{-12pt}
    
	\begin{tikzpicture}[baseline={([yshift=-.5ex]current bounding box.center)}]
	\node at (0,0) {\def\svgscale{1.0} \import{figures/}{31_Pe_commute_2.pdf_tex} };
	\end{tikzpicture}
 \hspace{-10pt}$\rightsquigarrow$\hspace{-12pt}
    
	\begin{tikzpicture}[baseline={([yshift=-.5ex]current bounding box.center)}]
	\node at (0,0) {\def\svgscale{1.0} \import{figures/}{31_Pe_commute_3.pdf_tex} };
	\end{tikzpicture}

	\caption{}
	\label{fig:Pe_commute}
\end{subfigure}
}
\centerline{
\begin{subfigure}[b]{\textwidth}
	\centering
	
	\begin{tikzpicture}[baseline={([yshift=-.5ex]current bounding box.center)}]
	\node at (0,0) {\def\svgscale{1.0} \import{figures/}{31_Pe_idemp_calc_1.pdf_tex} };
	\end{tikzpicture}
 $\rightsquigarrow$
    
	\begin{tikzpicture}[baseline={([yshift=-.5ex]current bounding box.center)}]
	\node at (0,0) {\def\svgscale{1.0} 
\begingroup%
  \makeatletter%
  \providecommand\color[2][]{%
    \errmessage{(Inkscape) Color is used for the text in Inkscape, but the package 'color.sty' is not loaded}%
    \renewcommand\color[2][]{}%
  }%
  \providecommand\transparent[1]{%
    \errmessage{(Inkscape) Transparency is used (non-zero) for the text in Inkscape, but the package 'transparent.sty' is not loaded}%
    \renewcommand\transparent[1]{}%
  }%
  \providecommand\rotatebox[2]{#2}%
  \newcommand*\fsize{\dimexpr\f@size pt\relax}%
  \newcommand*\lineheight[1]{\fontsize{\fsize}{#1\fsize}\selectfont}%
  \ifx\svgwidth\undefined%
    \setlength{\unitlength}{150.75119335bp}%
    \ifx\svgscale\undefined%
      \relax%
    \else%
      \setlength{\unitlength}{\unitlength * \real{\svgscale}}%
    \fi%
  \else%
    \setlength{\unitlength}{\svgwidth}%
  \fi%
  \global\let\svgwidth\undefined%
  \global\let\svgscale\undefined%
  \makeatother%
  \begin{picture}(1,1.09950164)%
    \lineheight{1}%
    \setlength\tabcolsep{0pt}%
    \put(0,0){\includegraphics[width=\unitlength,page=1]{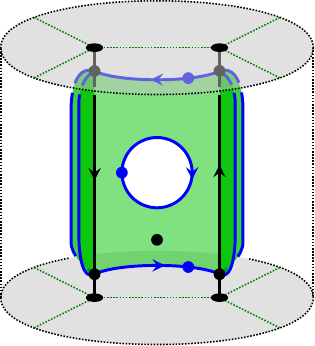}}%
    \put(0.27516892,0.07175401){\color[rgb]{0,0,0}\makebox(0,0)[lt]{\smash{\begin{tabular}[t]{l}$X^*$\end{tabular}}}}%
    \put(0.67171841,0.07305604){\color[rgb]{0,0,0}\makebox(0,0)[lt]{\smash{\begin{tabular}[t]{l}$X$\end{tabular}}}}%
    \put(0.51853996,0.31926449){\color[rgb]{0,0,0}\makebox(0,0)[lt]{\smash{\begin{tabular}[t]{l}$\psi^{-4}$\end{tabular}}}}%
    \put(0.27516892,0.97224436){\color[rgb]{0,0,0}\makebox(0,0)[lt]{\smash{\begin{tabular}[t]{l}$X^*$\end{tabular}}}}%
    \put(0.67171841,0.97354643){\color[rgb]{0,0,0}\makebox(0,0)[lt]{\smash{\begin{tabular}[t]{l}$X$\end{tabular}}}}%
    \put(0.31749595,0.53068612){\color[rgb]{0,0,1}\makebox(0,0)[lt]{\smash{\begin{tabular}[t]{l}$\gamma^2$\end{tabular}}}}%
  \end{picture}%
\endgroup%
 };
	\end{tikzpicture}

	\caption{}
	\label{fig:Pe_idemp_calc}
\end{subfigure}
}
\centerline{
\begin{subfigure}[b]{1.22\textwidth}
	\centering
	
	\begin{tikzpicture}[baseline={([yshift=-.5ex]current bounding box.center)}]
	\node at (0,0) {\def\svgscale{1.0} \import{figures/}{31_Pf_commute_1.pdf_tex} };
	\end{tikzpicture}
 $\rightsquigarrow$
    
	\begin{tikzpicture}[baseline={([yshift=-.5ex]current bounding box.center)}]
	\node at (0,0) {\def\svgscale{1.0} \import{figures/}{31_Pf_commute_2.pdf_tex} };
	\end{tikzpicture}
 $\rightsquigarrow$
    
	\begin{tikzpicture}[baseline={([yshift=-.5ex]current bounding box.center)}]
	\node at (0,0) {\def\svgscale{1.0} \import{figures/}{31_Pf_commute_3.pdf_tex} };
	\end{tikzpicture}

	\caption{}
	\label{fig:Pf_commute}
\end{subfigure}
}
\caption{
    Sketches for computations showing that
    (a) edge projectors commute;
    (b) edge projectors are indeed idempotents;
    (c) face projectors commute.
}
\label{fig:Pc_commmute_idemp_calc}
\end{figure}

\begin{figure}[p]
	\captionsetup{format=plain, indention=0.5cm}
	\centerline{
		\begin{subfigure}[b]{0.6\textwidth}
			\centering
			
	\begin{tikzpicture}[baseline={([yshift=-.5ex]current bounding box.center)}]
	\node at (0,0) {\def\svgscale{1.0} \import{figures/}{31_Pf_idemp_calc_1.pdf_tex} };
	\end{tikzpicture}

		\end{subfigure}
		\begin{subfigure}[b]{0.6\textwidth}
			\centering
			
	\begin{tikzpicture}[baseline={([yshift=-.5ex]current bounding box.center)}]
	\node at (0,0) {\def\svgscale{1.0} 
\begingroup%
  \makeatletter%
  \providecommand\color[2][]{%
    \errmessage{(Inkscape) Color is used for the text in Inkscape, but the package 'color.sty' is not loaded}%
    \renewcommand\color[2][]{}%
  }%
  \providecommand\transparent[1]{%
    \errmessage{(Inkscape) Transparency is used (non-zero) for the text in Inkscape, but the package 'transparent.sty' is not loaded}%
    \renewcommand\transparent[1]{}%
  }%
  \providecommand\rotatebox[2]{#2}%
  \newcommand*\fsize{\dimexpr\f@size pt\relax}%
  \newcommand*\lineheight[1]{\fontsize{\fsize}{#1\fsize}\selectfont}%
  \ifx\svgwidth\undefined%
    \setlength{\unitlength}{226.12677161bp}%
    \ifx\svgscale\undefined%
      \relax%
    \else%
      \setlength{\unitlength}{\unitlength * \real{\svgscale}}%
    \fi%
  \else%
    \setlength{\unitlength}{\svgwidth}%
  \fi%
  \global\let\svgwidth\undefined%
  \global\let\svgscale\undefined%
  \makeatother%
  \begin{picture}(1,1.0995017)%
    \lineheight{1}%
    \setlength\tabcolsep{0pt}%
    \put(0,0){\includegraphics[width=\unitlength,page=1]{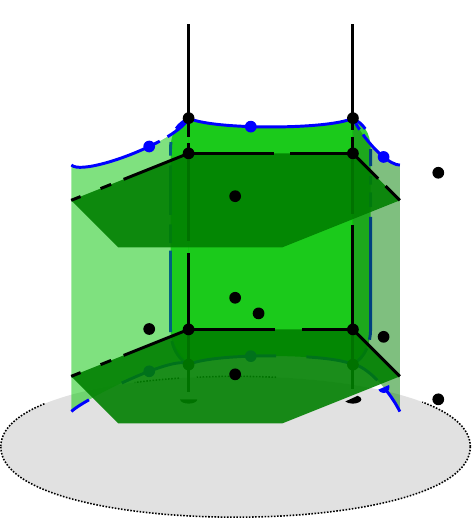}}%
    \put(0.46120853,0.49469608){\color[rgb]{0,0,0}\makebox(0,0)[lt]{\smash{\begin{tabular}[t]{l}$\phi^2$\end{tabular}}}}%
    \put(0.46120853,0.71028308){\color[rgb]{0,0,0}\makebox(0,0)[lt]{\smash{\begin{tabular}[t]{l}$\psi^2$\end{tabular}}}}%
    \put(0.46120853,0.32885993){\color[rgb]{0,0,0}\makebox(0,0)[lt]{\smash{\begin{tabular}[t]{l}$\psi^2$\end{tabular}}}}%
    \put(0,0){\includegraphics[width=\unitlength,page=2]{31_Pf_idemp_calc_2.pdf}}%
    \put(0.90896622,0.76003398){\color[rgb]{0,0,0}\makebox(0,0)[lt]{\smash{\begin{tabular}[t]{l}$\phi$\end{tabular}}}}%
    \put(0.90896622,0.27910908){\color[rgb]{0,0,0}\makebox(0,0)[lt]{\smash{\begin{tabular}[t]{l}$\phi$\end{tabular}}}}%
    \put(0.44462485,0.22935822){\color[rgb]{0,0,0}\makebox(0,0)[lt]{\smash{\begin{tabular}[t]{l}$\psi^2$\end{tabular}}}}%
    \put(0.67679563,0.29569266){\color[rgb]{0,0,0}\makebox(0,0)[lt]{\smash{\begin{tabular}[t]{l}$\psi^2$\end{tabular}}}}%
    \put(0.15607005,0.36534388){\color[rgb]{0,0,0}\makebox(0,0)[lt]{\smash{\begin{tabular}[t]{l}$\psi^2$\end{tabular}}}}%
  \end{picture}%
\endgroup%
 };
	\end{tikzpicture}

		\end{subfigure}
	}
	\centerline{
		\begin{subfigure}[b]{0.6\textwidth}
			\centering
			
	\begin{tikzpicture}[baseline={([yshift=-.5ex]current bounding box.center)}]
	\node at (0,0) {\def\svgscale{1.0} \import{figures/}{31_Pf_idemp_calc_3.pdf_tex} };
	\end{tikzpicture}

		\end{subfigure}
		\begin{subfigure}[b]{0.6\textwidth}
			\centering
			
	\begin{tikzpicture}[baseline={([yshift=-.5ex]current bounding box.center)}]
	\node at (0,0) {\def\svgscale{1.0} 
\begingroup%
  \makeatletter%
  \providecommand\color[2][]{%
    \errmessage{(Inkscape) Color is used for the text in Inkscape, but the package 'color.sty' is not loaded}%
    \renewcommand\color[2][]{}%
  }%
  \providecommand\transparent[1]{%
    \errmessage{(Inkscape) Transparency is used (non-zero) for the text in Inkscape, but the package 'transparent.sty' is not loaded}%
    \renewcommand\transparent[1]{}%
  }%
  \providecommand\rotatebox[2]{#2}%
  \newcommand*\fsize{\dimexpr\f@size pt\relax}%
  \newcommand*\lineheight[1]{\fontsize{\fsize}{#1\fsize}\selectfont}%
  \ifx\svgwidth\undefined%
    \setlength{\unitlength}{226.12677246bp}%
    \ifx\svgscale\undefined%
      \relax%
    \else%
      \setlength{\unitlength}{\unitlength * \real{\svgscale}}%
    \fi%
  \else%
    \setlength{\unitlength}{\svgwidth}%
  \fi%
  \global\let\svgwidth\undefined%
  \global\let\svgscale\undefined%
  \makeatother%
  \begin{picture}(1,1.09950164)%
    \lineheight{1}%
    \setlength\tabcolsep{0pt}%
    \put(0,0){\includegraphics[width=\unitlength,page=1]{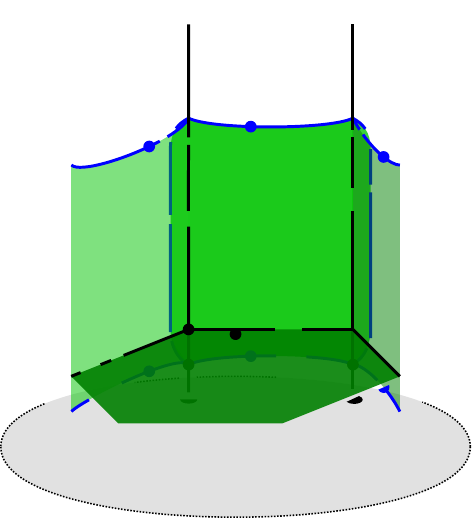}}%
    \put(0.46203896,0.41751514){\color[rgb]{0,0,0}\makebox(0,0)[lt]{\smash{\begin{tabular}[t]{l}$\phi^2$\end{tabular}}}}%
    \put(0,0){\includegraphics[width=\unitlength,page=2]{31_Pf_idemp_calc_4.pdf}}%
    \put(0.46863846,0.32131747){\color[rgb]{0,0,0}\makebox(0,0)[lt]{\smash{\begin{tabular}[t]{l}$\psi^2$\end{tabular}}}}%
    \put(0,0){\includegraphics[width=\unitlength,page=3]{31_Pf_idemp_calc_4.pdf}}%
    \put(0.47613123,0.51980048){\color[rgb]{0,0,0}\makebox(0,0)[lt]{\smash{\begin{tabular}[t]{l}$\psi^2$\end{tabular}}}}%
    \put(0,0){\includegraphics[width=\unitlength,page=4]{31_Pf_idemp_calc_4.pdf}}%
    \put(0.90896625,0.76003385){\color[rgb]{0,0,0}\makebox(0,0)[lt]{\smash{\begin{tabular}[t]{l}$\phi$\end{tabular}}}}%
    \put(0.90896625,0.279109){\color[rgb]{0,0,0}\makebox(0,0)[lt]{\smash{\begin{tabular}[t]{l}$\phi$\end{tabular}}}}%
  \end{picture}%
\endgroup%
 };
	\end{tikzpicture}

		\end{subfigure}
	}
	\caption{
		Computation showing that $P_f$ is an idempotent.
	}
	\label{fig:Pf_idemp_calc}
\end{figure}

\begin{rem}
\label{rem:Pc_ribbonisation}
As explained in Section~\ref{subsec:RT_theory_w_line_and_surface_defects}, the defect TFT $\zdef_\mcC$ works by replacing the stratification with a ribbon graph and evaluating with $\zrt_\mcC$ (see Figure~\ref{fig:ribbonisation}).
Let us make the choice $\Lambda=A$, in which case $\gamma = \psi$ and 
$\Lambda \otimes_A T \otimes_{A\otimes A} (\Lambda \otimes \Lambda) \cong T$.
The stratified cylinders in Figures~\ref{fig:Pe_ito_Zdef} and~\ref{fig:Pf_ito_Zdef} defining the maps $P_e$ and $P_f$ become the cylinders with ribbon graphs sketched in Figure~\ref{fig:Pc_ito_ZRT}.
In Figure~\ref{fig:Pf_ito_ZRT}, the $\phi$-labelled points become scalar factors upon evaluation with $\zrt_\mcC$, and the $\psi_1^2$-insertion uses the notation~\eqref{eq:psi-i_notation}.
\end{rem}

\begin{figure}[tb]
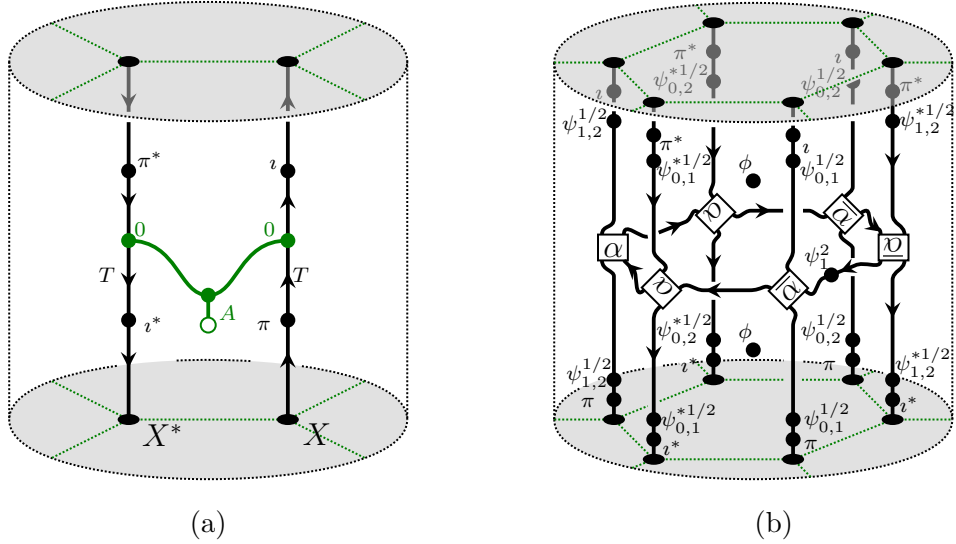

	\captionsetup{format=plain, indention=0.5cm}
	\centerline{
		\begin{subfigure}[b]{0.5\textwidth}
			\centering
			
	\begin{tikzpicture}[baseline={([yshift=-.5ex]current bounding box.center)}]
	\node at (0,0) {\def\svgscale{1.0} \import{figures/}{31_Pe_ito_ZRT.pdf_tex} };
	\end{tikzpicture}

			\caption{}
			\label{fig:Pe_ito_ZRT}
		\end{subfigure}
		\begin{subfigure}[b]{0.5\textwidth}
			\centering
			
	\begin{tikzpicture}[baseline={([yshift=-.5ex]current bounding box.center)}]
	\node at (0,0) {\def\svgscale{1.0} \import{figures/}{31_Pf_ito_ZRT.pdf_tex} };
	\end{tikzpicture}

			\caption{}
			\label{fig:Pf_ito_ZRT}
		\end{subfigure}
	}
	\caption{
		Examples of ribbon graphs replacing the stratifications for edge and face projectors in Figures~\ref{fig:Pe_ito_Zdef} and~\ref{fig:Pf_ito_Zdef} under assumptions $\Lambda=A$, $\gamma=\psi$.
	}
	\label{fig:Pc_ito_ZRT}
\end{figure}

\vspace{-1em}
\subsubsection*{Ground state}
Let $V_0 \subseteq V$ be the subspace of ground states of an internal Levin--Wen model $(\mcC,\opA,\Lambda,\gamma,X,\pi,\iota,\Sigma,\Gamma)$, i.e.\ the 0-eigenspace of the Hamiltonian~\eqref{eq:hamiltonian}.
At the heart of the construction of the internal Levin--Wen models is the following relation to the internal state-sum TFT $\zorbA_\mcC$ from  Section~\ref{subsec:RT_internal_state-sum}.\goodbreak

\begin{thm}
\label{thm:V0_equals_Zorb}
One has an isomorphism of vector spaces
\begin{equation}
V_0 \cong \zorbA_\mcC(\Sigma) \, .
\end{equation}
\end{thm}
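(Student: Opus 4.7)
By Proposition~\ref{prp:Hc_idempotents} the $P_c$ are pairwise commuting idempotents, and $P_c \circ P_{c'} = P_c$ whenever $c' \subseteq c$. Writing $P := \prod_{f \in \Gamma_2} P_f$ (in any order), it follows that $V_0 = \bigcap_c \operatorname{im} P_c = \operatorname{im} P$: each vertex and each edge projector is absorbed by the projector of any face having it in its boundary. Functoriality of $\zdef_\mcC$ then realises $P = \zdef_\mcC(\Sigma \times [0,1], S)$ for a single stratified cylinder obtained by overlaying the $S_f$ (together with the absorbed $S_v$, $S_e$), so that the full orbifold datum $\opA$ appears as an $\opA$-decorated admissible 2-skeleton sitting in the interior of this cylinder, interfacing with the boundary $X$-lines through the $\pi/\iota$-insertions of~\eqref{eq:pi-iota_as_defects}.

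The next step is to exhibit a factorisation
\begin{equation*}
P \;=\; \Phi' \circ \widetilde\Psi \circ \Phi,
\end{equation*}
obtained by isotoping $S$ so that the purely $\opA$-decorated portion sits between two horizontal slices $\Sigma \times \{t_-\}$ and $\Sigma \times \{t_+\}$, while between these slices and the boundaries the only structure consists of vertical $X$-lines capped by $\pi$- resp.\ $\iota$-insertions that open them into the configuration $\Lambda^* \otimes_A T \otimes_{A \otimes A} (\Lambda \otimes \Lambda)$ prescribed by $\Gamma(\opA)$. Cutting at $t_\pm$ produces linear maps $\Phi \colon V \to W$ and $\Phi' \colon W \to V$ with $W := \zdef_\mcC(\Sigma, \Gamma(\opA))$, together with $\widetilde\Psi \colon W \to W$ which is the $\zdef_\mcC$-value of a stratified cylinder carrying an $\opA$-decorated admissible 2-skeleton restricting to $\Gamma(\opA)$ on both ends. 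The proposition stated just below~\eqref{eq:Psi_maps}, which implements the BLT-invariance of Figure~\ref{fig:BLT_moves}, then identifies $\widetilde\Psi$ with the idempotent $\Psi_\Gamma^\Gamma$, whose image is $\zorbA_\mcC(\Sigma)$ by~\eqref{eq:zorbA_state-space_ito_idemp}.

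The crux of the argument is the companion identity
\begin{equation*}
\Phi \circ \Phi' \;=\; \Psi_\Gamma^\Gamma .
\end{equation*}
To establish it, I would compute $\zdef_\mcC$ of the cylinder built by stacking the closing half-cylinder for $\Phi'$ onto the opening half-cylinder for $\Phi$, which inserts an $\iota \circ \pi$ ``bubble'' in the $\Gamma(\opA)$-configuration at every vertex; rewrite each such bubble via the second identity in~\eqref{eq:pi-iota_split} as the honest idempotent projecting onto the relative tensor product over the $A$-algebras; and finally absorb the resulting configuration into a standard $\opA$-decorated 2-skeleton by the BLT moves of Figure~\ref{fig:BLT_moves} guaranteed by conditions~\eqrefO{1}--\eqrefO{8}. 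Once both $P = \Phi' \Psi_\Gamma^\Gamma \Phi$ and $\Psi_\Gamma^\Gamma = \Phi \Phi'$ are in hand, the usual split-idempotent calculation produces mutually inverse linear isomorphisms $\Phi|_{\operatorname{im} P} \colon V_0 \to \operatorname{im} \Psi_\Gamma^\Gamma$ and $\Phi'|_{\operatorname{im} \Psi_\Gamma^\Gamma} \colon \operatorname{im} \Psi_\Gamma^\Gamma \to V_0$, and combining with $\operatorname{im} \Psi_\Gamma^\Gamma \cong \zorbA_\mcC(\Sigma)$ finishes the proof.

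The main obstacle is the verification that $\Phi \circ \Phi' = \Psi_\Gamma^\Gamma$: the $\iota \pi$ bubble is only an idempotent onto the relative tensor product, so showing that it assembles into a genuine admissible $\opA$-decorated skeleton requires careful bookkeeping of Euler-completion $\psi$- and $\phi$-insertions, of the three distinct $A$-actions on $T$ and the $\Lambda$-actions, and of the exact sequence of BLT moves needed to re-expose the standard foam structure. Everything else is routine TFT manipulation.
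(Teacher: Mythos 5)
Your reduction of $V_0$ to $\im\prod_{f\in\Gamma_2}P_f$ and the factorisation $P=\Phi'\circ\Psi_\Gamma^\Gamma\circ\Phi$ through $W=\zdef_\mcC(\Sigma,\Gamma(\opA))$ reproduce the paper's argument exactly (your $\Phi$, $\Phi'$ are its maps $m$, $c$). The gap is the identity you single out as the crux: $\Phi\circ\Phi'=\Psi_\Gamma^\Gamma$ is false, and the strategy you sketch for proving it cannot work. The stratified cylinder computing $\Phi\circ\Phi'$ contains only the vertical strata pinched through the $\imath\circ\pi$-bubbles; it carries no horizontal $A$-surface and no $\a$-, $\abar$-insertions, and its complementary 3-stratum retracts onto $\Sigma$ minus finitely many points, so it is not an admissible 2-skeleton and no sequence of BLT moves will convert it into one (those moves relate admissible skeleta to each other). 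What the bubbles actually do is insert, at each vertex, the idempotent projecting onto $\Lambda^{*}\otimes_A T\otimes_{A\otimes A}(\Lambda\otimes\Lambda)$, and this idempotent acts as the identity on $W$ because the defect-TFT state space of the $\Gamma(\opA)$-decorated surface already splits it; together with \eqref{eq:tr-gamma_identity}, which cancels the $\gamma^2$-bubbles against the $\psi^{-1}$-insertions as in Figure~\ref{fig:Pe_idemp_calc}, one finds $\Phi\circ\Phi'=\id_W$, not $\Psi_\Gamma^\Gamma$. (If both identities held one would conclude $\Psi_\Gamma^\Gamma=\id_W$, i.e.\ the face constraints would be vacuous.)

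Fortunately the correct identity is all you need, and it shortens the endgame: from $\Phi\circ\Phi'=\id_W$ the map $\Phi$ is surjective and $\Phi'$ is injective, so $\im P=\Phi'\big(\Psi_\Gamma^\Gamma(\Phi(V))\big)=\Phi'(\im\Psi_\Gamma^\Gamma)\cong\im\Psi_\Gamma^\Gamma\cong\zorbA_\mcC(\Sigma)$ by \eqref{eq:zorbA_state-space_ito_idemp}. Replace your ``companion identity'' and the attendant split-idempotent bookkeeping by this one-line argument; the remainder of your proposal then coincides with the paper's proof.
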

\begin{proof}
It follows from Proposition~\ref{prp:Hc_idempotents} that the maps $P_c$ for each component $c\in\Gamma_0\sqcup\Gamma_1\sqcup\Gamma_2$ are simultaneously diagonalisable and can only have $0$ and $1$ as eigenvalues.
The form~\eqref{eq:hamiltonian} of the Hamiltonian then in turn implies that the subspace $V_0\subseteq V$ is the common eigenspace of all $P_c$'s corresponding to the eigenvalue $1$, or equivalently
\begin{equation}
\label{eq:V0}
V_0 = \bigcap_{c\in\Gamma_0\sqcup\Gamma_1\sqcup\Gamma_2} \hspace{-0.5cm} \im P_c \;= \;\im \hspace{-0.5cm} \prod_{c\in\Gamma_0\sqcup\Gamma_1\sqcup\Gamma_2} \hspace{-0.5cm} P_c \; = \;\im \prod_{f\in\Gamma_2} P_f\, ,
\end{equation}
where in the second equality we used that $P_c$'s are commuting idempotents and in the third that the vertex and the edge projectors are absorbed by the face projectors.
In terms of the defect TFT $\zdef_\mcC$, the map on the right-hand side of~\eqref{eq:V0} is obtained by evaluating the cylinder $\Sigma\times[0,1]$ filled with the networks of defects as in Figure~\ref{fig:Pf_ito_Zdef} for each face $f\in\Gamma_2$, which upon evaluating can be joined into a connected network of defects by a similar computation as in Figure~\ref{fig:Pe_idemp_calc}, which then looks exactly like an admissible $\opA$-decorated skeleton of $\Sigma\times[0,1]$ (cf.\ Figure~\ref{fig:foamification}), except near the boundary components $\{0\}\times\Sigma$, $\{1\}\times\Sigma$, where it has $X$-labelled lines, $\Lambda$-labelled arcs with $\gamma$-insertions and $\pi^{\pm}$-, $\imath^{\pm}$-labelled points.
Let us define the maps
\begin{equation}
m : \zdef_\mcC(\Sigma,\Gamma_0(X)) \rightleftarrows \zdef_\mcC(\Sigma,\Gamma(\opA)) : c
\end{equation}
by collecting the latter near-boundary strata and adding a $\psi^{-1}$-insertion to every $A$-labelled surface adjacent to a $\Lambda$-labelled line.
Again using a similar calculation as in Figure~\ref{fig:Pe_idemp_calc} one then gets the identity $m\circ c = \id$.
Furthermore, by functoriality of $\zdef_\mcC$, one obtains the decomposition
\begin{equation}
\prod_{f\in\Gamma_2} P_f = c ~\circ~ \Psi_{\Gamma}^{\Gamma} ~\circ~ m \, ,
\end{equation}
where $\Psi_{\Gamma}^{\Gamma}$ is as in~\eqref{eq:Psi_maps}.
Altogether, one has
\begin{equation}
V_0 = \im \prod_{f\in\Gamma_2} P_f \cong \im \Psi_{\Gamma}^{\Gamma} \, ,
\end{equation}
where in the second step we used that $c$ is injective and that $m$ is surjective. The statement of the theorem now follows from~\eqref{eq:zorbA_state-space_ito_idemp}.
\end{proof}

Recall from Section~\ref{subsec:RT_internal_state-sum} that from the MFC $\mcC$ and the orbifold datum $\opA$ one obtains a new MFC $\mcC_\opA$. If we combine the above theorem with the isomorphism of TFTs in~\eqref{eq:Zorb_equiv_ZRT}, we get the isomorphism of vector spaces
\begin{equation}
V_0  \cong \zrt_{\mcC_\opA}(\Sigma) \, .
\end{equation}
This lends strong support to the claim that the Hamiltonian $H$ and its ground states $V_0$ describe the topological phase $\mcC_\opA$.
Indeed, we expect $\mcC_\opA$ to describe the anyonic excitations of the model. This is true for the examples discussed in Section~\ref{section:ExampleModels}.

By the results of~\cite{Mu}, any other MFC which is Witt-equivalent to $\mcC$ can be obtained as $\mcC_\opA$ for an orbifold datum $\opA$ in $\mcC$. In this sense, the internal Levin-Wen models are universal for a given Witt class.

\subsection{Examples of internal Levin--Wen data}
\label{subsec:ExampleInternalData}

Let us now look at some of the examples of the input datum $(\mcC,\opA,\Lambda,\gamma,X,\pi,\iota,\Sigma,\Gamma)$ defining an internal Levin--Wen model.
We keep the surface $\Sigma$ and the graph $\Gamma\subseteq\Sigma$ arbitrary and focus on the algebraic entries needed to define the Hamiltonian~\eqref{eq:hamiltonian}, most importantly the orbifold datum $\opA$ in the MFC $\mcC$.
The Hamiltonian itself will be computed in the more detailed treatment of these examples in Section~\ref{section:ExampleModels}.

\subsubsection{Condensable algebras}
\label{subsubsec:condensable_algebras}
As usual, a \textsl{commutative algebra} $A\in\mcC$ is defined using the braiding of $\mcC$ via the condition
\begin{equation}
\label{eq:commutative_alg}

	\begin{tikzpicture}[baseline={([yshift=-.5ex]current bounding box.center)}]
	\node at (0,0) {\def\svgscale{1.0} 
\begingroup%
  \makeatletter%
  \providecommand\color[2][]{%
    \errmessage{(Inkscape) Color is used for the text in Inkscape, but the package 'color.sty' is not loaded}%
    \renewcommand\color[2][]{}%
  }%
  \providecommand\transparent[1]{%
    \errmessage{(Inkscape) Transparency is used (non-zero) for the text in Inkscape, but the package 'transparent.sty' is not loaded}%
    \renewcommand\transparent[1]{}%
  }%
  \providecommand\rotatebox[2]{#2}%
  \newcommand*\fsize{\dimexpr\f@size pt\relax}%
  \newcommand*\lineheight[1]{\fontsize{\fsize}{#1\fsize}\selectfont}%
  \ifx\svgwidth\undefined%
    \setlength{\unitlength}{22.8984515bp}%
    \ifx\svgscale\undefined%
      \relax%
    \else%
      \setlength{\unitlength}{\unitlength * \real{\svgscale}}%
    \fi%
  \else%
    \setlength{\unitlength}{\svgwidth}%
  \fi%
  \global\let\svgwidth\undefined%
  \global\let\svgscale\undefined%
  \makeatother%
  \begin{picture}(1,1.97389847)%
    \lineheight{1}%
    \setlength\tabcolsep{0pt}%
    \put(0,0){\includegraphics[width=\unitlength,page=1]{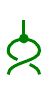}}%
    \put(-0.00665302,0){\color[rgb]{0,0.50196078,0}\makebox(0,0)[lt]{\smash{\begin{tabular}[t]{l}$A$\end{tabular}}}}%
    \put(0.64841374,0){\color[rgb]{0,0.50196078,0}\makebox(0,0)[lt]{\smash{\begin{tabular}[t]{l}$A$\end{tabular}}}}%
    \put(0.32087942,1.59288541){\color[rgb]{0,0.50196078,0}\makebox(0,0)[lt]{\smash{\begin{tabular}[t]{l}$A$\end{tabular}}}}%
  \end{picture}%
\endgroup%
 };
	\end{tikzpicture}
 =

	\begin{tikzpicture}[baseline={([yshift=-.5ex]current bounding box.center)}]
	\node at (0,0) {\def\svgscale{1.0} 
\begingroup%
  \makeatletter%
  \providecommand\color[2][]{%
    \errmessage{(Inkscape) Color is used for the text in Inkscape, but the package 'color.sty' is not loaded}%
    \renewcommand\color[2][]{}%
  }%
  \providecommand\transparent[1]{%
    \errmessage{(Inkscape) Transparency is used (non-zero) for the text in Inkscape, but the package 'transparent.sty' is not loaded}%
    \renewcommand\transparent[1]{}%
  }%
  \providecommand\rotatebox[2]{#2}%
  \newcommand*\fsize{\dimexpr\f@size pt\relax}%
  \newcommand*\lineheight[1]{\fontsize{\fsize}{#1\fsize}\selectfont}%
  \ifx\svgwidth\undefined%
    \setlength{\unitlength}{22.8984515bp}%
    \ifx\svgscale\undefined%
      \relax%
    \else%
      \setlength{\unitlength}{\unitlength * \real{\svgscale}}%
    \fi%
  \else%
    \setlength{\unitlength}{\svgwidth}%
  \fi%
  \global\let\svgwidth\undefined%
  \global\let\svgscale\undefined%
  \makeatother%
  \begin{picture}(1,1.97389847)%
    \lineheight{1}%
    \setlength\tabcolsep{0pt}%
    \put(0,0){\includegraphics[width=\unitlength,page=1]{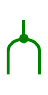}}%
    \put(-0.00665302,0){\color[rgb]{0,0.50196078,0}\makebox(0,0)[lt]{\smash{\begin{tabular}[t]{l}$A$\end{tabular}}}}%
    \put(0.64841374,0){\color[rgb]{0,0.50196078,0}\makebox(0,0)[lt]{\smash{\begin{tabular}[t]{l}$A$\end{tabular}}}}%
    \put(0.32088036,1.59288541){\color[rgb]{0,0.50196078,0}\makebox(0,0)[lt]{\smash{\begin{tabular}[t]{l}$A$\end{tabular}}}}%
  \end{picture}%
\endgroup%
 };
	\end{tikzpicture}
 =

	\begin{tikzpicture}[baseline={([yshift=-.5ex]current bounding box.center)}]
	\node at (0,0) {\def\svgscale{1.0} 
\begingroup%
  \makeatletter%
  \providecommand\color[2][]{%
    \errmessage{(Inkscape) Color is used for the text in Inkscape, but the package 'color.sty' is not loaded}%
    \renewcommand\color[2][]{}%
  }%
  \providecommand\transparent[1]{%
    \errmessage{(Inkscape) Transparency is used (non-zero) for the text in Inkscape, but the package 'transparent.sty' is not loaded}%
    \renewcommand\transparent[1]{}%
  }%
  \providecommand\rotatebox[2]{#2}%
  \newcommand*\fsize{\dimexpr\f@size pt\relax}%
  \newcommand*\lineheight[1]{\fontsize{\fsize}{#1\fsize}\selectfont}%
  \ifx\svgwidth\undefined%
    \setlength{\unitlength}{22.8984515bp}%
    \ifx\svgscale\undefined%
      \relax%
    \else%
      \setlength{\unitlength}{\unitlength * \real{\svgscale}}%
    \fi%
  \else%
    \setlength{\unitlength}{\svgwidth}%
  \fi%
  \global\let\svgwidth\undefined%
  \global\let\svgscale\undefined%
  \makeatother%
  \begin{picture}(1,1.97389847)%
    \lineheight{1}%
    \setlength\tabcolsep{0pt}%
    \put(0,0){\includegraphics[width=\unitlength,page=1]{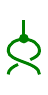}}%
    \put(-0.00665302,0){\color[rgb]{0,0.50196078,0}\makebox(0,0)[lt]{\smash{\begin{tabular}[t]{l}$A$\end{tabular}}}}%
    \put(0.64841374,0){\color[rgb]{0,0.50196078,0}\makebox(0,0)[lt]{\smash{\begin{tabular}[t]{l}$A$\end{tabular}}}}%
    \put(0.32087848,1.59288541){\color[rgb]{0,0.50196078,0}\makebox(0,0)[lt]{\smash{\begin{tabular}[t]{l}$A$\end{tabular}}}}%
  \end{picture}%
\endgroup%
 };
	\end{tikzpicture}
 \, ,
\end{equation}
where either equality implies the other.
A commutative Frobenius algebra $A$ is automatically cocommutative (i.e.\ analogous identities as in~\eqref{eq:commutative_alg} hold for the comultiplication).
A commutative symmetric $\Delta$-separable Frobenius 
algebra $A$ in a MFC $\mcC$ is called \textsl{condensable} if it is in addition haploid, i.e.\ one has $\dim\mcC(\opid,A)=1$.
A condensable algebra $A\in\mcC$ yields the following simple orbifold datum~\cite[Prop.\,3.4]{CRS3}:
\begin{equation}
\label{eq:condensation_orb_datum}
\opA \,=\,
    \Big(
A, ~ T=A, ~ \a=\abar = 
	\begin{tikzpicture}[baseline={([yshift=-.5ex]current bounding box.center)}]
	\node at (0,0) {\def\svgscale{1.0} 
\begingroup%
  \makeatletter%
  \providecommand\color[2][]{%
    \errmessage{(Inkscape) Color is used for the text in Inkscape, but the package 'color.sty' is not loaded}%
    \renewcommand\color[2][]{}%
  }%
  \providecommand\transparent[1]{%
    \errmessage{(Inkscape) Transparency is used (non-zero) for the text in Inkscape, but the package 'transparent.sty' is not loaded}%
    \renewcommand\transparent[1]{}%
  }%
  \providecommand\rotatebox[2]{#2}%
  \newcommand*\fsize{\dimexpr\f@size pt\relax}%
  \newcommand*\lineheight[1]{\fontsize{\fsize}{#1\fsize}\selectfont}%
  \ifx\svgwidth\undefined%
    \setlength{\unitlength}{22.8984515bp}%
    \ifx\svgscale\undefined%
      \relax%
    \else%
      \setlength{\unitlength}{\unitlength * \real{\svgscale}}%
    \fi%
  \else%
    \setlength{\unitlength}{\svgwidth}%
  \fi%
  \global\let\svgwidth\undefined%
  \global\let\svgscale\undefined%
  \makeatother%
  \begin{picture}(1,1.97389847)%
    \lineheight{1}%
    \setlength\tabcolsep{0pt}%
    \put(0,0){\includegraphics[width=\unitlength,page=1]{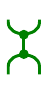}}%
    \put(-0.00665302,0){\color[rgb]{0,0.50196078,0}\makebox(0,0)[lt]{\smash{\begin{tabular}[t]{l}$\scriptstyle{A}$\end{tabular}}}}%
    \put(0.64841374,0){\color[rgb]{0,0.50196078,0}\makebox(0,0)[lt]{\smash{\begin{tabular}[t]{l}$\scriptstyle{A}$\end{tabular}}}}%
    \put(0.64841374,1.59288541){\color[rgb]{0,0.50196078,0}\makebox(0,0)[lt]{\smash{\begin{tabular}[t]{l}$\scriptstyle{A}$\end{tabular}}}}%
    \put(-0.00665302,1.59288541){\color[rgb]{0,0.50196078,0}\makebox(0,0)[lt]{\smash{\begin{tabular}[t]{l}$\scriptstyle{A}$\end{tabular}}}}%
  \end{picture}%
\endgroup%
 };
	\end{tikzpicture}
, ~\psi=\id_A, ~\phi = 1
    \Big) 
\, ,
\end{equation}
where the actions~\eqref{eq:T_actions} are by multiplication with the multimodule identity~\eqref{eq:T_multimodule} implied by the commutativity of $A$.
We expressed $\a$, $\abar$ as balanced maps and the property~\eqref{eq:a_abar_balanced_maps} again holds by commutativity.

\medskip

To obtain from~\eqref{eq:condensation_orb_datum} an internal Levin--Wen datum, one chooses the remaining entries canonically.
The resulting model is summarised in Table~\ref{tab:LWdata-condens}.

\begin{table}[tb]
\begin{center}
{\def\arraystretch{1.3}
\begin{tabular}{|c|c|}
\hline
MFC $\mathcal{C}$ & 
arbitrary
\\
\hline
Frob.\,alg.\ $A$ & a condensable algebra in $\mcC$
\\
\hline
$T,\a,\abar,\psi,\phi$ & as in~\eqref{eq:condensation_orb_datum}
\\
\hline
$\Lambda,\gamma$ & $A$ as left module over itself, $\id_A$\\
\hline
$X,\pi,\iota$ & $A,\id_A,\id_A$
\\
\hline
\end{tabular}
}
\end{center}
\caption{Internal Levin-Wen model data from a condensable algebra $A$.}
\label{tab:LWdata-condens}
\end{table}

It was shown in~\cite{MR1} that in this case the MFC $\mcC_\opA$ is equivalent to the category $\mcC_A^\loc$ of \textsl{local modules} of the commutative algebra $A\in\mcC$.
See~e.g.\ \cite{FFRS} for more on local modules and~\cite{Kong:2013}
for how $\mcC_A^\loc$ arises from condensing $A$ in the context of topological phases of matter.

\subsubsection{Internal Levin--Wen data from a spherical fusion category}
\label{section:DataSphericalFusion}
In this section we will construct internal Levin--Wen data in $\mathcal{C}=\mathrm{Vect}_{\Bbbk}$ 
from a spherical fusion category $\mathcal{S}$. The orbifold datum $\opA_\mathcal{S}$ associated to $\mathcal{S}$ is the one from~\cite[Prop.\,4.2]{CRS3}, \cite[Sec.\,4.2]{MR1} where it was also shown that the generalised orbifold TFT~\eqref{eq:zorb} is isomorphic to the Turaev--Viro--Barrett--Westbury-TFT from $\mathcal{S}$ - or alternatively the MFC $\mcC_{\opA_\mathcal{S}}$ associated to $\opA_{\mathcal{S}}$ is equivalent to the Drinfeld centre $\mcZ(\mathcal{S})$. We will review this orbifold datum here and extend it to an internal Levin--Wen datum.

Let $I=\mathrm{Irr}_{\mathcal{S}}$ denote a set of representatives of the irreducibles of $\mathcal{S}$ as in \eqref{eq:Irr_C-def}.
We define $T$ and $T^{*}$ as direct sums of $\mathrm{Hom}$-spaces
    \begin{align}
    T:=     \oplus_{i,j,k\in I} \mathcal{S}\left( k, i j \right), \qquad
    T^{*}:= \oplus_{i,j,k\in I}\mathcal{S}\left( i  j,k \right).
    \end{align}
    The trace pairing establishes $T$ and $T^{*}$ as dual to each other:
\begin{align}
	&\ev\colon T^{*} \oo T  \to \opk, \quad
	    \widehat{\gamma} \oo \mu \mapsto
        \mathrm{tr}_{\mathcal{S}}\left( \widehat{\gamma} \circ \mu \right) =\scalebox{0.7}{
	\begin{tikzpicture}[baseline={([yshift=-.5ex]current bounding box.center)}]
	\node at (0,0) {\def\svgscale{1.0} 
\begingroup%
  \makeatletter%
  \providecommand\color[2][]{%
    \errmessage{(Inkscape) Color is used for the text in Inkscape, but the package 'color.sty' is not loaded}%
    \renewcommand\color[2][]{}%
  }%
  \providecommand\transparent[1]{%
    \errmessage{(Inkscape) Transparency is used (non-zero) for the text in Inkscape, but the package 'transparent.sty' is not loaded}%
    \renewcommand\transparent[1]{}%
  }%
  \providecommand\rotatebox[2]{#2}%
  \newcommand*\fsize{\dimexpr\f@size pt\relax}%
  \newcommand*\lineheight[1]{\fontsize{\fsize}{#1\fsize}\selectfont}%
  \ifx\svgwidth\undefined%
    \setlength{\unitlength}{65.97656219bp}%
    \ifx\svgscale\undefined%
      \relax%
    \else%
      \setlength{\unitlength}{\unitlength * \real{\svgscale}}%
    \fi%
  \else%
    \setlength{\unitlength}{\svgwidth}%
  \fi%
  \global\let\svgwidth\undefined%
  \global\let\svgscale\undefined%
  \makeatother%
  \begin{picture}(1,1.2022049)%
    \lineheight{1}%
    \setlength\tabcolsep{0pt}%
    \put(0,0){\includegraphics[width=\unitlength,page=1]{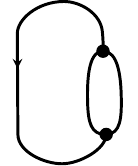}}%
    \put(0.55479585,0.44501542){\color[rgb]{0,0,0}\makebox(0,0)[lt]{\smash{\begin{tabular}[t]{l}$i$\end{tabular}}}}%
    \put(0.9526643,0.44501542){\color[rgb]{0,0,0}\makebox(0,0)[lt]{\smash{\begin{tabular}[t]{l}$j$\end{tabular}}}}%
    \put(-0.01714032,0.44501542){\color[rgb]{0,0,0}\makebox(0,0)[lt]{\smash{\begin{tabular}[t]{l}$k$\end{tabular}}}}%
    \put(0.80888102,0.86166738){\color[rgb]{0,0,0}\makebox(0,0)[lt]{\smash{\begin{tabular}[t]{l}$\widehat{\gamma}$\end{tabular}}}}%
    \put(0.80888102,0.06410991){\color[rgb]{0,0,0}\makebox(0,0)[lt]{\smash{\begin{tabular}[t]{l}$\mu$\end{tabular}}}}%
  \end{picture}%
\endgroup%
 };
	\end{tikzpicture}
}
    \nonumber \\
  &\coev\colon \opk \to T \oo T^{*}, \quad
		1_{\opk} \mapsto \sum_{\lambda}  \lambda \oo \widehat{\lambda} \ .
	\label{eq:TracePairing}
\end{align}
Here the first expression is only defined for compatible $\mu$ and $\widehat{\gamma}$, i.e.\
    $\mu\in \mathcal{S}\left( k, i \oo j \right)$ and
    $\widehat{\gamma}\in \mathcal{S}\left( i \oo j, k \right)$,
and the second expression is understood as a sum over dual bases.
If $\mu$ and $\widehat{\gamma}$ are non-compatible, i.e.\ if the
    codomain of $\mu$
does not coincide with the domain of $\widehat{\gamma}$ or vice versa, we define
    $\ev( \widehat{\gamma} \oo \mu )=0$.

\medskip

For elements
$\lambda\colon c \to a \oo b$,
$\mu\colon b \to d \oo e$ in $T$ and 
$\widehat{\gamma} \colon f \oo e \to c$,
$\widehat{\delta}\colon a \oo d \to f$ in $T^{*}$,
we define the $F$-symbol and $\overline{F}$ for similarly compatible morphisms
$\rho,\tau\in T$ and $\widehat{\theta},\widehat{\eta}\in T^*$
\begin{align}
	F^{\lambda\widehat{\gamma}}_{\mu\widehat{\delta}}=
    \scalebox{0.7}{
	\begin{tikzpicture}[baseline={([yshift=-.5ex]current bounding box.center)}]
	\node at (0,0) {\def\svgscale{1.0} 
\begingroup%
  \makeatletter%
  \providecommand\color[2][]{%
    \errmessage{(Inkscape) Color is used for the text in Inkscape, but the package 'color.sty' is not loaded}%
    \renewcommand\color[2][]{}%
  }%
  \providecommand\transparent[1]{%
    \errmessage{(Inkscape) Transparency is used (non-zero) for the text in Inkscape, but the package 'transparent.sty' is not loaded}%
    \renewcommand\transparent[1]{}%
  }%
  \providecommand\rotatebox[2]{#2}%
  \newcommand*\fsize{\dimexpr\f@size pt\relax}%
  \newcommand*\lineheight[1]{\fontsize{\fsize}{#1\fsize}\selectfont}%
  \ifx\svgwidth\undefined%
    \setlength{\unitlength}{61.93807401bp}%
    \ifx\svgscale\undefined%
      \relax%
    \else%
      \setlength{\unitlength}{\unitlength * \real{\svgscale}}%
    \fi%
  \else%
    \setlength{\unitlength}{\svgwidth}%
  \fi%
  \global\let\svgwidth\undefined%
  \global\let\svgscale\undefined%
  \makeatother%
  \begin{picture}(1,1.277556)%
    \lineheight{1}%
    \setlength\tabcolsep{0pt}%
    \put(0,0){\includegraphics[width=\unitlength,page=1]{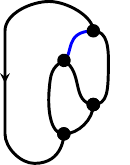}}%
    \put(0.77343173,1.08751537){\color[rgb]{0,0,0}\makebox(0,0)[lt]{\smash{\begin{tabular}[t]{l}$\widehat{\gamma}$\end{tabular}}}}%
    \put(0.32777194,0.79288195){\color[rgb]{0,0,0}\makebox(0,0)[lt]{\smash{\begin{tabular}[t]{l}$\widehat{\delta}$\end{tabular}}}}%
    \put(0.76614744,0.32981256){\color[rgb]{0,0,0}\makebox(0,0)[lt]{\smash{\begin{tabular}[t]{l}$\mu$\end{tabular}}}}%
    \put(0.55121504,0.06833659){\color[rgb]{0,0,0}\makebox(0,0)[lt]{\smash{\begin{tabular}[t]{l}$\lambda$\end{tabular}}}}%
    \put(0.45896793,0.96391991){\color[rgb]{0,0,1}\makebox(0,0)[lt]{\smash{\begin{tabular}[t]{l}$f$\end{tabular}}}}%
  \end{picture}%
\endgroup%
 };
	\end{tikzpicture}
}
 \qquad
    \overline{F}^{\rho\widehat{\theta}}_{\tau\widehat{\eta}}=
    \scalebox{0.7}{
	\begin{tikzpicture}[baseline={([yshift=-.5ex]current bounding box.center)}]
	\node at (0,0) {\def\svgscale{1.0} 
\begingroup%
  \makeatletter%
  \providecommand\color[2][]{%
    \errmessage{(Inkscape) Color is used for the text in Inkscape, but the package 'color.sty' is not loaded}%
    \renewcommand\color[2][]{}%
  }%
  \providecommand\transparent[1]{%
    \errmessage{(Inkscape) Transparency is used (non-zero) for the text in Inkscape, but the package 'transparent.sty' is not loaded}%
    \renewcommand\transparent[1]{}%
  }%
  \providecommand\rotatebox[2]{#2}%
  \newcommand*\fsize{\dimexpr\f@size pt\relax}%
  \newcommand*\lineheight[1]{\fontsize{\fsize}{#1\fsize}\selectfont}%
  \ifx\svgwidth\undefined%
    \setlength{\unitlength}{60.94536569bp}%
    \ifx\svgscale\undefined%
      \relax%
    \else%
      \setlength{\unitlength}{\unitlength * \real{\svgscale}}%
    \fi%
  \else%
    \setlength{\unitlength}{\svgwidth}%
  \fi%
  \global\let\svgwidth\undefined%
  \global\let\svgscale\undefined%
  \makeatother%
  \begin{picture}(1,1.29836546)%
    \lineheight{1}%
    \setlength\tabcolsep{0pt}%
    \put(0,0){\includegraphics[width=\unitlength,page=1]{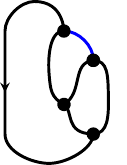}}%
    \put(0.36484163,0.33460837){\color[rgb]{0,0,0}\makebox(0,0)[lt]{\smash{\begin{tabular}[t]{l}$\tau$\end{tabular}}}}%
    \put(0.7806052,0.10285936){\color[rgb]{0,0,0}\makebox(0,0)[lt]{\smash{\begin{tabular}[t]{l}$\rho$\end{tabular}}}}%
    \put(0.53448324,0.84122559){\color[rgb]{0,0,1}\makebox(0,0)[lt]{\smash{\begin{tabular}[t]{l}$c$\end{tabular}}}}%
    \put(0.54630836,1.09393335){\color[rgb]{0,0,0}\makebox(0,0)[lt]{\smash{\begin{tabular}[t]{l}$\widehat{\theta}$\end{tabular}}}}%
    \put(0.78781844,0.84579228){\color[rgb]{0,0,0}\makebox(0,0)[lt]{\smash{\begin{tabular}[t]{l}$\widehat{\eta}$\end{tabular}}}}%
  \end{picture}%
\endgroup%
 };
	\end{tikzpicture}
} \, .
	\label{eq:FSymbol}
\end{align}
    Here $f$ is the codomain of $\widehat{\delta}$ and $c$ is the domain of $\widehat{\eta}$. We extend $F$ and $\overline{F}$ to non-compatible morphisms, $\lambda,\mu,\dots$ by setting it $0$. 
    For compatible basis elements $\lambda,\mu\in T$ and dual basis elements $\widehat{\lambda''},\widehat{\mu''}\in T^{*}$ we have the following invertibility condition:
    \begin{align}
        \sum_{\lambda',\mu'} d_f d_c \overline{F}^{\mu'\widehat{\lambda''}}_{\lambda'\widehat{\mu''}}F^{\lambda\widehat{\mu'}}_{\mu\widehat{\lambda'}}
        =
        \begin{cases}
            1 \quad&\text{if }\lambda''=\lambda, \, \mu''=\mu
            \\
            0 \quad&\text{otherwise}
        \end{cases}
    \end{align}

\medskip

We can now give the orbifold datum $\opA_\mcS$ in $\mathcal{C}= \mathrm{Vect}_{\Bbbk}$ (cf.\ \cite[Prop.\,4.3]{CRS3}, \cite[Sec.\,4.2]{MR1}):\goodbreak
\begin{subequations}
\begin{align}
	&A := \bigoplus_{i\in I} \Bbbk \quad\text{(direct sum of trivial Frobenius algebras)}
	\\
	&T := \bigoplus_{i,j,k \in I} 
        \mathcal{S} (k,i  j)
	\\
	&\alpha\colon \lambda \oo \mu \mapsto \sum_{\lambda',\mu'} 
    F^{\lambda\widehat{\mu'}}_{\mu\widehat{\lambda'}} \cdot \mu' \oo \lambda'
	\label{eq:LWAlpha}
	\\
	&\overline{\alpha}\colon \lambda' \oo \mu' \mapsto 
        \sum_{\lambda'',\mu''}
    \overline{F}^{\lambda'\widehat{\mu''}}_{\mu'\widehat{\lambda''}} \cdot \mu'' \oo \lambda''
	\label{eq:LWAlphaBar}
	\\
	&\psi^{2}:= \mathrm{diag}(d_{i})_{i\in I}:=\oplus_{i\in I} \,d_{i}\cdot\id_{i}\quad \quad\text{($\psi$ is a choice of square root)}
	\\
	&\phi^{2}:= \frac{1}{\mathrm{Dim}\, \mathcal{S}}
 = \bigg(  \sum_{i\in I}d_{i}^{2} \bigg)^{-1} \quad \text{($\phi$ is a choice of square root)}
\end{align}
The $A$-$A\otimes A$ bimodule structure on $T$ is taken to be such that, denoting $1_i\in A$, $i\in I$ the unit of the $i$-th copy of $\opk$ in $A$, the action by $1_k$-$1_i \otimes 1_j$ projects onto $\mcS(k,ij) \subseteq A$.
\medskip

We extend this to an internal Levin--Wen datum by taking
\begin{align}
	&\Lambda = A \quad \text{as a left-regular module over itself, with} \quad
        \gamma = \psi
	\\
	&X= \bigoplus_{i,j,k \in I}x,
 \label{eq:XLW}
\end{align}
\end{subequations}
where $x = \Bbbk^{N}$ and $N$ is the maximum multiplicity of simples in all the $i \oo j$, i.e.
\begin{align}
    N = \max_{i,j,k\in I} \,\mathrm{dim} \,
        \mathcal{S}\left( k,i j \right) \, .
\end{align}
We then choose mono- and epimorphisms
\begin{align}\label{eq:x-to-S}
    \mathcal{S}(k, i  j) \overset{\iota_{ijk}}{\to} x, \qquad x \overset{\pi_{ijk}}{\to} \mathcal{S}(k, i  j)
\end{align}
for $i,j,k\in I=\mathrm{Irr}_{\mathcal{S}}$ satisfying $\pi_{ijk}\circ \iota_{ijk} = \mathrm{id}$.
Having $\Lambda= A$ we can identify $T = \Lambda \oo_{A} T \oo_{A \oo A} (\Lambda \oo \Lambda)$. In turn, this allows us to take the direct sums of the $\iota_{ijk}$ and $\pi_{ijk}$ for $\iota:T \to X,\pi: X \to T$.

An overview of the orbifold data from a spherical fusion category and the additional data for the internal Levin--Wen model is given in
Table~\ref{tab:LWdata-spherical}.
In Section~\ref{section:OriginalLevinWen} we will show how to obtain the original Levin--Wen model using this data.

\begin{table}[tb]
\begin{center}
{\def\arraystretch{1.3}
	\begin{tabular}{|c|c|}
		\hline
MFC $\mathcal{C}$ & 
  $\mathrm{Vect}_{\Bbbk}$
		\\
		\hline
		Frob.\,alg.\ $A$
  & $\bigoplus_{i\in I} \Bbbk $
		\\
		\hline
		$T$ & $\bigoplus_{i,j,k \in I}
            \mathcal{S} (k, i j)	$
		\\
		\hline
		$\alpha$, $\overline{\alpha}$ & given by $F$-symbols~\eqref{eq:LWAlpha},~\eqref{eq:LWAlphaBar}
		\\
		\hline
            $\psi^{2}$ & $\mathrm{diag}(d_{i})_{i\in I}$
		\\
		\hline
		$\phi^{2}$ & $\left(\mathrm{Dim}\, \mathcal{S}\right)^{-1} =
  \left( \sum_{i\in I}d_{i}^{2} \right)^{-1}$
		\\
		\hline
		$\Lambda,\gamma$ & $\Lambda = A$ with left regular action,
    $\gamma = \psi$
\\
		\hline
		$X$ &$\bigoplus_{i,j,k \in I}x$
		\\
		\hline
		$\pi,\iota$ & sums of splittings
        $\mathrm{id}:\mathcal{S}\left( k,ij \right)\overset{\iota_{ijk}}{\to} x \overset{\pi_{ijk}}{\to} \mathcal{S}\left( k,ij \right)$
		\\
		\hline
	\end{tabular}
}
\end{center}
\caption{Internal Levin-Wen model data from a spherical fusion category $\mcS$.}
\label{tab:LWdata-spherical}
\end{table}

\subsubsection{Internal Levin--Wen data from Hopf algebras}
\label{section:HopfLevinWenData}
In this section we present an internal Levin--Wen datum constructed from a semisimple, finite-dimensional Hopf algebra $K$ in $\mathrm{Vect}_{\Bbbk}$, where $\Bbbk$ is an algebraically closed field of characteristic $0$. In Section~\ref{section:KitaevModel} we
illustrate how the internal Levin--Wen model for this datum recovers the Kitaev model based on $K$. 

\subsubsection*{Prerequisites on Hopf algebras}
We start by reviewing some prerequisites on Hopf algebras. For more details and proofs, consider e.g.~\cite{Mo,Ra}.
For the remainder of this section we denote $K=(K,\mu,\eta,\Delta_{\mathrm{Ho}},\varepsilon_{\mathrm{Ho}},S)$ a semisimple, finite-dimensional Hopf algebra over $\mathbb{C}$ with multiplication $\mu$, unit $\eta$, comultiplication $\Delta_{\mathrm{Ho}}$, counit $\varepsilon_{\mathrm{Ho}}$ and antipode $S$. 
    Since below we will also make use of the canonical Frobenius algebra structure on $K$, in this section we write $\Delta_{\mathrm{Ho}}$, $\varepsilon_{\mathrm{Ho}}$ for the coalgebra structure leading to a Hopf algebra, and 
    $\Delta_{\mathrm{Fr}}$, $\varepsilon_{\mathrm{Fr}}$ for that of the Frobenius algebra (see below for explicit expressions).

We use Sweedler notation $\Delta_{\mathrm{Ho}}(h) = h_{(1)} \oo h_{(2)}$ for the comultiplication of $K$ and denote $1_{K}$ the unit element of $K$.
Since the Hopf algebra $K$ is semisimple, it has a unique normalised Haar integral $\lambda\in K$ and cointegral $\smallint\colon K \to \Bbbk$.
They are defined by the properties
\begin{align}
&h\lambda=\varepsilon_{\mathrm{Ho}}(h)\lambda = \lambda h \ , \quad
\varepsilon_{\mathrm{Ho}}(\lambda) = 1 = \smallint(1_K)\ , 
 \nonumber\\ &
 \smallint(h_{(1)})\cdot h_{(2)} = \smallint(h) \cdot 1_{K} = \smallint(h_{(2)})\cdot h_{(1)}\ ,
 \label{eq:Haar_co-integral}
\end{align}
for any $h\in K$.
The Haar (co-)integrals fulfil some useful identities, in particular we have for $h,k\in K$:
\begin{subequations}
\begin{align}
	&\lambda_{(1)} \oo \lambda_{(2)} = \lambda_{(2)} \oo \lambda_{(1)}\ ,\quad &&\smallint(hk) = \smallint (kh) 
\label{eq:HaarCyclic}
\\
&h\lambda_{(1)} \otimes \lambda_{(2)} = \lambda_{(1)} \otimes S(h) \lambda_{(2)}\ , &&\lambda_{(1)}h \otimes \lambda_{(2)} = \lambda_{(1)}\otimes \lambda_{(2)}S(h)
 \label{eq:HaarTeleportation}
 \\
 & \smallint \circ S = \smallint \ ,
 \quad S(\lambda) = \lambda\ ,\quad &&\smallint \left(\lambda\right) = \frac{1}{|K|},\
 \label{eq:HaarDimAntipode}
\end{align}
\end{subequations}
where we write $|K|:=\dim_{\Bbbk}K$.
Associated to the Hopf algebra $K$ there is a $\Delta$-separable, symmetric Frobenius algebra structure $A=K_{\mathrm{Fr}}= \left( K, \mu,\eta, \Delta_{\mathrm{Fr}}, \varepsilon_{\mathrm{Fr}} \right)$ on the vector space $K$ with the same multiplication $\mu$ and unit $\eta$ as the Hopf algebra $K$. The comultiplication and the counit are given by
\begin{align}
	\Delta_{\mathrm{Fr}}\colon K&\to K \oo K, \quad
	&\varepsilon_{\mathrm{Fr}}\colon K&\to \Bbbk,
	\nonumber
	\\ \label{eq:K-Fr_coproduct-counit}
	h &\mapsto h\lambda_{(1)}    \oo S(\lambda_{(2)})\quad
	&h&\mapsto |K|\cdot \smallint(h)
\end{align}
The fact that $\lambda$ is normalised (i.e.\ $\varepsilon_{\mathrm{Ho}}(\lambda)=1$) ensures that $K_{\mathrm{Fr}}$ is $\Delta$-separable and
symmetry follows from the cyclic invariance of $\smallint$ in~\eqref{eq:HaarCyclic}. This is the Frobenius algebra $A$ used for the orbifold datum~\eqref{eq:OrbDatum} defined in this section.

\subsubsection*{Orbifold data from Hopf algebras}
For the $\left(A, A \oo A\right)$-bimodule $T$ of the orbifold datum we take the vector space $T=K \oo K$ with the $K$-actions:
\begin{align}
	&\vartriangleright_{0}\colon A \oo T \to T, \quad
	h \vartriangleright_{0}[m \oo n] = h_{(2)}m \oo h_{(1)}n
	\nonumber
	\\
	&\vartriangleleft_{1}\colon T \oo A \to T, \quad
	[m \oo n] \vartriangleleft_{1} h = mh \oo n
 \nonumber
	\\
	&\vartriangleleft_{2}\colon T \oo A \to T, \quad
	[m \oo n] \vartriangleleft_{2} h = m \oo nh
	\label{eq:KitaevAction}
\end{align}
Here and in the following we distinguish the two copies of $K$ in $T=K \oo K$ in a larger tensor product by putting them in square brackets.

We consider the relative tensor products $T \oo_{1} T, T \oo_{2} T$ as subspaces of $T \oo T =K^{ \oo 4}$ by identifying them with the images of the corresponding idempotents \eqref{eq:K-xA-L_idempotent}. Explicitly, these images are given by
\begin{align}
	&T \oo_{1} T = \left\{\left[ aS(\lambda_{(1)})  \oo b  \right] \oo \left[ \lambda_{(3)}c \oo \lambda_{(2)}d \right]\;|\; a,b,c,d \in K\right\},
\nonumber
	\\
	&T \oo_{2} T = \left\{ \left[a  \oo bS(\lambda_{(1)})  \right] \oo \left[ \lambda_{(3)}c \oo \lambda_{(2)}d\right] \;|\; a,b,c,d \in K\right\}.
	\label{eq:KitaevTensorOver}
\end{align}
We now define the maps $\alpha,\overline{\alpha}$ as balanced maps:
\begin{align}
	\alpha\colon &T \oo T \to T \oo_{1} T \subseteq T \oo T,\nonumber
	\\
	&\left[ a \oo b \right] \oo \left[ c \oo d\right] \mapsto \left[ S(\lambda_{(1)}) \oo b_{(1)}d  \right] \oo \left[ \lambda_{(3)}a \oo \lambda_{(2)}b_{(2)}c\right]
\nonumber
	\\
	\overline{\alpha}\colon &T \oo T \to T \oo_{2} T \subseteq T \oo T,\nonumber
	\\
	&\left[a \oo b \right] \oo \left[ c \oo d \right] \mapsto \left[a_{(2)}c \oo S(\lambda_{(1)})  \right] \oo \left[ \lambda_{(3)}a_{(1)}d \oo \lambda_{(2)}b\right]
	\label{eq:KitaevAlphaBoth}
\end{align}
The remaining parts of the orbifold data are defined by $\psi=\id_{K}$ and $\phi = \frac{1}{\sqrt{\mathrm{dim\;}K}}$.
The proof of the following proposition is given in Appendix~\ref{section:KitaevOrbifoldProof}.

\begin{prp}
The tuple $\left( A =K_{\mathrm{Fr}}, T,\alpha,\overline{\alpha}, \psi, \phi \right)$ is an orbifold datum.
\label{prop:KitaevOrbifold}
\end{prp}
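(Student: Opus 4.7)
The plan is to verify each axiom of an orbifold datum in turn, reducing every diagrammatic identity to a Hopf-algebraic identity via the defining properties of the Haar integral $\lambda$ and cointegral $\smallint$. The preliminary steps I would handle first are: that $A = K_{\mathrm{Fr}}$ with coproduct and counit~\eqref{eq:K-Fr_coproduct-counit} is a symmetric $\Delta$-separable Frobenius algebra (a well-known fact following from~\eqref{eq:Haar_co-integral}--\eqref{eq:HaarDimAntipode}); that $T = K \otimes K$ satisfies the multimodule conditions~\eqref{eq:T_multimodule} for the actions~\eqref{eq:KitaevAction}, which reduces to $\Delta_{\mathrm{Ho}}$ being an algebra homomorphism together with the fact that $\vartriangleleft_1, \vartriangleleft_2$ act on disjoint tensor factors; and that the formulas~\eqref{eq:KitaevAlphaBoth} for $\alpha, \overline{\alpha}$ indeed land in $T \otimes_1 T$ and $T \otimes_2 T$ respectively, which can be read off by comparing with~\eqref{eq:KitaevTensorOver}. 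The balanced-map property~\eqref{eq:a_abar_balanced_maps} will then follow from the teleportation identity~\eqref{eq:HaarTeleportation} applied to the Haar-integral factors, combined with cocommutativity of $\lambda$~\eqref{eq:HaarCyclic}.

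The substance of the proof is the verification of the identities~\eqrefO{1}--\eqrefO{8}. I expect the pentagon-like axiom~\eqrefO{1}, which compares two ways of composing $\alpha$'s on a triple relative tensor product of $T$'s, to be the main obstacle: each copy of $\alpha$ introduces its own Haar integral, and on the two sides of the identity these integrals appear in different nestings and orders. My plan here is to expand both sides as explicit tensors in $K^{\otimes n}$, use repeated applications of~\eqref{eq:HaarTeleportation} to pull all Haar-integral legs into a standard canonical position, and then match the two resulting expressions using cocommutativity~\eqref{eq:HaarCyclic} of $\lambda$. The inversion-type axioms (saying that $\alpha$ and $\overline{\alpha}$ are mutually inverse on the relative tensor products) are much easier: composing the formulas in~\eqref{eq:KitaevAlphaBoth} produces the expression $\lambda_{(1)} S(\lambda_{(2)})$ on one tensor leg, which collapses to $1_K$ via the antipode axiom together with $\varepsilon_{\mathrm{Ho}}(\lambda) = 1$.

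For the remaining axioms the choice $\psi = \id_K$ simplifies matters dramatically: every condition involving $\psi$-insertions degenerates to either a bimodule-compatibility identity already established in the first paragraph or to the unit/counit normalisations of $K_{\mathrm{Fr}}$. The bubble axiom~\eqrefO{8} is the only place where the normalisation $\phi^2 = 1/\dim K$ plays a role; evaluating the corresponding closed diagram produces a factor of $\phi^2 \cdot \dim K$, which equals $1$ thanks to $\smallint(\lambda) = 1/|K|$ in~\eqref{eq:HaarDimAntipode}. Overall the proof is a bookkeeping exercise in Hopf-algebraic identities; apart from the pentagon calculation~\eqrefO{1}, no single step is difficult, and I would defer the long diagrammatic computations to the appendix as the authors indicate.
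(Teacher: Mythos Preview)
Your overall approach matches the paper's: both proceed by direct Hopf-algebraic computation, expanding each side of the identities \eqrefO{1}--\eqrefO{8} as explicit tensors in $K^{\otimes n}$ and simplifying via the Haar-integral identities \eqref{eq:Haar_co-integral}--\eqref{eq:HaarDimAntipode}. The paper carries out \eqrefO{1}, \eqrefO{3}, \eqrefO{6}, \eqrefO{8} in full and asserts the rest are analogous, exactly in the spirit you describe.

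One point deserves correction. You write that with $\psi=\id_K$ the ``remaining axioms'' degenerate to bimodule-compatibility or (co)unit normalisations. This is not accurate for \eqrefO{4}--\eqrefO{7}: even with trivial $\psi$, these are partial-trace (``lune'') identities that still require a substantial computation of the same flavour as \eqrefO{1}. In the paper's treatment of \eqrefO{6}, for instance, one composes $\coev_T$, $\alpha$, $\ev_T$, $\coevt_T$, $\overline{\alpha}$, $\evt_T$ and must track several independent Haar integrals before the expression collapses to the relative-tensor-product idempotent. Your general strategy (expand, apply \eqref{eq:HaarTeleportation} and \eqref{eq:HaarCyclic}, match) still works for these, but they are not consequences of the preliminary checks; budget them as genuine computations comparable to \eqrefO{1}.
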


We now extend this orbifold datum to an internal Levin--Wen datum by providing the remaining entries $(\Lambda,\gamma,X,\pi,\iota)$.
For the left $A$-module $\Lambda$ we simply take $\Lambda=K$ with the left regular $K$-action on itself. 
This allows us to identify $T = \Lambda^{*} \oo_{A} T  \oo_{A \oo A} \left( \Lambda \oo \Lambda \right)$. We also take $\gamma=\id_{K}$.
Finally, we choose $X=K \oo K \oo K$ together with the maps:
\begin{align}
	\pi\colon X&\to T,&\iota\colon T&\to X 
\nonumber	\\
    a \oo b \oo c &\mapsto a_{(2)}b \oo a_{(1)}c
	&b \oo c &\mapsto S(\lambda_{(1)}) \oo \lambda_{(3)}b  \oo \lambda_{(2)}c
	\label{eq:KitaevPi}
\end{align}
A direct computation using the normalisation of the Haar integral shows that $\pi \circ \iota = \id_T$.

\medskip

We define the dual objects $K^{*}, T^{*}, X^{*}$ using the (co-)integrals of $K$, instead of just taking the dual vector space - this allows us to relate them to the Kitaev model more easily.
In particular, we set $K^{*}=K$ as vector space, together with the (co-)evaluation maps given by the Frobenius (co)pairing on $K$:
\begin{align}
	\label{eq:KitaevDual}
	\coev\colon \Bbbk &\to K \oo K^{*}
	&\ev\colon K^{*} \oo K &\to \Bbbk 
	\nonumber
 \\
	 1 &\mapsto \Delta_\mathrm{Fr}(1_K)
	&h  \oo k &\mapsto \varepsilon_\mathrm{Fr}(hk)
\end{align}
Similarly we set $T^{*}=K\oo K$ and $X^{*}=K\oo K \oo K$, with (co-)evaluation maps induced by those of $K$. The dual $K$-actions on $T^{*}$ from~\eqref{eq:T-star_actions} can be shown to be
\begin{align}
	&\vartriangleleft_{0}\colon T^{*} \oo K \to T^{*}, \quad
	[m \oo n] \vartriangleleft_{0} h= mh_{(2)} \oo n h_{(1)}
	\nonumber
	\\
	&\vartriangleright_{1}\colon T^{*} \oo K \to T^{*}, \quad
	h \vartriangleright_{1} [m \oo n] = hm \oo n
	\nonumber
	\\
	&\vartriangleright_{2}\colon T^{*} \oo K \to T^{*}, \quad
	h \vartriangleright_{2} [m \oo n] = m \oo hn
	\label{eq:KitaevDualActions}
\end{align}
The maps dual to $\pi$ and $\iota$ are
\begin{align}\label{eq:pi-i-dual-map}
	\iota^{*}\colon X^{*} &\to T^{*}\qquad
	&\pi^{*}\colon T^{*} &\to X^{*}\qquad
	\nonumber
    	\\
	a\oo b\oo c &\mapsto ba_{(2)} \oo ca_{(1)}
	&b\oo c &\mapsto \lambda_{(1)} \oo bS(\lambda_{(2)}) \oo cS(\lambda_{(3)})
\end{align}

\begin{table}[tb]
\begin{center}
{\def\arraystretch{1.3}
	\begin{tabular}{|c|c|}
		\hline
MFC $\mathcal{C}$ & 
  $\mathrm{Vect}_{\Bbbk}$
		\\
		\hline
		Frob.\,alg.\ $A$
  & $K_{\mathrm{Fr}}$
		\\
		\hline
		$T$ & $K^{ \oo 2}$ with the $K$-actions \eqref{eq:KitaevAction}
		\\
		\hline
		$\alpha$, $\overline{\alpha}$ &
  \eqref{eq:KitaevAlphaBoth}
		\\
		\hline
		$\psi$ & $\id_{K}$
		\\
		\hline
		$\phi$ & $\frac{1}{\sqrt{\mathrm{dim }K}}$
		\\
		\hline
		$\Lambda,\gamma$ & $\Lambda=K$ with left 
  regular $K$-action, $\gamma=\id_K$
		\\
		\hline
		$X,X^*,\pi,\iota$ & $K^{ \oo 3}$, \eqref{eq:KitaevPi}
		\\
		\hline
	\end{tabular}
} 
\end{center}
\vspace{-1em}
\caption{Internal Levin-Wen model data from a semisimple Hopf algebra $K$.}
\label{tab:LWdata-Hopf}
\end{table}

An overview of the orbifold datum from a finite-dimensional semisimple Hopf algebra and the additional datum for the internal Levin--Wen model is given in Table~\ref{tab:LWdata-Hopf}.

\begin{rem}
The MFC $\mcC_{\opA}$ for the orbifold datum $\opA$ defined in this section is equivalent to the category of $K$-Yetter-Drinfeld modules or the category of $D(K)$-modules for the Drinfeld double $D(K)$. We do not show this here, but only give a quick sketch: (i) The category $\mcC_{\opA}$ as defined in~\cite{MR1} can be shown to be the category of so called $K$-Hopf bimodules - the two $K$-coactions can be derived from the maps $\tau_1$ and $\tau_2$ from~\cite{MR1}; (ii) The category of $K$-Hopf bimodules was shown to be monoidally equivalent to the category of $K$-Yetter--Drinfeld modules in~\cite{Sc}.
\end{rem}

\section{Explicit computations}
\label{sec:ExplicitComputation}
In the previous section we defined the internal Levin--Wen model in terms of the defect TFT
$\zdef_\mcC$.
This definition allowed us to use the graphical calculus of surface diagrams, which made it easy to demonstrate some of the properties of the model (e.g.\ that the face projectors $P_f$ are idempotents, see Figure~\ref{fig:Pf_idemp_calc}).
Any explicit computation however requires 
unpacking both the definition of $\zdef_\mcC$, and that of the TFT with embedded ribbon graphs $\zrt_\mcC$.
Even though this is a standard procedure as summarised in Section~\ref{subsec:RT_theory_w_line_and_surface_defects}, it is illustrative to explain it in the context of our lattice model.
In this section we describe how to compute the projectors in general (Section~\ref{subsec:ProjectorsExplicit}) as well as work out the concrete linear maps for three examples of edges and an example of a hexagonal face on a 2-torus (Section~\ref{subsec:TorusComputation}). 

\subsection{Projector maps explicitly}
\label{subsec:ProjectorsExplicit}
Let $(\mcC, \opA,  \Lambda, \gamma, X, \pi, \imath, \Sigma, \Gamma)$ be an internal Levin--Wen datum.
Throughout this section we use the simplifying assumptions $\Lambda = A$, so that one can take $\gamma = \psi$ and $\Lambda \otimes_A T \otimes_{A\otimes A} (\Lambda  \otimes \Lambda) \cong T$ as in Remark~\ref{rem:Pc_ribbonisation}.\goodbreak

\subsubsection*{The state space as a Hom-space}
Here we construct an isomorphism $\Phi$ between the state space 
$V = \zrt_\mcC(\Sigma, \Gamma_0(X))$ from 
\eqref{eq:state_space_ito_TQFT} and 
a Hom-space of $\mcC$ of the form \eqref{eq:zrt_on_objects}. 
Let $\Sigma_{g,n}$ be the standard genus $g$ surface with $n$-punctures, defined as a subspace of $\opR^3$ as in Figure~\ref{fig:standard_surface}.
The isomorphism $\Phi$ is determined by a choice of a homeomorphism 
\begin{equation}
\varphi\colon \Sigma_{g,n} \longrightarrow (\Sigma,\Gamma_0(X)) \ .    
\end{equation}
Note that this automatically fixes an order on the set of vertices: $\Gamma_0 = \{v_1,\dots,v_n\}$.

\begin{figure}[tb]
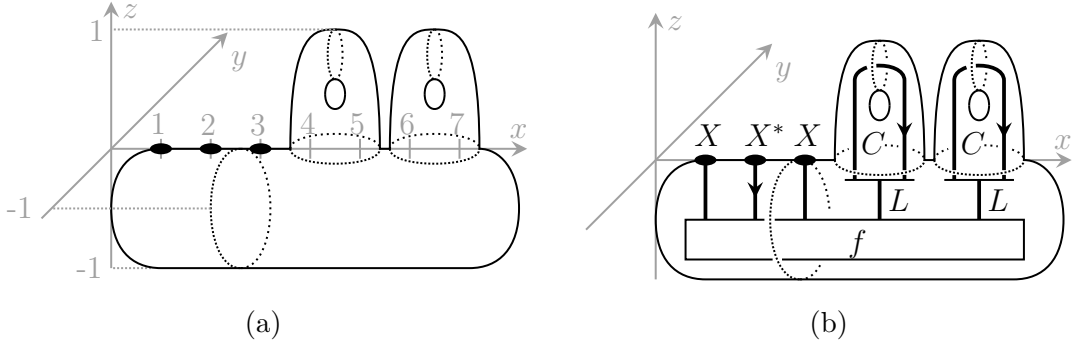

\captionsetup{format=plain, indention=0.5cm}
\centerline{
\begin{subfigure}[b]{0.5\textwidth}
	\centering
	
	\begin{tikzpicture}[baseline={([yshift=-.5ex]current bounding box.center)}]
	\node at (0,0) {\def\svgscale{1.0} \import{figures/}{41_Sigma-g-n.pdf_tex} };
	\end{tikzpicture}

	\caption{}
	\label{fig:standard_surface}
\end{subfigure}
\begin{subfigure}[b]{0.5\textwidth}
	\centering
	
	\begin{tikzpicture}[baseline={([yshift=-.5ex]current bounding box.center)}]
	\node at (0,0) {\def\svgscale{1.0} \import{figures/}{41_Hf.pdf_tex} };
	\end{tikzpicture}

	\caption{}
	\label{fig:standard_handlebody}
\end{subfigure}
}
\caption{
(a) The standard surface $\Sigma_{g,n}\subseteq\opR^3$ with the genus $g=2$ and the number of punctures $n=3$, defined as the boundary of the standard handlebody in $\opR^3$.
(b) The standard handlebody $M_{g,n}(f)$ with embedded ribbon graph and coupon labelled by $f\in\mcC(\opid,XX^* X L^{\otimes 2})$.
}
\label{fig:RT_bordisms}
\end{figure}

\medskip

Recall from \eqref{eq:Coend} that $L=\bigoplus_{i\in\Irr_\mcC}i \otimes i^* \in \mcC$. 
In what follows it will be useful to introduce the object $C = \bigoplus_{i\in\Irr_\mcC} i$, so that $L$ is a subobject of $C\otimes C^*$ with the projection/inclusion maps $L\rightleftarrows C\otimes C^*$ denoted by
\begin{equation}
\label{eq:L-C_proj_incl}

	\begin{tikzpicture}[baseline={([yshift=-.5ex]current bounding box.center)}]
	\node at (0,0) {\def\svgscale{1.0} 
\begingroup%
  \makeatletter%
  \providecommand\color[2][]{%
    \errmessage{(Inkscape) Color is used for the text in Inkscape, but the package 'color.sty' is not loaded}%
    \renewcommand\color[2][]{}%
  }%
  \providecommand\transparent[1]{%
    \errmessage{(Inkscape) Transparency is used (non-zero) for the text in Inkscape, but the package 'transparent.sty' is not loaded}%
    \renewcommand\transparent[1]{}%
  }%
  \providecommand\rotatebox[2]{#2}%
  \newcommand*\fsize{\dimexpr\f@size pt\relax}%
  \newcommand*\lineheight[1]{\fontsize{\fsize}{#1\fsize}\selectfont}%
  \ifx\svgwidth\undefined%
    \setlength{\unitlength}{29.37891963bp}%
    \ifx\svgscale\undefined%
      \relax%
    \else%
      \setlength{\unitlength}{\unitlength * \real{\svgscale}}%
    \fi%
  \else%
    \setlength{\unitlength}{\svgwidth}%
  \fi%
  \global\let\svgwidth\undefined%
  \global\let\svgscale\undefined%
  \makeatother%
  \begin{picture}(1,1.55464634)%
    \lineheight{1}%
    \setlength\tabcolsep{0pt}%
    \put(0,0){\includegraphics[width=\unitlength,page=1]{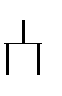}}%
    \put(-0.02293577,0.00538492){\color[rgb]{0,0,0}\makebox(0,0)[lt]{\smash{\begin{tabular}[t]{l}$C$\end{tabular}}}}%
    \put(0.48763489,0.00538492){\color[rgb]{0,0,0}\makebox(0,0)[lt]{\smash{\begin{tabular}[t]{l}$C^*$\end{tabular}}}}%
    \put(0.24461565,1.257678){\color[rgb]{0,0,0}\makebox(0,0)[lt]{\smash{\begin{tabular}[t]{l}$L$\end{tabular}}}}%
    \put(0,0){\includegraphics[width=\unitlength,page=2]{41_L-C_proj.pdf}}%
  \end{picture}%
\endgroup%
 };
	\end{tikzpicture}
 \, , \quad

	\begin{tikzpicture}[baseline={([yshift=-.5ex]current bounding box.center)}]
	\node at (0,0) {\def\svgscale{1.0} 
\begingroup%
  \makeatletter%
  \providecommand\color[2][]{%
    \errmessage{(Inkscape) Color is used for the text in Inkscape, but the package 'color.sty' is not loaded}%
    \renewcommand\color[2][]{}%
  }%
  \providecommand\transparent[1]{%
    \errmessage{(Inkscape) Transparency is used (non-zero) for the text in Inkscape, but the package 'transparent.sty' is not loaded}%
    \renewcommand\transparent[1]{}%
  }%
  \providecommand\rotatebox[2]{#2}%
  \newcommand*\fsize{\dimexpr\f@size pt\relax}%
  \newcommand*\lineheight[1]{\fontsize{\fsize}{#1\fsize}\selectfont}%
  \ifx\svgwidth\undefined%
    \setlength{\unitlength}{29.37891963bp}%
    \ifx\svgscale\undefined%
      \relax%
    \else%
      \setlength{\unitlength}{\unitlength * \real{\svgscale}}%
    \fi%
  \else%
    \setlength{\unitlength}{\svgwidth}%
  \fi%
  \global\let\svgwidth\undefined%
  \global\let\svgscale\undefined%
  \makeatother%
  \begin{picture}(1,1.562624)%
    \lineheight{1}%
    \setlength\tabcolsep{0pt}%
    \put(0,0){\includegraphics[width=\unitlength,page=1]{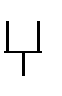}}%
    \put(-0.02293577,1.25229308){\color[rgb]{0,0,0}\makebox(0,0)[lt]{\smash{\begin{tabular}[t]{l}$C$\end{tabular}}}}%
    \put(0.48763489,1.25229308){\color[rgb]{0,0,0}\makebox(0,0)[lt]{\smash{\begin{tabular}[t]{l}$C^*$\end{tabular}}}}%
    \put(0.24461492,0){\color[rgb]{0,0,0}\makebox(0,0)[lt]{\smash{\begin{tabular}[t]{l}$L$\end{tabular}}}}%
    \put(0,0){\includegraphics[width=\unitlength,page=2]{41_L-C_incl.pdf}}%
  \end{picture}%
\endgroup%
 };
	\end{tikzpicture}
 \, .
\end{equation}
Consider the handlebody $M_{g,n}(f)$ in Figure~\ref{fig:standard_handlebody}.
The coupon is labelled by a morphism $f\in\mcC(\opid, X^{|v_1|}\cdots X^{|v_n|}L^{\otimes g})$. As a bordism in $\Bordrib_3(\mcC)$, we have $M_{g,n}(f) : \emptyset \ra \Sigma_{g,n}$.
Denote by $C_\varphi : \Sigma_{g,n} \to (\Sigma,\Gamma_0(X))$ the mapping cylinder for $\varphi$.
The isomorphism 
\begin{equation}\label{eq:V_as_hom}
\Phi\colon \mcC(\opid, X^{|v_1|}\cdots X^{|v_n|}L^{\otimes g}) \longrightarrow
\zrt_\mcC(\Sigma, \Gamma_0(X)) = V
\end{equation}
is given by
\begin{align}    
\Phi(f) = 
\Big[~ 
    \mcC(\opid, X^{|v_1|}\cdots X^{|v_n|}L^{\otimes g})
    \xra{\zrt_\mcC(M_{g,n}(f))} 
    \zrt_\mcC(\Sigma_{g,n})
    \xra{\zrt_\mcC(C_\varphi)} 
    \zrt_\mcC(\Sigma, \Gamma_0(X))~
\Big] \, .
\end{align}
That this is indeed an isomorphism is one of the properties of RT TFT as discussed in Section~\ref{subsec:RT_theory_w_line_defects}.
By abuse of notation we will use $\Phi$ to identify $V$ with the Hom-space which is the domain of $\Phi$.
We will comment on the irrelevance of the choice of $\varphi$ in Remark~\ref{rem:MCG_action_on_the_model} below.

\medskip

The remainder of this section is devoted to describing the projectors $P_c \colon V \to V$ in~\eqref{eq:Pc_ito_TQFT} by expressing them as linear maps of the form~\eqref{eq:zrt_on_morphisms}, i.e.\ as post-composition with an appropriate endomorphism,
\begin{equation}
\label{eq:Pc_ito_homs}
P_c(f) = \Omega_c \circ f
\quad
\text{ for some } 
\quad
\Omega_c\in\End_\mcC(X^{|v_1|} \cdots  X^{|v_n|} L^{\otimes g}) \, .
\end{equation}

For a morphism $h\colon Q\ra R$ let us write $h^+ := h$ and $h^- := h^*\colon R^*\ra Q^*$, the dual morphism.
For a vertex $v_i\in\Gamma_0$ the endomorphism $\Omega_{v_i}$ giving rise to the vertex projector is particularly simple:
\begin{equation}
\label{eq:Omega_v}
\Omega_{v_i} =
\id_{X^{|v_1|}} \otimes \cdots \otimes
(\imath\circ\pi)^{|v_i|} \otimes \cdots \otimes
\id_{X^{|v_n|}} \otimes
\id_{L^{\otimes g}} \, .
\end{equation}
Edge and face projectors will turn out to be more involved and will need extra preparation.

\subsubsection*{The half-braiding of $L$}

The object $L\in\mcC$ has natural half-braiding morphisms $\gamma=\{\gamma_X\colon X \otimes L \xra{\sim} L\otimes X\}_{X\in\mcC}$, defined by
\begin{equation}
\label{eq:L-halfbraiding}
\bigoplus_{i,j\in\Irr_\mcC}\sum_q

	\begin{tikzpicture}[baseline={([yshift=-.5ex]current bounding box.center)}]
	\node at (0,0) {\def\svgscale{1.0} 
\begingroup%
  \makeatletter%
  \providecommand\color[2][]{%
    \errmessage{(Inkscape) Color is used for the text in Inkscape, but the package 'color.sty' is not loaded}%
    \renewcommand\color[2][]{}%
  }%
  \providecommand\transparent[1]{%
    \errmessage{(Inkscape) Transparency is used (non-zero) for the text in Inkscape, but the package 'transparent.sty' is not loaded}%
    \renewcommand\transparent[1]{}%
  }%
  \providecommand\rotatebox[2]{#2}%
  \newcommand*\fsize{\dimexpr\f@size pt\relax}%
  \newcommand*\lineheight[1]{\fontsize{\fsize}{#1\fsize}\selectfont}%
  \ifx\svgwidth\undefined%
    \setlength{\unitlength}{82.42971594bp}%
    \ifx\svgscale\undefined%
      \relax%
    \else%
      \setlength{\unitlength}{\unitlength * \real{\svgscale}}%
    \fi%
  \else%
    \setlength{\unitlength}{\svgwidth}%
  \fi%
  \global\let\svgwidth\undefined%
  \global\let\svgscale\undefined%
  \makeatother%
  \begin{picture}(1,1.19007633)%
    \lineheight{1}%
    \setlength\tabcolsep{0pt}%
    \put(0,0){\includegraphics[width=\unitlength,page=1]{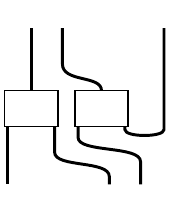}}%
    \put(0.08281798,0.50668171){\color[rgb]{0,0,0}\makebox(0,0)[lt]{\smash{\begin{tabular}[t]{l}$\scriptstyle{b^{Xij}_q}$\end{tabular}}}}%
    \put(0.4782577,0.50548437){\color[rgb]{0,0,0}\makebox(0,0)[lt]{\smash{\begin{tabular}[t]{l}$\scriptstyle{\overline{b^{Xij}_q}{}^*}$\end{tabular}}}}%
    \put(0,0){\includegraphics[width=\unitlength,page=2]{41_L_hb-X_def_1.pdf}}%
    \put(0.61650579,0){\color[rgb]{0,0,0}\makebox(0,0)[lt]{\smash{\begin{tabular}[t]{l}$i$\end{tabular}}}}%
    \put(0.79847865,0){\color[rgb]{0,0,0}\makebox(0,0)[lt]{\smash{\begin{tabular}[t]{l}$i^*$\end{tabular}}}}%
    \put(-0.00483366,0){\color[rgb]{0,0,0}\makebox(0,0)[lt]{\smash{\begin{tabular}[t]{l}$X$\end{tabular}}}}%
    \put(0.90503274,1.05569344){\color[rgb]{0,0,0}\makebox(0,0)[lt]{\smash{\begin{tabular}[t]{l}$X$\end{tabular}}}}%
    \put(0.16157259,1.07947074){\color[rgb]{0,0,0}\makebox(0,0)[lt]{\smash{\begin{tabular}[t]{l}$j$\end{tabular}}}}%
    \put(0.32758739,1.07947074){\color[rgb]{0,0,0}\makebox(0,0)[lt]{\smash{\begin{tabular}[t]{l}$j^*$\end{tabular}}}}%
  \end{picture}%
\endgroup%
 };
	\end{tikzpicture}
 = \sum_p

	\begin{tikzpicture}[baseline={([yshift=-.5ex]current bounding box.center)}]
	\node at (0,0) {\def\svgscale{1.0} 
\begingroup%
  \makeatletter%
  \providecommand\color[2][]{%
    \errmessage{(Inkscape) Color is used for the text in Inkscape, but the package 'color.sty' is not loaded}%
    \renewcommand\color[2][]{}%
  }%
  \providecommand\transparent[1]{%
    \errmessage{(Inkscape) Transparency is used (non-zero) for the text in Inkscape, but the package 'transparent.sty' is not loaded}%
    \renewcommand\transparent[1]{}%
  }%
  \providecommand\rotatebox[2]{#2}%
  \newcommand*\fsize{\dimexpr\f@size pt\relax}%
  \newcommand*\lineheight[1]{\fontsize{\fsize}{#1\fsize}\selectfont}%
  \ifx\svgwidth\undefined%
    \setlength{\unitlength}{85.29490983bp}%
    \ifx\svgscale\undefined%
      \relax%
    \else%
      \setlength{\unitlength}{\unitlength * \real{\svgscale}}%
    \fi%
  \else%
    \setlength{\unitlength}{\svgwidth}%
  \fi%
  \global\let\svgwidth\undefined%
  \global\let\svgscale\undefined%
  \makeatother%
  \begin{picture}(1,1.11372552)%
    \lineheight{1}%
    \setlength\tabcolsep{0pt}%
    \put(0,0){\includegraphics[width=\unitlength,page=1]{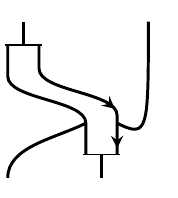}}%
    \put(-0.00467129,0.00024037){\color[rgb]{0,0,0}\makebox(0,0)[lt]{\smash{\begin{tabular}[t]{l}$X$\end{tabular}}}}%
    \put(0.52503984,0){\color[rgb]{0,0,0}\makebox(0,0)[lt]{\smash{\begin{tabular}[t]{l}$L$\end{tabular}}}}%
    \put(0,0){\includegraphics[width=\unitlength,page=2]{41_L_hb-X_def_2.pdf}}%
    \put(0.09136479,1.0098236){\color[rgb]{0,0,0}\makebox(0,0)[lt]{\smash{\begin{tabular}[t]{l}$L$\end{tabular}}}}%
    \put(0,0){\includegraphics[width=\unitlength,page=3]{41_L_hb-X_def_2.pdf}}%
    \put(0.78670067,1.01143796){\color[rgb]{0,0,0}\makebox(0,0)[lt]{\smash{\begin{tabular}[t]{l}$X$\end{tabular}}}}%
    \put(0.3122211,0.24070887){\color[rgb]{0,0,0}\makebox(0,0)[lt]{\smash{\begin{tabular}[t]{l}$\scriptstyle{b^X_p}$\end{tabular}}}}%
    \put(0.69032077,0.24950203){\color[rgb]{0,0,0}\makebox(0,0)[lt]{\smash{\begin{tabular}[t]{l}$\scriptstyle{\overline{b^X_p}{}^*}$\end{tabular}}}}%
    \put(0.25547837,0.70368215){\color[rgb]{0,0,0}\makebox(0,0)[lt]{\smash{\begin{tabular}[t]{l}$C$\end{tabular}}}}%
    \put(0.06203195,0.70368215){\color[rgb]{0,0,0}\makebox(0,0)[lt]{\smash{\begin{tabular}[t]{l}$C$\end{tabular}}}}%
  \end{picture}%
\endgroup%
 };
	\end{tikzpicture}
 \, ,
\end{equation}
where $\{b^{Xij}_q\}$ forms a basis of $\mcC(Xi,j)$ and  $\{\overline{b^{Xij}_q}\}$ is its dual basis of $\mcC(j,Xi)$ with respect to the composition pairing, i.e.\ one has
\begin{equation}
\label{eq:bXijq_splitting_identities}
b^{Xij}_{q'} \circ \overline{b^{Xij}_q} = \d_{qq'} \cdot \id_j \, , \quad
\sum_{j,q} \overline{b^{Xij}_q} \circ b^{Xij}_{q'} = \id_{Xi} \, ,
\end{equation}
and on the right-hand side of~\eqref{eq:L-halfbraiding} we have collected the index triple $(i,j,q)$ into a single index $p$ and introduced the morphisms
\begin{equation}
\label{eq:bp_notation}
b_p^X  = [XC \twoheadrightarrow Xi \xra{b_q^{Xij}} j \hookrightarrow C] \, , \quad
\overline{b_p^X} = [C \twoheadrightarrow j \xra{\overline{b_q^{Xij}}} Xi \hookrightarrow XC] \, .
\end{equation}
This makes the pair $(L,\gamma)$ an object of the Drinfeld centre $\mcZ(\mcC)$.
See e.g.\ \cite[Sec.\,9]{Bruguieres:2008vz} and \cite[Thm.\,2.3]{Balsam:2010} for more details on this half-braiding.

\subsubsection*{Transporting a puncture along a path}
To proceed further we will find it convenient to have an explicit description of the action of some mapping class groupoid elements on the vector space~\eqref{eq:zrt_on_objects}.
In particular, we look at how to permute the punctures by moving one of them along a path.

\medskip

To this end, let $\gamma\colon [0,1]\ra\Sigma_{g,n}$ be a simple path between two punctures, i.e. such that $\gamma(0) = v_j$, $\gamma(1)= v_k$ and $\gamma((0,1))$ does not contain any punctures and has no self-intersections.
Furthermore, we assume the tangent vectors $\dot\gamma(0)$, $\dot\gamma(1)$ of $\gamma$ at points $v_j$ and $v_k$
not to lie on the $y$-axis in the conventions of Figure~\ref{fig:standard_surface}, i.e.\ one can unambiguously say whether $\gamma$ leaves $v_j$ to the left or right, and whether it approaches $v_k$ from the left or right.
This is to account for the framing dependence of the punctures when moving them along a path (cf.\ Figure~\ref{fig:path_braiding}).
Define a permutation $\tau_\gamma$ on $[n]=\{ 1,2,\dots,n\}$ as follows: set $\epsilon = +1$ if $\dot{\gamma}_x(1) < 0$ and $-1$ if $\dot{\gamma}_x(1) > 0$;
take the map $\widetilde{\tau}\colon [n] \ra \tfrac{1}{2}\opZ$ to be $\widetilde{\tau}(j)= k + \epsilon\tfrac{1}{2}$ and $\id$ on $[n]\setminus\{j\}$;
then $\tau$ is the composition of $\widetilde{\tau}$ with the unique order preserving map $\im\widetilde{\tau}\ra [n]$ (for example, the permutation for the path in Figure~\ref{fig:path_braiding} sends $(1,2,3)$ to $(2,1,3)$).
The aforementioned element of the mapping class groupoid then has the action whose form is
\begin{equation}
\label{eq:MCG_action}
\begin{array}{rcl}
\mcC(\opid, X_1^{|v_1|}\cdots X_n^{|v_n|}L^{\otimes g}) & \ra &
\mcC(\opid, X_{\tau^{\texttt{-}1}_\gamma 1}^{| v_{\tau^{\texttt{-}1}_\gamma 1} |}\cdots X_{\tau^{\texttt{-}1}_\gamma n}^{| v_{\tau^{\texttt{-}1}_\gamma n} |}L^{\otimes g})\\[6pt]
f & \mapsto & \sigma_\gamma(X_1^{|v_1|}, \dots, X_n^{|v_n|}) \circ f
\end{array} \, ,
\end{equation}
where $\sigma_\gamma$ is a certain natural transformation between two functors $\mcC^{\boxtimes n}\ra\mcC$
\begin{equation}
\sigma_\gamma\colon ((-)^{\otimes n} \otimes L^{\otimes g}) \Ra ((-)^{\otimes n} \circ \tau_\gamma) \otimes L^{\otimes g})  \, ,
\end{equation}
with $(-)^{\otimes n}\colon\mcC^{\boxtimes n} \ra \mcC$ denoting the $n$-fold tensor product functor and $\tau_\gamma$ being interpreted as the permutation endofunctor on $\mcC^{\boxtimes n}$.

\medskip

The maps $\sigma_\gamma$ can be obtained as follows:
Take the puncture $v_j\in\Gamma_0$ on the standard handlebody as in Figure~\ref{fig:standard_handlebody} and move it along the path $\gamma$ while shifting the other punctures so that at the end of the move they are in the standard positions.
The strand connected to $v_j$ will then braid with the other strands connected to the coupon $f$, as well as with the $C$-labelled strands at the cores of the handles.
This braiding pattern is what constitutes the map $\sigma_\gamma$ and, depending on $\gamma$, can come as a concatenation of simple paths of three types:
\begin{itemize}
\item
\underline{$\gamma$ does not wrap around a handle.}
Assuming that $\gamma$ passes neither through nor over the arc of a handle, all $\sigma_\gamma$ does is braid the strand at the end of the puncture $v_j$ with the other punctures and possibly the $L$-labelled strands assigned to the handles as in Figure~\ref{fig:standard_handlebody}.
Note that because of the framing the strand of $v_j$ might acquire twists.
An example is shown in Figure~\ref{fig:path_braiding}.
\item
\underline{$\gamma$ wraps around the meridian of a handle.}
If $\gamma$ passes through the arc of a handle, the $L$-labelled strand is split into a pair of $C$- and $C^*$-labelled strands using the inclusion morphism $L \rightarrow CC^*$ (see~\eqref{eq:L-C_proj_incl}), which then braid with the strand of $v_j$ and get projected back onto $L$ as in Figure~\ref{fig:path_meridian}.
\item
\underline{$\gamma$ wraps around the parallel of a handle.}
If $\gamma$ passes over the arc of a handle, $\sigma_\gamma$ incorporates the halfbraiding~\eqref{eq:L-halfbraiding} of the object $L\in\mcC$.
This is because after moving $v_j$ along $\gamma$, the strand at the end of it will have a segment which is parallel to one of the $C$-labelled arcs as in Figure~\ref{fig:standard_handlebody}, at which point one can utilise~\eqref{eq:bp_notation} and~\eqref{eq:bXijq_splitting_identities} as illustrated in Figure~\ref{fig:path_parallel}.
\end{itemize}
For later convenience, we also introduce what could be called the ``$\gamma$-twisted associator'' natural transformation between two functors $\mcC^{\boxtimes (n+1)}\ra\mcC$
\begin{equation}
\label{eq:a-gamma_definition}
a_\gamma\colon
((-)^{\otimes (n+1)} \otimes L^{\otimes g}) \Ra
((-)^{\otimes n} \circ \tau_\gamma) \otimes L^{\otimes g}) \circ \otimes_{j,j+1} \, ,
\end{equation}
where $\otimes_{j,j+1}\colon\mcC^{\boxtimes (n+1)}\ra\mcC^{\boxtimes n}$ is the functor taking the tensor product of $j$ and $(j+1)$ $\boxtimes$-factors, as follows:
\begin{align} \nonumber
& a_\gamma(X_1,\dots,Y,X_j,\dots,X_n)\\ \label{eq:a_gamma}
& :=\sigma^{-1}_\gamma(X_1,\dots, X_j, \dots, (X_kY), \dots X_n) \circ
\sigma_\gamma(X_1,\dots, (YX_j), \dots, X_k, \dots X_n).
\end{align}
Precomposing with $a_\gamma$ has the effect of
i) unfusing the $(YX_j)$-labelled puncture into separate $Y$- and $X_j$-labelled punctures next to each other;
ii) transporting $Y$ away from $X_j$ and next to $X_k$ along the path $\gamma$, while leaving $X_j$ and $X_k$ in their positions;
iii) fusing $Y$ and $X_k$ into a single $(X_kY)$-labelled puncture:
\begin{equation}

	\begin{tikzpicture}[baseline={([yshift=-.5ex]current bounding box.center)}]
	\node at (0,0) {\def\svgscale{1.0} 
\begingroup%
  \makeatletter%
  \providecommand\color[2][]{%
    \errmessage{(Inkscape) Color is used for the text in Inkscape, but the package 'color.sty' is not loaded}%
    \renewcommand\color[2][]{}%
  }%
  \providecommand\transparent[1]{%
    \errmessage{(Inkscape) Transparency is used (non-zero) for the text in Inkscape, but the package 'transparent.sty' is not loaded}%
    \renewcommand\transparent[1]{}%
  }%
  \providecommand\rotatebox[2]{#2}%
  \newcommand*\fsize{\dimexpr\f@size pt\relax}%
  \newcommand*\lineheight[1]{\fontsize{\fsize}{#1\fsize}\selectfont}%
  \ifx\svgwidth\undefined%
    \setlength{\unitlength}{121.01955975bp}%
    \ifx\svgscale\undefined%
      \relax%
    \else%
      \setlength{\unitlength}{\unitlength * \real{\svgscale}}%
    \fi%
  \else%
    \setlength{\unitlength}{\svgwidth}%
  \fi%
  \global\let\svgwidth\undefined%
  \global\let\svgscale\undefined%
  \makeatother%
  \begin{picture}(1,0.80855987)%
    \lineheight{1}%
    \setlength\tabcolsep{0pt}%
    \put(0,0){\includegraphics[width=\unitlength,page=1]{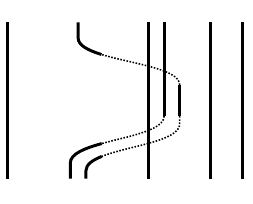}}%
    \put(-0.00329234,0.02048029){\color[rgb]{0,0,0}\makebox(0,0)[lt]{\smash{\begin{tabular}[t]{l}$X_1$\end{tabular}}}}%
    \put(0.31867935,0.02048029){\color[rgb]{0,0,0}\makebox(0,0)[lt]{\smash{\begin{tabular}[t]{l}$X_j$\end{tabular}}}}%
    \put(0.55478397,0.02048029){\color[rgb]{0,0,0}\makebox(0,0)[lt]{\smash{\begin{tabular}[t]{l}$X_k$\end{tabular}}}}%
    \put(0.80267731,0.02048029){\color[rgb]{0,0,0}\makebox(0,0)[lt]{\smash{\begin{tabular}[t]{l}$X_n$\end{tabular}}}}%
    \put(0.2302704,0.02048029){\color[rgb]{0,0,0}\makebox(0,0)[lt]{\smash{\begin{tabular}[t]{l}$Y$\end{tabular}}}}%
    \put(-0.00300177,0.7332234){\color[rgb]{0,0,0}\makebox(0,0)[lt]{\smash{\begin{tabular}[t]{l}$X_1$\end{tabular}}}}%
    \put(0.2755882,0.7332234){\color[rgb]{0,0,0}\makebox(0,0)[lt]{\smash{\begin{tabular}[t]{l}$X_j$\end{tabular}}}}%
    \put(0.51355705,0.7332234){\color[rgb]{0,0,0}\makebox(0,0)[lt]{\smash{\begin{tabular}[t]{l}$X_k$\end{tabular}}}}%
    \put(0.63689827,0.7332234){\color[rgb]{0,0,0}\makebox(0,0)[lt]{\smash{\begin{tabular}[t]{l}$Y$\end{tabular}}}}%
    \put(0.78624015,0.7332234){\color[rgb]{0,0,0}\makebox(0,0)[lt]{\smash{\begin{tabular}[t]{l}$X_n$\end{tabular}}}}%
    \put(0.11946845,0.04979661){\color[rgb]{0,0,0}\makebox(0,0)[lt]{\smash{\begin{tabular}[t]{l}$...$\end{tabular}}}}%
    \put(0.44173036,0.04979661){\color[rgb]{0,0,0}\makebox(0,0)[lt]{\smash{\begin{tabular}[t]{l}$...$\end{tabular}}}}%
    \put(0.68962441,0.04979661){\color[rgb]{0,0,0}\makebox(0,0)[lt]{\smash{\begin{tabular}[t]{l}$...$\end{tabular}}}}%
    \put(0.11946845,0.76249129){\color[rgb]{0,0,0}\makebox(0,0)[lt]{\smash{\begin{tabular}[t]{l}$...$\end{tabular}}}}%
    \put(0.39834899,0.76249129){\color[rgb]{0,0,0}\makebox(0,0)[lt]{\smash{\begin{tabular}[t]{l}$...$\end{tabular}}}}%
    \put(0.68962441,0.76249129){\color[rgb]{0,0,0}\makebox(0,0)[lt]{\smash{\begin{tabular}[t]{l}$...$\end{tabular}}}}%
    \put(0.94470806,0.02048029){\color[rgb]{0,0,0}\makebox(0,0)[lt]{\smash{\begin{tabular}[t]{l}$L^{\otimes g}$\end{tabular}}}}%
    \put(0.94470806,0.7332234){\color[rgb]{0,0,0}\makebox(0,0)[lt]{\smash{\begin{tabular}[t]{l}$L^{\otimes g}$\end{tabular}}}}%
  \end{picture}%
\endgroup%
 };
	\end{tikzpicture}
 =

	\begin{tikzpicture}[baseline={([yshift=-.5ex]current bounding box.center)}]
	\node at (0,0) {\def\svgscale{1.0} 
\begingroup%
  \makeatletter%
  \providecommand\color[2][]{%
    \errmessage{(Inkscape) Color is used for the text in Inkscape, but the package 'color.sty' is not loaded}%
    \renewcommand\color[2][]{}%
  }%
  \providecommand\transparent[1]{%
    \errmessage{(Inkscape) Transparency is used (non-zero) for the text in Inkscape, but the package 'transparent.sty' is not loaded}%
    \renewcommand\transparent[1]{}%
  }%
  \providecommand\rotatebox[2]{#2}%
  \newcommand*\fsize{\dimexpr\f@size pt\relax}%
  \newcommand*\lineheight[1]{\fontsize{\fsize}{#1\fsize}\selectfont}%
  \ifx\svgwidth\undefined%
    \setlength{\unitlength}{121.01955975bp}%
    \ifx\svgscale\undefined%
      \relax%
    \else%
      \setlength{\unitlength}{\unitlength * \real{\svgscale}}%
    \fi%
  \else%
    \setlength{\unitlength}{\svgwidth}%
  \fi%
  \global\let\svgwidth\undefined%
  \global\let\svgscale\undefined%
  \makeatother%
  \begin{picture}(1,0.80855987)%
    \lineheight{1}%
    \setlength\tabcolsep{0pt}%
    \put(0,0){\includegraphics[width=\unitlength,page=1]{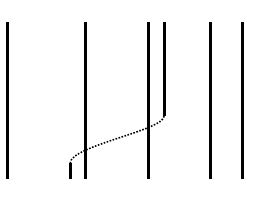}}%
    \put(-0.00329234,0.02048029){\color[rgb]{0,0,0}\makebox(0,0)[lt]{\smash{\begin{tabular}[t]{l}$X_1$\end{tabular}}}}%
    \put(0.31867935,0.02048029){\color[rgb]{0,0,0}\makebox(0,0)[lt]{\smash{\begin{tabular}[t]{l}$X_j$\end{tabular}}}}%
    \put(0.55478397,0.02048029){\color[rgb]{0,0,0}\makebox(0,0)[lt]{\smash{\begin{tabular}[t]{l}$X_k$\end{tabular}}}}%
    \put(0.80267731,0.02048029){\color[rgb]{0,0,0}\makebox(0,0)[lt]{\smash{\begin{tabular}[t]{l}$X_n$\end{tabular}}}}%
    \put(0.2302704,0.02048029){\color[rgb]{0,0,0}\makebox(0,0)[lt]{\smash{\begin{tabular}[t]{l}$Y$\end{tabular}}}}%
    \put(-0.00329234,0.7332234){\color[rgb]{0,0,0}\makebox(0,0)[lt]{\smash{\begin{tabular}[t]{l}$X_1$\end{tabular}}}}%
    \put(0.27343375,0.7332234){\color[rgb]{0,0,0}\makebox(0,0)[lt]{\smash{\begin{tabular}[t]{l}$X_j$\end{tabular}}}}%
    \put(0.5114026,0.7332234){\color[rgb]{0,0,0}\makebox(0,0)[lt]{\smash{\begin{tabular}[t]{l}$X_k$\end{tabular}}}}%
    \put(0.63474382,0.7332234){\color[rgb]{0,0,0}\makebox(0,0)[lt]{\smash{\begin{tabular}[t]{l}$Y$\end{tabular}}}}%
    \put(0.7840857,0.7332234){\color[rgb]{0,0,0}\makebox(0,0)[lt]{\smash{\begin{tabular}[t]{l}$X_n$\end{tabular}}}}%
    \put(0.13186333,0.04979661){\color[rgb]{0,0,0}\makebox(0,0)[lt]{\smash{\begin{tabular}[t]{l}$...$\end{tabular}}}}%
    \put(0.44173036,0.04979661){\color[rgb]{0,0,0}\makebox(0,0)[lt]{\smash{\begin{tabular}[t]{l}$...$\end{tabular}}}}%
    \put(0.44173036,0.04979661){\color[rgb]{0,0,0}\makebox(0,0)[lt]{\smash{\begin{tabular}[t]{l}$...$\end{tabular}}}}%
    \put(0.68962441,0.04979661){\color[rgb]{0,0,0}\makebox(0,0)[lt]{\smash{\begin{tabular}[t]{l}$...$\end{tabular}}}}%
    \put(0.13186333,0.76249129){\color[rgb]{0,0,0}\makebox(0,0)[lt]{\smash{\begin{tabular}[t]{l}$...$\end{tabular}}}}%
    \put(0.39834899,0.76249129){\color[rgb]{0,0,0}\makebox(0,0)[lt]{\smash{\begin{tabular}[t]{l}$...$\end{tabular}}}}%
    \put(0.68962441,0.76249129){\color[rgb]{0,0,0}\makebox(0,0)[lt]{\smash{\begin{tabular}[t]{l}$...$\end{tabular}}}}%
    \put(0.94470806,0.02048029){\color[rgb]{0,0,0}\makebox(0,0)[lt]{\smash{\begin{tabular}[t]{l}$L^{\otimes g}$\end{tabular}}}}%
    \put(0.94470806,0.7332234){\color[rgb]{0,0,0}\makebox(0,0)[lt]{\smash{\begin{tabular}[t]{l}$L^{\otimes g}$\end{tabular}}}}%
  \end{picture}%
\endgroup%
 };
	\end{tikzpicture}
 \, .
\end{equation}
When using $a_\gamma$ we will usually omit all the arguments except for that of the object, labelling the puncture which is being transported, assuming that the others are clear from the context, for example:
\begin{equation}
a_\gamma(Y) = a_\gamma(X_1,\dots,Y,X_j,\dots,X_n) \, , \quad
a^{-1}_\gamma(Y) = a^{-1}_\gamma(X_1,\dots,Y,X_j,\dots,X_n) \, .
\end{equation}
\begin{rem}
\label{rem:a-gamma_means_anyonic_model}
As we will see below, the natural transformations $a_\gamma$ do appear in the explicit expression of the Hamiltonian of the internal Levin--Wen model introduced in Section~\ref{sec:internal_LW_models}.
The use of $a_\gamma$ reflects the fact that the ambient topological phase $\mcC$ is already a non-local anyonic model, as it involves the braiding of the MFC $\mcC$.
In the special case of the trivial MFC $\mcC=\Vect_\opk$, the maps $\sigma_\gamma$ and $a_\gamma$ depend only on the permutation $\tau_\gamma$ and not explicitly on the path $\gamma$, i.e.\ all they do is permute the tensor factors.
\end{rem}

\begin{figure}[p]
\captionsetup{format=plain, indention=0.5cm}
\centerline{
\begin{subfigure}[b]{1.1\textwidth}



\caption{}
\label{fig:path_parallel}
\end{subfigure}
}
\vspace{24pt}
\caption{
Three types of paths to transport punctures on the standard surface.
(a) first type braids the punctures away from handles;
(b) second type transports a puncture through a handle;
(c) third type transports a puncture over a handle (here the sum over the (multi-)index $p$ is implied, see~\eqref{eq:L-halfbraiding}--\eqref{eq:bp_notation}).
The ribbon graph embedded into the standard handlebody is modified as shown.
A generic path can be composed from the paths of these three types.
}
\label{fig:paths}
\end{figure}

\subsubsection*{Edge projectors}
Let $e$ be an edge of $\Gamma$ with the source and the target vertices $v_j$ and $v_k$, respectively.
Recall from Remark~\ref{rem:Pc_ribbonisation} that in terms of the TFT $\zrt_\mcC$, the edge projector $P_e$ is obtained by evaluating a cylinder with a ribbon graph similar to the one in Figure~\ref{fig:Pe_ito_ZRT}.
As explained in the beginning of the section, one obtains the map $P_e$ in the form~\eqref{eq:Pc_ito_homs} by taking the surface $\Sigma$ to be the standard surface as in Figure~\ref{fig:standard_surface}.
This puts the endpoints of the $X^\pm$-labelled strands in the standard positions and the horizontal $A$-labelled strand within the strip $e\times[0,1]$ of the cylinder $\Sigma\times[0,1]$.
We assume that on the standard surface each vertex of $\Gamma$ has one of the two possible neighbourhoods
\begin{equation}
\label{eq:vertex_neighb}

	\begin{tikzpicture}[baseline={([yshift=-.5ex]current bounding box.center)}]
	\node at (0,0) {\def\svgscale{1.0} 
\begingroup%
  \makeatletter%
  \providecommand\color[2][]{%
    \errmessage{(Inkscape) Color is used for the text in Inkscape, but the package 'color.sty' is not loaded}%
    \renewcommand\color[2][]{}%
  }%
  \providecommand\transparent[1]{%
    \errmessage{(Inkscape) Transparency is used (non-zero) for the text in Inkscape, but the package 'transparent.sty' is not loaded}%
    \renewcommand\transparent[1]{}%
  }%
  \providecommand\rotatebox[2]{#2}%
  \newcommand*\fsize{\dimexpr\f@size pt\relax}%
  \newcommand*\lineheight[1]{\fontsize{\fsize}{#1\fsize}\selectfont}%
  \ifx\svgwidth\undefined%
    \setlength{\unitlength}{63.1992052bp}%
    \ifx\svgscale\undefined%
      \relax%
    \else%
      \setlength{\unitlength}{\unitlength * \real{\svgscale}}%
    \fi%
  \else%
    \setlength{\unitlength}{\svgwidth}%
  \fi%
  \global\let\svgwidth\undefined%
  \global\let\svgscale\undefined%
  \makeatother%
  \begin{picture}(1,0.51148576)%
    \lineheight{1}%
    \setlength\tabcolsep{0pt}%
    \put(0,0){\includegraphics[width=\unitlength,page=1]{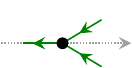}}%
    \put(0.89319486,0.24998867){\color[rgb]{0.62745098,0.62745098,0.62745098}\makebox(0,0)[lt]{\lineheight{1.25}\smash{\begin{tabular}[t]{l}$x$\end{tabular}}}}%
    \put(0.39699656,0.00397581){\color[rgb]{0,0,0}\makebox(0,0)[lt]{\smash{\begin{tabular}[t]{l}$+$\end{tabular}}}}%
    \put(0.12914892,0.23279083){\color[rgb]{0,0.50196078,0}\makebox(0,0)[lt]{\smash{\begin{tabular}[t]{l}$\scriptstyle{0}$\end{tabular}}}}%
    \put(0.74559622,0.37297285){\color[rgb]{0,0.50196078,0}\makebox(0,0)[lt]{\smash{\begin{tabular}[t]{l}$\scriptstyle{1}$\end{tabular}}}}%
    \put(0.75644372,0.0288228){\color[rgb]{0,0.50196078,0}\makebox(0,0)[lt]{\smash{\begin{tabular}[t]{l}$\scriptstyle{2}$\end{tabular}}}}%
  \end{picture}%
\endgroup%
 };
	\end{tikzpicture}
 \quad , \qquad

	\begin{tikzpicture}[baseline={([yshift=-.5ex]current bounding box.center)}]
	\node at (0,0) {\def\svgscale{1.0} 
\begingroup%
  \makeatletter%
  \providecommand\color[2][]{%
    \errmessage{(Inkscape) Color is used for the text in Inkscape, but the package 'color.sty' is not loaded}%
    \renewcommand\color[2][]{}%
  }%
  \providecommand\transparent[1]{%
    \errmessage{(Inkscape) Transparency is used (non-zero) for the text in Inkscape, but the package 'transparent.sty' is not loaded}%
    \renewcommand\transparent[1]{}%
  }%
  \providecommand\rotatebox[2]{#2}%
  \newcommand*\fsize{\dimexpr\f@size pt\relax}%
  \newcommand*\lineheight[1]{\fontsize{\fsize}{#1\fsize}\selectfont}%
  \ifx\svgwidth\undefined%
    \setlength{\unitlength}{63.19921601bp}%
    \ifx\svgscale\undefined%
      \relax%
    \else%
      \setlength{\unitlength}{\unitlength * \real{\svgscale}}%
    \fi%
  \else%
    \setlength{\unitlength}{\svgwidth}%
  \fi%
  \global\let\svgwidth\undefined%
  \global\let\svgscale\undefined%
  \makeatother%
  \begin{picture}(1,0.51148613)%
    \lineheight{1}%
    \setlength\tabcolsep{0pt}%
    \put(0,0){\includegraphics[width=\unitlength,page=1]{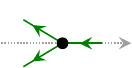}}%
    \put(0.43157798,0.00402212){\color[rgb]{0,0,0}\makebox(0,0)[lt]{\smash{\begin{tabular}[t]{l}$-$\end{tabular}}}}%
    \put(0.05877986,0.37297324){\color[rgb]{0,0.50196078,0}\makebox(0,0)[lt]{\smash{\begin{tabular}[t]{l}$\scriptstyle{1}$\end{tabular}}}}%
    \put(0.06962736,0.02882324){\color[rgb]{0,0.50196078,0}\makebox(0,0)[lt]{\smash{\begin{tabular}[t]{l}$\scriptstyle{2}$\end{tabular}}}}%
    \put(0.69933232,0.23279124){\color[rgb]{0,0.50196078,0}\makebox(0,0)[lt]{\smash{\begin{tabular}[t]{l}$\scriptstyle{0}$\end{tabular}}}}%
    \put(0.89319488,0.24998908){\color[rgb]{0.62745098,0.62745098,0.62745098}\makebox(0,0)[lt]{\lineheight{1.25}\smash{\begin{tabular}[t]{l}$x$\end{tabular}}}}%
  \end{picture}%
\endgroup%
 };
	\end{tikzpicture}
 \, ,
\end{equation}
in particular so that each edge starts `to the left' of a vertex and ends `from the right', with the 0-labelled edge starting/ending on the $x$-axis.
The indexing of the half-edges in~\eqref{eq:vertex_neighb} is introduced as a convention to be the same as that of $A$-actions $\vartriangleright_0$, $\vartriangleleft_1$, $\vartriangleleft_2$ on $T$ and $\vartriangleleft_0$, $\vartriangleright_1$, $\vartriangleright_2$ on $T^*$ (see~\eqref{eq:T_actions}, \eqref{eq:T-star_actions}).
We set $s(e),t(e)\in\{0,1,2\}$ to be the indices of the edge $e$ at the source and the target respectively (note that $s(e)=0$ is only possible if $|v_j|=+$ and $t(e)=0$ is only possible if $|v_k|=-$).

\medskip

One defines the morphism $\Omega_e$ in~\eqref{eq:Pc_ito_homs} as follows:
\begin{subequations}
\label{eq:Omega_e}
\begin{align}
\label{eq:Omega_e:include}
&
\hspace{-12pt}\Omega_e =
(\id_{X^{|v_1|}} \otimes \cdots \otimes
 \imath^{|v_k|} \otimes \cdots \otimes
 \imath^{|v_j|} \otimes \cdots \otimes
 \id_{X^{|v_n|}} )\\ \label{eq:Omega_e:act}
&\circ
(\id_{X^{|v_1|}} \otimes \cdots \otimes
 \vartriangleleft_{t(e)} \circ
 \otimes \cdots \otimes
 \id_{T^{|v_j|}} \otimes \cdots \otimes
 \id_{X^{|v_n|}} ) \\ \label{eq:Omega_e:transport}
&\circ
a_e(X^{|v_1|},\dots,T^{|v_k|},\dots,A,T^{|v_j|},\dots, X^{|v_n|})\\ \label{eq:Omega_e:idempotent}
&\circ
(\id_{X^{|v_1|}} \otimes \cdots \otimes
 \id_{T^{|v_k|}} \otimes \cdots \otimes
 \vartriangleright_{s(e)}\circ(\Delta\circ\eta\otimes\id_{T^{|v_j|}}) \otimes \cdots \otimes
 \id_{X^{|v_n|}} ) \\ \label{eq:Omega_e:project}
&\circ 
(\id_{X^{|v_1|}} \otimes \cdots \otimes
 \pi^{|v_k|} \otimes \cdots \otimes
 \pi^{|v_j|} \otimes \cdots \otimes
 \id_{X^{|v_n|}}) \, .
\end{align}
\end{subequations}
Figure~\ref{fig:T2_example_Omega-e} below contains three concrete examples of this expression: one for each type of path in Figure~\ref{fig:paths} for $a_e(A)$ in
step~\eqref{eq:Omega_e:transport}.

\subsubsection*{Face projectors}
Let $f$ be a face of $\Gamma$.
Analogously to the edge projectors, to obtain $P_f$ in the form~\eqref{eq:Pc_ito_homs} one replaces the stratification of the cylinder as in Figure~\ref{fig:Pf_ito_Zdef} by a ribbon graph as in Figure~\ref{fig:Pf_ito_ZRT} and takes $\Sigma$ to be the standard surface with the endpoints of $X^\pm$-labelled strands at the standard positions.
The vertex-edge chain bounding $f$ can then be taken to be
\begin{equation}
\label{eq:vertex-edge_chain}
v_{f(1)}\xra{e_1} v_{f(2)} \xra{e_2} \cdots \xra{e_l} v_{f(l+1)} = v_{f(1)} \, .
\end{equation}
Here $f(1),\dots, f(l)\in\{1,\dots,n\}$ are the vertex indices and $e_1,\dots,e_l$ denote the edges of $\Gamma$ between them with the convention that:
\begin{enumerate}[i)]
\item $f(1)$ is maximal out of $f(1),\dots, f(l)$;
\item the orientation $v_{f(i)}\xra{e_i} v_{f(i+1)}$ is the opposite to the orientation of this edge as induced by the orientation of the face $f$.
\end{enumerate}
Note that the orientation of the edge $e_i$ in $\Gamma$ may or may not coincide with the orientation $v_{f(i)}\xra{e_i} v_{f(i+1)}$ depending on $i$.
We set $|e_i|=+1$ if these orientations coincide and $|e_i|=-1$ otherwise and take $\overline{s}(e_i), \overline{t}(e_i)\in\{0,1,2\}$ to mean respectively $s(e_i),t(e_i)$ if $|e_i|=+1$ and $t(e_i),s(e_i)$ if $|e_i|=-1$.
Note also that in case $f$ is degenerate, not all indices $f(1),\dots, f(l)$ and the edges $e_1,\dots,e_l$ need to be distinct.

\begin{table}
\captionsetup{format=plain, indention=0.5cm}
\centering

\\
\hline
\end{tabular}
\caption{
Possible configurations of a face $f$ at a vertex $v_{f(i)}$ and the corresponding $\a$/$\abar$-labelled junctions in the stratification $S_f$ as in Figure~\ref{fig:Pf_ito_Zdef}.
The morphisms $\omega_i$ are conventional choices for the planar projection of the $\a$/$\abar$- labelled coupons when replacing $S_f$ by a ribbon graph as in Figure~\ref{fig:Pf_ito_ZRT}.
Precomposition with the morphisms $h_i^\psi$ adds a $\psi$-insertion on the horizontal surface, cf.\, \eqref{eq:h-psi-i_example}.
}
\label{tab:omega-i}
\end{table}

The steps to obtain the morphism $\Omega_f$ defining $P_f$ can be sketched as
\begin{equation}
\label{eq:face_calc}

	\begin{tikzpicture}[baseline={([yshift=-.5ex]current bounding box.center)}]
	\node at (0,0) {\def\svgscale{1.0} 
\begingroup%
  \makeatletter%
  \providecommand\color[2][]{%
    \errmessage{(Inkscape) Color is used for the text in Inkscape, but the package 'color.sty' is not loaded}%
    \renewcommand\color[2][]{}%
  }%
  \providecommand\transparent[1]{%
    \errmessage{(Inkscape) Transparency is used (non-zero) for the text in Inkscape, but the package 'transparent.sty' is not loaded}%
    \renewcommand\transparent[1]{}%
  }%
  \providecommand\rotatebox[2]{#2}%
  \newcommand*\fsize{\dimexpr\f@size pt\relax}%
  \newcommand*\lineheight[1]{\fontsize{\fsize}{#1\fsize}\selectfont}%
  \ifx\svgwidth\undefined%
    \setlength{\unitlength}{101.24993658bp}%
    \ifx\svgscale\undefined%
      \relax%
    \else%
      \setlength{\unitlength}{\unitlength * \real{\svgscale}}%
    \fi%
  \else%
    \setlength{\unitlength}{\svgwidth}%
  \fi%
  \global\let\svgwidth\undefined%
  \global\let\svgscale\undefined%
  \makeatother%
  \begin{picture}(1,0.65179484)%
    \lineheight{1}%
    \setlength\tabcolsep{0pt}%
    \put(0,0){\includegraphics[width=\unitlength,page=1]{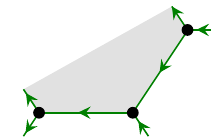}}%
    \put(0.64780162,0.07957195){\color[rgb]{0,0,0}\makebox(0,0)[lt]{\smash{\begin{tabular}[t]{l}+\end{tabular}}}}%
    \put(0.87679413,0.42774959){\color[rgb]{0,0,0}\makebox(0,0)[lt]{\smash{\begin{tabular}[t]{l}-\end{tabular}}}}%
    \put(0.19531285,0.03515627){\color[rgb]{0,0,0}\makebox(0,0)[lt]{\smash{\begin{tabular}[t]{l}-\end{tabular}}}}%
    \put(0.43957821,0.03){\color[rgb]{0,0,0}\makebox(0,0)[lt]{\smash{\begin{tabular}[t]{l}$v_{f(1)}$\end{tabular}}}}%
    \put(-0.08353009,0.1014815){\color[rgb]{0,0,0}\makebox(0,0)[lt]{\smash{\begin{tabular}[t]{l}$v_{f(2)}$\end{tabular}}}}%
    \put(0.89276688,0.56174849){\color[rgb]{0,0,0}\makebox(0,0)[lt]{\smash{\begin{tabular}[t]{l}$v_{f(l)}$\end{tabular}}}}%
    \put(0.42699057,0.41495001){\color[rgb]{0,0,0}\rotatebox{25.947901}{\makebox(0,0)[lt]{\smash{\begin{tabular}[t]{l}...\end{tabular}}}}}%
    \put(0.51602968,0.3164355){\color[rgb]{0,0,0}\makebox(0,0)[lt]{\smash{\begin{tabular}[t]{l}$f$\end{tabular}}}}%
  \end{picture}%
\endgroup%
 };
	\end{tikzpicture}
 \rightsquigarrow

	\begin{tikzpicture}[baseline={([yshift=-.5ex]current bounding box.center)}]
	\node at (0,0) {\def\svgscale{1.0} 
\begingroup%
  \makeatletter%
  \providecommand\color[2][]{%
    \errmessage{(Inkscape) Color is used for the text in Inkscape, but the package 'color.sty' is not loaded}%
    \renewcommand\color[2][]{}%
  }%
  \providecommand\transparent[1]{%
    \errmessage{(Inkscape) Transparency is used (non-zero) for the text in Inkscape, but the package 'transparent.sty' is not loaded}%
    \renewcommand\transparent[1]{}%
  }%
  \providecommand\rotatebox[2]{#2}%
  \newcommand*\fsize{\dimexpr\f@size pt\relax}%
  \newcommand*\lineheight[1]{\fontsize{\fsize}{#1\fsize}\selectfont}%
  \ifx\svgwidth\undefined%
    \setlength{\unitlength}{75.76870577bp}%
    \ifx\svgscale\undefined%
      \relax%
    \else%
      \setlength{\unitlength}{\unitlength * \real{\svgscale}}%
    \fi%
  \else%
    \setlength{\unitlength}{\svgwidth}%
  \fi%
  \global\let\svgwidth\undefined%
  \global\let\svgscale\undefined%
  \makeatother%
  \begin{picture}(1,0.96104946)%
    \lineheight{1}%
    \setlength\tabcolsep{0pt}%
    \put(0,0){\includegraphics[width=\unitlength,page=1]{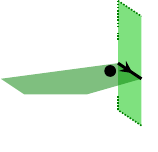}}%
    \put(0.65578539,0.59733669){\color[rgb]{0,0,0}\makebox(0,0)[lt]{\smash{\begin{tabular}[t]{l}$\psi^2$\end{tabular}}}}%
    \put(0,0){\includegraphics[width=\unitlength,page=2]{41_face_calc_2.pdf}}%
    \put(0.40731666,0.23882343){\color[rgb]{0,0,0}\makebox(0,0)[lt]{\smash{\begin{tabular}[t]{l}$\overline{\alpha}$\end{tabular}}}}%
    \put(0.77789331,0.31414544){\color[rgb]{0,0,0}\makebox(0,0)[lt]{\smash{\begin{tabular}[t]{l}$\overline{\alpha}$\end{tabular}}}}%
    \put(0.20091681,0.23882343){\color[rgb]{0,0,0}\makebox(0,0)[lt]{\smash{\begin{tabular}[t]{l}$\alpha$\end{tabular}}}}%
  \end{picture}%
\endgroup%
 };
	\end{tikzpicture}
 \rightsquigarrow

	\begin{tikzpicture}[baseline={([yshift=-.5ex]current bounding box.center)}]
	\node at (0,0) {\def\svgscale{1.0} 
\begingroup%
  \makeatletter%
  \providecommand\color[2][]{%
    \errmessage{(Inkscape) Color is used for the text in Inkscape, but the package 'color.sty' is not loaded}%
    \renewcommand\color[2][]{}%
  }%
  \providecommand\transparent[1]{%
    \errmessage{(Inkscape) Transparency is used (non-zero) for the text in Inkscape, but the package 'transparent.sty' is not loaded}%
    \renewcommand\transparent[1]{}%
  }%
  \providecommand\rotatebox[2]{#2}%
  \newcommand*\fsize{\dimexpr\f@size pt\relax}%
  \newcommand*\lineheight[1]{\fontsize{\fsize}{#1\fsize}\selectfont}%
  \ifx\svgwidth\undefined%
    \setlength{\unitlength}{151.74902148bp}%
    \ifx\svgscale\undefined%
      \relax%
    \else%
      \setlength{\unitlength}{\unitlength * \real{\svgscale}}%
    \fi%
  \else%
    \setlength{\unitlength}{\svgwidth}%
  \fi%
  \global\let\svgwidth\undefined%
  \global\let\svgscale\undefined%
  \makeatother%
  \begin{picture}(1,0.50821473)%
    \lineheight{1}%
    \setlength\tabcolsep{0pt}%
    \put(0,0){\includegraphics[width=\unitlength,page=1]{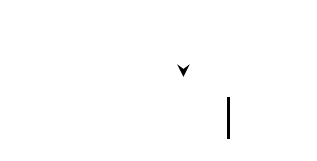}}%
    \put(0.7955094,0.14726754){\color[rgb]{0,0,0}\makebox(0,0)[lt]{\smash{\begin{tabular}[t]{l}$\psi_2^2$\end{tabular}}}}%
    \put(0,0){\includegraphics[width=\unitlength,page=2]{41_face_calc_3.pdf}}%
    \put(0.49016421,0.20918238){\color[rgb]{0,0,0}\makebox(0,0)[lt]{\smash{\begin{tabular}[t]{l}$\alpha$\end{tabular}}}}%
    \put(0.72739826,0.20918238){\color[rgb]{0,0,0}\makebox(0,0)[lt]{\smash{\begin{tabular}[t]{l}$\overline{\alpha}$\end{tabular}}}}%
    \put(0.07006304,0.20918238){\color[rgb]{0,0,0}\makebox(0,0)[lt]{\smash{\begin{tabular}[t]{l}$\overline{\alpha}$\end{tabular}}}}%
    \put(0.69886318,0){\color[rgb]{0,0,0}\makebox(0,0)[lt]{\smash{\begin{tabular}[t]{l}$T^{|v_{f(1)}|}$\end{tabular}}}}%
    \put(0.4212413,0){\color[rgb]{0,0,0}\makebox(0,0)[lt]{\smash{\begin{tabular}[t]{l}$T^{|v_{f(2)}|}$\end{tabular}}}}%
    \put(-0.04,0.00254818){\color[rgb]{0,0,0}\makebox(0,0)[lt]{\smash{\begin{tabular}[t]{l}$T^{|v_{f(l)}|}$\end{tabular}}}}%
    \put(0.78000464,0.44813403){\color[rgb]{0,0,0}\makebox(0,0)[lt]{\smash{\begin{tabular}[t]{l}$T^{|v_{f(1)}|}$\end{tabular}}}}%
    \put(0.53327467,0.44813403){\color[rgb]{0,0,0}\makebox(0,0)[lt]{\smash{\begin{tabular}[t]{l}$T^{|v_{f(2)}|}$\end{tabular}}}}%
    \put(0.17792582,0.44813403){\color[rgb]{0,0,0}\makebox(0,0)[lt]{\smash{\begin{tabular}[t]{l}$T^{|v_{f(l)}|}$\end{tabular}}}}%
    \put(0.37816847,0.18261716){\color[rgb]{0,0,0}\rotatebox{90}{\makebox(0,0)[lt]{\smash{\begin{tabular}[t]{l}...\end{tabular}}}}}%
  \end{picture}%
\endgroup%
 };
	\end{tikzpicture}
 \, .
\end{equation}
The last step in particular involves taking a planar projection of the ribbon graph as in Figure~\ref{fig:Pf_ito_ZRT} and straightening its coupons so that they can be unambiguously read as morphisms in $\mcC$.
This step is the most tedious and requires some case analysis, which is listed in Table~\ref{tab:omega-i}.
In its first column we list the 6 different configurations of the pair of edges $v_{f(i-1)}\xra{e_{i-1}}v_{f(i)}\xra{e_i}v_{f(i+1)}$ at vertex $v_{f(i)}$ and the second column depicts the corresponding $\a$/$\abar$-labelled junction in the stratification defining $P_f$ in terms of the defect TFT $\zdef_\mcC$ as in Figure~\ref{fig:Pf_ito_Zdef}.
The third column then shows the corresponding morphism
\begin{equation}
\omega_i\colon T^\pm T^\pm\ra T^\pm T^\pm
\end{equation}
replacing the junction in the projection of the ribbon graph.
The fourth column introduces the morphisms $h^\psi_i$ which are used to handle the $\psi^2$-insertion on the horizontal surface, for example in the configuration of the first row, $\omega_i \circ h^\psi_i$ replaces
\begin{equation}
\label{eq:h-psi-i_example}

	\begin{tikzpicture}[baseline={([yshift=-.5ex]current bounding box.center)}]
	\node at (0,0) {\def\svgscale{1.0} 
\begingroup%
  \makeatletter%
  \providecommand\color[2][]{%
    \errmessage{(Inkscape) Color is used for the text in Inkscape, but the package 'color.sty' is not loaded}%
    \renewcommand\color[2][]{}%
  }%
  \providecommand\transparent[1]{%
    \errmessage{(Inkscape) Transparency is used (non-zero) for the text in Inkscape, but the package 'transparent.sty' is not loaded}%
    \renewcommand\transparent[1]{}%
  }%
  \providecommand\rotatebox[2]{#2}%
  \newcommand*\fsize{\dimexpr\f@size pt\relax}%
  \newcommand*\lineheight[1]{\fontsize{\fsize}{#1\fsize}\selectfont}%
  \ifx\svgwidth\undefined%
    \setlength{\unitlength}{61.61355153bp}%
    \ifx\svgscale\undefined%
      \relax%
    \else%
      \setlength{\unitlength}{\unitlength * \real{\svgscale}}%
    \fi%
  \else%
    \setlength{\unitlength}{\svgwidth}%
  \fi%
  \global\let\svgwidth\undefined%
  \global\let\svgscale\undefined%
  \makeatother%
  \begin{picture}(1,1.10556958)%
    \lineheight{1}%
    \setlength\tabcolsep{0pt}%
    \put(0,0){\includegraphics[width=\unitlength,page=1]{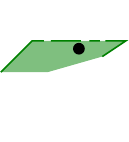}}%
    \put(0.4267082,0.66678213){\color[rgb]{0,0,0}\makebox(0,0)[lt]{\smash{\begin{tabular}[t]{l}$\psi$\end{tabular}}}}%
    \put(0,0){\includegraphics[width=\unitlength,page=2]{41_h-psi-i_example.pdf}}%
    \put(0.41539182,0.42732339){\color[rgb]{0,0,0}\makebox(0,0)[lt]{\smash{\begin{tabular}[t]{l}$\overline{\alpha}$\end{tabular}}}}%
  \end{picture}%
\endgroup%
 };
	\end{tikzpicture}
 \, ,
\end{equation}
as is incidentally already evident in~\eqref{eq:face_calc}.

\medskip

Finally, let us denote:
\begin{equation}
\label{eq:coev-1_ev-1}
    \coev_1 :=
\begin{cases}
\coev_T \otimes \id_{T^{|v_{f(1)}|}} \, ,\text{ if $|e_1|=+1$}\\
\id_{T^{|v_{f(1)}|}} \otimes \coevt_T \, ,\text{ if $|e_1|=-1$}
\end{cases}  , ~
    \ev_1 :=
\begin{cases}
\ev_T \otimes \id_{T^{|v_{f(1)}|}} \, ,\text{ if $|e_1|=+1$}\\
\id_{T^{|v_{f(1)}|}} \otimes \evt_T \, ,\text{ if $|e_1|=-1$}
\end{cases}
\end{equation}
One then defines the morphism $\Omega_f$ as follows:
\begin{subequations}
\label{eq:Omega_f}
\begin{align} \nonumber
&
\Omega_f = \phi^2 \\
\label{eq:Omega_f:include}
&
\cdot
(\id_{X^{|v_1|}} \!\otimes\!\cdots\!\otimes\!
 \bigg(\psi^{|v_{f(l)}| \, 1/2}_{\overline{s}(e_l),\overline{t}(e_{l-1})}\circ\imath^{|v_{f(l)}|}\bigg) \!\otimes\!\cdots\!\otimes\!
 \bigg(\psi^{|v_{f(1)}| \, 1/2}_{\overline{s}(e_1),\overline{t}(e_l)}\circ\imath^{|v_{f(1)}|}\bigg) \!\otimes\!\cdots\!\otimes\!
 \id_{X^{|v_n|}}
)\\
\label{eq:Omega_f:close_T_loop}
&\circ
(\id_{X^{|v_1|}} \otimes \cdots \otimes
 \ev_1 \otimes \cdots \otimes
 \id_{X^{|v_n|}}
)
\\
\label{eq:Omega_f:move_Tstar_back}
&\circ a_{e_1}^{-|e_1|}(T^*)\\
\label{eq:Omega_f:insert_psi_and_last_move-T}
&\circ
\bigg( \id_{X^{|v_1|}} \otimes \cdots \otimes
   (\omega_1 \circ (h^\psi_1)^2) \otimes \cdots \otimes
   \id_{X^{|v_n|}}
\bigg)
 \circ a_{e_l}^{|e_l|}(T)\\
 \label{eq:Omega_f:move_Tstar_away}
&\circ a_{e_1}^{|e_1|}(T^*)\\
\label{eq:Omega_f:move_T}
&\circ
\left(
\prod_{i=2}^l
 (\id_{X^{|v_1|}} \otimes \cdots \otimes
   \omega_i \otimes \cdots \otimes
   \id_{X^{|v_n|}}
 ) \circ 
 a_{e_{i-1}}^{|e_{i-1}|}(T)
\right)\\
\label{eq:Omega_f:create_T-Tstar}
&\circ
(\id_{X^{|v_1|}} \otimes \cdots \otimes
 \coev_1 \otimes \cdots \otimes
 \id_{X^{|v_n|}}
)\\
\label{eq:Omega_f:project}
&\circ
(\id_{X^{|v_1|}}  \!\otimes\!\cdots\!\otimes\!
 \bigg( \pi^{|v_{f(l)}|}\circ\psi^{|v_{f(l)}| \, 1/2}_{\overline{s}(e_l),\overline{t}(e_{l-1})} \bigg) \!\otimes\!\cdots\!\otimes\!
 \bigg( \pi^{|v_{f(1)}|}\circ\psi^{|v_{f(1)}| \, 1/2}_{\overline{s}(e_1),\overline{t}(e_l)} \bigg) \!\otimes\!\cdots\!\otimes\!
 \id_{X^{|v_n|}}
)
\end{align}
\end{subequations}
A concrete example of this expression is written out in Figure~\ref{fig:T2_example_Omega-f} below.
Note that steps~\eqref{eq:Omega_f:move_Tstar_back} and~\eqref{eq:Omega_f:move_Tstar_away} are somewhat redundant: all they do is move a $T^*$ puncture away and back along $e_1$.
This is to avoid even more case analysis which would otherwise be needed in step~\eqref{eq:Omega_f:insert_psi_and_last_move-T}.

\begin{rem}
\label{rem:MCG_action_on_the_model}
The expressions~\eqref{eq:Omega_e} and~\eqref{eq:Omega_f}, as well as~\eqref{eq:V_as_hom} of the state space $V$, evidently depend on the choice of the diffeomorphism $\varphi$ into the standard surface, as it dictates the order of the vertices of $\Gamma$ and the expression of the map $a_e$ for a given edge $e\subseteq\Gamma$.
A different choice $\varphi'$ of this diffeomorphism yields a different form of the state space $V'$ and an isomorphism $\rho(\varphi'\circ\varphi^{-1})\colon V \xra{\sim} V'$, obtained from the action of the mapping class group element of $\varphi'\circ\varphi^{-1}$ on the standard surface $\Sigma_{g,n}$.
For each component of the graph $c\subseteq\Gamma$, the corresponding projector $P_c'$ built from $\varphi'$ is related to $P_c$ by conjugation: $P_c' = \rho(\varphi'\circ\varphi^{-1}) \circ P_c \circ \rho^{-1}(\varphi'\circ\varphi^{-1})$.
The model $(V',H')$, where the Hamiltonian $H'$ is obtained by similarly conjugating~\eqref{eq:hamiltonian}, is therefore equivalent to $(V,H)$.
\end{rem}

\subsection{Computation on the torus}
\label{subsec:TorusComputation}
In this section we look at a concrete example, illustrating the expression~\eqref{eq:Pc_ito_homs} for the edge and face projectors, and in particular the formulas~\eqref{eq:Omega_e} and~\eqref{eq:Omega_f} for the morphisms $\Omega_c$ in terms of which they are defined.
\begin{figure}
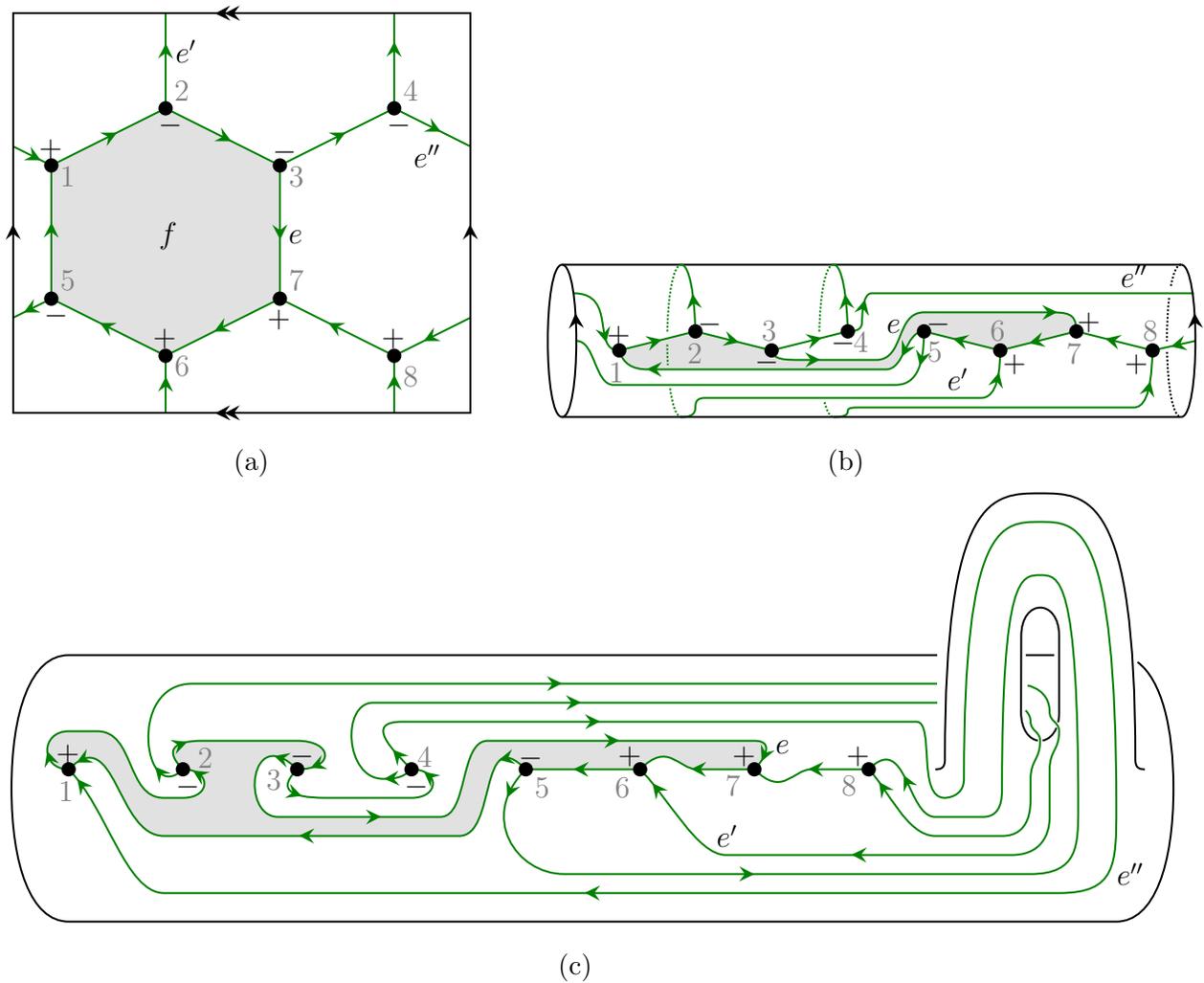

	\captionsetup{format=plain, indention=0.5cm}
	\centerline{
		\begin{subfigure}[b]{0.5\textwidth}
			
	\begin{tikzpicture}[baseline={([yshift=-.5ex]current bounding box.center)}]
	\node at (0,0) {\def\svgscale{1.0} \import{figures/}{42_T2_original.pdf_tex} };
	\end{tikzpicture}

			\caption{}
			\label{fig:T2_original}
		\end{subfigure}
		\begin{subfigure}[b]{0.6\textwidth}
			
	\begin{tikzpicture}[baseline={([yshift=-.5ex]current bounding box.center)}]
	\node at (0,0) {\def\svgscale{1.0} \import{figures/}{42_T2_pre-standard.pdf_tex} };
	\end{tikzpicture}

			\caption{}
			\label{fig:T2_pre-standard}
		\end{subfigure}
	}
	\centerline{
		\begin{subfigure}[b]{1.1\textwidth}
			
	\begin{tikzpicture}[baseline={([yshift=-.5ex]current bounding box.center)}]
	\node at (0,0) {\def\svgscale{1.0} \import{figures/}{42_T2_standard.pdf_tex} };
	\end{tikzpicture}

			\caption{}
			\label{fig:T2_standard}
		\end{subfigure}
	}
	\caption{
(a) Example of $T^2$ with 1-skeleton. (b) Intermediate step in the mapping to the standard surface $\Sigma_{1,8}$. Next, one embeds this figure as a doughnut in $\opR^3$ and further deforms it to arrive at (c). Note that the punctures in (c) are aligned such that each of them has a neighbourhood as in~\eqref{eq:vertex_neighb}.}
	\label{fig:T2_example}
\end{figure}

\medskip

Let $(\Sigma,\Gamma)$ be the 2-torus with the admissible 1-skeleton $\Gamma$ as shown in Figure~\ref{fig:T2_original}, which has 8 vertices, 12 edges and 4 faces; we will consider the projectors of three edges, marked as $e$, $e'$, $e''$ in the figure, and of the face $f$, which in the figure is shaded.
As explained below~\eqref{eq:V_as_hom}, one starts by picking a homeomorphism into the standard surface $\varphi\colon\Sigma\ra\Sigma_{g=1,n=8}$.
The image $\varphi(\Gamma)$ for our choice of $\varphi$ is as shown in Figure~\ref{fig:T2_standard}.
The order of the vertices in Figure~\ref{fig:T2_example} is the one imposed by $\varphi$, which means that we can denote them by $v_1,\dots,v_8$ with $|v_i|=+$ for $i=1,6,7,8$ and $|v_i|=-$ for $i=2,3,4,5$.
Consequently, the state space in the form~\eqref{eq:V_as_hom} is:
\begin{equation}
    V = \mcC(\opid,XX^*X^*X^*X^*XXXL) \, .
\end{equation}

\begin{figure}[p]
\captionsetup{format=plain, indention=0.5cm}
\centerline{
\begin{subfigure}[b]{1.0\textwidth}
$\Omega_{e} = 
	\begin{tikzpicture}[baseline={([yshift=-.5ex]current bounding box.center)}]
	\node at (0,0) {\def\svgscale{1.0} \import{figures/}{42_Omega_e1.pdf_tex} };
	\end{tikzpicture}
$
\caption{}
\label{fig:Omega_e1}
\end{subfigure}
}
\centerline{
\begin{subfigure}[b]{1.0\textwidth}
$\Omega_{e'} = 
	\begin{tikzpicture}[baseline={([yshift=-.5ex]current bounding box.center)}]
	\node at (0,0) {\def\svgscale{1.0} \import{figures/}{42_Omega_e2.pdf_tex} };
	\end{tikzpicture}
$
\caption{}
\label{fig:Omega_e2}
\end{subfigure}
}
\centerline{
\begin{subfigure}[b]{1.0\textwidth}
$\Omega_{e''} = 
	\begin{tikzpicture}[baseline={([yshift=-.5ex]current bounding box.center)}]
	\node at (0,0) {\def\svgscale{1.0} \import{figures/}{42_Omega_e3.pdf_tex} };
	\end{tikzpicture}
$
\caption{}
\label{fig:Omega_e3}
\end{subfigure}
}
\caption{
The map $\Omega_e$ in \eqref{eq:Omega_e} for the edges $e$, $e'$, $e''$ in the $T^2$ example from Figure~\ref{fig:T2_standard}.
}
\label{fig:T2_example_Omega-e}
\end{figure}

\begin{figure}
\captionsetup{format=plain, indention=0.5cm}
\centerline{
$\Omega_f = \phi^2 \cdot \hspace{-12pt} 
	\begin{tikzpicture}[baseline={([yshift=-.5ex]current bounding box.center)}]
	\node at (0,0) {\def\svgscale{1.0} \import{figures/}{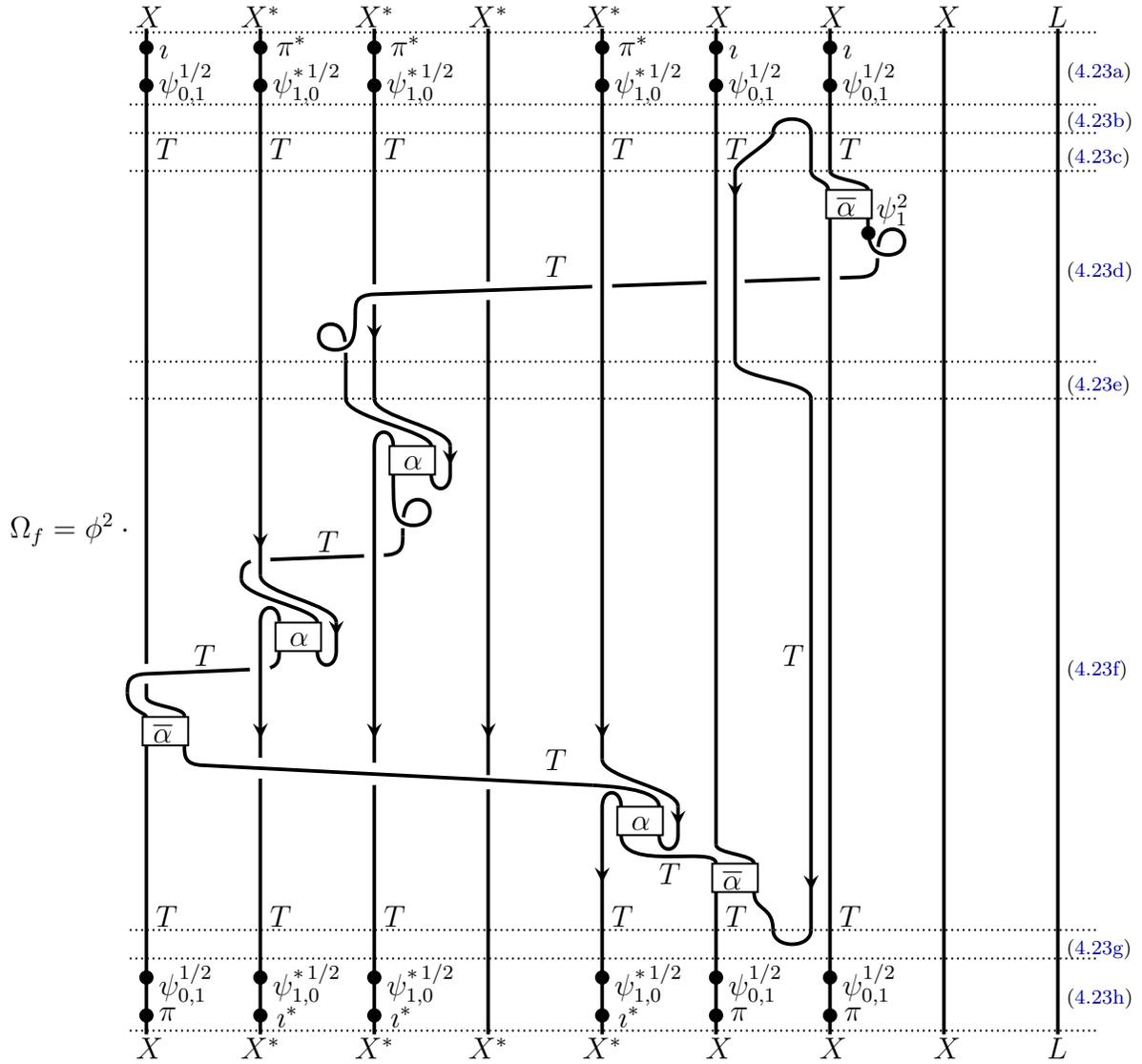} };
	\end{tikzpicture}
$
}
\caption{
The map $\Omega_f$ in \eqref{eq:Omega_e} for the face $f$ in Figure~\ref{fig:T2_standard}.
}
\label{fig:T2_example_Omega-f}
\end{figure}
We now turn to describing the morphisms $P_{e}$, $P_{e'}$, $P_{e''}$, $P_f$ in terms of the expression~\eqref{eq:Pc_ito_homs}.
In particular we need to untangle the definition of the morphisms
\begin{equation}
\Omega_{e} \, , ~ \Omega_{e'} \, , ~ \Omega_{e''} \, , ~ \Omega_f ~\in~ \End_\mcC(XX^*X^*X^*X^*XXXL) 
\end{equation}
laid out in~\eqref{eq:Omega_e} and~\eqref{eq:Omega_f}:
\begin{itemize}
\item
As shown in Figure~\ref{fig:T2_standard}, the edge $e$ starts at $v_3$ and ends at $v_7$ with $s(e)=1$ and $t(e)=1$.
As $e$ passes neither through nor over the arc of the handle of the standard torus, it is of the 1st type shown in Figure~\ref{fig:paths}.
The morphism $a_{e}(A)$ featuring in~\eqref{eq:Omega_e:transport} is therefore obtained by twisting the $A$-labelled strand of the idempotent 
\eqref{eq:K-xA-L_idempotent}
and braiding it with the $X^\pm$-labelled strands.
The precise expression can be read off to be the one in Figure~\ref{fig:Omega_e1}.
\item
The edge $e'$ starts at $v_2$ and ends at $v_6$ with $s(e')=2$ and $t(e')=2$.
It passes through the handle of the standard torus and therefore is of the 2nd type in Figure~\ref{fig:paths}, meaning that $a_{e'}(A)$ will contain the projection/inclusion morphisms~\eqref{eq:L-C_proj_incl}.
The expression for $\Omega_{e'}$ is then the one in Figure~\ref{fig:Omega_e2}.
\item
The edge $e''$ between $v_4$ and $v_1$, with $s(e'')=1$ and $t(e'')=2$, passes over the handle of the standard torus and is of the 3rd type in Figure~\ref{fig:paths}, therefore $a_{e''}(A)$ contains the halfbraiding~\eqref{eq:L-halfbraiding}.
The expression for $\Omega_{e''}$ is shown in Figure~\ref{fig:Omega_e3}.
\item
Using the convention below~\eqref{eq:vertex-edge_chain}, the vertex-edge chain bounding the face $f$ is read from Figure~\ref{fig:T2_example} to be
\begin{equation}
\label{eq:Omega_f_vertex-edge_chain}
v_7 \xra{e_1} v_6 \xra{e_2} v_{5} \xra{e_3} v_1 \xra{e_4} v_2 \xra{e_5} v_3 \xra{e_6=e} v_7 \, ,
\end{equation}
i.e.\ $f(1)=7$, $f(2) = 6$, ..., $f(6) = 3$.
As the directions of the corresponding edges of $\Gamma$ are the same as in the chain~\eqref{eq:Omega_f_vertex-edge_chain}, we have $|e_i|=+1$ for all $i=1,\dots,6$, and so~\eqref{eq:coev-1_ev-1} yields $\coev_1=\coev_T \otimes \id_T$ and $\ev_1 = \ev_T \otimes \id_T$.
The morphisms $\omega_i$, $i=1,\dots,6$ are picked according to Table~\ref{tab:omega-i}:
the configuration of the face $f$ at $\xra{e_1} v_6 \xra{e_2}$ is the one in the first row, at $\xra{e_2} v_5 \xra{e_3}$ is the one in the fourth row, ... , at $\xra{e_6} v_7 \xra{e_1}$ is the one in the first row (the latter also determines $h_1^\psi = \id_T \otimes \psi_1$.).
The morphisms $a_{e_i}^{\pm 1}(T^\pm)$, $i=1,\dots,6$, are determined similarly as when defining $\Omega_{e}$, $\Omega_{e'}$, $\Omega_{e''}$ above.
The final expression for $\Omega_f$ is shown in Figure~\ref{fig:T2_example_Omega-f}.
\end{itemize}

\section{Examples}
\label{section:ExampleModels}
In Section~\ref{subsec:ExampleInternalData} we reviewed three examples of internal Levin--Wen data: one in an arbitrary MFC $\mcC$ which is obtained from a condensable algebra in it, and two in the trivial MFC $\mcC=\Vect_\opk$ which are obtained from a spherical fusion category $\mcS$ and a Hopf algebra $K$ respectively.
In this section we explain them in more detail, in particular how the Hamiltonian~\eqref{eq:hamiltonian} simplifies in each of these cases.
As a reference example, for the latter two instances we specialise the computation from Section~\ref{subsec:TorusComputation}.

\subsection{Condensations}
\label{subsec:condensations}
Let $A$ be a condensable algebra in a MFC $\mcC$ and consider the internal Levin--Wen datum as in Section~\ref{subsubsec:condensable_algebras}.
In this case, as one has $X=T=A\cong A^*$ (the latter isomorphism follows from $A$ being symmetric), the state-space of the model in the form~\eqref{eq:V_as_hom} reads
\begin{equation}
V = \mcC(\opid, A^{\otimes n} \otimes L^{\otimes g}) \, ,
\end{equation}
where $g$ is the genus of the surface $\Sigma$ and $n$ is the number of vertices of the graph $\Gamma\subseteq\Sigma$.

\medskip

Since in this particular Levin--Wen datum one has $\pi=\iota=\id_A$, \eqref{eq:Omega_v} yields $P_v = \id_V$ for all vertices $v\in \Gamma_0$.
This also allows us to simplify the expressions~\eqref{eq:Pc_ito_homs}, \eqref{eq:Omega_e} for the  projector $P_e$ associated to an edge $v_j \xra{e} v_k$ of $\Gamma$ by removing the lines~\eqref{eq:Omega_e:include}, \eqref{eq:Omega_e:project}.
Since $\psi=\id_A$, one can also disregard $\psi^{-2}$ in~\eqref{eq:Omega_e:act}.
For an edge $v_k \xra{e} v_{k+1}$ which does not braid with other punctures or wraps around the handles of $\Sigma$, $\Omega_e$ is simply the idempotent~\eqref{eq:K-xA-L_idempotent} projecting the tensor factor $A\otimes A$ at the position $(k,k+1)$ of $A^{\otimes n} \otimes L^{\otimes g}$ onto a single $A$-factor.
In other words, the image of $P_e$ is obtained by fusing the two $A$-punctures, so that one has
\begin{equation}
\label{eq:condensation_P-e}
\im P_e \cong \mcC(\opid, A^{\otimes (n-1)} \otimes L^{\otimes g}) \, .
\end{equation}

One can similarly simplify the expressions~\eqref{eq:Pc_ito_homs}, \eqref{eq:Omega_f} for the projector $P_f$ associated to a face bounding the vertex-edge chain $v_{f(1)} \xra{e_1} v_{f(2)} \cdots \xra{e_{l-1}} v_{f(l)}$.
Using the expressions for the $\a$/$\abar$-morphisms in the orbifold datum~\eqref{eq:condensation_orb_datum} and the identities~\eqref{eq:symm_Delta-sep_FA}, \eqref{eq:commutative_alg} one has for example (cf.\ \eqref{eq:face_calc})
\begin{equation}

	\begin{tikzpicture}[baseline={([yshift=-.5ex]current bounding box.center)}]
	\node at (0,0) {\def\svgscale{1.0} 
\begingroup%
  \makeatletter%
  \providecommand\color[2][]{%
    \errmessage{(Inkscape) Color is used for the text in Inkscape, but the package 'color.sty' is not loaded}%
    \renewcommand\color[2][]{}%
  }%
  \providecommand\transparent[1]{%
    \errmessage{(Inkscape) Transparency is used (non-zero) for the text in Inkscape, but the package 'transparent.sty' is not loaded}%
    \renewcommand\transparent[1]{}%
  }%
  \providecommand\rotatebox[2]{#2}%
  \newcommand*\fsize{\dimexpr\f@size pt\relax}%
  \newcommand*\lineheight[1]{\fontsize{\fsize}{#1\fsize}\selectfont}%
  \ifx\svgwidth\undefined%
    \setlength{\unitlength}{135.85810598bp}%
    \ifx\svgscale\undefined%
      \relax%
    \else%
      \setlength{\unitlength}{\unitlength * \real{\svgscale}}%
    \fi%
  \else%
    \setlength{\unitlength}{\svgwidth}%
  \fi%
  \global\let\svgwidth\undefined%
  \global\let\svgscale\undefined%
  \makeatother%
  \begin{picture}(1,0.56476946)%
    \lineheight{1}%
    \setlength\tabcolsep{0pt}%
    \put(0,0){\includegraphics[width=\unitlength,page=1]{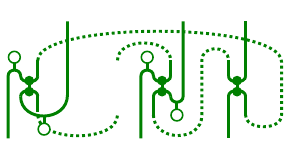}}%
    \put(0.77689805,-0){\color[rgb]{0,0.50196078,0}\makebox(0,0)[lt]{\smash{\begin{tabular}[t]{l}$A$\end{tabular}}}}%
    \put(0.46775232,-0){\color[rgb]{0,0.50196078,0}\makebox(0,0)[lt]{\smash{\begin{tabular}[t]{l}$A$\end{tabular}}}}%
    \put(-0.00112134,0.00284624){\color[rgb]{0,0.50196078,0}\makebox(0,0)[lt]{\smash{\begin{tabular}[t]{l}$A$\end{tabular}}}}%
    \put(0.83762283,0.50055092){\color[rgb]{0,0.50196078,0}\makebox(0,0)[lt]{\smash{\begin{tabular}[t]{l}$A$\end{tabular}}}}%
    \put(0.61680463,0.50055092){\color[rgb]{0,0.50196078,0}\makebox(0,0)[lt]{\smash{\begin{tabular}[t]{l}$A$\end{tabular}}}}%
    \put(0.19761593,0.50055092){\color[rgb]{0,0.50196078,0}\makebox(0,0)[lt]{\smash{\begin{tabular}[t]{l}$A$\end{tabular}}}}%
    \put(0.42128036,0.20397734){\color[rgb]{0,0,0}\rotatebox{90}{\makebox(0,0)[lt]{\smash{\begin{tabular}[t]{l}...\end{tabular}}}}}%
  \end{picture}%
\endgroup%
 };
	\end{tikzpicture}
 =

	\begin{tikzpicture}[baseline={([yshift=-.5ex]current bounding box.center)}]
	\node at (0,0) {\def\svgscale{1.0} 
\begingroup%
  \makeatletter%
  \providecommand\color[2][]{%
    \errmessage{(Inkscape) Color is used for the text in Inkscape, but the package 'color.sty' is not loaded}%
    \renewcommand\color[2][]{}%
  }%
  \providecommand\transparent[1]{%
    \errmessage{(Inkscape) Transparency is used (non-zero) for the text in Inkscape, but the package 'transparent.sty' is not loaded}%
    \renewcommand\transparent[1]{}%
  }%
  \providecommand\rotatebox[2]{#2}%
  \newcommand*\fsize{\dimexpr\f@size pt\relax}%
  \newcommand*\lineheight[1]{\fontsize{\fsize}{#1\fsize}\selectfont}%
  \ifx\svgwidth\undefined%
    \setlength{\unitlength}{125.16732283bp}%
    \ifx\svgscale\undefined%
      \relax%
    \else%
      \setlength{\unitlength}{\unitlength * \real{\svgscale}}%
    \fi%
  \else%
    \setlength{\unitlength}{\svgwidth}%
  \fi%
  \global\let\svgwidth\undefined%
  \global\let\svgscale\undefined%
  \makeatother%
  \begin{picture}(1,0.61300751)%
    \lineheight{1}%
    \setlength\tabcolsep{0pt}%
    \put(0,0){\includegraphics[width=\unitlength,page=1]{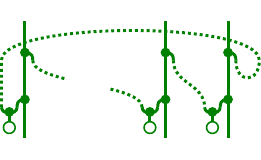}}%
    \put(0.84172876,0){\color[rgb]{0,0.50196078,0}\makebox(0,0)[lt]{\smash{\begin{tabular}[t]{l}$A$\end{tabular}}}}%
    \put(0.84172876,0.54330395){\color[rgb]{0,0.50196078,0}\makebox(0,0)[lt]{\smash{\begin{tabular}[t]{l}$A$\end{tabular}}}}%
    \put(0.3938027,0.28840193){\color[rgb]{0,0,0}\rotatebox{169.20208}{\makebox(0,0)[lt]{\smash{\begin{tabular}[t]{l}...\end{tabular}}}}}%
    \put(0.60205005,0){\color[rgb]{0,0.50196078,0}\makebox(0,0)[lt]{\smash{\begin{tabular}[t]{l}$A$\end{tabular}}}}%
    \put(0.60205005,0.54330395){\color[rgb]{0,0.50196078,0}\makebox(0,0)[lt]{\smash{\begin{tabular}[t]{l}$A$\end{tabular}}}}%
    \put(0.06277192,0){\color[rgb]{0,0.50196078,0}\makebox(0,0)[lt]{\smash{\begin{tabular}[t]{l}$A$\end{tabular}}}}%
    \put(0.06277192,0.54330395){\color[rgb]{0,0.50196078,0}\makebox(0,0)[lt]{\smash{\begin{tabular}[t]{l}$A$\end{tabular}}}}%
  \end{picture}%
\endgroup%
 };
	\end{tikzpicture}
 \, ,
\end{equation}
i.e.\ in this instance of internal Levin--Wen datum the projector of the face $f$ is the composition of the projectors of the bounding edges:
\begin{equation}
P_f = P_{e_{l-1}} \circ \cdots \circ P_{e_1} \, .
\end{equation}
Consequently, \eqref{eq:V0} implies that the ground state space is the common image of all the edge projectors:
\begin{equation}
\label{eq:V0_condensation_prod_edges}
V_0 = \im \prod_{e\in\Gamma_1} P_{e} \, .
\end{equation}
We now note that the argument leading to~\eqref{eq:condensation_P-e} can be successively repeated for a collection of edges $S\subseteq \Gamma_1$ forming a spanning tree of $\Gamma$.
On the surface $\Sigma$ this has the effect of fusing all $A$-labelled punctures into a single one, which leaves a (non-trivalent) graph on $\Sigma$ having one vertex and $|\Gamma_1\setminus S|$ edges.
One therefore has
\begin{equation}
\label{eq:V0_condensation_subspace}
V_0 ~ \subseteq ~ \im \prod_{e\in S\subseteq\Gamma_1} P_e \cong \mcC(\opid, A \otimes L^{\otimes g}) \, .
\end{equation}
Since $A$ is haploid, for $g=0$, i.e.\ when $\Sigma=S^2$, this already implies $\dim V_0 = 1$, meaning that the ground state is not degenerate.
This is in accord with Theorem~\ref{thm:V0_equals_Zorb} and~\eqref{eq:Zorb_equiv_ZRT}, as a TFT of Reshetikhin--Turaev type always assigns a 1-dimensional state space to the 2-sphere.
On the 2-torus $\Sigma=T^2$, $V_0$ is in general a proper subspace in~\eqref{eq:V0_condensation_subspace}.
To see this, let us explicitly compute the image on the right-hand side of~\eqref{eq:V0_condensation_prod_edges}: we collapse all vertices into a single $A$-puncture as described above taking the initial admissible skeleton to be the one in Figure~\ref{fig:T2_example}.
This results in a non-trivalent graph on $T^2$ with some parallel edges, which can be subsequently removed, as they result in applying the same idempotent twice when computing the image~\eqref{eq:V0_condensation_prod_edges}:
\begin{equation}

	\begin{tikzpicture}[baseline={([yshift=-.5ex]current bounding box.center)}]
	\node at (0,0) {\def\svgscale{1.0} 
\begingroup%
  \makeatletter%
  \providecommand\color[2][]{%
    \errmessage{(Inkscape) Color is used for the text in Inkscape, but the package 'color.sty' is not loaded}%
    \renewcommand\color[2][]{}%
  }%
  \providecommand\transparent[1]{%
    \errmessage{(Inkscape) Transparency is used (non-zero) for the text in Inkscape, but the package 'transparent.sty' is not loaded}%
    \renewcommand\transparent[1]{}%
  }%
  \providecommand\rotatebox[2]{#2}%
  \newcommand*\fsize{\dimexpr\f@size pt\relax}%
  \newcommand*\lineheight[1]{\fontsize{\fsize}{#1\fsize}\selectfont}%
  \ifx\svgwidth\undefined%
    \setlength{\unitlength}{106.94069291bp}%
    \ifx\svgscale\undefined%
      \relax%
    \else%
      \setlength{\unitlength}{\unitlength * \real{\svgscale}}%
    \fi%
  \else%
    \setlength{\unitlength}{\svgwidth}%
  \fi%
  \global\let\svgwidth\undefined%
  \global\let\svgscale\undefined%
  \makeatother%
  \begin{picture}(1,0.88146521)%
    \lineheight{1}%
    \setlength\tabcolsep{0pt}%
    \put(0,0){\includegraphics[width=\unitlength,page=1]{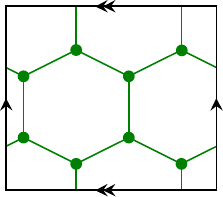}}%
    \put(0.35186067,0.71802926){\color[rgb]{0,0,0}\makebox(0,0)[lt]{\smash{\begin{tabular}[t]{l}$e'$\end{tabular}}}}%
    \put(0.81473429,0.50075604){\color[rgb]{0,0,0}\makebox(0,0)[lt]{\smash{\begin{tabular}[t]{l}$e''$\end{tabular}}}}%
  \end{picture}%
\endgroup%
 };
	\end{tikzpicture}
 \rightsquigarrow

	\begin{tikzpicture}[baseline={([yshift=-.5ex]current bounding box.center)}]
	\node at (0,0) {\def\svgscale{1.0} 
\begingroup%
  \makeatletter%
  \providecommand\color[2][]{%
    \errmessage{(Inkscape) Color is used for the text in Inkscape, but the package 'color.sty' is not loaded}%
    \renewcommand\color[2][]{}%
  }%
  \providecommand\transparent[1]{%
    \errmessage{(Inkscape) Transparency is used (non-zero) for the text in Inkscape, but the package 'transparent.sty' is not loaded}%
    \renewcommand\transparent[1]{}%
  }%
  \providecommand\rotatebox[2]{#2}%
  \newcommand*\fsize{\dimexpr\f@size pt\relax}%
  \newcommand*\lineheight[1]{\fontsize{\fsize}{#1\fsize}\selectfont}%
  \ifx\svgwidth\undefined%
    \setlength{\unitlength}{107.01394016bp}%
    \ifx\svgscale\undefined%
      \relax%
    \else%
      \setlength{\unitlength}{\unitlength * \real{\svgscale}}%
    \fi%
  \else%
    \setlength{\unitlength}{\svgwidth}%
  \fi%
  \global\let\svgwidth\undefined%
  \global\let\svgscale\undefined%
  \makeatother%
  \begin{picture}(1,0.88086188)%
    \lineheight{1}%
    \setlength\tabcolsep{0pt}%
    \put(0,0){\includegraphics[width=\unitlength,page=1]{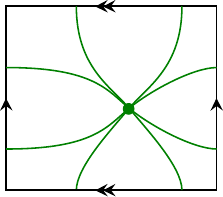}}%
    \put(0.35994088,0.71753779){\color[rgb]{0,0,0}\makebox(0,0)[lt]{\smash{\begin{tabular}[t]{l}$e'$\end{tabular}}}}%
    \put(0.82364718,0.47237953){\color[rgb]{0,0,0}\makebox(0,0)[lt]{\smash{\begin{tabular}[t]{l}$e''$\end{tabular}}}}%
  \end{picture}%
\endgroup%
 };
	\end{tikzpicture}
 \rightsquigarrow

	\begin{tikzpicture}[baseline={([yshift=-.5ex]current bounding box.center)}]
	\node at (0,0) {\def\svgscale{1.0} 
\begingroup%
  \makeatletter%
  \providecommand\color[2][]{%
    \errmessage{(Inkscape) Color is used for the text in Inkscape, but the package 'color.sty' is not loaded}%
    \renewcommand\color[2][]{}%
  }%
  \providecommand\transparent[1]{%
    \errmessage{(Inkscape) Transparency is used (non-zero) for the text in Inkscape, but the package 'transparent.sty' is not loaded}%
    \renewcommand\transparent[1]{}%
  }%
  \providecommand\rotatebox[2]{#2}%
  \newcommand*\fsize{\dimexpr\f@size pt\relax}%
  \newcommand*\lineheight[1]{\fontsize{\fsize}{#1\fsize}\selectfont}%
  \ifx\svgwidth\undefined%
    \setlength{\unitlength}{106.94066457bp}%
    \ifx\svgscale\undefined%
      \relax%
    \else%
      \setlength{\unitlength}{\unitlength * \real{\svgscale}}%
    \fi%
  \else%
    \setlength{\unitlength}{\svgwidth}%
  \fi%
  \global\let\svgwidth\undefined%
  \global\let\svgscale\undefined%
  \makeatother%
  \begin{picture}(1,0.88146544)%
    \lineheight{1}%
    \setlength\tabcolsep{0pt}%
    \put(0,0){\includegraphics[width=\unitlength,page=1]{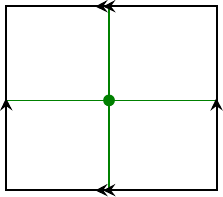}}%
    \put(0.81360707,0.45166348){\color[rgb]{0,0,0}\makebox(0,0)[lt]{\smash{\begin{tabular}[t]{l}$e''$\end{tabular}}}}%
    \put(0.50502485,0.72504269){\color[rgb]{0,0,0}\makebox(0,0)[lt]{\smash{\begin{tabular}[t]{l}$e'$\end{tabular}}}}%
  \end{picture}%
\endgroup%
 };
	\end{tikzpicture}

\end{equation}
This leaves two edges, denoted above by $e'$ and $e''$, each of which has the associated projector on the space $\mcC(\opid,A\otimes L)$, whose forms are $f\mapsto\widetilde{\Omega}_{e'}\circ f$ and $f\mapsto\widetilde{\Omega}_{e''}\circ f$ for endomorphisms $\widetilde{\Omega}_{e'},\widetilde{\Omega}_{e''}\in\End_\mcC(A\otimes L)$.
Choosing an analogous parametrisation by the standard surface as shown in Figure~\ref{fig:T2_standard}, the endomorphisms $\widetilde{\Omega}_{e'},\widetilde{\Omega}_{e''}$ can be read off similarly as $\Omega_{e'}, \Omega_{e''}$ in Figure~\ref{fig:T2_example_Omega-e}, the difference being that most of the $X${}$=${}$A$-labelled strands are not present due to the initial fusing step, and $A$-$A$-crossings and $A$-twists are not present due to $A$ being commutative:
\begin{equation}
\widetilde{\Omega}_{e'}  = 
	\begin{tikzpicture}[baseline={([yshift=-.5ex]current bounding box.center)}]
	\node at (0,0) {\def\svgscale{1.0} 
\begingroup%
  \makeatletter%
  \providecommand\color[2][]{%
    \errmessage{(Inkscape) Color is used for the text in Inkscape, but the package 'color.sty' is not loaded}%
    \renewcommand\color[2][]{}%
  }%
  \providecommand\transparent[1]{%
    \errmessage{(Inkscape) Transparency is used (non-zero) for the text in Inkscape, but the package 'transparent.sty' is not loaded}%
    \renewcommand\transparent[1]{}%
  }%
  \providecommand\rotatebox[2]{#2}%
  \newcommand*\fsize{\dimexpr\f@size pt\relax}%
  \newcommand*\lineheight[1]{\fontsize{\fsize}{#1\fsize}\selectfont}%
  \ifx\svgwidth\undefined%
    \setlength{\unitlength}{74.46046494bp}%
    \ifx\svgscale\undefined%
      \relax%
    \else%
      \setlength{\unitlength}{\unitlength * \real{\svgscale}}%
    \fi%
  \else%
    \setlength{\unitlength}{\svgwidth}%
  \fi%
  \global\let\svgwidth\undefined%
  \global\let\svgscale\undefined%
  \makeatother%
  \begin{picture}(1,1.23151554)%
    \lineheight{1}%
    \setlength\tabcolsep{0pt}%
    \put(0,0){\includegraphics[width=\unitlength,page=1]{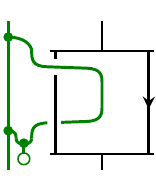}}%
    \put(0.60151739,-0){\color[rgb]{0,0,0}\makebox(0,0)[lt]{\smash{\begin{tabular}[t]{l}$L$\end{tabular}}}}%
    \put(0.3765391,0.28017951){\color[rgb]{0,0,0}\makebox(0,0)[lt]{\smash{\begin{tabular}[t]{l}$C$\end{tabular}}}}%
    \put(0.81972716,0.28017951){\color[rgb]{0,0,0}\makebox(0,0)[lt]{\smash{\begin{tabular}[t]{l}$C$\end{tabular}}}}%
    \put(-0.00204597,0.00000007){\color[rgb]{0,0.50196078,0}\makebox(0,0)[lt]{\smash{\begin{tabular}[t]{l}$A$\end{tabular}}}}%
    \put(0.521881,0.5521361){\color[rgb]{0,0.50196078,0}\makebox(0,0)[lt]{\smash{\begin{tabular}[t]{l}$A$\end{tabular}}}}%
    \put(0.60151739,1.11434437){\color[rgb]{0,0,0}\makebox(0,0)[lt]{\smash{\begin{tabular}[t]{l}$L$\end{tabular}}}}%
    \put(-0.00204597,1.11434452){\color[rgb]{0,0.50196078,0}\makebox(0,0)[lt]{\smash{\begin{tabular}[t]{l}$A$\end{tabular}}}}%
  \end{picture}%
\endgroup%
 };
	\end{tikzpicture}
 \quad , \qquad
\widetilde{\Omega}_{e''} = 
	\begin{tikzpicture}[baseline={([yshift=-.5ex]current bounding box.center)}]
	\node at (0,0) {\def\svgscale{1.0} 
\begingroup%
  \makeatletter%
  \providecommand\color[2][]{%
    \errmessage{(Inkscape) Color is used for the text in Inkscape, but the package 'color.sty' is not loaded}%
    \renewcommand\color[2][]{}%
  }%
  \providecommand\transparent[1]{%
    \errmessage{(Inkscape) Transparency is used (non-zero) for the text in Inkscape, but the package 'transparent.sty' is not loaded}%
    \renewcommand\transparent[1]{}%
  }%
  \providecommand\rotatebox[2]{#2}%
  \newcommand*\fsize{\dimexpr\f@size pt\relax}%
  \newcommand*\lineheight[1]{\fontsize{\fsize}{#1\fsize}\selectfont}%
  \ifx\svgwidth\undefined%
    \setlength{\unitlength}{72.16989598bp}%
    \ifx\svgscale\undefined%
      \relax%
    \else%
      \setlength{\unitlength}{\unitlength * \real{\svgscale}}%
    \fi%
  \else%
    \setlength{\unitlength}{\svgwidth}%
  \fi%
  \global\let\svgwidth\undefined%
  \global\let\svgscale\undefined%
  \makeatother%
  \begin{picture}(1,1.270602)%
    \lineheight{1}%
    \setlength\tabcolsep{0pt}%
    \put(0,0){\includegraphics[width=\unitlength,page=1]{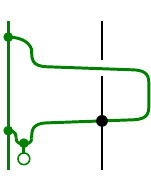}}%
    \put(0.62060869,1.14971214){\color[rgb]{0,0,0}\makebox(0,0)[lt]{\smash{\begin{tabular}[t]{l}$L$\end{tabular}}}}%
    \put(-0.0021109,1.14971214){\color[rgb]{0,0.50196078,0}\makebox(0,0)[lt]{\smash{\begin{tabular}[t]{l}$A$\end{tabular}}}}%
    \put(0.70414987,0.3226848){\color[rgb]{0,0,0}\makebox(0,0)[lt]{\smash{\begin{tabular}[t]{l}$\gamma_A$\end{tabular}}}}%
    \put(0.8370563,0.58005218){\color[rgb]{0,0.50196078,0}\makebox(0,0)[lt]{\smash{\begin{tabular}[t]{l}$A$\end{tabular}}}}%
    \put(0.62060869,0){\color[rgb]{0,0,0}\makebox(0,0)[lt]{\smash{\begin{tabular}[t]{l}$L$\end{tabular}}}}%
    \put(-0.0021109,0){\color[rgb]{0,0.50196078,0}\makebox(0,0)[lt]{\smash{\begin{tabular}[t]{l}$A$\end{tabular}}}}%
  \end{picture}%
\endgroup%
 };
	\end{tikzpicture}
 \quad .
\end{equation}
This yields $V_0$ as the image of the projector $\mcC(\opid,A\otimes L)\ni f\mapsto \widetilde{\Omega}_{e''}\circ\widetilde{\Omega}_{e'}\circ f$.
By Theorem~\ref{thm:V0_equals_Zorb} and~\eqref{eq:Zorb_equiv_ZRT}, the dimension of $V_0$ is given by the number $|\Irr_{\mcC_A^\loc}|$ of isomorphism classes of simple local $A$-modules.
Explicitly, a basis $\{v_\alpha\}$ of $V_0$ is given in terms of simple local $A$-modules $M_\alpha$, $\alpha \in \Irr_{\mcC_A^\loc}$ via
\begin{equation}
    v_\alpha = \big[ \opid 
    \xrightarrow{(\Delta \circ \eta) \otimes \coev_{M_\alpha}} 
    A A M_\alpha M_\alpha^*
    \xrightarrow{\id \otimes \rho \otimes \id} 
    A M_\alpha M_\alpha^*
    \xrightarrow{\id \otimes \iota} 
    A L \big] ~.
\end{equation}
Here $\iota$ is obtained by decomposing $M_\alpha$ into simple objects $i$ of $\mcC$ and embedding into $L$ via $\coev_{i}$ (recall \eqref{eq:Coend}).
Since we do not need this basis below, we skip the details.

\subsection{Original Levin--Wen models}
\label{section:OriginalLevinWen}
In this section we show how the model based on the internal Levin--Wen datum
from Section~\ref{section:DataSphericalFusion} reproduces the original Levin--Wen model from~\cite{LW}. 

\subsubsection{The state space}

Let $\Sigma$ be a compact oriented surface and $\Gamma\subseteq \Sigma$ an admissible $1$-skeleton.
As the internal Levin--Wen datum
we use has the underlying modular fusion category (MFC)
$\mathrm{Vect}_{\Bbbk}$, there are a few simplifications:
\begin{itemize}
	\item The
 object $L$ defined in~\eqref{eq:Coend} is simply given by $L= \Bbbk$.
	\item All braidings and twists are trivial.
	\item The state space is $V=\mathrm{Hom}_{\Bbbk}\left( \Bbbk, X^{ \oo \Gamma_{0}} \right) \cong X^{ \oo \Gamma_{0}}$, where $\Gamma_{0}$ is the set of vertices of $\Gamma$ (combine~\eqref{eq:state_space_ito_TQFT} with~\eqref{eq:zrt_on_objects}).
Here, $X^{ \oo \Gamma_{0}}:= \otimes_{v \in \Gamma_0} X^{|v|}$ is a $\Gamma_{0}$-fold tensor product of the $X$ or $X^*$  with every copy of $X$ or $X^*$ associated to one vertex according to the orientation of the vertex.
\end{itemize}
 Since $X = \oplus_{i,j,k\in I} x$ (recall~\eqref{eq:XLW}) for a set $I=\mathrm{Irr}_{\mathcal{S}}$ of representatives of irreducible elements in $\mathcal{S}$.
    $V$ can be written as a direct sum of subspaces associated to colourings of the graph $\Gamma$ as follows (see Figure~\ref{fig:LevinWenProjectors}):
\begin{itemize}
	\item Half-edges are coloured by objects $i,j,k,\dots\in I$.
	\item In every such colouring we associate a copy of the vector space $x$ or $x^{*}$ to every vertex, depending on the orientation of that vertex.
	\item The state space is the vector space 
 $\oplus x^{ \oo \Gamma_{0}}=\oplus \otimes_{v \in \Gamma_0} x^{|v|}$,
 where we sum over every such colouring.
\end{itemize}

\begin{figure}
	\centerline{
	
	\begin{tikzpicture}[baseline={([yshift=-.5ex]current bounding box.center)}]
	\node at (0,0) {\def\svgscale{1.0} \import{figures/}{LWStateSpace1.pdf_tex} };
	\end{tikzpicture}
 \hspace{-8pt}$\xra{P_{\mathrm{V}}}$\hspace{-12pt}
	
	\begin{tikzpicture}[baseline={([yshift=-.5ex]current bounding box.center)}]
	\node at (0,0) {\def\svgscale{1.0} \import{figures/}{LWStateSpace2.pdf_tex} };
	\end{tikzpicture}
 \hspace{-8pt}$\xra{P_{\mathrm{E}}}$\hspace{-12pt}
	
	\begin{tikzpicture}[baseline={([yshift=-.5ex]current bounding box.center)}]
	\node at (0,0) {\def\svgscale{1.0} 
\begingroup%
  \makeatletter%
  \providecommand\color[2][]{%
    \errmessage{(Inkscape) Color is used for the text in Inkscape, but the package 'color.sty' is not loaded}%
    \renewcommand\color[2][]{}%
  }%
  \providecommand\transparent[1]{%
    \errmessage{(Inkscape) Transparency is used (non-zero) for the text in Inkscape, but the package 'transparent.sty' is not loaded}%
    \renewcommand\transparent[1]{}%
  }%
  \providecommand\rotatebox[2]{#2}%
  \newcommand*\fsize{\dimexpr\f@size pt\relax}%
  \newcommand*\lineheight[1]{\fontsize{\fsize}{#1\fsize}\selectfont}%
  \ifx\svgwidth\undefined%
    \setlength{\unitlength}{126.15878813bp}%
    \ifx\svgscale\undefined%
      \relax%
    \else%
      \setlength{\unitlength}{\unitlength * \real{\svgscale}}%
    \fi%
  \else%
    \setlength{\unitlength}{\svgwidth}%
  \fi%
  \global\let\svgwidth\undefined%
  \global\let\svgscale\undefined%
  \makeatother%
  \begin{picture}(1,0.63016763)%
    \lineheight{1}%
    \setlength\tabcolsep{0pt}%
    \put(0,0){\includegraphics[width=\unitlength,page=1]{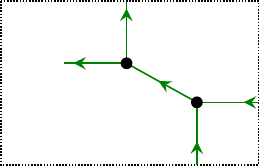}}%
    \put(0.02831005,0.30184726){\color[rgb]{0,0,0}\makebox(0,0)[lt]{\smash{\begin{tabular}[t]{l}$\displaystyle \bigoplus_{ijklm}$\end{tabular}}}}%
    \put(0.50415939,0.54999995){\color[rgb]{0,0.50196078,0}\makebox(0,0)[lt]{\smash{\begin{tabular}[t]{l}$l$\end{tabular}}}}%
    \put(0.23515314,0.41921237){\color[rgb]{0,0.50196078,0}\makebox(0,0)[lt]{\smash{\begin{tabular}[t]{l}$m$\end{tabular}}}}%
    \put(0.66281391,0.30272971){\color[rgb]{0,0.50196078,0}\makebox(0,0)[lt]{\smash{\begin{tabular}[t]{l}$k$\end{tabular}}}}%
    \put(0.79545863,0.02972924){\color[rgb]{0,0.50196078,0}\makebox(0,0)[lt]{\smash{\begin{tabular}[t]{l}$j$\end{tabular}}}}%
    \put(0.9145427,0.14741949){\color[rgb]{0,0.50196078,0}\makebox(0,0)[lt]{\smash{\begin{tabular}[t]{l}$i$\end{tabular}}}}%
    \put(0.74716779,0.28257283){\color[rgb]{0,0,0}\makebox(0,0)[lt]{\smash{\begin{tabular}[t]{l}$\scriptstyle{\mathcal{S}(k,ij)}$\end{tabular}}}}%
    \put(0.50504154,0.41930528){\color[rgb]{0,0,0}\makebox(0,0)[lt]{\smash{\begin{tabular}[t]{l}$\scriptstyle{\mathcal{S}(lm,k)}$\end{tabular}}}}%
  \end{picture}%
\endgroup%
 };
	\end{tikzpicture}

	}
 \caption{The vector spaces $V\supseteq V_{\mathrm{V}}\supseteq V_{\mathrm{E}}$.}
	\label{fig:LevinWenProjectors}
\end{figure}

\subsubsection*{Vertex and edge projectors and their image}

The vertex projector $P_{v}$ is a direct sum of maps $x\to x$ over all colourings of 
(the half-edges attached to) $v$.
On one colouring by a triple $(i,j,k)$, $P_{v}$ acts by applying the idempotent map
\begin{align}
	x \xrightarrow{\pi_{ijk}}
        \mathcal{S}\left( k, ij \right)
    \xrightarrow{\iota_{ijk}} x 
\end{align}
    from \eqref{eq:x-to-S}.
The map $P_{v}$ thus projects to a subspace of $x$ which we can identify with $\mathcal{S}(k, i  j)$.
Note that this subspace is only non-zero if the triple $(i,j,k)$ is admissible, i.e. if the multiplicity of $k$ in $i \oo j$ is non-zero. Similarly, if the vertex is coloured with $x^{*}$, we can identify the image of the projector $P_{v}$ with $\mathcal{S}\left( i  j, k \right)$.

Let $P_{\mathrm{V}} \colon V \to V$ be the product of the $P_v$'s and denote the image of $P_{\mathrm{V}}$ by $V_{\mathrm{V}}$.
Explicitly, $V_{\mathrm{V}}$ is the direct sum of graphs with half-edges coloured by $I$, but now taking the vector spaces 
    $\mathcal{S}(k,ij)$ or $\mathcal{S}(ij,k)$
    instead of $x$ and $x^{*}$ as in Figure~\ref{fig:LevinWenProjectors}

\medskip    

For every edge $e$ in $\Gamma$ connecting vertices $v$ and $w$ we now consider the edge projector $P_{e}$. We restrict our description to the situation in Figure~\ref{fig:LevinWenProjectors}, where the two vertices are distinct, one is positively oriented and the other is negatively oriented.
We give the projector on a colouring of $v$ by $(i,j,k)$ and $w$ by $(l,m,n)$ by a composition of maps:
\begin{align}
x^{*} \oo x
\xra{\iota^{*}_{lmn} \oo \pi_{ijk} }
\mathcal{S}\left( lm ,n \right) \oo \mathcal{S} \left( k, ij \right)
\xra{\delta_{kn} }
\mathcal{S}\left( lm , k \right) \oo \mathcal{S} \left( k, ij \right)
\xra{\pi^{*}_{lmk} \oo \iota_{ijk}} x^{*} \oo x
\end{align}

A priori, the two colourings of the half-edges of $e$ at $v$ and $w$ do not need to coincide. This map projects onto the space of admissible colourings, where both half-edges of $e$ are coloured by the same simple object of $\mathcal{S}$.
This holds analogously for any other configuration of the edge $e$ and the two vertices $v$ and $w$.

Write $P_{\mathrm{E}} : V \to V$ for the product of the $P_e$'s and set $V_{\mathrm{E}} = \mathrm{im}(P_{\mathrm{E}})$. The image $V_{\mathrm{E}}$ is a subspace of $V_{\mathrm{V}}$ and can be described as the direct sum over colourings of edges of $\Gamma$ by $I$ (rather than half-edges), see Figure~\ref{fig:LevinWenProjectors}.

\subsubsection*{Face projectors}
We will now compute the face projector for the face from the torus example in Figure~\ref{fig:T2_example}.
The face projector is given by the string diagram in Figure~\ref{fig:T2_example_Omega-f}.
As we are working with orbifold data in $\Vect_\opk$, the braidings and twists are trivial. Disregarding those, the face projector is a composition of the maps $\pi,\iota$, their duals, the (co-)evaluation of $T$, the maps and
\begin{equation}
\label{eq:alpha-prime}
\alpha' :=
\bigg[
T^*\otimes T \xra{\id\otimes\coev_T}
T^* \otimes T \otimes T \otimes T^* \xra{\id\otimes \a \otimes \id}
T^* \otimes T \otimes T \otimes T^* \xra{\ev_T \otimes \id}
T \otimes T^*
\bigg] \, .
\end{equation}

\begin{lem}
	\label{lemma:LWAlphaHat}
	The map $\alpha'$ is given on
        $\lambda\in \mathcal{S}(k, ij)\subseteq T$ and
        $\widehat{\mu}\in \mcS(lm, n) \subseteq T^{*}$
    by
	\begin{align}
		\alpha'\left( \widehat{\mu} \oo \lambda \right) = 
        \sum_{\sigma,\tau}
            F^{\lambda\widehat{\mu}}_{\sigma\widehat{\tau}} \tau \oo \widehat{\sigma}
		\label{eq:LWDualAlpha}
	\end{align}
\end{lem}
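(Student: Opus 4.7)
The plan is to unpack the composition defining $\alpha'$ in \eqref{eq:alpha-prime}, substitute the $F$-symbol formula \eqref{eq:LWAlpha} for $\alpha$, and then recognise the result as the claimed $F$-symbol expression by exploiting the dual basis relation between $T$ and $T^*$.

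First I would apply $\id_{T^*} \otimes \coev_T$ to $\widehat{\mu} \otimes \lambda$, producing $\sum_\nu \widehat{\mu} \otimes \lambda \otimes \nu \otimes \widehat{\nu}$, where $\{\nu\}$ is a basis of $T$ with $\{\widehat{\nu}\}$ the dual basis in $T^*$ with respect to the trace pairing \eqref{eq:TracePairing}. Substituting the formula \eqref{eq:LWAlpha} for $\alpha$ applied to $\lambda \otimes \nu$, and then applying $\ev_T \otimes \id_T$, yields
\begin{equation*}
\alpha'(\widehat{\mu} \otimes \lambda) \;=\; \sum_{\nu, \lambda', \nu'} F^{\lambda\widehat{\nu'}}_{\nu\widehat{\lambda'}} \cdot \tr(\widehat{\mu} \circ \nu') \cdot \lambda' \otimes \widehat{\nu} \, .
\end{equation*}

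The key observation is the dual-basis identity $\sum_{\nu'} \tr(\widehat{\mu} \circ \nu') \cdot \widehat{\nu'} = \widehat{\mu}$, which is immediate from the defining property $\tr(\widehat{\nu} \circ \nu') = \delta_{\nu,\nu'}$ of the dual basis for the trace pairing. Since the $F$-symbol $F^{\lambda\widehat{\nu'}}_{\sigma\widehat{\tau}}$ in \eqref{eq:FSymbol} is $\opk$-linear in its coupon argument $\widehat{\nu'}$ (being the value of a diagram in $\mcS$ with $\widehat{\nu'}$ inserted as a coupon), pulling the sum over $\nu'$ inside gives
\begin{equation*}
\sum_{\nu'} \tr(\widehat{\mu} \circ \nu') \cdot F^{\lambda\widehat{\nu'}}_{\sigma\widehat{\tau}} \;=\; F^{\lambda\widehat{\mu}}_{\sigma\widehat{\tau}} \, .
\end{equation*}
Applying this to the displayed expression and relabelling the remaining summation variables $\nu \rightsquigarrow \sigma$ and $\lambda' \rightsquigarrow \tau$ reproduces exactly the claimed formula \eqref{eq:LWDualAlpha}.

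The one genuine check concerns the compatibility constraints on the labels: one needs to verify that the nonvanishing terms in the sum correspond precisely to those basis elements $\sigma, \widehat{\tau}$ for which $F^{\lambda\widehat{\mu}}_{\sigma\widehat{\tau}}$ is defined (for instance, $k = n$ when $\lambda \in \mcS(k, ij)$ and $\widehat{\mu} \in \mcS(lm, n)$, with $\sigma \in \mcS(j, dm)$ and $\widehat{\tau} \in \mcS(id, l)$ for some $d$). This bookkeeping follows directly from reading off the source/target labels on the coupons of the $F$-symbol diagram, and no sphericality or braiding input beyond what is already encoded in the definitions of $T, T^*, \ev, \coev$ enters.
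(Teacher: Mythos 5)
Your proposal is correct and follows essentially the same route as the paper's proof: insert $\coev_T$, expand $\alpha$ via the $F$-symbols from \eqref{eq:LWAlpha}, and collapse the evaluation using the dual-basis identity $\sum_{\rho}\tr_\mcS(\widehat{\mu}\circ\rho)\,F^{\lambda\widehat{\rho}}_{\sigma\widehat{\tau}} = F^{\lambda\widehat{\mu}}_{\sigma\widehat{\tau}}$. You make explicit two points the paper leaves implicit (linearity of the $F$-symbol in its coupon argument and the compatibility bookkeeping), which is fine but not a different argument.
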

\begin{proof}
We compute the map~\eqref{eq:alpha-prime} step-by-step:
\begin{align} \nonumber
	\widehat{\mu} \oo \lambda 
	&\xmapsto{\id \oo \coev_{T}}
	\sum_{\sigma} \widehat{\mu} \oo \lambda \oo  \sigma \oo \widehat{\sigma}
	\xmapsto{\id\otimes \a \otimes \id}
	\sum_{\sigma,\tau,\rho}
        F^{\lambda\widehat{\rho}}_{\sigma\widehat{\tau}} \, \widehat{\mu} \oo \rho \oo \tau \oo \widehat{\sigma}
	\\&
	\xmapsto{\ev_T \otimes \id}  
	\sum_{\sigma,\tau,\rho}
        F^{\lambda\widehat{\rho}}_{\sigma\widehat{\tau}} \,
            \tr_\mcS\left( \widehat{\mu} \circ \rho \right) \,
        \tau \oo \widehat{\sigma}
	=
	\sum_{\sigma,\tau}
        F^{\lambda\widehat{\mu}}_{\sigma\widehat{\tau}} \, \tau \oo \widehat{\sigma} \, .
\end{align}
\end{proof}

We further simplify the diagram in Figure~\ref{fig:T2_example_Omega-f} by taking its 'middle' part without the $\phi^2$-factor only, i.e.\ omitting the maps $\pi,\pi^{*},\iota,\iota^{*}$ and $\psi_{i,j}^{1/2}, \psi_{i,j}^{* \, 1/2}$ at the top and the bottom. The latter introduce some dimension factors which we will include in the final formula by hand.
The resulting diagram is depicted in Figure~\ref{fig:Omega_f_VSpace}.
In it we also indicate the labels of the concrete elements $\epsilon,\dots\in T$ and $\widehat{\delta},\dots\in T^*$ which occur in the computation below.
Each of them is assumed to be in the subspace of type $\mcS(k,ij)\subseteq T$ or $\mcS(ij,k)\subseteq T^*$ for some $i,j,k\in I$.
Some of the domains can be partially read off from Figure~\ref{fig:LWFaceComputation}, for example the domain of $\g$ is $a\in I$, others can be deduced from identities~\eqref{eq:a_abar_balanced_maps} of $\a$ and $\abar$ morphisms, for example we take $\sigma\in\mcS(f,kf')$ for $k\in I$.\goodbreak
\begin{figure}
    \captionsetup{format=plain, indention=0.5cm}
	\centering
    \vspace{-24pt}
        $\displaystyle \underbrace{\frac{1}{\Dim\mcS}}_{=\phi^2} \cdot$
		
	\begin{tikzpicture}[baseline={([yshift=-.5ex]current bounding box.center)}]
	\node at (0,0) {\def\svgscale{1.0} \import{figures/}{Omega_f_VSpace.pdf_tex} };
	\end{tikzpicture}

	\caption{The simplified version of the diagram~\eqref{eq:Omega_f} in $\mathcal{C}=\Vect_\opk$. The steps in computation~\eqref{eq:LWFace} are separated by dashed lines.}
	\label{fig:Omega_f_VSpace}

    \centerline{
    \begin{subfigure}[b]{1.2\textwidth}
        \centerline{
		
	\begin{tikzpicture}[baseline={([yshift=-.5ex]current bounding box.center)}]
	\node at (0,0) {\def\svgscale{0.75} \import{figures/}{LWFaceBefore.pdf_tex} };
	\end{tikzpicture}

        \hspace{-8pt}$\mapsto$\hspace{-6pt}
            $\left[\begin{array}{l}
        \text{sum and }\\
        \text{coefficients} \\
        \text{from~\eqref{eq:LWFaceProjector}}
            \end{array}
            \right] \cdot $
        \hspace{-8pt}
		
	\begin{tikzpicture}[baseline={([yshift=-.5ex]current bounding box.center)}]
	\node at (0,0) {\def\svgscale{0.75} \import{figures/}{LWFaceAfter.pdf_tex} };
	\end{tikzpicture}

        }
	\end{subfigure}
    }
	\caption{
        Computation of the face projector (cf.\ \eqref{eq:LWFaceProjector}).
        The input vertices are coloured with $\gamma,\epsilon,\nu\in T$ and $\widehat{\delta}, \widehat{\lambda},\widehat{\mu}\in T^{*}$ with the relevant tensor factors in the domains being $a,b,c,d,e,f\in I$.
        The output vertices are labelled similarly.
    }
	\label{fig:LWFaceComputation}
\end{figure}

\medskip

We now have everything set to compute the face projector. In every step of the computation only two tensor factors change while the other tensor factors are left invariant.
Thus, to make the computation more legible we leave out any tensor factors not required in a particular step.
In Figure~\ref{fig:Omega_f_VSpace} the steps \eqref{eq:Face1}--\eqref{eq:Face8} in the computation below are separated by dashed lines and labelled accordingly.
This yields:\goodbreak
\begin{subequations}
\label{eq:LWFace}
\begin{align}
&
\epsilon \!\otimes\! \widehat{\lambda} \!\otimes\! \widehat{\mu} \!\otimes\! \widehat{\delta} \!\otimes\!
\gamma
\!\otimes\! \nu
\xmapsto{\cdots \!\otimes \coev_{T} \!\otimes\! \cdots}
\sum_{k, \sigma}
\bigg(\!
    \cdots \!\otimes\! \gamma  \!\oo\! \sigma
        \!\oo\! \widehat{\sigma}
    \!\otimes\! \cdots
\!\bigg)
\label{eq:Face1}
\\&
\xmapsto{\cdots\abar\cdots}
\sum_{
        k,\gamma',\sigma',\sigma
}
    \overline{F}^{\gamma\widehat{\sigma'}}_{\sigma\widehat{\gamma'}}
\,
\bigg(\!
    \cdots\sigma'
        \!\oo\! \gamma' \!\otimes\! \widehat{\sigma}
        \cdots
\!\bigg)
=
\sum_{\cdots} \big( \cdots \big)
\bigg(\!
    \cdots 
        \widehat{\delta}
\!\oo\! \sigma'  \cdots
\!\bigg)
\label{eq:Face2}
\\&
\xmapsto{\cdots\a'\cdots}
\sum_{\delta',\sigma'',\dots}
\big(\! \cdots \!\big)
    F^{\sigma'\widehat{\delta}}_{\delta'\widehat{\sigma''}}
\bigg(\!
    \cdots \sigma'' \!\oo\! 
        \widehat{\delta'}
    \cdots
\!\bigg)
\mapsto
\sum_{\cdots} \big(\! \cdots \!\big)
\bigg(\!
        \epsilon \!\oo\! 
    \sigma''
    \cdots
\!\bigg)
\label{eq:Face3}
\\&
\xmapsto{\cdots\abar\cdots}
\sum_{
        \epsilon',\tau,\dots
}
\big(\! \cdots \!\big)
    \overline{F}^{\epsilon\widehat{\tau}}_{\sigma''\widehat{\epsilon'}}
\bigg(\!
    \cdots \tau 
        \!\oo\! \epsilon'
    \cdots
\!\bigg)
\mapsto
\sum_{\cdots}
\big(\! \cdots \!\big)
\bigg(\!
\cdots
    \widehat{\lambda} \!\oo\!
\tau \cdots
\!\bigg)
\label{eq:Face4}
\\&
\xmapsto{\cdots\alpha'\cdots}
\sum_{\lambda',\tau',\dots}
\big(\!\cdots\!\big)
    F^{\tau\widehat{\lambda}}_{\lambda'\widehat{\tau'}}
\bigg(\!
    \cdots \tau' 
        \!\oo\! \widehat{\lambda'}
    \cdots
\!\bigg)
\mapsto
\sum_{\cdots}
\big(\!\cdots\!\big)
\bigg(\!
    \cdots
        \widehat{\mu} \!\oo\! 
    \tau' \cdots
\!\bigg)
\label{eq:Face5}
\\&
\xmapsto{\cdots\a' \cdots}
\sum_{\mu',\tau'',\dots}
\big(\!\cdots\!\big)
    F^{\tau'\widehat{\mu}}_{\mu'\widehat{\tau''}}
\bigg(\!
    \cdots \tau''
        \!\oo\! \widehat{\mu'}
    \cdots
\!\bigg)
\mapsto
\sum_{\cdots}
\big(\!\cdots\!\big)
\bigg(\!
    \cdots
        \nu \!\oo\! 
    \tau'' 
\!\bigg)
\label{eq:Face6}
\\&
\xmapsto{\cdots\abar\circ(\id_T \otimes \psi_1^2)\cdots}
\sum_{\nu',\pi,\sigma,\dots}
\big(\!\cdots\!\big)
\cdot d_k \cdot
\overline{F}^{\nu\widehat{\pi}}_{\tau''\widehat{\nu'}}
\bigg(\!
    \cdots \widehat{\sigma} \!\otimes\! \pi \!\oo\! \nu' 
\!\bigg)
\label{eq:Face7}
\\&
\xmapsto{\cdots \tfrac{1}{\Dim \mcS}\cdot\ev_T \cdots}
\sum_{\nu',\pi,\sigma,\dots}
\big(\!\cdots\!\big)
\cdot d_k \cdot
\overline{F}^{\nu\widehat{\pi}}_{\tau''\widehat{\nu'}}
        \cdot \frac{1}{\Dim \mcS} \cdot
        \tr_\mcS(\widehat{\sigma}\circ\pi)
\bigg(\!
\cdots \!\otimes\! \nu'
\!\bigg)
\nonumber
\\&
\hphantom{\xmapsto{\cdots \tfrac{1}{\Dim \mcS}\cdot\ev_T \cdots}}
=
\sum_{\nu',\dots}
\big(\!\cdots\!\big)
\cdot d_k \cdot
\overline{F}^{\nu\widehat{\sigma}}_{\tau''\widehat{\nu'}}
\cdot \frac{1}{\Dim \mcS}
\bigg(\!
\cdots \!\otimes\! \nu' 
\!\bigg)
\label{eq:Face8}
\end{align}
\end{subequations}

\vspace{-8pt}

Gathering all the terms as well as the previously omitted factors due to $\psi_{i,j}^{1/2}, \psi_{i,j}^{* \, 1/2}$-insertions, we obtain the following formula for the face operator:
\begin{align} \nonumber
\epsilon \!\otimes\!
\widehat{\lambda} \!\otimes\!
\widehat{\mu} \!\otimes\!
\widehat{\delta} \!\otimes\!
\gamma \!\otimes\!
\nu
~\longmapsto \hspace{-2em}
&
\sum_{k,a',b',c',d',e',f'\in I} ~
\sum_{\substack{\epsilon',\lambda',\mu',\delta',\gamma',\nu' \\ \sigma, \sigma', \sigma'', \tau, \tau', \tau''}}
\hspace{-1em}
\frac{\big(d_{a} d_{b} d_{c} d_{d} d_{e} d_{f} d_{a'} d_{b'} d_{c'} d_{d'} d_{e'} d_{f'} \big)^{1/2} \cdot d_k}{\Dim\mcS}
\\ 
\nonumber
&
    \cdot
    \overline{F}^{\gamma\widehat{\sigma}'}_{\sigma\widehat{\gamma}'} \cdot
    F^{\sigma'\widehat{\delta}}_{\delta'\widehat{\sigma}''} \cdot
	\overline{F}^{\epsilon\widehat{\tau}}_{\sigma''\widehat{\epsilon}'} \cdot
	F^{\tau\widehat{\lambda}}_{\lambda'\widehat{\tau}''} \cdot
	F^{\tau'\widehat{\mu}}_{\mu'\widehat{\tau}''} \cdot
\overline{F}^{\nu\widehat{\sigma}}_{\tau''\widehat{\nu}'}\\
    &
    \cdot
    \bigg(\!
        \epsilon' \!\otimes\!
        \widehat{\lambda'} \!\otimes\!
        \widehat{\mu'} \!\otimes\!
        \widehat{\delta'} \!\otimes\!
        \gamma' \!\otimes\!
        \nu'
    \!\bigg) 
\label{eq:LWFaceProjector}
\end{align}
To see how the labels on the face have transformed, see Figure~\ref{fig:LWFaceComputation}.

\subsubsection{Comparison with the original Levin--Wen model}

Let us now compare the original Levin--Wen string net model from~\cite{LW} with our model.
To obtain the 6-index symbols $F^{ijm}_{kln}\in \Bbbk$ from~\cite{LW}, we have to restrict ourselves to spherical fusion categories where every multiplicity
    $\mathrm{dim}\;\mathcal{S}\left( k, i  j \right)$
for simple objects $i,j,k\in I$ is at most one.
Then:
\begin{itemize}
\item
The state space $V$ in our model is slightly bigger than the one in~\cite{LW}, as we colour half-edges with irreducible objects and \cite{LW} colours edges. The difference can be bridged by considering the image of the edge projectors instead of $V$.
Because of this, there is no analogue for edge projectors in~\cite{LW}.
\item
The vertex projectors of our model are analogues for the \textsl{electric charge operator} from~\cite[Eq.\,(11)]{LW}.
\item
The $F$-symbols $F^{ijm}_{kln}$ in~\cite{LW} are dependent only on irreducible objects $i,j,k,l,$ $m,n\in I$, while our $F$-symbols are defined in terms of elements of the $\mathrm{Hom}$-spaces between these objects.
By choosing appropriate bases of the $\mathrm{Hom}$-spaces and renormalising the $F$-symbols, we can translate between the two settings.
For this it is essential, that the $\mathrm{Hom}$-spaces are at most one-dimensional.
\item
For a hexagonal lattice, the face projector of the internal Levin--Wen model is given by~\eqref{eq:LWFaceProjector}, i.e.\ a sum over products of six $F$-symbols.
After an appropriate translation, we obtain the same formulas as for the \textsl{magnetic flux operator} from~\cite[Eq.'s\,(12),(13)]{LW}, we skip the details.
\end{itemize}

Finally, 
a quick argument shows that the space of ground states $V_{0}$ coincides with the one from~\cite{LW}:
By Theorem~\ref{thm:V0_equals_Zorb} and \eqref{eq:Zorb_equiv_ZRT}, we have $V_0 \cong \zrt_{\mcC_\opA}(\Sigma)$, where $\mcC=\Vect_\opk$
and $\opA=\opA_{\mathcal{S}}$ is the orbifold data from the spherical fusion category $\mathcal{S}$. By \cite[Thm.\,4.1]{MR1}, $\mcC_{\opA_{\mathcal{S}}}$ is equivalent to the Drinfeld centre $\mathcal{Z}(\mathcal{S})$ as ribbon categories. Then 
\begin{align}
	V_{0}\cong \zrt_{\mathcal{Z}(\mathcal{S})} (\Sigma) \cong Z^{\mathrm{TV}}_{\mathcal{S}} (\Sigma)\cong H^{\mathrm{LW}}_{\mathcal{S}} (\Sigma) \ ,
\end{align}
where we denote by $Z^{\mathrm{TV}}_\mcS$ the Turaev--Viro--Barrett--Westbury TFT
    based on the spherical fusion category $\mcS$
and by $H^{\mathrm{LW}}_{\mathcal{S}}$ the space of ground states of the Levin--Wen model from~\cite{LW}. The last equivalence is shown in~\cite[Thm.\,3.1]{BK12}.

\subsection{Kitaev models}
\label{section:KitaevModel}
The original Kitaev model~\cite{Ki} was generalised in~\cite{BMCA} to an input of a finite-dimensional semisimple Hopf algebra $K$.
In this formulation, the ground states and excited states of the Kitaev model were proven in \cite{BA,BK12} to be isomorphic to those of the Levin--Wen model with spherical fusion category $K$-$\operatorname{rep}$ of finite-dimensional $K$-modules. 
In this section we discuss the internal Levin--Wen model obtained from a finite-dimensional semisimple Hopf algebra $K$ as in Section~\ref{section:HopfLevinWenData}.
We find that our model reproduces an appropriately chosen Kitaev model based on $K$.	

\subsubsection*{The state space}
For our example we consider the admissible 1-skeleton $\Gamma$ on the torus in Figure~\ref{fig:T2_example}.
It has 8 vertices, hence, since we have $\mcC=\Vect_\opk$ and $L=\opk$, the state space~\eqref{eq:V_as_hom} takes the form
\begin{align}
V=\mathrm{Hom}_{\Bbbk}\left( \Bbbk, XX{^{*}X^{*}X^{*}X^{*}X X X} \right) \cong K^{\otimes 24} \, ,
\end{align}
where we used 
$X = K^{ \oo 3} \cong X^{*}$ (see Table~\ref{tab:LWdata-Hopf}).
This can be seen as an assignment of a copy of $K$ to every half-edge of the graph $\Gamma$, see Figure~\ref{fig:KitaevLabels}.

\begin{figure}[tb]
    \captionsetup{format=plain, indention=0.5cm}
	\centering
	
	\begin{tikzpicture}[baseline={([yshift=-.5ex]current bounding box.center)}]
	\node at (0,0) {\def\svgscale{1.0} 
\begingroup%
  \makeatletter%
  \providecommand\color[2][]{%
    \errmessage{(Inkscape) Color is used for the text in Inkscape, but the package 'color.sty' is not loaded}%
    \renewcommand\color[2][]{}%
  }%
  \providecommand\transparent[1]{%
    \errmessage{(Inkscape) Transparency is used (non-zero) for the text in Inkscape, but the package 'transparent.sty' is not loaded}%
    \renewcommand\transparent[1]{}%
  }%
  \providecommand\rotatebox[2]{#2}%
  \newcommand*\fsize{\dimexpr\f@size pt\relax}%
  \newcommand*\lineheight[1]{\fontsize{\fsize}{#1\fsize}\selectfont}%
  \ifx\svgwidth\undefined%
    \setlength{\unitlength}{185.99999535bp}%
    \ifx\svgscale\undefined%
      \relax%
    \else%
      \setlength{\unitlength}{\unitlength * \real{\svgscale}}%
    \fi%
  \else%
    \setlength{\unitlength}{\svgwidth}%
  \fi%
  \global\let\svgwidth\undefined%
  \global\let\svgscale\undefined%
  \makeatother%
  \begin{picture}(1,0.87903226)%
    \lineheight{1}%
    \setlength\tabcolsep{0pt}%
    \put(0,0){\includegraphics[width=\unitlength,page=1]{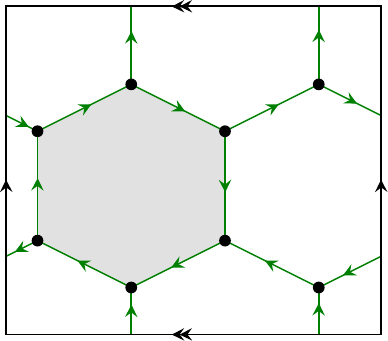}}%
    \put(0.32929058,0.37856581){\color[rgb]{0,0,0}\makebox(0,0)[lt]{\smash{\begin{tabular}[t]{l}$f$\end{tabular}}}}%
    \put(0.27731225,0.18005585){\color[rgb]{0,0,0}\makebox(0,0)[lt]{\smash{\begin{tabular}[t]{l}$a$\end{tabular}}}}%
    \put(0.27731225,0.06312036){\color[rgb]{0,0,0}\makebox(0,0)[lt]{\smash{\begin{tabular}[t]{l}$b$\end{tabular}}}}%
    \put(0.35959552,0.17997709){\color[rgb]{0,0,0}\makebox(0,0)[lt]{\smash{\begin{tabular}[t]{l}$c$\end{tabular}}}}%
    \put(0.10596047,0.16768398){\color[rgb]{0,0,0}\makebox(0,0)[lt]{\smash{\begin{tabular}[t]{l}$d$\end{tabular}}}}%
    \put(0.03540018,0.26666363){\color[rgb]{0,0,0}\makebox(0,0)[lt]{\smash{\begin{tabular}[t]{l}$e$\end{tabular}}}}%
    \put(0.10950101,0.33132853){\color[rgb]{0,0,0}\makebox(0,0)[lt]{\smash{\begin{tabular}[t]{l}$g$\end{tabular}}}}%
  \end{picture}%
\endgroup%
 };
	\end{tikzpicture}

	\caption{A 1-skeleton for the torus with labels $a,\dots,g\in H$ for the six edge ends at the two bottom left vertices.}
	\label{fig:KitaevLabels}
\end{figure}

\medskip

In what follows we compute the projectors for a vertex, an edge, and a face of the 1-skeleton $\Gamma$.
We compare these projectors to the ones of the Kitaev model associated to a modified graph $\Gamma'$.
To obtain $\Gamma'$ one simply adds an additional vertex in the middle of every edge of $\Gamma$, see Figure~\ref{fig:KitaevAlteredGraph}.
We do this in order to identify the state space $V= K^{ \oo 24}$ of the internal Levin--Wen model with the extended Hilbert space $\mathcal{H}=K^{ \oo 24}$ of the Kitaev model: the internal Levin--Wen model labels every half-edge of $\Gamma$ with $K$, whereas the Kitaev model labels every edge of $\Gamma'$ with $K$.
\begin{figure}[bt]
    \captionsetup{format=plain, indention=0.5cm}
	\centering
	
	\begin{tikzpicture}[baseline={([yshift=-.5ex]current bounding box.center)}]
	\node at (0,0) {\def\svgscale{1.0} 
\begingroup%
  \makeatletter%
  \providecommand\color[2][]{%
    \errmessage{(Inkscape) Color is used for the text in Inkscape, but the package 'color.sty' is not loaded}%
    \renewcommand\color[2][]{}%
  }%
  \providecommand\transparent[1]{%
    \errmessage{(Inkscape) Transparency is used (non-zero) for the text in Inkscape, but the package 'transparent.sty' is not loaded}%
    \renewcommand\transparent[1]{}%
  }%
  \providecommand\rotatebox[2]{#2}%
  \newcommand*\fsize{\dimexpr\f@size pt\relax}%
  \newcommand*\lineheight[1]{\fontsize{\fsize}{#1\fsize}\selectfont}%
  \ifx\svgwidth\undefined%
    \setlength{\unitlength}{94.50000091bp}%
    \ifx\svgscale\undefined%
      \relax%
    \else%
      \setlength{\unitlength}{\unitlength * \real{\svgscale}}%
    \fi%
  \else%
    \setlength{\unitlength}{\svgwidth}%
  \fi%
  \global\let\svgwidth\undefined%
  \global\let\svgscale\undefined%
  \makeatother%
  \begin{picture}(1,0.84526975)%
    \lineheight{1}%
    \setlength\tabcolsep{0pt}%
    \put(0,0){\includegraphics[width=\unitlength,page=1]{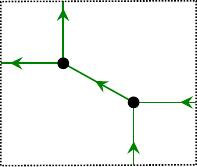}}%
    \put(0.3516555,0.55098334){\color[rgb]{0,0,0}\makebox(0,0)[lt]{\smash{\begin{tabular}[t]{l}$X^*$\end{tabular}}}}%
    \put(0.68266365,0.36258427){\color[rgb]{0,0,0}\makebox(0,0)[lt]{\smash{\begin{tabular}[t]{l}$X$\end{tabular}}}}%
  \end{picture}%
\endgroup%
 };
	\end{tikzpicture}
 $\cong$
	
	\begin{tikzpicture}[baseline={([yshift=-.5ex]current bounding box.center)}]
	\node at (0,0) {\def\svgscale{1.0} 
\begingroup%
  \makeatletter%
  \providecommand\color[2][]{%
    \errmessage{(Inkscape) Color is used for the text in Inkscape, but the package 'color.sty' is not loaded}%
    \renewcommand\color[2][]{}%
  }%
  \providecommand\transparent[1]{%
    \errmessage{(Inkscape) Transparency is used (non-zero) for the text in Inkscape, but the package 'transparent.sty' is not loaded}%
    \renewcommand\transparent[1]{}%
  }%
  \providecommand\rotatebox[2]{#2}%
  \newcommand*\fsize{\dimexpr\f@size pt\relax}%
  \newcommand*\lineheight[1]{\fontsize{\fsize}{#1\fsize}\selectfont}%
  \ifx\svgwidth\undefined%
    \setlength{\unitlength}{93.750077bp}%
    \ifx\svgscale\undefined%
      \relax%
    \else%
      \setlength{\unitlength}{\unitlength * \real{\svgscale}}%
    \fi%
  \else%
    \setlength{\unitlength}{\svgwidth}%
  \fi%
  \global\let\svgwidth\undefined%
  \global\let\svgscale\undefined%
  \makeatother%
  \begin{picture}(1,0.84799927)%
    \lineheight{1}%
    \setlength\tabcolsep{0pt}%
    \put(0,0){\includegraphics[width=\unitlength,page=1]{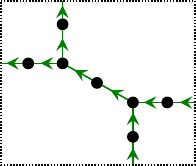}}%
    \put(0.18718747,0.4214687){\color[rgb]{0,0.50196078,0}\makebox(0,0)[lt]{\smash{\begin{tabular}[t]{l}$\scriptstyle{K_5}$\end{tabular}}}}%
    \put(0.16518736,0.59346855){\color[rgb]{0,0.50196078,0}\makebox(0,0)[lt]{\smash{\begin{tabular}[t]{l}$\scriptstyle{K_4}$\end{tabular}}}}%
    \put(0.41918716,0.5054686){\color[rgb]{0,0.50196078,0}\makebox(0,0)[lt]{\smash{\begin{tabular}[t]{l}$\scriptstyle{K_3}$\end{tabular}}}}%
    \put(0.57118706,0.40799994){\color[rgb]{0,0.50196078,0}\makebox(0,0)[lt]{\smash{\begin{tabular}[t]{l}$\scriptstyle{K_0}$\end{tabular}}}}%
    \put(0.72318697,0.36799999){\color[rgb]{0,0.50196078,0}\makebox(0,0)[lt]{\smash{\begin{tabular}[t]{l}$\scriptstyle{K_2}$\end{tabular}}}}%
    \put(0.53718717,0.20000014){\color[rgb]{0,0.50196078,0}\makebox(0,0)[lt]{\smash{\begin{tabular}[t]{l}$\scriptstyle{K_1}$\end{tabular}}}}%
  \end{picture}%
\endgroup%
 };
	\end{tikzpicture}

	\caption{Modifying the graph $\Gamma$ by inserting additional vertices provides an isomorphism to the state space of the Kitaev model. Here we identify $X=K^{ \oo 3}=K_{0} \oo K_{1} \oo K_{2}$ and $X^{*}=K^{ \oo 3}=K_{3} \oo K_{4} \oo K_{5}$.}
	\label{fig:KitaevAlteredGraph}
\end{figure}

\subsubsection*{Vertex projectors}
For a vertex labelled with $a \oo b \oo c\in K^{ \oo 3}=X$ the vertex projector is the map $\iota\circ\pi:X\to X$ (see~\eqref{eq:KitaevPi}):
\begin{align} \nonumber
	\iota\circ\pi\left( a \oo b \oo c \right) 
	&= \iota \left( a_{(2)}b \oo a_{(1)}c \right)
	= S(\lambda_{(1)}) \oo \lambda_{(3)}a_{(2)}b \oo \lambda_{(2)}a_{(1)}c
	\\
	&= aS(\lambda_{(1)}) \oo \lambda_{(3)}b \oo \lambda_{(2)}c
 ~.
\end{align}
Similarly we obtain for the dual map $\pi^{*}\circ\iota^{*} :X^{*}\to X^{*}$
    (see~\eqref{eq:pi-i-dual-map}):
\begin{align}
	\pi^{*}\circ\iota^{*}\left( d \oo e \oo g \right)
	= \lambda_{(1)}d \oo eS(\lambda_{(2)}) \oo gS(\lambda_{(3)})
 ~.
\end{align}
These are precisely the vertex projectors $A^{\lambda}_{v}$ of the Kitaev model (cf.\ \cite[Sec.\,4]{Vo}).

\subsubsection*{Edge projectors}

We now turn to the edge projector taking the edge labelled by $a$ and $d$ in Figure~\ref{fig:KitaevLabels} as example.
Removing the unnecessary tensor factors in the general formula~\eqref{eq:Omega_e}, one arrives at the map
\begin{align}
	\left( \pi^{*}\oo \iota \right)\circ \beta\circ \left( \iota^{*} \oo \pi \right)\colon X^{*} \oo X \to X^{*} \oo X,
\end{align}
where $\beta$ is the map taking $T^{*} \oo T$ to the relative tensor product $T^{*} \oo_{A} T $, i.e.\ coming from the composition of~\eqref{eq:Omega_e:idempotent},\eqref{eq:Omega_e:act} and implementing an idempotent similar to the one in~\eqref{eq:K-xA-L_idempotent}, where the coproduct $\Delta_{\mathrm{Fr}}$ in~\eqref{eq:K-Fr_coproduct-counit} is used.
In this configuration, the relative tensor product is taken with respect to the action $\vartriangleright_{0}$ on $T$ and its dual action $\vartriangleleft_{0}$ on $T^{*}$.
Computing the map yields
\begin{align} \nonumber
&
\left( \pi^{*} \oo \iota \right)\circ \beta\circ \left( \iota^{*}\oo \pi \right) \left( \left[d \oo e \oo g \right] \oo \left[ a \oo b \oo c \right]\right)
\\ \nonumber
&=
\left( \pi^{*} \oo \iota \right)\circ \beta \left( \left[ed_{(2)} \oo gd_{(1)} \right] \oo \left[ a_{(2)}b \oo a_{(1)}c \right]\right)
\\ \nonumber
&=
\left( \pi^{*} \oo \iota \right) \left( \left[ed_{(2)}S(\lambda^{1}_{(1)}) \oo gd_{(1)}S\left( \lambda^{1}_{(2)} \right)\right] \oo \left[ \lambda^{1}_{(4)}a_{(2)}b \oo \lambda^{1}_{(3)}a_{(1)}c \right]\right) 
\\
&=
\left[\lambda^{2}_{(1)}dS\left( \lambda^{1}_{(1)} \right) \oo  eS(\lambda^{2}_{(2)}) \oo gS(\lambda^{2}_{(3)})   \right] \oo \left[ \lambda^{1}_{(2)}aS(\lambda^{3}_{(1)}) \oo \lambda^{3}_{(3)}b \oo \lambda^{3}_{(2)}c\right] \, ,
\end{align}
where here and below we denote by $\l^1, \l^2, \l^3,\dots$ different copies of the normalised Haar integral $\l\in K$ (see~\eqref{eq:Haar_co-integral}).

Again comparing with the Kitaev model we see that this map is a composition of two vertex projectors $A^{\lambda}_{v}$ of the original Kitaev model together with an additional vertex projector for a vertex added in the middle of the connecting edge labelled with $a$ and $d$ in Figure~\ref{fig:KitaevLabels} as in Figure~\ref{fig:KitaevAlteredGraph} (cf.\ \cite[Sec.\,4]{Vo}).

\subsubsection*{Face projectors}

The computation for the face projectors is more involved.
We will compute the face projector $P_f$ for the face $f$ of the torus example in Figure~\ref{fig:T2_example}, which, as in Section~\ref{section:OriginalLevinWen}, boils down to evaluating the diagram in Figure~\ref{fig:T2_example_Omega-f}.
Since we again work with an orbifold datum in $\Vect_\opk$, here too the braiding and twist morphisms are trivial.
Furthermore, we again omit the projectors $\pi,\iota^*$ and embeddings $\iota,\pi^*$, i.e.\ we only compute the projector on $TT^{*}T^{*}T^{*}T T$ obtained from the `middle' part of the diagram in Figure~\ref{fig:T2_example_Omega-f}.
The outcome is the same as the diagram in Figure~\ref{fig:Omega_f_VSpace}, but read for the orbifold datum based on the Hopf algebra $K$ instead. 
In particular, one has $\psi=\id$ and $\phi^2 = \tfrac{1}{|K|}$, see Table~\ref{tab:LWdata-Hopf}.
The computation is performed for the input shown in Figure~\ref{fig:KitaevLabelsT}, where two elements of $K$ are used to label each of the six vertices, reflecting the fact that we have $T=K^{\otimes 2} = T^*$.
\begin{figure}
    \captionsetup{format=plain, indention=0.5cm}
	\centering
    
	\begin{tikzpicture}[baseline={([yshift=-.5ex]current bounding box.center)}]
	\node at (0,0) {\def\svgscale{1.0} \import{figures/}{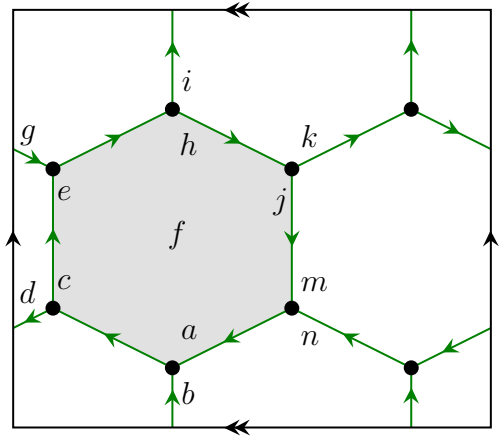} };
	\end{tikzpicture}

	\caption{Vertex labels used in the computation of the face projector.}
	\label{fig:KitaevLabelsT}
\end{figure}
The map $\alpha'$ in~\eqref{eq:alpha-prime} in this case reads:
\begin{align}
\alpha' \colon
\left[ a \oo b\right] \oo \left[c \oo d \right] \mapsto \left[a_{(2)}c \oo a_{(1)}d_{(2)}\lambda_{(1)} \right] \oo\left[ S(\lambda_{(2)})  \oo bd_{(1)}
\right] \, .
\end{align}
We provide a detailed computation of this map, as it is similar to other computations for the internal Levin--Wen datum based on $K$, where we provide fewer details for brevity:
\begin{align}
&
    \left[a\otimes b\right] \otimes \left[c\otimes d\right]
    \xmapsto{\cdots\coev_{T}}
    \left[a\otimes b\right] \otimes \left[c\otimes d\right] \otimes \left[\lambda^{1}_{(1)} \otimes \lambda^{2}_{(1)}\right]
    \otimes \left[S(\lambda^{1}_{(2)} )\otimes S(\lambda^{2}_{(2)})\right]
    \nonumber
    \\
    &\xmapsto{\cdots \alpha \cdots }
    \left[a\otimes b\right] \otimes \left[S(\lambda^{3}_{(1)}) \otimes d_{(1)}\lambda^{2}_{(1)}\right] \otimes \left[\lambda^{3}_{(3)}c \otimes \lambda^{3}_{(2)}d_{(2)}\lambda^{1}_{(1)}\right] \otimes \left[S(\lambda^{1}_{(2)} )\otimes S(\lambda^{2}_{(2)})\right]
    \nonumber
    \\
    &\xmapsto{\ev_T \cdots}
    |K|^{2}\smallint \left( aS(\lambda^{3}_{(1)}) \right) \smallint\left(b d_{(1)}\lambda^{2}_{(1)}\right) \cdot
    \left[\lambda^{3}_{(3)}c \otimes \lambda^{3}_{(2)}d_{(2)}\lambda^{1}_{(1)}\right] \otimes \left[S(\lambda^{1}_{(2)} )\otimes S(\lambda^{2}_{(2)})\right]
    \nonumber
    \\
    &\overset{\eqref{eq:HaarDimAntipode}}{=}
    |K|^{2}\smallint \left( \lambda^{3}_{(1)} S(a) \right) \smallint\left(b d_{(1)}\lambda^{2}_{(1)}\right) \cdot
    \left[\lambda^{3}_{(3)}c \otimes \lambda^{3}_{(2)}d_{(2)}\lambda^{1}_{(1)}\right] \otimes \left[S(\lambda^{1}_{(2)} )\otimes S(\lambda^{2}_{(2)})\right]
    \nonumber
    \\
    &\overset{\eqref{eq:HaarTeleportation}}{=}
    |K|^{2}\smallint \left( \lambda^{3}_{(1)} \right) \smallint\left(\lambda^{2}_{(1)}\right) \cdot
    \left[\lambda^{3}_{(3)}a_{(2)}c \otimes \lambda^{3}_{(2)}a_{(1)}d_{(2)}\lambda^{1}_{(1)}\right] \otimes \left[S(\lambda^{1}_{(2)} )\otimes S(S(bd_{(1)})\lambda^{2}_{(2)})\right]
    \nonumber
    \\
    &\overset{\eqref{eq:Haar_co-integral}}{=}
    |K|^{2}\smallint \left( \lambda^{3} \right) \smallint\left(\lambda^{2}\right) \cdot
    \left[a_{(2)}c \otimes a_{(1)}d_{(2)}\lambda^{1}_{(1)}\right] \otimes \left[S(\lambda^{1}_{(2)} )\otimes S(S(bd_{(1)}))\right]
    \nonumber
    \\
    &\overset{\eqref{eq:HaarDimAntipode}}{=}
    \left[a_{(2)}c \otimes a_{(1)}d_{(2)}\lambda^{1}_{(1)}\right] \otimes \left[S(\lambda^{1}_{(2)} )\otimes bd_{(1)}\right] \, .
\end{align}

Now we can compute the face projector for the Levin-Wen datum based on $K$.
Similarly as in~\eqref{eq:LWFace}, in each step of the computation below only two tensor factors change - so again we attempt to make it more legible by leaving out any tensor factors not required in a particular step.
The steps~\eqref{eq:Kitaev:Face1} to~\eqref{eq:Kitaev:Face8} in the computation below correspond to the steps~\eqref{eq:Face1}--\eqref{eq:Face8} separated by dashed lines in Figure~\ref{fig:Omega_f_VSpace}.
We then have:
\begin{subequations}
\label{eq:Kitaev:Face}
\begin{align}
&
{\scriptstyle
\left[ e \oo g \right] \, \oo \,
\left[ h \oo i \right] \, \oo \,
\left[ j \oo k \right] \, \oo \,
\left[ c \oo d \right] \, \oo \,
\left[ a \oo b \right] \, \oo \,
\left[ m \oo n \right]
}
\nonumber
\\ &
\xmapsto{\cdots\coev_{T}\cdots}
{\scriptstyle
    \cdots \left[ a \oo b \right] \, \oo \,
           \left[ \lambda^{1}_{(1)} \oo \lambda^{2}_{(1)} \right] \, \oo \,
           \left[ S(\lambda^{1}_{(2)}) \oo S(\lambda^{2}_{(2)}) \right]
    \cdots
}
\label{eq:Kitaev:Face1}
\\ &
\xmapsto{\cdots\overline{\alpha}\cdots}
\scriptstyle{
	\cdots \left[ a_{(2)}\lambda^{1}_{(1)} \oo S(\lambda^{3}_{(1)}) \right] \, \oo \,
           \left[ \lambda^{3}_{(3)}a_{(1)} \lambda^{2}_{(1)} \oo \lambda^{3}_{(2)}b \right]
    \cdots
}
~=~
\scriptstyle{
    \cdots \left[ c \oo d \right] \, \oo \,
           \left[ a_{(2)}\lambda^{1}_{(1)} \oo S(\lambda^{3}_{(1)}) \right]
    \cdots
}
\label{eq:Kitaev:Face2}
\\ &
\xmapsto{\cdots\alpha'\cdots}
\scriptstyle{
    \cdots \left[ c_{(2)}a_{(2)}\lambda^{1}_{(1)} \oo c_{(1)}S(\lambda^{3}_{(1)(1)})\lambda^{4}_{(1)} \right] \, \oo \,
           \left[ S(\lambda^{4}_{(2)}) \oo d S(\lambda^{3}_{(1)(2)}) \right]
    \cdots
}
\nonumber
\\ &
\hphantom{\xmapsto{\cdots\alpha'\cdots}}\mapsto ~
\scriptstyle{
           \left[e \oo g \right] \, \oo \,
           \left[ c_{(2)}a_{(2)}\lambda^{1}_{(1)} \oo c_{(1)}S(\lambda^{3}_{(1)(1)})\lambda^{4}_{(1)} \right]
    \cdots
}
\label{eq:Kitaev:Face3}
\\ &
\xmapsto{\cdots\overline{\alpha}\cdots}
\scriptstyle{
		    \left[ e_{(2)}c_{(2)}a_{(2)}\lambda^{1}_{(1)} \oo S(\lambda^{5}_{(1)})\right] \,\oo\,
            \left[ \lambda^{5}_{(3)}e_{(1)}c_{(1)}S(\lambda^{3}_{(1)(1)})\lambda^{4}_{(1)} \oo \lambda^{5}_{(2)}g \right]
    \cdots
}
\nonumber
\\ &
\hphantom{\xmapsto{\cdots\overline{\alpha}\cdots}}\mapsto ~
\scriptstyle{
    \cdots \left[h \oo i\right] \, \oo \,
           \left[ e_{(2)}c_{(2)}a_{(2)}\lambda^{1}_{(1)} \oo S(\lambda^{5}_{(1)}) \right]
	\cdots
}
\label{eq:Kitaev:Face4}
\\ &
\xmapsto{\cdots\alpha'\cdots}
\scriptstyle{
    \cdots \left[ h_{(2)}e_{(2)}c_{(2)}a_{(2)}\lambda^{1}_{(1)} \,\oo\, h_{(1)}S(\lambda^{5}_{(1)(1)})\lambda^{6}_{(1)} \right] \, \oo \,
           \left[ S(\lambda^{6}_{(2)}) \oo iS(\lambda^{5}_{(1)(2)}) \right]
    \cdots
}
\nonumber
\\ &
\hphantom{\xmapsto{\cdots\alpha'\cdots}} \mapsto ~
\scriptstyle{
	\cdots \left[ S(\lambda^{6}_{(2)}) S(\lambda^{5}_{(1)(1)}) \oo iS(\lambda^{5}_{(1)(2)}) \right] \, \otimes \,
           \left[ j \oo k \right] \, \oo \,
           \left[ h_{(2)}e_{(2)}c_{(2)}a_{(2)}\lambda^{1}_{(1)} \oo h_{(1)}\lambda^{6}_{(1)} \right]
    \cdots
}
\label{eq:Kitaev:Face5}
\\ &
\xmapsto{\cdots\alpha'\cdots}
\scriptstyle{
	\cdots \left[ j_{(2)}h_{(3)}e_{(2)}c_{(2)}a_{(2)}\lambda^{1}_{(1)} \oo j_{(1)}h_{(2)}\lambda^{6}_{(1)(2)}\lambda^{7}_{(1)} \right] \, \oo \,
           \left[ S(\lambda^{7}_{(2)})  \oo kh_{(1)}\lambda^{6}_{(1)(1)} \right]
    \cdots
}
\nonumber
\\ &
\hphantom{\xmapsto{\cdots\alpha'\cdots}} = ~
\scriptstyle{
    \cdots \left[ j_{(2)}h_{(3)}e_{(2)}c_{(2)}a_{(2)}\lambda^{1}_{(1)} \oo j_{(1)}h_{(2)}\lambda^{7}_{(1)} \right] \, \oo \,
           \left[ S(\lambda^{7}_{(2)}) \lambda^{6}_{(1)(2)} \oo kh_{(1)}\lambda^{6}_{(1)(1)} \right]
    \cdots
}
\nonumber
\\ &
\hphantom{\xmapsto{\cdots\alpha'\cdots}} \mapsto ~
\scriptstyle{
    \cdots \left[ m \oo n\right]  \, \oo \, 
           \left[ j_{(2)}h_{(3)}e_{(2)}c_{(2)}a_{(2)}\lambda^{1}_{(1)} \oo j_{(1)}h_{(2)}\lambda^{7}_{(1)} \right] 
}
\label{eq:Kitaev:Face6}
\\ &
\xmapsto{\cdots\overline{\alpha}\cdots}
\scriptstyle{
	\cdots \left[ S(\lambda^{1}_{(2)}) \oo S(\lambda^{2}_{(2)}) \right] \, \oo \,
           \left[ m_{(2)}j_{(2)}h_{(3)}e_{(2)}c_{(2)}a_{(2)}\lambda^{1}_{(1)} \oo S(\lambda^{8}_{(1)}) \right] \, \oo \,
           \left[ \lambda^{8}_{(3)}m_{(1)}j_{(1)}h_{(2)}\lambda^{7}_{(1)} \oo \lambda^{8}_{(2)}n \right]
    \cdots
}
\label{eq:Kitaev:Face7}
\\[-1em] &
\xmapsto{\cdots\tfrac{1}{|K|}\cdot\ev_{T}\cdots}
\scriptstyle{
            \frac{1}{|K|} \cdot
           |K|^{2} \,
           \smallint\left( m_{(2)}j_{(2)}h_{(3)}e_{(2)}c_{(2)}a_{(2)}\lambda^{1}_{(1)}  S(\lambda^{1}_{(2)}) \right) \,
           \smallint\left( S(\lambda^{8}_{(1)}) S(\lambda^{2}_{(2)}) \right)
    \cdots
}
\nonumber
\\ &
\hphantom{\xmapsto{\cdots\tfrac{1}{|K|}\cdot\ev_{T}\cdots}} = ~
\scriptstyle{
	       |K| \,
          \smallint\left( m_{(2)}j_{(2)}h_{(3)}e_{(2)}c_{(2)}a_{(2)}) \right) \,
          \smallint\left( S(\lambda^{8}_{(1)}) S(\lambda^{2}_{(2)}) \right) \,
          \left[ \lambda^{5}_{(3)}e_{(1)}c_{(1)}S(\lambda^{3}_{(1)(1)})\lambda^{4}_{(1)} \oo \lambda^{5}_{(2)}g \right]
}
\nonumber
\\ &
\hphantom{\xmapsto{\cdots\tfrac{1}{|K|}\cdot\ev_{T}\cdots} = ~ \scriptstyle{|K| \, }}
\scriptstyle{
        \oo\,
        \left[ S(\lambda^{6}_{(2)}) S(\lambda^{5}_{(1)(1)}) \oo iS(\lambda^{5}_{(1)(2)}) \right] \,\oo\,
        \left[ S(\lambda^{7}_{(2)}) \lambda^{6}_{(1)(2)} \oo kh_{(1)}\lambda^{6}_{(1)(1)} \right]
}
\nonumber
\\ &
\hphantom{\xmapsto{\cdots\tfrac{1}{|K|}\cdot\ev_{T}\cdots} = ~ \scriptstyle{|K| \, }}
\scriptstyle{
        \oo\,
        \left[ S(\lambda^{4}_{(2)}) \oo d S(\lambda^{3}_{(1)(2)}) \right] \,\oo\,
        \left[ \lambda^{3}_{(3)}a_{(1)} \lambda^{2}_{(1)} \oo \lambda^{3}_{(2)}b \right] \,\oo\,
        \left[ \lambda^{8}_{(3)}m_{(1)}j_{(1)}h_{(2)}\lambda^{7}_{(1)} \oo \lambda^{8}_{(2)}n \right]
}
\nonumber
\\ &
\hphantom{\xmapsto{\cdots\tfrac{1}{|K|}\cdot\ev_{T}\cdots}} = ~
\scriptstyle{
        |K| \,
		\smallint\left( m_{(2)}j_{(2)}h_{(3)}e_{(2)}c_{(2)}a_{(2)} \right) \,
		\smallint\left( S(\lambda^{2}) \right) \,
        \left[ \lambda^{5}_{(3)}e_{(1)}c_{(1)}S(\lambda^{3}_{(1)(1)})\lambda^{4}_{(1)} \oo \lambda^{5}_{(2)}g \right]
}
\nonumber
\\ &
\hphantom{\xmapsto{\cdots\tfrac{1}{|K|}\cdot\ev_{T}\cdots} = ~ \scriptstyle{|K| \, }}
\scriptstyle{
        \oo\,
        \left[ S(\lambda^{6}_{(2)}) S(\lambda^{5}_{(1)(1)}) \oo iS(\lambda^{5}_{(1)(2)}) \right] \, \oo \,
        \left[ S(\lambda^{7}_{(2)}) \lambda^{6}_{(1)(2)} \oo kh_{(1)}\lambda^{6}_{(1)(1)} \right]
}
\nonumber
\\ &
\hphantom{\xmapsto{\cdots\tfrac{1}{|K|}\cdot\ev_{T}\cdots} = ~ \scriptstyle{|K| \, }}
\scriptstyle{
        \oo\,
        \left[ S(\lambda^{4}_{(2)}) \oo d S(\lambda^{3}_{(1)(2)}) \right] \,\oo\,
        \left[ \lambda^{3}_{(3)}a_{(1)} S(\lambda^{8}_{1}) \oo \lambda^{3}_{(2)}b \right] \,\oo\,
        \left[ \lambda^{8}_{(3)}m_{(1)}j_{(1)}h_{(2)}\lambda^{7}_{(1)} \oo \lambda^{8}_{(2)}n\right]
}
\nonumber
\\ &
\hphantom{\xmapsto{\cdots\tfrac{1}{|K|}\cdot\ev_{T}\cdots}} = ~
\scriptstyle{
        \smallint\left( m_{(2)}j_{(2)}h_{(3)}e_{(2)}c_{(2)}a_{(2)} \right) \,
        \left[ \lambda^{5}_{(3)}e_{(1)}c_{(1)}S(\lambda^{3}_{(1)})\lambda^{4}_{(1)} \oo \lambda^{5}_{(2)}g \right]
}
\nonumber
\\ &
\hphantom{\xmapsto{\cdots\tfrac{1}{|K|}\cdot\ev_{T}\cdots} = ~ \scriptstyle{|K| \, }}
\scriptstyle{
        \oo\,
        \left[ S(\lambda^{6}_{(2)}) S(\lambda^{5}_{(1)(1)}) \oo iS(\lambda^{5}_{(1)(2)}) \right] \, \oo\,
        \left[ S(\lambda^{7}_{(2)}) \lambda^{6}_{(1)(2)} \oo kh_{(1)}\lambda^{6}_{(1)(1)} \right]
}
\nonumber
\\ &
\hphantom{\xmapsto{\cdots\tfrac{1}{|K|}\cdot\ev_{T}\cdots} = ~ \scriptstyle{|K| \, }}
\scriptstyle{
        \oo\,
        \left[ S(\lambda^{4}_{(2)}) \oo d S(\lambda^{3}_{(2)}) \right] \, \oo \,
        \left[ \lambda^{3}_{(3)}a_{(1)} S(\lambda^{8}_{(1)}) \oo \lambda^{3}_{(2)}b \right] \, \oo \,
        \left[ \lambda^{8}_{(3)}m_{(1)}j_{(1)}h_{(2)}\lambda^{7}_{(1)} \oo \lambda^{8}_{(2)}n \right]
} \, .
\label{eq:Kitaev:Face8}
\end{align}
\end{subequations}

It follows that our face projector is a composition of the vertex and face projectors of the Kitaev model. 
In more detail, the face projector $P_{f}$ is the following sequence of operations in the Kitaev model:
\begin{enumerate}[i)]
\item
Add a vertex in the middle of every edge.
Previously unlabelled edges are labelled by $1_K\in K$.
This step comes from the fact we used $T$ instead of $X$ in the computation~\eqref{eq:Kitaev:Face}.
\item
Apply the vertex projectors $A^{\lambda}_{v}$ for all vertices adjacent to the face.
\item
Apply the face projector $B^{\smallint}_{f}$ (cf.\ \cite[Sec.\,4]{Vo}).
\end{enumerate}

Finally, we give a quick overview of the comparison of the internal Levin--Wen model and the Kitaev model:
\begin{center}


			\label{eq:O8-3d} {(O8)}
		\end{subfigure}
	}
	\caption{
		Conditions on a 3-dimensional orbifold datum $(A,T,\a,\abar,\psi,\phi)$.
	}
	\label{fig:orb_datum_conditions_in_3d}
\end{figure}

\subsection{Proof of Proposition~\ref{prop:KitaevOrbifold}}
\label{section:KitaevOrbifoldProof}
In this section we give a proof that the tuple $\left( A =K_{\mathrm{Fr}}, T,\alpha,\overline{\alpha}, \psi, \phi \right)$ in Section~\ref{section:HopfLevinWenData} defines an orbifold datum.
More precisely, we show that it fulfils the identities~\eqrefO{1}, \eqrefO{3}, \eqrefO{6} and \eqrefO{8} in Figure~\ref{fig:orb_datum_conditions}.
The remaining conditions can be shown by analogous computations.
We denote elements of $K$ or $\overline{K}$ by letters $a,\dots,f$, distinguish different copies of the Haar integral $\lambda=\lambda^{1}=\lambda^{2}=\dots$ by upper indices and remind the reader that $\psi=\id_{K}$ is trivial.
The computations use the identities~\eqref{eq:Haar_co-integral}, \eqref{eq:HaarCyclic}, \eqref{eq:HaarTeleportation}, \eqref{eq:HaarDimAntipode}.

To show identity~\eqrefO{1} we compute both sides of it separately.
The left-hand side reads $\left( \alpha \oo \id_{T} \right)\circ \left( \id_{T} \oo \tau_{T,T} \right)\circ \left( \alpha \oo \id_{T} \right)$, where $\tau_{T,T}$ denotes the swap of the tensor factors.
We compute this map as an endomorphism of $T^{ \oo 3}= K^{ \oo 6}$ by applying the maps in the composition step-by-step.
The two copies of $K$ belonging to one copy of $T$ are indicated by square brackets, e.g.\ $[a \oo b]\in K^{ \oo 2}=T$.
The left-hand side of~\eqrefO{1} then yields:
\begin{align}
&
\left[ a \oo b \right] \oo 
\left[ c \oo d \right] \oo
\left[ e \oo f \right]
\xmapsto{\alpha \oo \id_{T}}
\left[ S(\lambda^{1}_{(1)}) \oo b_{(1)}d \right]\oo
\left[ \lambda^{1}_{(3)}a \oo \lambda^{1}_{(2)}b_{(2)}c \right] \oo
\left[ e \oo f \right]
\nonumber
\\ &
\xmapsto{\id_{T} \oo \tau_{T,T}}
\left[ S(\lambda^{1}_{(1)}) \oo b_{(1)}d \right] \oo 
\left[ e \oo f \right] \oo
\left[ \lambda^{1}_{(3)}a \oo \lambda^{1}_{(2)}b_{(2)}c \right]
\nonumber
\\ &
\xmapsto{\alpha \oo \id_{T}}
\left[ S(\lambda^{2}_{(1)}) \oo b_{(1)}d_{(1)}f_{(1)} \right] \oo
\left[ \lambda^{2}_{(3)}S(\lambda^{1}_{(1)}) \oo  \lambda^{2}_{(2)}b_{(2)}d_{(2)}e \right] \oo
\left[ \lambda^{1}_{(3)}a \oo \lambda^{1}_{(2)}b_{(3)}c \right] \, .
\end{align}

\vspace{-8pt}

\noindent
The right-hand side of~\eqrefO{1} yields:
\begin{align}
&
\left[ a \oo b \right] \oo
\left[ c \oo d \right] \oo
\left[ e \oo f \right]
\xmapsto{\id_{T} \oo \alpha}
\left[ a \oo b \right] \oo 
\left[ S(\lambda^{1}_{(1)}) \oo d_{(1)}f\right] \oo
\left[ \lambda^{1}_{(3)}c \oo \lambda^{1}_{(2)}d_{(2)}e \right]
\nonumber
\\ &
\xmapsto{\alpha  \oo \id_{T}}
\left[ S(\lambda^{2}_{(1)}) \oo b_{(1)}d_{(1)}f \right]  \oo
\left[ S(\lambda^{2}_{(3)})a \oo \lambda^{2}_{(2)}b_{(2)}S(\lambda^{1}_{(1)}) \right] \oo
\left[ \lambda^{1}_{(3)}c \oo \lambda^{1}_{(2)}d_{(2)}e \right]
\nonumber
\\ &
\xmapsto{\id_{T} \oo \alpha}
\left[ S(\lambda^{2}_{(1)}) \oo b_{(1)}d_{(1)}f \right] \oo
\left[ S(\lambda^{3}_{(1)})  \oo \lambda^{2}_{(2)}b_{(2)}S(\lambda^{1}_{(2)})\lambda^{1}_{(3)}d_{(2)}e \right]
\nonumber
\\ &
\hphantom{\xmapsto{\id_{T} \oo \alpha}\left[ S(\lambda^{2}_{(1)}) \oo b_{(1)}d_{(1)}f \right] \, } \oo
\left[ \lambda^{3}_{(3)}\lambda^{2}_{(4)}a \oo 
       \lambda^{3}_{(2)} \lambda^{2}_{(3)}b_{(3)}S(\lambda^{1}_{(1)})\lambda^{1}_{(4)}c
\right]
\nonumber
\\ &
\hphantom{\xmapsto{\id_{T} \oo \alpha}} =
\left[ S(\lambda^{2}_{(1)}) \oo b_{(1)}d_{(1)}f \right] \oo
\left[ \lambda^{2}_{(3)}S(\lambda^{3}_{(1)})  \oo \lambda^{2}_{(2)}b_{(2)}d_{(2)}e \right] \oo
\left[ \lambda^{3}_{(3)}a \oo \lambda^{3}_{(2)}b_{(3)}c \right] \, ,
\end{align}
where we have used the identities $\varepsilon(\lambda^{1})=1$ and $S(\lambda^{3}_{(1)}) \oo \lambda^{3}_{(2)}k = kS(\lambda^{3}_{(1)}) \oo \lambda^{3}_{(2)}$ for $k= \lambda^{2}_{(3)}$ in the last step. By comparing with the left side we conclude that~\eqrefO{1} holds.

\medskip

For~\eqrefO{3} we compute the left-hand side of the identity:
\begin{align}
&
\left[ a \oo b \right] \oo 
\left[ c \oo d \right]
\xmapsto{\alpha}
\left[ S(\lambda^{1}_{(1)}) \oo b_{(1)}d \right] \oo
\left[ \lambda^{1}_{(3)}a \oo \lambda^{1}_{(2)}b_{(2)}c \right]
\nonumber
\\ &
\xmapsto{\overline{\alpha}}
\left[ S(\lambda^{1}_{(1)})\lambda^{1}_{(4)}a \oo S(\lambda^{2}_{(1)}) \right] \oo
\left[ \lambda^{2}_{(3)}S(\lambda^{1}_{(2)})\lambda^{1}_{(3)}b_{(2)}c \oo
       \lambda^{2}_{(2)}b_{(1)}d
\right]
\nonumber
\\ &
\hphantom{\xmapsto{\overline{\alpha}}} ~ =
\left[ a \oo S(\lambda^{2}_{(1)}) \right] \oo
\left[ \lambda^{2}_{(3)}b_{(2)}c \oo \lambda^{2}_{(2)}b_{(1)}d \right]
=
\left[ a \oo b S(\lambda^{2}_{(1)}) \right] \oo
\left[ \lambda^{2}_{(3)}c \oo \lambda^{2}_{(2)}d \right] \, .
\end{align}
The right-hand side of~\eqrefO{3} is given by the formula in~\eqref{eq:KitaevTensorOver}, 
which is identical to the left-hand side we just computed.

\medskip

We now compute the left side of~\eqrefO{6}:
\begin{align}
&
\left[ a \oo b \right] \oo
\left[ c \oo d \right]
\xmapsto{\id \oo \coev_{T}}
\left[ a \oo b \right] \oo
\left[ c \oo d \right] \oo
\left[ \lambda^{1}_{(1)} \oo \lambda^{2}_{(1)} \right] \oo
\left[ S(\lambda^{1}_{(2)}) \oo S(\lambda^{2}_{(2)}) \right]
\nonumber
\\ &
\xmapsto{\id_{T^{*}} \oo \alpha \oo \id_{T}}
\left[ a \oo b  \right] \oo
\left[ S(\lambda^{3}_{(1)}) \oo d_{(1)}\lambda^{2}_{(1)} \right] \oo
\left[ \lambda^{3}_{(3)}c \oo \lambda^{3}_{(2)}d_{(2)}\lambda^{1}_{(1)} \right] \oo
\left[ S(\lambda^{1}_{(2)}) \oo S(\lambda^{2}_{(2)}) \right]
\nonumber
\\ &
\xmapsto{\ev_{T} \oo \id_{T^{*} \oo T}}
|K|^{2}
\smallint(aS(\lambda^{3}_{(1)}))
\smallint(bd_{(1)}S(\lambda^{2}_{(1)}))
\left[ \lambda^{3}_{(3)}c \oo \lambda^{3}_{(2)}d_{(2)}\lambda^{1}_{(1)} \right] \oo
\left[ S(\lambda^{1}_{(2)}) \oo S(\lambda^{2}_{(2)}) \right]
\nonumber
\\ &
\hphantom{\xmapsto{\ev_{T} \oo \id_{T^{*} \oo T}}} ~ =
\left[ a_{(2)}c \oo a_{(1)}d_{(2)}\lambda^{1}_{(1)} \right] \oo
\left[ S(\lambda^{1}_{(2)}) \oo bd_{(1)} \right]
\nonumber
\\ &
\xmapsto{\coevt_{T} \oo \id_{T \oo T^{*}}}
\left[ S(\lambda^{2}_{(1)}) \oo S(\lambda^{3}_{(1)}) \right] \oo
\left[ \lambda^{2}_{(2)} \oo \lambda^{3}_{(2)} \right] \oo
\left[ a_{(2)}c \oo a_{(1)}d_{(2)}\lambda^{1}_{(1)} \right] 
\nonumber
\\
&
\hphantom{\xmapsto{\coevt_{T} \oo \id_{T \oo T^{*}}}} ~ \oo
\left[ S(\lambda^{1}_{(2)}) \oo bd_{(1)} \right]
\nonumber
\\ &
\xmapsto{\id_{T^{*}} \oo \overline{\alpha} \oo \id_{T}}
\left[ S(\lambda^{2}_{(1)} \oo S(\lambda^{3}_{(1)})) \right] \oo
\left[ \lambda^{2}_{(3)}a_{(2)}c \oo S(\lambda^{4}_{(1)}) \right]
\nonumber
\\ &
\hphantom{\xmapsto{\id_{T^{*}} \oo \overline{\alpha} \oo \id_{T}}} ~ \oo
\left[ \lambda^{4}_{(3)}\lambda^{2}_{(2)}a_{(1)}d_{(2)}\lambda^{1}_{(1)} \oo 
       \lambda^{4}_{(2)} \lambda^{3}_{(2)}
\right] \oo
\left[ S(\lambda^{1}_{(2)}) \oo bd_{(1)} \right]
\nonumber
\\ &
\xmapsto{\id_{T^{*} \oo T} \oo \evt_{T}}
|K|^{2}
\smallint(\lambda^{4}_{(3)}\lambda^{2}_{(2)}a_{(1)}d_{(2)}\lambda^{1}_{(1)}S(\lambda^{1}_{(2)}))
\smallint(\lambda^{4}_{(2)}\lambda^{3}_{(2)}bd_{(1)})
\nonumber
\\ &
\hphantom{\xmapsto{\id_{T^{*} \oo T} \oo \evt_{T}} |K|^{2}} ~ \cdot
\left[ S(\lambda^{2}_{(1)}) \oo S(\lambda^{3}_{(1)}) \right] \oo
\left[ \lambda^{2}_{(3)}a_{(2)}c \oo S(\lambda^{4}_{(1)}) \right]
\nonumber
\\ &
\hphantom{\xmapsto{\id_{T^{*} \oo T} \oo \evt_{T}}} ~ =
|K|
\smallint(\lambda^{4}_{(3)}\lambda^{2}_{(2)}d_{(2)})
\left[ S(\lambda^{2}_{(1)}) \oo bd_{(1)}\lambda^{4}_{(2)} \right] \oo
\left[ \lambda^{2}_{(3)}a_{(2)}c \oo S(\lambda^{4}_{(1)}) \right]
\nonumber
\\ &
\hphantom{\xmapsto{\id_{T^{*} \oo T} \oo \evt_{T}}} ~ =
\left[ \overline{\lambda^{2}_{(1)}} \oo
       bd_{(1)}S(d_{(2)})S(a_{(1)})S(\lambda^{2}_{(2)})
\right] \oo
\left[ \lambda^{2}_{(4)}a_{(3)}c \oo \lambda^{2}_{(3)}a_{(2)}d_{(3)} \right]
\nonumber
\\ &
\hphantom{\xmapsto{\id_{T^{*} \oo T} \oo \evt_{T}}} ~ =
\left[ S(\lambda^{2}_{(1)}) \oo bS(a_{(1)})S(\lambda^{2}_{(2)}) \right] \oo
\left[ \lambda^{2}_{(4)}a_{(2)}c \oo \lambda^{2}_{(3)}a_{(2)}d \right]
\nonumber
\\ &
\hphantom{\xmapsto{\id_{T^{*} \oo T} \oo \evt_{T}}} ~ =
\left[ aS(\lambda^{2}_{(1)}) \oo bS(\lambda^{2}_{(2)}) \right] \oo
\left[ \lambda^{2}_{(4)}c \oo \lambda^{2}_{(3)}d \right] \, .
\end{align}
Here we used the identity $\smallint (\lambda_{(1)}k )S(\lambda_{(2)}) = \frac{1}{|K|} k$ for $k\in K$.
The right-hand side of~\eqrefO{6} yields:
\begin{align}
\left(
    \left[ a \oo b \right]  \vartriangleleft_{0}S(\lambda_{(1)}) 
\right)  \oo
\left(
    \lambda_{(2)}  \vartriangleright_{0} \left[ c \oo d \right]
\right)
= 
\left[ aS(\lambda_{(1)}) \oo bS(\lambda_{(2)}) \right] \oo
\left[ \lambda_{(4)}c \oo \lambda_{(3)}d \right] \, ,
\end{align}
which is identical to the left-hand side as expected.

\medskip
Finally, we show one of the identities in~\eqrefO{8}, starting with the left-hand side:
\begin{align}
a & \xmapsto{\id \oo \coev_{T}}
a \oo
\left[ \lambda^{1}_{(1)} \oo \lambda^{2}_{(1)} \right] \oo
\left[ S(\lambda^{1}_{(2)}) \oo S(\lambda^{2}_{(2)}) \right]
\nonumber
\\ &
\xmapsto{(\mu \oo \vartriangleright_{0}) \circ \Delta(1_K) \oo \id_{T^{*}}}
a S(\l^3_{(1)}) \otimes
\left[ \l^3_{(3)}\lambda^{1}_{(1)} \oo \l^3_{(2)}\lambda^{2}_{(1)} \right] \oo
\left[ S(\lambda^{1}_{(2)}) \oo S(\lambda^{2}_{(2)}) \right]
\nonumber
\\ &
\xmapsto{\coevt_{T}}
a S(\lambda^3_{(1)}) \cdot
|K|^{2} \,
\smallint( \lambda^3_{(3)} \lambda^1_{(1)} S(\lambda^1_{(2)}) ) \,
\smallint( \lambda^3_{(2)} \lambda^2_{(1)} S(\lambda^2_{(2)}) )
\nonumber
\\ &
\hphantom{\xmapsto{\coevt_{T}}} ~=
a S(\lambda^3_{(1)}) \cdot
|K|^{2} \,
\smallint(\lambda^3_{(3)}) \,
\smallint(\lambda^3_{(2)})
=
|K| \cdot a
=
\frac{1}{\phi^{2}} \cdot a \, .
\end{align}

\newcommand{\arxiv}[2]{\href{http://arXiv.org/abs/#1}{#2}}
\newcommand{\doi}[2]{\href{http://doi.org/#1}{#2}}

\end{document}